\documentclass[12pt]{article}
\usepackage{kotex}
\usepackage{verbatim}
\usepackage{bm}
\usepackage{amsmath}

\usepackage[round]{natbib}
\usepackage{amssymb,color}
\usepackage{appendix}
\usepackage{amsthm}
\usepackage{enumerate}
\usepackage{hyperref}
\usepackage{dsfont}
\usepackage{mathtools} % need for `show only references'
\mathtoolsset{showonlyrefs=true}  % only equations which are labeled AND referenced will be numbered.   % IMPORTANT NOTE...must use \eqref{} instead of (\ref{})
\usepackage{tikz}
\usetikzlibrary{
  arrows.meta,   % Stealth arrow
  positioning,   % below=of, right=of
  fit,           % 박스 묶기 (fit=(A)(B))
  calc           % 좌표 계산 (A.east -- B.west)
}
\tikzset{
  box/.style={draw, rounded corners=2pt, align=center, inner sep=6pt},
  arr/.style={-{Latex[length=2.2mm]}, thick, shorten <=2pt, shorten >=2pt},
  lab/.style={fill=white, inner sep=1.2pt, outer sep=0pt,
              font=\scriptsize, align=center, text width=20mm}
}

\usepackage{caption}

\usepackage[margin=0.7in]{geometry}

\numberwithin{equation}{section}
\usepackage{comment}

\usepackage{titling}
\usepackage{xparse} %ref for enumerate items
\let\realItem\item % save a copy of the original item
\makeatletter
\NewDocumentCommand\myItem{ o }{%
	\IfNoValueTF{#1}%
	{\realItem}% add an item
	{\realItem[#1]\def\@currentlabel{#1}}% add an item and update label
}
\makeatother

\usepackage{enumitem}    
\setlist[enumerate]{
	before=\let\item\myItem,       % use \myItem in enumerate
	label=\textnormal{(\arabic*)}, % format the label
	widest=(2')                    % set the widest label
}

\newtheorem{theorem}{Theorem}[section]
\newtheorem{corollary}[theorem]{Corollary}
\newtheorem{definition}{Definition}[section]
\newtheorem{assume}{Assumption}[section]
\newtheorem{proposition}[theorem]{Proposition}

\newtheorem{lemma}[theorem]{Lemma}

\newtheorem{remark}{Remark}[section]

\newcommand\restr[2]{{% we make the whole thing an ordinary symbol
  \left.\kern-\nulldelimiterspace % automatically resize the bar with \right
  #1 % the function
  \littletaller % pretend it's a little taller at normal size
  \right|_{#2} % this is the delimiter
  }}

\newcommand{\littletaller}{\mathchoice{\vphantom{\big|}}{}{}{}}

\begin{document}

\author{Jongjin Park\thanks{pjj4230@snu.ac.kr}  and Hyungbin Park\thanks{hyungbin@snu.ac.kr, hyungbin2015@gmail.com}\\   \\
\normalsize{Department of Mathematical Sciences} \\ 
\normalsize{Seoul National University}\\
\normalsize{1, Gwanak-ro, Gwanak-gu, Seoul, Republic of Korea} }
\title{Valuation Reveals Uncertainty}

\maketitle

\begin{abstract}    
This paper studies the recovery of uncertainty from dynamic sublinear valuation rules. A robust valuation assigns each payoff its worst-case expected value across plausible models  under uncertainty and  induces a dynamic sublinear valuation rule. While valuation rules are observable in practice, the underlying uncertainty structure is latent. 
First, we show that the latent uncertainty structure
can be identified from an observed valuation rule and provide an explicit procedure for recovering it. Second, we develop the notion of time consistency for uncertainty structures as the uncertainty-side counterpart of time consistency in valuation. Third, we characterize all time-consistent uncertainty structures that represent a given valuation rule. Finally, we develop nonparametric estimators for recovering uncertainty from limited valuation data. 
These results overturn the traditional Knightian view that uncertainty is inherently non-measurable \citep{knight1921risk}. Indeed, valuation contains sufficient information to identify, characterize, and statistically recover the uncertainty structures that generate it.

\end{abstract}

\section{Introduction}\label{sec:intro}

Valuation and uncertainty are two fundamental objects in economics.
A substantial body of research has examined their relationship, and robust valuation is one of the most widely used frameworks in this literature.
A robust valuation rule assigns each payoff its worst-case expected value over a set of plausible models.
In the conventional approach, uncertainty is specified a priori, and the corresponding valuation rule is derived from it.
This paper takes the converse perspective.
In practice, valuation is often observable through market prices, whereas the underlying uncertainty remains latent.
We investigate the relationship between these two objects and show that, under suitable conditions, uncertainty can be recovered from valuation.

This paper explores two types of continuous-time valuation rules. The first is a dynamic sublinear valuation rule, which is formulated based on axiomatic economic principles. We define it as a family of operators  $ \mathcal T=\{\mathcal T_{t,T}\}_{0\le t\le T<\infty}$ that adhere to monotonicity, stability, and time consistency (see Definition \ref{def:SG} for a precise formulation). These properties encapsulate the core aspects of meaningful economic valuation in dynamic settings, ensuring coherence and consistency across time and states.
A significant feature of this approach is that it does not require an underlying probabilistic structure, such as probability measures or state processes, for its definition. Instead, the valuation rule is characterized solely by these economic axioms, offering a flexible framework that  is not tied to any specific model and can adapt to a variety of uncertainty scenarios.

The second type is a dynamic robust valuation under uncertainty, a continuous-time valuation rule that assigns each payoff its worst-case discounted expected value over a family of plausible models.
While much of the existing literature focuses on uncertainty in the dynamics of the state process, our framework also incorporates uncertainty in discounting.
The robust valuation is formulated within a probabilistic framework as follows.
Let \(X\) be an underlying state process with state space \(D\subset\mathbb R^d\).
For each time \(t\ge0\) and state \(x\in D\), let \(\mathcal U_{t,x}\) be a family of pairs \((A,\mathbb Q)\), where \(A\) is a cumulative discounting process and \(\mathbb Q\) specifies a law of \(X\) starting from \(x\).
The class \(\mathcal U_{t,x}\) captures the uncertainty at time \(t\) and state \(x\), with each pair \((A,\mathbb Q)\in\mathcal U_{t,x}\) specifying a particular model.
Given this class of plausible models, the robust valuation of a payoff function \(f\) at time \(t\) and state \(x\) is defined by
\[
\mathcal{T}_{t,T}^{\mathcal{U}} f(x)
:=
\sup_{(A,\mathbb{Q}) \in \mathcal{U}_{t,x}}
\mathbb{E}^{\mathbb{Q}}\!\left[e^{-A_T} f(X_T)\right].
\]
We refer to the family
\(\mathcal U=\{\mathcal U_{t,x}\}_{(t,x)\in[0,\infty)\times D}\)
as an uncertainty structure and to the family
\(\mathcal T^{\mathcal U}=\{\mathcal T_{t,T}^{\mathcal U}\}_{0\le t\le T<\infty}\)
as the robust valuation rule under \(\mathcal U\).

This paper makes four contributions that illuminate the relationship between dynamic valuation rules and robust valuation under uncertainty.
First, we show that every dynamic sublinear valuation rule admits a representation as a robust valuation under an uncertainty structure.
More precisely, given a dynamic valuation rule  $ \mathcal T$, we construct an uncertainty structure  \(\mathcal U\) such that
\[
\mathcal{T}
=
\mathcal{T}^{\mathcal{U}} \,.
\]
Moreover, we provide an explicit procedure for recovering \(\mathcal U\) from the given valuation rule.
This is the most technically demanding part of the paper, as it requires constructing a probabilistic uncertainty structure from a valuation rule initially specified solely through economic axioms, without any probabilistic primitives.
The construction is developed in detail in Section~\ref{sec:probrepn} and summarized in Figure~\ref{fig:linear_valuation_uncertainty}.

Second, we develop the notion of time-consistent uncertainty structures.
Time consistency is one of the central properties of valuation rules in continuous-time settings.
A key challenge is to determine how the time consistency of a dynamic sublinear valuation rule $\mathcal{T}$ should be reflected in the underlying uncertainty structure $\mathcal{U}$.
To address this question, we introduce dynamic uncertainty structures (DUSs), formally defined in Definition~\ref{def:DUS_pairs}.
We show that robust valuations under DUSs form dynamic sublinear valuation rules and, conversely, that every dynamic sublinear valuation rule admits a robust representation under a suitable DUS.
Thus, time consistency of a sublinear valuation rule and the DUS property of its underlying uncertainty structure can be viewed as equivalent valuation-side and model-side formulations of the same recursive principle.

Third, we characterize the class of DUSs that represent a given valuation rule.
Although every dynamic sublinear valuation rule admits a robust representation under a DUS, this representation need not be unique, since distinct DUSs may yield the same valuation rule:
  \[
\mathcal U^1\neq\mathcal U^2,
\qquad
\mathcal T^{\mathcal U^1}
=
\mathcal T^{\mathcal U^2}.
\]
We therefore identify the essential properties shared by all representing DUSs.
Our characterization provides economically meaningful necessary and sufficient conditions for a DUS to represent the given dynamic sublinear valuation rule.
This result shows that the valuation rule itself contains sufficient information to identify not merely a single latent uncertainty structure, but the entire class of uncertainty structures  
that reproduce it.

Finally, we turn to the practical recovery of uncertainty from limited valuation data.
In empirical applications, a valuation rule is typically observed only through a restricted set of data.
Under partial observation, the valuation rule consistent with the available data need not be uniquely determined.
We identify the most conservative valuation rule consistent with the observations and develop nonparametric estimators for both this valuation rule and its underlying uncertainty structure.
Even with limited valuation data, our estimators can reveal the latent uncertainty encoded in the observed valuations.

Our results provide a new perspective on the role of valuation in economics and finance. Valuation is not merely an outcome of uncertainty but also a source of information about the uncertainty structures that govern it. In this sense, our findings overturn the traditional Knightian view that uncertainty is inherently non-measurable \citep{knight1921risk}. Indeed,  valuation contains sufficient information to identify and characterize the underlying uncertainty, reveal its economically relevant components, and permit its recovery from data. This perspective provides a new framework for studying and quantifying latent uncertainty and opens a broad range of directions for future theoretical and empirical research in economic systems.

A substantial body of work in economics and finance has studied uncertainty through several closely related formulations, including 
multiple-prior models, rectangular belief systems, variational or entropy
penalization, and admissible classes of model distortions; see, for example,
\cite{hansen2001robust},
\cite{chen2002ambiguity},
\cite{anderson2003quartet},
\cite{epstein2003recursive},
\cite{maenhout2004robust},
\cite{cheridito2006dynamic},
\cite{hansen2006robust},
\cite{maccheroni2006ambiguity},
\cite{maccheroni2006dynamic},
\cite{hansen2007beliefs},
\cite{peng2007g},
\cite{follmer2011stochastic},
and \cite{epstein2013ambiguous}.
A related mathematical literature develops nonlinear expectations, quasi-sure
analysis, and robust valuation under nondominated families of probability
measures; see
\cite{denis2006functional},
\cite{nutz2012quasi},
\cite{nutzsoner2012superhedging},
\cite{nutz2013random},
and \cite{neufeld2017nonlinear}.
Abstract representations of sublinear or convex semigroups on path space in
terms of probability measures are studied in
\cite{criens2025representation} and \cite{criens2025stochastic}.
The construction of sublinear expectations on path space, together with the
analysis of the conditioning and concatenation properties of uncertainty
structures, is studied in \cite{nutz2013constructing}.

The remainder of this paper is organized as follows. Section~\ref{sec:repn} introduces two economic objects: dynamic sublinear valuation rules and robust valuations under uncertainty structures. Section~\ref{sec:probrepn} shows that every dynamic sublinear valuation rule admits a representation as a robust valuation under uncertainty and provides an explicit procedure for constructing the associated uncertainty structure. Section~\ref{sec:canonicality} introduces the notion of a time-consistent uncertainty structure and establishes its equivalence with time consistency of the associated robust valuation rule. Section~\ref{sec:char} characterizes the class of dynamic uncertainty structures that represent a given dynamic sublinear valuation rule. Section~\ref{sec:identification} studies the recovery of uncertainty structures from partial observations of the valuation rule. Section~\ref{sec:conclusion} concludes the paper. The proofs of all main results are provided in the appendix.

\section{Valuation and Uncertainty}\label{sec:repn}

The present paper studies two economic objects: dynamic sublinear valuation rules and robust valuations under uncertainty structures.
In this section, we introduce these two objects within a mathematically rigorous framework.

\paragraph*{Notation}
\begin{itemize}
    \item For a topological space $E$, $C(E)$ and $C_b(E)$ denote the spaces of continuous and bounded continuous functions on $E$, respectively.
    
    \item For an open subset $E$ of a Euclidean space, $C_b^\infty(E)$ denotes the space of bounded $C^\infty$ functions on $E$ whose derivatives of all orders are bounded.
    
    \item For $f\in C_b(E)$, we define
    $    \|f\|_\infty:=\sup_{x\in E}|f(x)|.
    $
    
    \item $\mathbb S(d)$ denotes the space of symmetric $d\times d$ real matrices, and $\mathbb S^+(d)\subset\mathbb S(d)$ denotes the cone of nonnegative symmetric matrices.
    
    \item For $X\in\mathbb S(d)$, we define
    $    \|X\|:=\sqrt{\operatorname{tr}(X^2)}$.

    \item We equip $\mathbb R\times\mathbb R^d\times\mathbb S(d)$ with the norm
    \[
    \|(r,p,X)\|
    :=
    \sqrt{L^{(r,p,X)}(r,p,X)}
    =
    \sqrt{\frac12\|X\|^2+|p|^2+|r|^2}.
    \]
\end{itemize}

\subsection{Dynamic Sublinear Valuation Rules}
\label{sec:dsvr}
We begin by fixing the state space and the space of contingent payoffs. 
Let $D\subset\mathbb{R}^d$ be a convex open domain, possibly unbounded, which can be exhausted by bounded convex subdomains $D_m$ with smooth boundary satisfying $\overline D_m\subset D_{m+1}$ for all $m\ge1$. 
We consider contingent payoffs given by bounded continuous functions on $D$, so that the contingent payoff space is $C_b(D)$. 
We equip $C_b(D)$ with the mixed topology,\footnote{That is, the Mackey topology associated with the dual pair $(C_b(D),\mathcal M(D))$, where $\mathcal M(D)$ denotes the space of finite signed countably additive measures on $D$. Equivalently, it is the strongest locally convex topology on $C_b(D)$ whose continuous dual is $\mathcal M(D)$. This is the natural choice for the probabilistic duality used throughout the paper; see Appendix~\ref{subsec:mixedtopology} for details.}  and, unless stated otherwise, all limits in $C_b(D)$ are understood with respect to this topology.

%Our objective is to study dynamic valuation rules on this payoff space.  Because the paper is concerned with valuation as an economic object---that is, as a rule assigning current values to future contingent payoffs in a way that is robust to uncertainty---the relevant notion of continuity should reflect not only functional approximation of payoffs, but also the probabilistic side of valuation. 

Within this framework, we now formulate dynamic sublinear valuation rules axiomatically, guided by the economic principles of monotonicity, stability, and time consistency.

\begin{definition}\label{def:SG}
    A dynamic sublinear valuation rule on $C_b(D)$ is a family of operators
    \[
    \{\mathcal T_{t,T}\}_{0\le t\le T<\infty}, \qquad \mathcal T_{t,T}:C_b(D)\to C_b(D),
    \]
    satisfying $\mathcal T_{t,t}=\operatorname{id}_{C_b(D)}$ for all $t\ge0$, together with the following properties.
    \begin{enumerate}[label=(V\arabic*), ref=(V\arabic*)]
        \item\label{S1} $\mathcal{T}_{t,T}$ is sublinear and monotone for all $0\le t\le T<\infty$.
        \item\label{S4}  $\lVert\mathcal{T}_{t,T}f\rVert_\infty\leq \lVert f\rVert_\infty$ for all $0\le t\le T<\infty$ and $f\in C_b(D)$.
        \item\label{S2} $\mathcal{T}_{t,T}$ is continuous from above for all $0\le t\le T<\infty$, that is, $\mathcal{T}_{t,T}f_n\searrow0$ for every sequence $\{f_n\}_{n\geq1}\subset C_b(D)$ with $f_n\searrow0$.
        \item\label{S5}
    The family \(\{\mathcal{T}_{t,T}\}_{0\le t\le T<\infty}\) is strongly continuous with respect to the mixed topology, that is, 
    \[
    \mathcal{T}_{t_n,T_n}f \to \mathcal{T}_{t,T}f
    \]
    for every \(f\in C_b(D)\) whenever \(0\le t_n\le T_n<\infty\) and \((t_n,T_n)\to(t,T)\).
        \item\label{S3}  The time-consistent property holds, that is, $\mathcal{T}_{t,T}=\mathcal{T}_{t,s}\mathcal{T}_{s,T}$ for all $0\le t\le s\le T$.
    \end{enumerate}    
If $\mathcal{T}_{t,T}$ depends only on $T-t$, we say  a dynamic sublinear valuation rule $\{\mathcal{T}_{t,T}\}_{0\le t\le T<\infty}$ is time-homogeneous. 
In this case, we define
\[
\{\mathcal T_t\}_{t\ge0}:=\{\mathcal T_{0,t}\}_{t\ge0}.
\]
\end{definition}

The above definition collects the basic economic and analytic requirements of a dynamic sublinear valuation rule.
Condition~\ref{S1} encodes sublinearity and monotonicity, capturing coherence and the absence of arbitrage.
Conditions~\ref{S4}, \ref{S2}, and \ref{S5} impose stability: \ref{S4} reflects the non-negativity of discounting, \ref{S2} ensures monotone order regularity with respect to contingent claims, and \ref{S5} provides temporal continuity.
Finally, condition~\ref{S3} imposes time consistency through the semigroup property.
It ensures that valuation over $t+s$ is obtained recursively by valuing first over $s$ and then over the remaining horizon $t$.

\subsection{Robust Valuation Rules}\label{subsec:DUSdef}

In this section, we introduce the concepts of uncertainty structures and their associated robust valuation rules.
We begin by describing the underlying mathematical framework, following \cite[Chapter~1]{pinsky1995positive}.
Let $\hat D:=D\cup\{\triangle\}$ denote the cemetery-augmented state space, given by the one-point compactification of $D$, equipped with the Riemannian metric $\rho_D$. 
We consider the canonical path space $\hat\Omega$, consisting of continuous paths in $\hat D$ that are absorbed at $\triangle$ 
once they reach it, together with its Borel $\sigma$-field $\hat{\mathcal F}$ and canonical filtration $(\hat{\mathcal F}_t)_{t\ge0}$.
The space $\hat\Omega$ is Polish under its natural topology, and its Borel $\sigma$-field is generated by the canonical filtration.
We denote by $X$ the canonical process on $\hat\Omega$.
The exit times are defined by
\[
    \tau_n(\omega):=\inf\{t>0:\omega(t)\notin D_n\},
    \qquad
    \tau_{\mathrm{exp}}(\omega):=\lim_{n\to\infty}\tau_n(\omega).
\]
The cemetery state $\triangle$ represents explosion of the state process.
Explosion means that the state process enters the absorbing terminal state, corresponding to irreversible exit from the feasible domain.
In particular, once the process reaches $\triangle$, it remains there permanently and no further evolution takes place.
From an economic perspective, this framework encompasses phenomena such as default, market exit, and structural regime change.
Accordingly,
 uncertainty is characterized by a family 
 of state-process laws that may admit explosion in finite time.

We consider two sources of uncertainty: uncertainty about discounting and uncertainty about the law of the underlying state process.
Accordingly, a model in our framework is represented by a pair $(A,\mathbb Q)$, where \(A\) is a cumulative discounting process and \(\mathbb Q\) specifies a law of the canonical process \(X\).
We introduce the corresponding pair space \(\mathfrak U\) below; its topology and measurable structure are provided in Appendix~\ref{sec:extended_path_and_pair_space}.

\begin{definition}\label{def:pair_space}
Let \(\mathfrak U\) consist of the cemetery pair \((0,\delta_\triangle)\) and all pairs \((A,\mathbb Q)\), where \(A=(A_t)_{t\ge0}\) is an adapted, \([0,\infty]\)-valued, continuous, nondecreasing process on \(\hat\Omega\) with \(A_0=0\) and \(\mathbb Q\) is a probability measure on \(\hat\Omega\) such that
\begin{equation}
      \label{eq:prop:pairDUS_to_DSVR_1}
A_t<\infty
\textnormal{ for every } t\in [0,\tau_{\mathrm{exp}}) \,,
\quad
A_{\tau_{\mathrm{exp}}}\!\!=\infty
\textnormal{ on } \{\tau_{\mathrm{exp}}<\infty\},
\quad
\mathbb Q\text{-almost surely}.
\end{equation}
Two pairs \((A,\mathbb Q)\) and \((A',\mathbb Q')\) are identified if
\(\mathbb Q=\mathbb Q'\) and \(A,A'\) are indistinguishable under
\(\mathbb Q\).
We write \((A,\mathbb Q)\) for the corresponding equivalence class and refer to \(\mathfrak U\) as the pair space.
\end{definition}

For each $(t,x)\in[0,\infty)\times\hat D$, let $\mathcal U_{t,x}$ be a class of models, that is, $\mathcal U_{t,x}\subseteq\mathfrak U$.
The class $\mathcal U_{t,x}$ represents the uncertainty at time $t$ when the state is $x$.
A family of model classes
$\mathcal U=\{\mathcal U_{t,x}\}_{(t,x)\in[0,\infty)\times\hat D}$
is called an \emph{uncertainty structure}.
We say that an uncertainty structure
$\mathcal U=\{\mathcal U_{t,x}\}_{(t,x)\in[0,\infty)\times\hat D}$
is time-homogeneous if
\[
\mathcal U_{t,x}
=
\mathcal U_{0,x}\circ\theta_t^{-1},
\qquad (t,x)\in[0,\infty)\times\hat D,
\]
where \(\theta_t:\hat\Omega\to\hat\Omega\) denotes the time-\(t\) shift operator defined by
\[
(\theta_t\omega)(s):=\omega((s-t)\vee0),
\qquad s\ge0.
\]
In the time-homogeneous case, we write
\[
\mathcal U_x:=\mathcal U_{0,x},
\qquad x\in\hat D,
\]
and simply refer to \(\{\mathcal U_x\}_{x\in\hat D}\) as the uncertainty structure.
The entire family \(\{\mathcal U_{t,x}\}_{(t,x)\in[0,\infty)\times\hat D}\) is then determined by \(\{\mathcal U_x\}_{x\in\hat D}\) through the time-shift operator.

\begin{definition}
Let  $\mathcal U=\{\mathcal U_{t,x}\}_{(t,x)\in[0,\infty)\times\hat D}$
be an uncertainty structure. 
A family of operators $\{\mathcal T_{t,T}^\mathcal{U}\}_{0\le t\le T<\infty}$ on $C_b(D)$ defined as 
\begin{equation}\label{eq:pair_DUS_valuation}
\mathcal T_{t,T}^\mathcal{U} f(x)
=
\sup_{(A,\mathbb Q)\in\mathcal U_{t,x}}
\mathbb E^\mathbb Q\!\left[
e^{-A_T}f(X_T)\mathbb I_{\{\tau_{\mathrm{exp}}>T\}}
\right]\,,
\qquad 0\le t\le T<\infty\,,\;\;x\in D\,,\;\;f\in C_b(D)
\end{equation}
is called the robust valuation associated with  \(\mathcal U\), or the robust valuation under \(\mathcal U\).
\end{definition}

%Thus, robust valuation links uncertainty structures to sublinear valuation rules by evaluating each claim under the most conservative admissible model.

% The two objects introduced in this section play fundamentally different roles. A dynamic sublinear valuation rule is defined purely axiomatically,with no reference to an underlying probability space or stochastic model. A robust valuation rule, by contrast, is an explicitly probabilistic object: it requires a family of plausible models to be specified in advance. The central question of this paper is whether these two descriptions are equivalent---that is, whether every axiomatic valuation rule admits a representing uncertainty structure---and, if so, how much information about the underlying uncertainty the observed valuation encodes.

\section{Recovering Uncertainty from Valuation}\label{sec:probrepn}

In this section, we show that any dynamic sublinear valuation rule admits a representation as a robust valuation under uncertainty. More precisely, for any given dynamic valuation rule $\{\mathcal T_{t,T}\}_{0\le t\le T<\infty}$, we construct an uncertainty structure $\mathcal{U}$ such that
 $$\mathcal T_{t,T}=\mathcal T_{t,T}^\mathcal{U}\;\textnormal{ for all }\;0\le t\le T<\infty\,.$$ 
We emphasize that 
a dynamic sublinear valuation rule is defined purely axiomatically, with no reference to an underlying probability space or stochastic model.

Throughout the remainder of the paper, we restrict attention to the time-homogeneous case $\{\mathcal T_t\}_{t\ge0}$ unless stated otherwise. 
This entails no loss of generality, since any time-inhomogeneous setting can be reduced to a time-homogeneous one by enlarging the state space to incorporate time itself, namely,
\[
\tilde X_t=(t,X_t), \qquad t\ge0.
\]
Accordingly, the time-homogeneous framework considered here also covers the time-inhomogeneous case.

The recovery of uncertainty from valuation proceeds in four steps. 
Figure~\ref{fig:linear_valuation_uncertainty} illustrates the procedure. 
First, we extract the infinitesimal generator associated with the valuation rule $\{\mathcal T_t\}_{t\ge0}$ and describe its local behavior at each state $x\in D$ through a generating function $G$. 
Second, motivated by convex duality theory, we construct the support sets $\{A(x)\}_{x\in D}$ corresponding to the sublinear function $G(x,\cdot)$. 
Third, we construct an uncertainty structure $\mathcal U(G)$ from these support sets. 
Finally, we show that the robust valuation associated with the uncertainty structure $\mathcal U(G)$ coincides with the original valuation rule. 
The following subsections implement these steps in detail.

\begin{figure}[!ht]
\centering
\begin{tikzpicture}[
    font=\small,
    >=Latex,
    box/.style={
        draw,
        rounded corners=2pt,
        align=center,
        inner xsep=5pt,
        inner ysep=6pt,
        minimum height=18mm
    },
    mainbox/.style={
        box,
        text width=34mm
    },
    widebox/.style={
        box,
        text width=48mm
    },
    arr/.style={
        ->,
        thick
    },
    lab/.style={
        font=\footnotesize,
        fill=white,
        inner sep=1.5pt,
        align=center
    }
]

% Boxes
\node[mainbox] (T) at (0,0)
{
valuation rule\\
$\{\mathcal T_t\}_{t\ge0}$
};

\node[mainbox] (G) at (3.9,0)
{
generating function\\
$G(x,r,p,X)$
};

\node[mainbox] (A) at (7.8,0)
{
support sets\\
$\{A(x)\}_{x\in D}$
};

\node[box, text width=40mm] (P) at (12.0,0)
{
uncertainty structure\\
$\mathcal U(G)$
};

% Upper arrows
\draw[arr] (T.north) to[out=22,in=158] (G.north);
\draw[arr] (G.north) to[out=22,in=158] (A.north);
\draw[arr] (A.north) to[out=22,in=158] (P.north);

% Lower arrow
\draw[arr] (P.south) .. controls +(0,-1.4) and +(0,-1.4) .. (T.south);

% Labels above upper arrows
\node[lab] at ($(T.north)!0.5!(G.north)+(0,0.9)$) {generator limit};
\node[lab] at ($(G.north)!0.5!(A.north)+(0,0.9)$) {support duality};
\node[lab] at ($(A.north)!0.5!(P.north)+(0,0.9)$) {probabilistic\\construction};

% Label below lower arrow
\node[lab] at ($(T.south)!0.5!(P.south)+(0,-1.35)$) {robust valuation};

\end{tikzpicture}
\caption{
Recovering Uncertainty from Valuation
}
\label{fig:linear_valuation_uncertainty}
\end{figure}

\begin{definition}\label{def:pair_representing_DUS} 
Let $\{\mathcal T_t\}_{t\ge0}$ be a dynamic sublinear valuation rule. We say 
an uncertainty structure 
\(
\mathcal U=\{\mathcal U_x\}_{x\in\hat D}
\)
represents the dynamic sublinear valuation rule $\{\mathcal T_t\}_{t\ge0}$ if  $$\mathcal T_{t}=\mathcal T_{t}^\mathcal{U}\;\textnormal{ for all }\;t\ge0\,.$$  
\end{definition}

\subsection{From Valuation to Generating Function}

The first step in recovering uncertainty from valuation is to extract the infinitesimal generator, motivated by classical semigroup theory. 
This infinitesimal generator characterizes the local behavior of the valuation rule.

\begin{definition}
    Let $\{\mathcal{T}_t\}_{t\geq0}$ be a dynamic sublinear valuation rule. The infinitesimal generator $\mathcal{G}:\mathcal{D}(\mathcal{G})\to C(D)$ is defined by
    \begin{align}
        \mathcal{G}[f]:=\lim_{t\downarrow 0}\frac{\mathcal{T}_tf-f}{t}
    \end{align}
    where the domain $\mathcal{D}(\mathcal{G})$ consists of all functions $f\in C_b(D)$ for which the above limit exists with respect to the topology of local uniform convergence on $C(D)$.  
\end{definition}

%To pass from the global semigroup $\{\mathcal T_t\}_{t\ge0}$ to a local pointwise description, 

We restrict our attention to valuation rules whose infinitesimal dynamics are local. 
Economically, the following assumption means that prices are driven by local market information: the instantaneous change at state $x$ depends only on nearby variations in fundamentals and payoffs. 
Thus, the generator $\mathcal G$ is restricted to the continuous-path, diffusion-type regime and excludes genuinely nonlocal effects such as jumps, crashes, or discrete policy interventions. 
This is a limitation of the present analysis, not of the valuation-based framework. 
Treating nonlocal generators would require a corresponding inverse theory for jump-type dynamics and is left for future work.

\begin{assume}\label{assume:Cinftyindomain}
    Assume that $C_b^\infty(D)\subset \mathcal{D}(\mathcal{G})$ and the generator $\mathcal{G}$ is a local operator on $C_b^\infty(D)$, that is, if $f_1,f_2\in C_b^\infty(D)$ coincide in a neighborhood of
    $x\in D$, then $\mathcal{G}[f_1](x)=\mathcal{G}[f_2](x)$.
\end{assume}

Under Assumption~\ref{assume:Cinftyindomain}, the infinitesimal generator admits a local pointwise representation: for each \(x\in D\), the value \(\mathcal G[f](x)\) depends only on \(x\), \(f(x)\), \(\nabla f(x)\), and \(\nabla^2 f(x)\). 
The following theorem makes this statement precise and introduces the associated generating function. The proof is deferred to Appendix~\ref{sec:analyticidentification}.

\begin{theorem}\label{thm:representationgenerator}
    Let $\{\mathcal{T}_t\}_{t\geq0}$ be a dynamic sublinear valuation rule satisfying Assumption \ref{assume:Cinftyindomain}. Then there exists a function $G:D\times\mathbb{R}\times\mathbb{R}^d\times\mathbb{S}(d)\to\mathbb{R}$ such that
    \begin{align}\label{eq:thm:representationgenerator_eq1}
        \mathcal{G}[f](x)=G(x,f(x),\nabla f(x),\nabla^2f(x))
        \end{align}
for all $f\in C_b^\infty(D).$
    Moreover, the function $G$ satisfies the followings.
\begin{enumerate}[label=(G\arabic*), ref=(G\arabic*)]
    \item \label{item:G1}
    The function $G=G(x,r,p,X)$ is continuous in $(x, r, p, X)$ and sublinear in $(r, p, X)$.
    
    \item \label{item:G2}
    For all $(x,r,p)\in D\times\mathbb{R}\times\mathbb{R}^d$ and $X,Y\in\mathbb{S}(d)$ with $X\geq Y$, 
    \begin{align}
        G(x,r,p,X)\geq G(x,r,p,Y)\,.
    \end{align}
    
    \item \label{item:G3}
    For all $(x,p,X)\in D\times\mathbb{R}^d\times\mathbb{S}(d)$ and $r,s\in\mathbb{R}$ with $r\geq s$,
    \begin{align}
        G(x,r,p,X) \leq G(x,s,p,X)\,.
    \end{align}
\end{enumerate}
 %   If $\{\mathcal{T}_t\}_{t\geq0}$ is a linear dynamic valuation rule satisfying Assumption \ref{assume:Cinftyindomain}, then the function $G$ is linear in $(r,p,X)$. Consequently, there exist continuous functions $\Sigma:D\to\mathbb{S}^+(d)$, $B:D\to\mathbb{R}^d$ and $C:D\to(-\infty,0]$ such that     \begin{align}       G(x,r,p,X)=\frac{1}{2}\mbox{tr}(\Sigma(x)X)+B(x)\cdot p+C(x)r    \end{align}     for all $(x,r,p,X)\in D\times\mathbb{R}\times\mathbb{R}^d\times\mathbb{S}(d)$.
\end{theorem}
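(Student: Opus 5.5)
We first transfer structural properties of $\mathcal T_t$ to the generator $\mathcal G$ on $C_b^\infty(D)$, and then use locality to reduce $\mathcal G$ to a function of the second-order jet.

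\textbf{Properties of $\mathcal G$ on $C_b^\infty(D)$.} From \ref{S1}, each $\mathcal T_t$ is sublinear. Since limits of difference quotients preserve subadditivity and positive homogeneity, $\mathcal G[f+g]\le\mathcal G[f]+\mathcal G[g]$ and $\mathcal G[\lambda f]=\lambda\mathcal G[f]$ for $\lambda\ge0$. This yields the following \emph{positive maximum principle}. Let $f\in C_b^\infty(D)$ attain a nonnegative global maximum at $x$. By monotonicity and \ref{S4}, $\mathcal T_tf(x)\le \mathcal T_t(f(x)\vee 0)(x)$, and for a constant $c\ge 0$ we have $\mathcal T_tc\le c$. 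Hence $\mathcal T_tf(x)-f(x)\le0$, and therefore $\mathcal G[f](x)\le0$.

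\textbf{Reduction to the second-order jet.} By locality (Assumption~\ref{assume:Cinftyindomain}), $\mathcal G[f](x)$ depends only on the germ of $f$ at $x$. We then show it depends only on $j^2_xf=(f(x),\nabla f(x),\nabla^2f(x))$. Fix $x$, and let $f,g\in C_b^\infty(D)$ have the same 2-jet at $x$. Put $h=f-g$, so $h=o(|y-x|^2)$. For $\varepsilon>0$, let $\phi_\varepsilon(y)=\varepsilon|y-x|^2\chi(y)$ with $\chi$ a cutoff equal to $1$ near $x$. Near $x$, $h-\phi_\varepsilon$ has a strict local maximum $0$ at $x$. Modify it outside a small ball by cutoff (locality allows this) to obtain a global maximum $0$ at $x$. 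Subadditivity gives
\[
\mathcal G[f](x)\le \mathcal G[g+\phi_\varepsilon](x)+\mathcal G[h-\phi_\varepsilon](x)\le \mathcal G[g+\phi_\varepsilon](x),
\]
where the last step uses the maximum principle on the localized $h-\phi_\varepsilon$. We then need $\mathcal G[g+\phi_\varepsilon](x)\to\mathcal G[g](x)$. By subadditivity, $\mathcal G[g+\phi_\varepsilon]\le \mathcal G[g]+\varepsilon\mathcal G[\phi_1]$, and the right side tends to $\mathcal G[g](x)$. Symmetrizing gives $\mathcal G[f](x)=\mathcal G[g](x)$. We then define
\[
G(x,r,p,X):=\mathcal G[\psi_{x,r,p,X}](x),
\]
where $\psi_{x,r,p,X}(y)=\big(r+p\cdot(y-x)+\tfrac12(y-x)^\top X(y-x)\big)\chi_x(y)$ and $\chi_x$ is a fixed cutoff equal to $1$ near $x$. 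The representation \eqref{eq:thm:representationgenerator_eq1} follows, and $G$ is well defined independently of the cutoff.

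\textbf{Properties (G1)--(G3).} Sublinearity of $G$ in $(r,p,X)$ is inherited from $\mathcal G$, since $j^2$ is linear. For \ref{item:G2}, take $X\ge Y$ and compare $\psi_{x,r,p,Y}$ with $\psi_{x,r,p,X}$ plus a small $-\varepsilon|y-x|^2$ term. The difference has a local maximum $0$ at $x$, so the maximum principle and the argument above give $G(x,r,p,Y)\le G(x,r,p,X)$. For \ref{item:G3}, take $r\ge s$. Then $\psi_{x,s,p,X}-\psi_{x,r,p,X}=(s-r)\chi_x$ attains its maximum $0$ at $x$ after we arrange $0\le\chi_x\le1$, so the maximum principle gives $G(x,r,\cdot)\le G(x,s,\cdot)$.

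\textbf{Continuity (main obstacle).} For fixed $x$, $G(x,\cdot)$ is a finite sublinear function on a finite-dimensional space, hence convex and automatically continuous. Joint continuity in $x$ is the harder part. Here I would use that $\mathcal G[f]\in C(D)$ for each $f\in C_b^\infty(D)$. Fix a basis of polynomials $e_i(y)$ of degree at most $2$, times a single cutoff equal to $1$ on a neighborhood $U$ of $x_0$. For $x\in U$, the 2-jet of $\sum_i c_ie_i$ at $x$ covers every $(r,p,X)$ via an affine change of coefficients that depends continuously on $x$. This gives $G(x,r,p,X)=\mathcal G[\sum_i c_i(x,r,p,X)e_i](x)$. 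Sublinearity yields the Lipschitz-type bound
\[
|\mathcal G[u](x)-\mathcal G[v](x)|\le \sum_i|c_i^u-c_i^v|\max\big(|\mathcal G[e_i](x)|,|\mathcal G[-e_i](x)|\big),
\]
which is locally bounded in $x$ because each $\mathcal G[\pm e_i]$ is continuous. Combined with continuity of $x\mapsto\mathcal G[u](x)$ for fixed $u$, this gives joint continuity of $G$.
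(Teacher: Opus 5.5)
Your argument is correct and follows the same route as the paper: a positive maximum principle from (V1) and (V4), then locality plus the local-comparison trick to reduce $\mathcal G[f](x)$ to the 2-jet, then inherited sublinearity for (G1)--(G3), and the jet-basis Lipschitz argument of Lemma~\ref{lem:local_jet_lipschitz} for joint continuity. The only difference is that you prove directly the maximum principle and jet-reduction steps the paper cites from K\"uhn and Alvarez et al.

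There is one sign slip in (G3). The function to test is $\psi_{x,r,p,X}-\psi_{x,s,p,X}=(r-s)\chi_x$. With $0\le\chi_x\le1$ and $\chi_x(x)=1$, it attains its nonnegative global maximum $r-s$ at $x$, so subadditivity gives $G(x,r,p,X)\le G(x,s,p,X)+\mathcal G[(r-s)\chi_x](x)\le G(x,s,p,X)$. The function $(s-r)\chi_x$ that you wrote does not attain its maximum at $x$, and testing it would give the reverse inequality.
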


This function $G$ plays a central role throughout the paper.
It is a spatially local object determined by the valuation rule in a neighborhood of each point $x$, while the valuation rule itself is a global object determined by its behavior on the entire domain $D$.

\begin{definition}
The function $G:D\times\mathbb{R}\times\mathbb{R}^d\times\mathbb{S}(d)\to\mathbb{R}$  in Theorem \ref{thm:representationgenerator} is called the generating function of the dynamic sublinear valuation rule $\{\mathcal{T}_t\}_{t\geq0}$.
\end{definition}

%\paragraph*{Related Nonlinear Parabolic Equations}

We now introduce the parabolic comparison principle for generating functions. We say that a generating function  $G$  satisfies the parabolic comparison principle if, for every $T>0$, whenever $v^+$ is a bounded viscosity supersolution and $v^-$ is a bounded viscosity subsolution of \eqref{eq:mainHJB}, one has 
\[
v^+\ge v^-
\qquad\text{on } [0,T)\times D.
\]
In recovering uncertainty from valuation, a central point is that the local valuation mechanism $G$ should uniquely determine the valuation rule, which is a global object. This uniqueness is ensured by the parabolic comparison principle.

\begin{assume}\label{assume:paraboliccomparison}
Assume that the function $G:D\times\mathbb R\times\mathbb R^d\times\mathbb S(d)\to\mathbb R$ satisfies the parabolic comparison principle.
\end{assume}

The precise relationship between the valuation rule and the associated nonlinear PDE is given in the proposition below, with the proof postponed to Appendix~\ref{sec:analyticidentification}. 
We emphasize  that this result is fully model-free: it relies solely on the economic axioms imposed on the valuation rule and does not require any probabilistic assumptions. When a particular model is specified, the PDE \eqref{eq:mainHJB} specializes to a Feynman--Kac-type equation. In particular, under the Black--Scholes specification, \eqref{eq:mainHJB} reduces precisely to the classical Black--Scholes pricing PDE.

\begin{proposition}\label{prop:PDErepn}   Let   $\{\mathcal{T}_t\}_{t\geq0}$ be  a dynamic sublinear valuation rule on $C_b(D)$ satisfying Assumption \ref{assume:Cinftyindomain}, and let $G$ be its generating function.
Then, for any $f\in C_b(D)$, a function $v:[0,\infty)\times D\to\mathbb{R}$ defined by 
\begin{equation}
\label{eqn:PDE}
v(t,x):=\mathcal{T}_tf(x)
\end{equation}  
is a bounded viscosity solution to the PDE
\begin{align}\label{eq:mainHJB}
\partial_t v = G(x,v,\nabla v,\nabla^2 v), \quad v(0,x)=f(x).
\end{align}
If we further assume that $G$ satisfies Assumption~\ref{assume:paraboliccomparison}, then $\{\mathcal T_t\}_{t\ge0}$ is a unique dynamic sublinear valuation rule satisfying Assumption \ref{assume:Cinftyindomain} with generating function $G$.\end{proposition}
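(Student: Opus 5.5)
We prove the two assertions separately. For the viscosity property, we fix $f\in C_b(D)$ and set $v(t,x)=\mathcal T_tf(x)$. First, $v$ is bounded by \ref{S4}, and it is continuous on $[0,\infty)\times D$. Continuity follows from \ref{S5}: convergence in the mixed topology implies bounded local uniform convergence, and combining this with the equicontinuity in $t$ gives joint continuity. We show the subsolution property; the supersolution property is symmetric.

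Let $\varphi$ be a smooth test function with $v-\varphi$ attaining a strict local maximum $0$ at $(t_0,x_0)$, $t_0>0$. By localization (multiplying by cutoffs and using Assumption~\ref{assume:Cinftyindomain}) we may take $\varphi(t,\cdot)\in C_b^\infty(D)$ with bounded derivatives. We may also arrange $v\le\varphi$ globally near $t_0$. This requires modifying $\varphi$ away from $x_0$ to dominate $v$, which is allowed because $v$ is bounded and $G$ only sees local data.

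For small $h>0$, the semigroup property \ref{S3} together with monotonicity \ref{S1} gives
\[
\varphi(t_0,x_0)=v(t_0,x_0)=\mathcal T_h v(t_0-h,\cdot)(x_0)\le \mathcal T_h\varphi(t_0-h,\cdot)(x_0).
\]
Hence
\[
0\le \frac{\mathcal T_h\varphi(t_0-h,\cdot)(x_0)-\varphi(t_0-h,x_0)}{h}-\frac{\varphi(t_0,x_0)-\varphi(t_0-h,x_0)}{h}.
\]
Next we write $\varphi(t_0-h,\cdot)=\varphi(t_0,\cdot)-h\,\partial_t\varphi(t_0,\cdot)+O(h^2)$ uniformly. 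Sublinearity and \ref{S4} give
\[
|\mathcal T_h g-\mathcal T_h k|\le\|g-k\|_\infty,
\]
so the first quotient equals $(\mathcal T_h\varphi(t_0,\cdot)-\varphi(t_0,\cdot))(x_0)/h+o(1)$. As $h\downarrow0$ this tends to $\mathcal G[\varphi(t_0,\cdot)](x_0)=G(x_0,\varphi,\nabla\varphi,\nabla^2\varphi)$ by Theorem~\ref{thm:representationgenerator}. The second quotient tends to $\partial_t\varphi(t_0,x_0)$. We conclude
\[
\partial_t\varphi\le G(x_0,v,\nabla\varphi,\nabla^2\varphi),
\]
which is the subsolution inequality.

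The step that needs care is the domination $v\le\varphi$ globally. With merely a local maximum, monotonicity cannot be applied directly. We handle this by replacing $\varphi$ with $\varphi+\chi$, where $\chi\in C_b^\infty$ is zero near $x_0$ and large enough (at most $2\|f\|_\infty$) elsewhere. Locality of $\mathcal G$ ensures the limit is unchanged, but this is legitimate only once the limit defining $\mathcal G$ is known to be locally uniform in $x$, which holds by the definition of $\mathcal D(\mathcal G)$. The time direction is handled by the strictness of the maximum and continuity of $v$. This localization is the main technical obstacle.

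For uniqueness, let $\{\mathcal S_t\}$ be another dynamic sublinear valuation rule satisfying Assumption~\ref{assume:Cinftyindomain} with the same generating function $G$. By the first part, for each $f$ both $\mathcal T_tf$ and $\mathcal S_tf$ are bounded viscosity solutions of \eqref{eq:mainHJB} with the same initial datum $f$. Assumption~\ref{assume:paraboliccomparison}, applied in both directions on $[0,T)\times D$ for every $T$, yields $\mathcal T_tf=\mathcal S_tf$. Equality at $t=T$ itself follows by continuity in $t$, so $\mathcal T=\mathcal S$.
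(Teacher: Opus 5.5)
Your overall route is the paper's: boundedness and continuity from \ref{S4} and \ref{S5}, then the viscosity inequalities via the semigroup-to-viscosity argument, then comparison in both directions for uniqueness. The paper imports the semigroup-to-viscosity step from Hollender and K\"uhn, whereas you write it out, and your uniqueness part is fine as stated.

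The written-out argument has a gap, however, at the step where nonlinearity matters: passing from the time-dependent test function to the generator. The nonexpansive bound $|\mathcal T_h g-\mathcal T_h k|\le\|g-k\|_\infty$, with $g=\varphi(t_0-h,\cdot)$ and $k=\varphi(t_0,\cdot)$, only gives $|\mathcal T_hg(x_0)-\mathcal T_hk(x_0)|\le h\|\partial_t\varphi(t_0,\cdot)\|_\infty+O(h^2)$. After dividing by $h$, the error in your first quotient is therefore $O(1)$, not $o(1)$. That error cancels against $\bigl(\varphi(t_0,x_0)-\varphi(t_0-h,x_0)\bigr)/h$ only if $\mathcal T_h$, acting on the perturbation $-h\psi$ with $\psi:=\partial_t\varphi(t_0,\cdot)$, contributes $-h\psi(x_0)+o(h)$ at $x_0$. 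For a nonlinear operator this does not follow from nonexpansiveness.

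The claimed $o(1)$ is true, but it needs sublinearity together with strong continuity. Write $\varphi(t_0-h,\cdot)=\varphi(t_0,\cdot)-h\psi+R_h$ with $\|R_h\|_\infty=o(h)$. Sublinearity, monotonicity and \ref{S4} give
\[
\mathcal T_h\varphi(t_0-h,\cdot)\le\mathcal T_h\varphi(t_0,\cdot)+h\,\mathcal T_h(-\psi)+\|R_h\|_\infty ,
\]
so $0\le h^{-1}\bigl(\mathcal T_h\varphi(t_0,\cdot)(x_0)-\varphi(t_0,x_0)\bigr)+\mathcal T_h(-\psi)(x_0)+o(1)$. Then \ref{S5} gives $\mathcal T_h(-\psi)(x_0)\to-\psi(x_0)$, which yields the subsolution inequality.

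For the supersolution you need the mirrored estimate $\mathcal T_h\varphi(t_0-h,\cdot)\ge\mathcal T_h\varphi(t_0,\cdot)-h\,\mathcal T_h\psi-\|R_h\|_\infty$. This again comes from sublinearity, not linearity, so ``symmetric'' holds only after this adjustment.

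By contrast, the global domination you single out as the main obstacle is routine. It suffices to cut $\varphi$ off and add a nonnegative smooth function that vanishes near $x_0$ and equals $2\|f\|_\infty$ away from it. You also only need the pointwise limit at $x_0$ for the single modified function $\tilde\varphi(t_0,\cdot)\in C_b^\infty(D)$, evaluated through its $2$-jet by Theorem~\ref{thm:representationgenerator}; no locally uniform limit is required.

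One further point: if your viscosity definition admits $C^{1,2}$ test functions, cutoffs do not make them $C^\infty$ in $x$, and the generator is only known on $C_b^\infty(D)$. In that case you need the standard reduction to smooth (e.g.\ polynomial) test functions via parabolic jets and the continuity of $G$.
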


The generating function $G$ yields an analytic description of the valuation rule through a nonlinear parabolic equation. 
For each payoff $f\in C_b(D)$, the valuation function $v(t,x):=\mathcal T_t f(x)$ satisfies
\eqref{eq:mainHJB}.
While recovering uncertainty from valuation, this PDE representation is useful because it makes explicit how the local valuation mechanism $G$ determines the global evolution of the valuation function. 
In other words, it provides the analytic bridge from the infinitesimal object recovered from the valuation rule to the full dynamic valuation itself.
Because smooth solutions need not exist in degenerate cases, and because within our axiomatic framework it is not known a priori whether the recovered generating function $G$ is degenerate or nondegenerate, we work entirely within the viscosity-solution framework.\footnote{See \cite{crandall1992user} or
\cite{crandall2000lp} for the formal definition of viscosity solutions.}

\subsection{From Generating Function to Support Sets}  
We next introduce the support sets associated with a generating function $G$. 
For $V=(C,B,\Sigma)\in \mathbb R\times\mathbb R^d\times\mathbb S(d)$, let
\[
L^V(r,p,X)
:=
\frac12\operatorname{tr}(\Sigma X)
+B\cdot p
+Cr,
\qquad
(r,p,X)\in\mathbb R\times\mathbb R^d\times\mathbb S(d).
\]
For each $x\in D$, the support set of $G(x,\cdot)$ is defined as
\begin{align}\label{def:supportset} 
A(x)
:=
\Big\{
V\in(-\infty,0]\times\mathbb R^d\times\mathbb S^+(d)
:
G(x,W)\ge L^V(W)
\text{ for all }
W\in\mathbb R\times\mathbb R^d\times\mathbb S(d)
\Big\}.
\end{align}
Since the map $U\mapsto G(x,U)$ is sublinear, the classical dual representation theorem for sublinear functionals (see, e.g., \cite{rockafellar2015convex}) implies that $A(x)$ is nonempty, compact, and convex. Moreover, $G$ admits the representation
\[
G(x,U)=\sup_{V\in A(x)}L^V(U)\,.
\]

%[[While the generating function is the local object of valuation, the corresponding support sets serve as the local objects of uncertainty. ]]

 %In this sense, the support set represents the local uncertainty object encoded by the generating function. 

\subsection{From Support Sets to Uncertainty}
\label{subsec:admissiblevirtualmodel}
 
 We now pass from the support sets to a probabilistic uncertainty structure.
A progressively measurable process
\[
\beta=(C,B,\Sigma):[0,\infty)\times\hat{\Omega}
\to
(-\infty,0]\times\mathbb{R}^d\times\mathbb{S}^+(d)
\]
is called a \emph{coefficient field}.
A coefficient field $\beta$ is \emph{admissible} if
\begin{equation}
    \begin{aligned}
         &\beta(t,\omega)\in A(\omega(t))
         \quad\text{for } t<\tau_{\mathrm{exp}}(\omega),\\
         &\beta(t,\omega)=0
         \quad\text{for } t\ge\tau_{\mathrm{exp}}(\omega).
    \end{aligned}
\end{equation}
Equivalently,  
admissibility is characterized by
\begin{align}\label{eq:coefficientadmissible}
    L^\beta(t,\omega,U)
    \le
    G(\omega(t),U)
    \qquad
    \text{for all }(t,\omega,U)\text{ with }t<\tau_{\mathrm{exp}}(\omega),
\end{align}
where $L^\beta(t,\omega,\cdot):
\mathbb{R}\times\mathbb{R}^d\times\mathbb{S}(d)
\to
\mathbb{R}$
denotes the linear functional associated with the coefficient field $\beta$, defined by
\[
L^\beta(t,\omega,U)
:=
\frac12\operatorname{tr}\bigl(\Sigma(t,\omega)X\bigr)
+
B(t,\omega)\cdot p
+
C(t,\omega)r,
\qquad
U=(r,p,X).
\]
Thus, admissibility means that the linear functional associated with $\beta$ is pointwise dominated by the generating function $G$.
We write 
$\mathcal B_{\mathrm{ad}}(G)$ for the collection of all admissible
coefficient fields. 
For a coefficient field $\beta=(C,B,\Sigma)$, the value
 $\beta(t,\omega)\in (-\infty,0]\times\mathbb{R}^d\times\mathbb{S}^+(d)$ 
at the time-path pair $(t,\omega)$ is called the local characteristic of $\beta$ at $(t,\omega)$.
The support set $A(x)$ therefore represents the collection of all possible local characteristics of admissible coefficient fields at time-path pairs satisfying $\omega(t)=x$.

For any $x\in D$ and any admissible coefficient field $\beta=(C,B,\Sigma)$,
we construct a cumulative discounting process and a class of laws for the underlying state process. 
The cumulative discounting process is determined by the $C$-component of $\beta$.
Let $k:=-C$ and define
\[
A_t^k:=\int_0^t k_s\,ds,
\qquad t\ge0.\footnote{The integral is defined pathwise and therefore does not depend on any underlying probability measure.}
\]
Next, letting $\gamma:=(B,\Sigma)$, we define $\mathcal P_x(L^\gamma)$ as the collection of solutions to the generalized $L^\gamma$-martingale problem starting from $x$ (Remark \ref{remark:MG}), where
\begin{equation}
    \label{eqn:L}
L^\gamma(t,\omega,p,X)
:=
\frac12\operatorname{tr}(\Sigma(t,\omega)X)
+
B(t,\omega)\cdot p\,.
\end{equation}
Each element \(\mathbb Q\in\mathcal P_x(L)\) represents a possible law of the underlying state process.

This construction leads to the definition of uncertainty structures.
The family $\mathcal U_x(G)$ introduced below consists of pairs of a cumulative discounting process and a law  for the underlying state process associated with admissible coefficient fields.
Note that $\delta_{\triangle}$ denotes the Dirac measure concentrated on the constant path identically equal to the cemetery state $\triangle$.

\begin{definition}\label{def:DUSassociatedG}
	For each $x\in \hat D$, define
	\[
	\mathcal U_x(G)
	:=
	\begin{cases}
		\displaystyle
		\Bigl\{
		(A^k,\mathbb Q)\in \mathfrak U
		:\; (-k,\gamma)\in\mathcal B_{\mathrm{ad}}(G)
		\text{ and } \mathbb Q\in\mathcal P_x(L^\gamma)
		\Bigr\},
		& x\in D,\\[1.2em]
		\{(0,\delta_{\triangle})\},
		& x=\triangle.
	\end{cases}
	\]
The family of classes of models
 $\mathcal U(G):=\{\mathcal U_x(G)\}_{x\in \hat D}$
	is called the \emph{uncertainty structure associated with $G$}.
\end{definition}

We recall the definition of a solution to a generalized martingale problem.    Let $L=L(t,\omega,p,X):[0,\infty)\times \hat\Omega\times\mathbb{R}^d\times\mathbb{S}(d)\to\mathbb{R}$ be a measurable function that is linear in $(p,X)$.
    A probability measure $\mathbb{Q}$ on the extended canonical path space $(\hat{\Omega},\hat{\mathcal{F}},(\hat{\mathcal F}_t)_{t\ge0})$ is called a solution to the generalized $L$-martingale problem starting from $x\in D$ if
    \begin{enumerate}
        \item [(i)] $\mathbb{Q}( X_0=x)=1$, and
        \item [(ii)] for every $f\in C_c^\infty(D)$ and $n\geq1$, a process $(M_t^n)_{t\geq 0}$  defined by
        \begin{align}\label{eqn:mart_p}
            M_t^n:=f(X_{t\wedge\tau_n})-\int_0^{t\wedge\tau_n} L(u,\,\cdot\,,\nabla f(X_u),\nabla^2f(X_u))\,du
        \end{align}
        is a continuous $\mathbb{Q}$-martingale.  
    \end{enumerate}
A solution to the generalized $L$-martingale problem may fail to exist or may not be unique. We denote by $\mathcal P_x(L)$ the collection of all solutions starting from $x$.  Refer to \cite{pinsky1995positive} for further details.

\begin{remark}\label{remark:MG} A more intuitive characterization of a solution to a generalized martingale problem is provided by the corresponding stochastic differential equation.
Consider the operator $L^\gamma$ in \eqref{eqn:L}, where
$\gamma=(B,\Sigma)$. 
A probability measure $\mathbb Q$ is a solution to the generalized
$L^\gamma$-martingale problem starting from $x\in D$ if and only if it is the law, up to the explosion time, of a weak solution to
\begin{equation}
\label{eqn:weak_SDE}
dX_s
=
B(s,\cdot)\,ds
+
\sigma(s,\cdot)\,dW_s,
\qquad
X_0=x,
\end{equation}
where $W$ is a Brownian motion and $\sigma$ is 
a nonnegative symmetric matrix-valued function
 satisfying
$\Sigma=\sigma\sigma^\top$.
\end{remark}

\subsection{Completing the converse direction}\label{subsec:stochasticrepn1}

For a given dynamic valuation rule $\{\mathcal T_t\}_{t\ge0}$, we have constructed the uncertainty structure $\mathcal U(G)$.
It remains to show that the robust valuation rule under this uncertainty structure coincides with the original dynamic valuation rule.
Establishing this equivalence completes the cycle
\begin{align}\label{eq:mainchain}
	\{\mathcal T_t\}_{t\ge0} \;\to\; G \;\to\; A(\,\cdot\,) \;\to\; \mathcal U(G) \;\to\; \{\mathcal T_t^{\mathcal U(G)}\}_{t\ge0}= \{\mathcal T_t\}_{t\ge0}\,,
\end{align}
illustrated in Figure~\ref{fig:linear_valuation_uncertainty}.

The following Lyapunov condition provides a convenient sufficient criterion for completing this cycle.
To ensure that the robust valuation rule under the uncertainty structure $\mathcal U(G)$ satisfies the stability axioms \ref{S2} and \ref{S5} of Definition~\ref{def:SG}, we require each model class $\mathcal U_x(G)$ to be weakly compact.
At a conceptual level, weak compactness provides control over the tail behavior of the corresponding state-process laws.
We therefore impose a Lyapunov-type condition that guarantees this compactness property for the family $\mathcal U(G)$; see Proposition~\ref{prop:PG_is_representing_VUS}.
Such conditions are standard in the martingale-problem literature, broad enough for the economic applications considered here, and typically straightforward to verify.

\begin{assume}\label{assume:lyapunov}
	Assume that the function $G:D\times\mathbb{R}\times\mathbb{R}^d\times\mathbb{S}(d)\to\mathbb{R}$ satisfies a Lyapunov-type condition:	
	There exist a positive function $\phi\in C^2(D)$ and a constant $C$ such that
	$\phi(x)\to \infty$ as $x\to\partial D$ and for all $x\in D$, 
	\[
	G(x,\phi(x),\nabla \phi(x),\nabla^2 \phi(x))\le C \phi(x)\,.
	\] 
\end{assume}

The next theorem shows that, under the comparison principle and the Lyapunov condition above, the uncertainty structure $\mathcal U(G)$ generates a dynamic sublinear valuation rule whose infinitesimal generator is precisely the original generating function $G$.
The proof is given in Appendix~\ref{sec:proofmarkov}.
Recall that the robust valuation rule associated with the uncertainty structure $\mathcal U(G)$ is given by
\begin{align}\label{eq:thm:abstractconstructionSG_main}
\begin{split}
\mathcal{T}_t^{\mathcal U(G)} f(x)
:=&
\sup_{(A,\mathbb{Q})\in\mathcal{U}_x(G)}
\mathbb{E}^{\mathbb{Q}}\!\left[
e^{-A_t}f(X_t)\mathbb{I}_{\{\tau_{\mathrm{exp}}>t\}}
\right] \\
=&
\sup_{(-k,\gamma)\in\mathcal{B}_{\mathrm{ad}}(G)}
\sup_{\mathbb{Q}\in\mathcal{P}_x(L^\gamma)}
\mathbb{E}^{\mathbb{Q}}\!\left[
e^{-\int_0^t k_s\,ds}
f(X_t)\mathbb{I}_{\{\tau_{\mathrm{exp}}>t\}}
\right]\,,
\end{split}
\end{align}
for $(t,x)\in[0,\infty)\times D$ and $f\in C_b(D)$.

\begin{theorem}\label{thm:abstractconstructionSG}
	Let $G:D\times\mathbb{R}\times\mathbb{R}^d\times\mathbb{S}(d)\to\mathbb{R}$ satisfy \ref{item:G1}-\ref{item:G3}, Assumptions~\ref{assume:paraboliccomparison} and~\ref{assume:lyapunov}. 
   Then the robust valuation rule $\{\mathcal T_t^{\mathcal U(G)}\}_{t\ge0}$ is a dynamic sublinear valuation rule.
	Moreover, its infinitesimal generator satisfies Assumption~\ref{assume:Cinftyindomain}, and the associated 
     generating function is exactly $G$.
\end{theorem}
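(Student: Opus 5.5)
The plan is to define $v(t,x):=\mathcal T^{\mathcal U(G)}_t f(x)$ as the value function of a stochastic control problem, with controls given by admissible coefficient fields, and to check the axioms \ref{S1}--\ref{S3} one at a time. Several steps rest on weak compactness of $\mathcal U_x(G)$, so that comes first.

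\emph{Compactness.} We show that each $\mathcal U_x(G)$, viewed in the pair space $\mathfrak U$, is weakly compact, and that $x\mapsto\mathcal U_x(G)$ is upper hemicontinuous.
\begin{itemize}
\item Tightness of the laws $\mathbb Q$ on $\hat\Omega$ follows from Assumption~\ref{assume:lyapunov}. For any admissible $\beta=(-k,B,\Sigma)$ we have $L^\beta\le G$ pointwise. Applied to the Lyapunov function $\phi$ (noting $k\ge0$ and $\phi>0$), this gives $\frac12\operatorname{tr}(\Sigma\nabla^2\phi)+B\cdot\nabla\phi\le G(x,0,\nabla\phi,\nabla^2\phi)$. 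Controlling the $r$-term via \ref{item:G3} and sublinearity yields $e^{-Ct}\phi(X_{t\wedge\tau_n})$ as a supermartingale, uniformly over models.
\item This bounds the probability of reaching $\partial D_n$. Combined with the local boundedness of $A(x)$ (the support sets are compact and $G$ is continuous), it gives uniform modulus-of-continuity estimates for paths before $\tau_n$.
\item The martingale-problem property and the constraint $\beta\in A(X_t)$ pass to weak limits, because the sets $A(x)$ are convex, compact and upper hemicontinuous in $x$ (by continuity of $G$). The standard argument works through the characteristics and Mazur-type convex combinations.
\item The discount processes $A^k$ are handled the same way, by recording $k$ as part of the characteristic.
\end{itemize}

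\emph{Static axioms.}
\begin{itemize}
\item \ref{S1}, sublinearity and monotonicity: immediate from the supremum of linear positive functionals.
\item \ref{S4}: holds because $k\ge0$.
\item \ref{S2}, continuity from above: $f_n\searrow0$ gives $\mathbb E^{\mathbb Q}[\cdots f_n]\searrow0$ for each model. Dini's theorem on the weakly compact set $\mathcal U_x(G)$ upgrades this to uniform convergence, since the maps $\mathbb Q\mapsto\mathbb E^{\mathbb Q}[\cdots]$ are upper semicontinuous.
\item Continuity of $v(t,\cdot)$ in $x$ and the local-uniform bounds needed for \ref{S5}: these come from upper hemicontinuity together with the viscosity characterization below.
\end{itemize}

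\emph{Time consistency \ref{S3}.} We prove the dynamic programming principle $\mathcal T_{t+s}=\mathcal T_t\mathcal T_s$.
\begin{itemize}
\item Stability under conditioning (via regular conditional probabilities of $\mathbb Q$ given $\hat{\mathcal F}_t$, shifted by $\theta_t$) gives $\le$.
\item Stability under measurable pasting gives $\ge$. This requires a measurable selection of $\varepsilon$-optimal models, which is available because $x\mapsto\mathcal U_x(G)$ has an analytic (in fact closed) graph. The argument follows \cite{nutz2013constructing}.
\end{itemize}

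\emph{Identifying the generator and the PDE.} This step also yields \ref{S5}.
\begin{itemize}
\item From the dynamic programming principle and It\^o's formula for test functions, $v$ is a viscosity solution of \eqref{eq:mainHJB}.
\item The supersolution property uses constant-in-time controls $V\in A(x)$ that nearly attain $\sup_V L^V=G$. Because $A$ is only upper hemicontinuous, we need near-optimal admissible fields locally; we would obtain them by taking $V(y)$ to be a continuous selection approximately maximizing $L^V(\nabla^2\psi\ldots)$ near $x$, using continuity of $G$ and a Michael-type selection.
\item The subsolution property uses $L^\beta\le G$.
\item For $f\in C_b^\infty(D)$, the generator limit $(\mathcal T_tf-f)/t\to G(x,f,\nabla f,\nabla^2 f)$, locally uniformly, follows directly from the same It\^o estimates and continuity of $G$. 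This gives both the domain inclusion in Assumption~\ref{assume:Cinftyindomain} and locality.
\item Strong continuity in the mixed topology: $v$ is bounded and continuous in $(t,x)$. For continuity in $x$ we invoke Assumption~\ref{assume:paraboliccomparison}: the upper and lower semicontinuous envelopes of $v$ are a sub- and a supersolution, so comparison forces them to coincide. Local uniform convergence plus the uniform bound is exactly convergence in the mixed topology.
\end{itemize}

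The main obstacle I expect is the compactness and closedness of $\mathcal U_x(G)$ together with the measurable pasting. The constraint $\beta(t,\omega)\in A(\omega(t))$ with state-dependent sets has to survive weak limits and conditioning, including near explosion, where the convention $A_{\tau_{\exp}}=\infty$ must be preserved. I would attack this by encoding the characteristics as an absolutely continuous measure-valued object on $[0,\infty)\times\hat D\times A$ and using convexity of $A(x)$.
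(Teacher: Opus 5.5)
Your overall architecture is the paper's. You start from compactness of the model classes and get \ref{S1}, \ref{S4} and \ref{S2} (by Dini). You obtain \ref{S3} from conditioning and pasting via \cite{nutz2013constructing}. You use the sub/supersolution properties of the semicontinuous envelopes together with Assumption~\ref{assume:paraboliccomparison} to get continuity and \ref{S5}. Finally you identify the generator from martingale and supermartingale estimates. However, the step you yourself flag as the crux, compactness, fails at its first bullet.

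\textbf{The Lyapunov step.} You claim that \ref{item:G3} and sublinearity upgrade $\tfrac12\operatorname{tr}(\Sigma\nabla^2\phi)+B\cdot\nabla\phi\le G(x,0,\nabla\phi,\nabla^2\phi)$ to a bound by $C\phi$, so that $e^{-Ct}\phi(X_{t\wedge\tau_n})$ is a supermartingale under every undiscounted state law $\mathbb Q$. But \ref{item:G3} gives $G(x,0,p,X)\ge G(x,\phi(x),p,X)$, which is the wrong direction. Sublinearity only gives $G(x,0,p,X)\le G(x,\phi(x),p,X)+\phi(x)\,G(x,-1,0,0)$, where $G(x,-1,0,0)=\max\{-c:(c,b,\Sigma)\in A(x)\}$ is the largest admissible killing rate, and this may blow up at $\partial D$.

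For a concrete counterexample, take $D=\mathbb R$, $G(x,r,p,X)=\tfrac12X+x^3p-2x^2r$ and $\phi=1+x^2$.
\begin{itemize}
\item $G(x,\phi,\phi',\phi'')=1-2x^2\le\phi$, so \ref{item:G1}--\ref{item:G3} and Assumption~\ref{assume:lyapunov} hold with $C=1$.
\item Yet $G(x,0,\phi',\phi'')=1+2x^4$, which no bound of the form $C\phi$ controls.
\item Here $\mathcal U_x(G)$ consists of a single pair whose state law solves $dX=X^3dt+dW$. By Feller's test this explodes with positive probability.
\item The pair lies in $\mathfrak U$ only because $\int_0^{\tau_{\mathrm{exp}}}2X_s^2\,ds=\infty$. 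This is exactly why the pair space imposes $A_{\tau_{\mathrm{exp}}}=\infty$.
\end{itemize}

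\textbf{What the correct estimate gives.} Keeping the $r$-term, the Lyapunov condition makes $e^{-Ct-A^k_t}\phi(X_t)$ a $\mathbb Q$-supermartingale. Equivalently, $e^{-Ct}\phi(X_t)\mathbb I_{\{\tau_{\rm kill}>t\}}$ is a supermartingale under the killed law $\Phi(A^k,\mathbb Q)$.

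Tightness and closedness must therefore be proved for $\Phi(\mathcal U_x(G))$ in $\mathfrak M$, with the $J_1$ topology on $\tilde\Omega$. This is the topology that defines weak compactness in $\mathfrak U$, and it is the one your Dini argument must use. Taking limits of the $\mathbb Q$'s and the $k$'s separately does not identify a limit in this topology, nor does it show that $A_{\tau_{\mathrm{exp}}}=\infty$ survives.

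The paper instead uses the coefficient-free characterization: $\mathbb P\in\Phi(\mathcal U_x(G))$ if and only if the killed processes $M^{f,n}$ of \eqref{eq:def_Mfn} are supermartingales for all $f$ and $n$. These inequalities pass to weak limits and encode $\beta\in A(X_t)$ automatically, and Lemma~\ref{lem:remove_killing_gmp} recovers a genuine pair.

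Relatedly, in your Dini step, upper semicontinuity of $\mathbb P\mapsto\mathbb E^{\mathbb P}[f_n(X_T)\mathbb I_{\{\tau_\infty>T\}}]$ is not automatic, because the indicator is not upper semicontinuous in $J_1$. The paper obtains continuity from $\mathbb P(\tau_{\mathrm{exp}}\le T)=0$ and the absence of an atom of $\tau_{\rm kill}$ at $T$.

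\textbf{Where your route genuinely differs.} Your approach departs from the paper in the supersolution half and in the lower bound for the generator.

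The paper builds \emph{exactly} effective models in three steps:
\begin{itemize}
\item Cellina approximate selections of the merely upper hemicontinuous subgradient correspondence $y\mapsto\nabla G(y,\varphi(y),\nabla\varphi(y),\nabla^2\varphi(y))$;
\item a weak-limit passage to obtain the binding martingale;
\item a separate discretization-and-pasting step (Theorem~\ref{thm:static_to_time_dependent_effective}) for time-dependent test functions.
\end{itemize}
Your $\varepsilon$-optimal admissible Markov controls skip both the limit passage and the static-to-time-dependent step. They suffice for the supersolution inequality and for $\liminf_{h\downarrow0}(\mathcal T_hf-f)/h\ge G(\cdot,f,\nabla f,\nabla^2f)$. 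The paper's exact effective models, in turn, are what it reuses in Section~\ref{sec:char}.

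Your justification needs fixing, though: Michael's theorem requires lower hemicontinuity, not upper. It does apply, because $x\mapsto A(x)$ is in fact Hausdorff continuous: $G(x,\cdot)$ is the support function of $A(x)$ for the inner product $L^V(U)$, and $G$ is continuous. For a space-time test function $u$, set $J_u:=(u,\nabla u,\nabla^2u)$. Then the correspondence $(s,y)\mapsto\overline{\{V\in A(y):L^V(J_u(s,y))>G(y,J_u(s,y))-\varepsilon\}}$ is lower hemicontinuous with nonempty closed convex values. Its continuous selections are admissible and $\varepsilon$-binding along $u$. The discounted Lyapunov estimate above then shows that the resulting pairs lie in $\mathfrak U$.
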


The next corollary provides a stochastic representation of dynamic sublinear valuation rules and constitutes one of the main results of this paper.
It completes the cycle in \eqref{eq:mainchain} by showing that the robust valuation rule under the uncertainty structure $\mathcal U(G)$ coincides with the original valuation rule.
The proof is an immediate consequence of Proposition~\ref{prop:PDErepn} and Theorem~\ref{thm:abstractconstructionSG}.

\begin{corollary}\label{cor:stochasticrepn_original}
	Let $\{\mathcal T_t\}_{t\ge0}$ be a dynamic sublinear valuation rule  satisfying Assumption~\ref{assume:Cinftyindomain}, and let $G$ be its generating function. 
	Suppose that $G$ satisfies Assumptions~\ref{assume:paraboliccomparison} and~\ref{assume:lyapunov}. 
	Then \(\{\mathcal T_t\}_{t\ge0}\) coincides with the robust valuation rule under the uncertainty structure \(\mathcal U(G)\), that is,
\[ \mathcal T_t=\mathcal T_t^{\mathcal U(G)}\quad\text{for all } t\ge0. \]
Equivalently, the uncertainty structure \(\mathcal U(G)\) represents the dynamic sublinear valuation rule \(\{\mathcal T_t\}_{t\ge0}\).
\end{corollary}

Consequently, this completes the first step of our uncertainty identification theory: under suitable conditions, every dynamic sublinear valuation rule admits a representation as a robust valuation under an uncertainty structure.
Our construction identifies the latent models $(A,\mathbb Q)$ underlying the valuation rule by specifying the probabilistic laws governing both the discounting process and the underlying state process.
In this way, the valuation rule itself reveals the latent uncertainty structure under which payoffs are evaluated.

A key insight of our uncertainty identification theory is that a global uncertainty structure can be recovered from local information encoded in a valuation rule.
The generating function and its support set are local objects: their values at a state \(x\in D\) are determined by information in a neighborhood of \(x\).
By contrast, uncertainty structures and robust valuation rules are global objects, since their values depend on the evolution of the state process over the entire domain \(D\).
Corollary~\ref{cor:stochasticrepn_original} shows that 
piecing together these local objects extracted from a dynamic valuation rule recovers the latent uncertainty structure \(\mathcal U(G)\).

\section{Time-Consistency of Uncertainty Structures}\label{sec:canonicality}

Time consistency (\ref{S3} in Definition~\ref{def:SG}) is one of the fundamental properties of dynamic valuation rules.
For a general uncertainty structure $\mathcal U$, however, the associated robust valuation rule $\mathcal T^{\mathcal U}$ need not be time-consistent.
A natural question is therefore how time consistency of a valuation rule is reflected in the underlying uncertainty structure.
We introduce the notion of a time-consistent uncertainty structure and show that it is equivalent to time consistency of the associated robust valuation rule.

To formulate this notion rigorously, we define the operations of conditioning and concatenation for models in $\mathfrak U$.
Given a model $(A,\mathbb Q)$ and a stopping time $\tau$, the conditioned model $(A,\mathbb Q)^{\tau,\omega}$ represents the continuation model obtained after observing the history $\omega$ up to time $\tau(\omega)$: the state-law component is conditioned in the usual regular-conditional-probability sense, while the cumulative discounting process is reset after the conditioning time.
Conversely, if $\nu:\hat\Omega\to\mathfrak U$ is a continuation kernel, the concatenated model $(A,\mathbb Q)\otimes_\tau \nu$ is obtained by following $(A,\mathbb Q)$ up to $\tau$ and then pasting the continuation model $\nu(\omega)$ after $\tau(\omega)$; the state-law component is pasted probabilistically, and the cumulative discounting component is pasted additively.
The precise definitions are given in  Appendix~\ref{subsec:conditioning_concatenation}.

We now introduce the notion of a dynamic uncertainty structure.
A dynamic uncertainty structure
\(
\mathcal U=\{\mathcal U_{t,x}\}_{(t,x)\in[0,\infty)\times\hat D}
\)
possesses stability and recursive properties at the level of models, expressed through compactness, conditioning, and concatenation.
The conditions in Definition~\ref{def:DUS_pairs} are natural model-side counterparts of the axioms imposed on dynamic sublinear valuation rules.
Condition~\ref{item:weakcptnessDUS} imposes weak compactness and upper hemicontinuity of the model classes, mirroring the stability requirements underlying order continuity~\ref{S2} and temporal continuity~\ref{S5}.
Conditions~\ref{item:pair_conditioning} and~\ref{item:pair_pasting} encode the recursive structure of uncertainty through conditioning and concatenation, thereby corresponding to the time-consistency axiom~\ref{S3}.
Thus, dynamic uncertainty structures provide a model-side formulation of the stability and time-consistency properties of dynamic sublinear valuation rules.

\begin{definition}\label{def:DUS_pairs}
An uncertainty structure
\(
\mathcal U=\{\mathcal U_{t,x}\}_{(t,x)\in[0,\infty)\times\hat D}
\)
is called a \emph{dynamic uncertainty structure} (DUS), or is said to be \emph{time-consistent}, if it satisfies the following conditions.
\begin{enumerate}[label=(U\arabic*), ref=(U\arabic*)]
    \item\label{item:initialcondition}  \emph{(Initial condition)} 
    $\mathbb Q(A_s=0,\,X_s=x\;\;\mbox{for all}\;\;s\in[0,t])=1$ for every $(A,\mathbb Q)\in\mathcal U_{t,x}$.
    In particular, $\mathcal U_{t,\triangle}=\{(0,\delta_{\triangle})\}$ for all $t\ge0$.

    \item\label{item:weakcptnessDUS}
   \emph{(Topological regularity)}  For each $(t,x)\in[0,\infty)\times\hat D$, the set $\mathcal U_{t,x}\subset\mathfrak U$ is weakly compact.
    Moreover, the set-valued map $(t,x)\mapsto \mathcal U_{t,x}$
    from $[0,\infty)\times D$ into subsets of $\mathfrak U$ is upper hemicontinuous.

    \item\label{item:pair_conditioning}
    \emph{(Stability under conditioning)}
    For every $(t,x)\in [0,\infty)\times \hat D$, every $(A,\mathbb{Q})\in\mathcal U_{t,x}$, and every finite stopping time $\tau\ge t$,
    \[
    (A,\mathbb Q)^{\tau,\omega}\in \mathcal U_{\tau(\omega),\,\omega(\tau(\omega))}
    \qquad
    \text{for $\mathbb Q$-a.s.\ }\omega.
    \]

    \item\label{item:pair_pasting}
    \emph{(Stability under concatenation)}
    For every $(t,x)\in [0,\infty)\times \hat D$, every $(A,\mathbb Q)\in\mathcal U_{t,x}$, every finite stopping time $\tau\ge t$, and every $\hat{\mathcal F}_\tau$-measurable kernel $\nu:\hat\Omega\to\mathfrak U$,
    if $\nu(\omega)\in \mathcal U_{\tau(\omega),\,\omega(\tau(\omega))}$ for all $\omega\in\hat\Omega$,
    then
    \[
    (A,\mathbb Q)\otimes_\tau \nu \in \mathcal U_{t,x}.
    \]
\end{enumerate}
Moreover, we say that the DUS $\mathcal U$ is time-homogeneous if, for every $(t,x)\in [0,\infty)\times \hat D$,
\[
\mathcal U_{t,x}=\mathcal U_{0,x}\circ \theta_t^{-1}.
\]
In other words, $\mathcal U_{t,x}$ is obtained from $\mathcal U_{0,x}$ by the time-$t$ shifting operation.
\end{definition}

Proposition~\ref{prop:PG_is_representing_VUS} shows that the uncertainty structure $\mathcal U(G)$ is a dynamic uncertainty structure.
Conditions \ref{item:pair_conditioning} and \ref{item:pair_pasting} play a role analogous to the rectangularity and stability-under-conditioning-and-pasting conditions that appear in the literature on recursive multiple priors, dynamic risk measures, and sublinear expectations on path space; see, for example, \cite{epstein2003recursive}, \cite{cheridito2006dynamic}, and \cite{nutz2013constructing}.
Such conditions are known to provide the model-side mechanism 
for the dynamic programming principle, or tower property of nonlinear expectations.
The next proposition shows that, in the present setting, this mechanism yields a dynamic sublinear valuation rule.
The proof is deferred to Appendix~\ref{sec:proofmarkov}.

\begin{proposition}\label{prop:pairDUS_to_DSVR}Let
\(
\mathcal U=\{\mathcal U_{t,x}\}_{(t,x)\in[0,\infty)\times\hat D}
\)
be a dynamic uncertainty structure.
Assume that, for every \(f\in C_b(D)\), the  function
\[
(t,T,x)\mapsto \mathcal T_{t,T}^{\mathcal U}f(x)\,,\quad x\in D,\;\;0\le t\le T<\infty
\]
is continuous.
Then
\(
\{\mathcal T_{t,T}^{\mathcal U}\}_{0\le t\le T<\infty}
\)
is a dynamic sublinear valuation rule on \(C_b(D)\).
Moreover, if \(\mathcal U\) is time-homogeneous, then
\(
\{\mathcal T_{t,T}^{\mathcal U}\}_{0\le t\le T<\infty}
\)
is also time-homogeneous.
\end{proposition}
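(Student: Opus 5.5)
We check the axioms of Definition~\ref{def:SG} for $\mathcal T^{\mathcal U}_{t,T}$ one at a time. Most are direct consequences of the sup-of-expectations form \eqref{eq:pair_DUS_valuation}. Time consistency is the substantive step, obtained from \ref{item:pair_conditioning}--\ref{item:pair_pasting} via a dynamic programming argument.

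\emph{Elementary axioms.} We first note that each $\mathcal T^{\mathcal U}_{t,T}f$ is a supremum of linear, monotone functionals of $f$. Hence it is sublinear and monotone, which gives \ref{S1}. Since $e^{-A_T}\mathbb I_{\{\tau_{\mathrm{exp}}>T\}}\in[0,1]$, we get $|\mathcal T^{\mathcal U}_{t,T}f|\le\|f\|_\infty$, which gives \ref{S4}. Continuity of $x\mapsto\mathcal T^{\mathcal U}_{t,T}f(x)$ is assumed, so the operator maps $C_b(D)$ into $C_b(D)$. The identity $\mathcal T^{\mathcal U}_{t,t}=\mathrm{id}$ follows from \ref{item:initialcondition}, since $A_t=0$ and $X_t=x$ hold $\mathbb Q$-a.s.

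For \ref{S2}, take $f_n\searrow0$. For each pair $(A,\mathbb Q)$, the map $(A,\mathbb Q)\mapsto\mathbb E^{\mathbb Q}[e^{-A_T}f_n(X_T)\mathbb I_{\{\tau_{\mathrm{exp}}>T\}}]$ is upper semicontinuous on $\mathfrak U$ with respect to the weak topology of Appendix~\ref{sec:extended_path_and_pair_space}. We would verify this by writing the integrand as a u.s.c.\ functional: extend $f_n$ by $0$ at $\triangle$ and note that $e^{-A_T}$ vanishes as $A\to\infty$. These functionals decrease pointwise to $0$. By weak compactness of $\mathcal U_{t,x}$ from \ref{item:weakcptnessDUS}, Dini's theorem for u.s.c.\ functions on a compact set gives uniform convergence, so $\mathcal T^{\mathcal U}_{t,T}f_n(x)\searrow0$ pointwise. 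The same Dini argument, applied to the continuous functions $\mathcal T^{\mathcal U}_{t,T}f_n$ on compacts $K\subset D$, gives locally uniform convergence. For \ref{S5}, the mixed topology convergence of $\mathcal T_{t_n,T_n}f\to\mathcal T_{t,T}f$ is equivalent to uniform boundedness plus uniform convergence on compacts. Uniform boundedness comes from \ref{S4}, and uniform convergence on compacts follows from the assumed joint continuity in $(t,T,x)$, being uniform on compact sets.

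\emph{Time consistency \ref{S3}.} Fix $t\le s\le T$, $x\in D$, and set $g:=\mathcal T^{\mathcal U}_{s,T}f$, which lies in $C_b(D)$. For the inequality ``$\le$'', take $(A,\mathbb Q)\in\mathcal U_{t,x}$ and condition at the deterministic time $\tau=s$. Additivity of discounting and the regular conditional probability give
\[
\mathbb E^{\mathbb Q}\bigl[e^{-A_T}f(X_T)\mathbb I_{\{\tau_{\mathrm{exp}}>T\}}\bigr]=\mathbb E^{\mathbb Q}\bigl[e^{-A_s}\mathbb I_{\{\tau_{\mathrm{exp}}>s\}}\,\mathbb E^{(A,\mathbb Q)^{s,\omega}}[\cdots]\bigr].
\]
By \ref{item:pair_conditioning}, the conditioned pair $(A,\mathbb Q)^{s,\omega}$ lies in $\mathcal U_{s,X_s(\omega)}$, so the inner expectation is at most $g(X_s)$. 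Taking the supremum over $(A,\mathbb Q)$ yields $\mathcal T_{t,T}f\le\mathcal T_{t,s}g$. Note that on $\{X_s=\triangle\}$ both sides vanish.

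For ``$\ge$'', fix $\varepsilon>0$. We need an $\hat{\mathcal F}_s$-measurable kernel $\nu(\omega)\in\mathcal U_{s,\omega(s)}$ that is $\varepsilon$-optimal for $g(\omega(s))$. This measurable selection is the main obstacle. I would first try a countable partition argument. Using the continuity of $g$ and the upper hemicontinuity in \ref{item:weakcptnessDUS}, cover $D$ by countably many small Borel cells $E_j$ with points $y_j$, chosen so that $g$ varies by less than $\varepsilon$ on each cell. The difficulty is that an $\varepsilon$-optimizer at $y_j$ lives in $\mathcal U_{s,y_j}$ rather than $\mathcal U_{s,y}$. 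So instead I would select, via a measurable selection theorem (Kuratowski--Ryll-Nardzewski), a maximizer of the u.s.c.\ functional over the compact-valued, upper hemicontinuous, hence closed-graph correspondence $y\mapsto\mathcal U_{s,y}$. A maximizer exists because the functional is u.s.c.\ and each $\mathcal U_{s,y}$ is compact. This selection is Borel, or at least universally measurable, in $y$; set $\nu(\omega):=\nu^*(\omega(s))$.

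Then the concatenation $(A,\mathbb Q)\otimes_s\nu$ lies in $\mathcal U_{t,x}$ by \ref{item:pair_pasting}. Its value equals $\mathbb E^{\mathbb Q}[e^{-A_s}g(X_s)\mathbb I_{\{\tau_{\mathrm{exp}}>s\}}]$, and taking the supremum over $(A,\mathbb Q)$ gives $\ge$. Should only universal measurability be available, we would modify $\nu$ on a $\mathbb Q$-null set.

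\emph{Time homogeneity.} Finally, if $\mathcal U_{t,x}=\mathcal U_{0,x}\circ\theta_t^{-1}$, then the change of variables $\omega\mapsto\theta_t\omega$ turns $\mathcal T_{t,T}$ into $\mathcal T_{0,T-t}$, since $A$ is shifted accordingly and $X_T\circ\theta_t=X_{T-t}$.
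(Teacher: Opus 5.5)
Your treatment of \ref{S1}, \ref{S4}, \ref{S5}, the identity $\mathcal T^{\mathcal U}_{t,t}=\operatorname{id}$, time-homogeneity, and the Dini strategy for \ref{S2} matches the paper. One step fails as written: the justification of semicontinuity.

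\textbf{Why the semicontinuity argument fails.} Your claim that the functional is upper semicontinuous on $\mathfrak U$ is true; it is in fact continuous. The verification you propose, however, does not work. The topology on $\mathfrak U$ is pulled back through $\Phi$, so the functional to control is $F(\mathbb P)=\mathbb E^{\mathbb P}[f(X_T)\mathbb I_{\{\tau_\infty>T\}}]$ on $\mathfrak M$. Its integrand $\tilde\omega\mapsto f(\tilde\omega(T))\mathbb I_{\{\tau_\infty(\tilde\omega)>T\}}$ is not upper semicontinuous on $\tilde\Omega$, for two reasons.
\begin{itemize}
\item Extending $f_n\ge0$ by $0$ at $\triangle$ gives a function that is lower, not upper, semicontinuous at $\triangle$ unless $f_n$ vanishes at $\partial D$. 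Take $f_n\equiv 1/n$: a path that explodes continuously before $T$ is $J_1$-approximated by continuous paths lying far out in $D$ at time $T$, where the integrand equals $1/n$ rather than $0$.
\item A path killed exactly at $T$ has integrand $0$, while $J_1$-nearby paths killed just after $T$ have integrand $f_n(X_T)>0$.
\end{itemize}
Your remark that $e^{-A_T}$ vanishes as $A\to\infty$ does not help. On $\mathfrak M$, discounting is not a factor in the integrand; it is encoded as killing.

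\textbf{The missing idea.} The paper argues via almost-sure continuity rather than pointwise semicontinuity. Take $\mathbb P=\Phi(A,\mathbb Q)$.
\begin{itemize}
\item The pair-space condition $A_{\tau_{\mathrm{exp}}}=\infty$ on $\{\tau_{\mathrm{exp}}<\infty\}$ forces $\kappa=\inf\{t:A_t\ge\xi\}<\tau_{\mathrm{exp}}$. So killing always preempts explosion, and $\mathbb P(\tau_{\mathrm{exp}}\le T)=0$.
\item Also $\{\tau_{\mathrm{kill}}=T\}\subseteq\{\xi=A_T\}$, which is null because $\xi$ is an independent exponential variable.
\end{itemize}
The discontinuity set of the integrand lies in $\{\tau_{\mathrm{exp}}\le T\}\cup\{\tau_{\mathrm{kill}}=T\}$. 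The extended continuous mapping theorem therefore gives $F(\mathbb P_k)\to F(\mathbb P)$ whenever $\mathbb P_k\to\mathbb P\in\Phi(\mathfrak U)$. Hence $F\circ\Phi$ is continuous on $\mathfrak U$ for every $f\in C_b(D)$, signed or not.

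You need this fact twice. First, for Dini in \ref{S2}. Second, in your proof of $\mathcal T^{\mathcal U}_{t,T}f\ge\mathcal T^{\mathcal U}_{t,s}g$. There $f$ is signed, so the extension-by-zero heuristic cannot work even in principle. Moreover, continuity of $F$ on $\Phi(\mathfrak U)$, together with continuity of $g$, is exactly what makes the argmax set $\{(y,u):u\in\mathcal U_{s,y},\ F(\Phi(u))\ge g(y)\}$ closed and its sections nonempty.

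\textbf{Comparison of your \ref{S3} argument with the paper's.} With this repair, your proof of \ref{S3} is a valid and more self-contained alternative. The paper transports \ref{item:pair_conditioning}--\ref{item:pair_pasting} through $\Phi$ and cites the Nutz--van Handel dynamic programming principle. That covers general stopping times and path-dependent classes, but it tacitly needs an analytic graph and produces only universally measurable near-optimal kernels. You instead:
\begin{itemize}
\item condition at the deterministic time $s$ for the inequality $\le$;
\item for $\ge$, paste an exact maximizer chosen as a function of $X_s$.
\end{itemize}
This exploits two facts: $g=\mathcal T^{\mathcal U}_{s,T}f$ is continuous by hypothesis, and $y\mapsto\mathcal U_{s,y}$ has compact values and closed graph by \ref{item:weakcptnessDUS}. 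For the selection, what you actually need is Borel measurability of the argmax correspondence, not Kuratowski--Ryll-Nardzewski alone. Since its graph is closed with compact sections, the Arsenin--Kunugui uniformization theorem yields a Borel selector directly. That selector satisfies $\nu(\omega)\in\mathcal U_{s,\omega(s)}$ for every $\omega$, as \ref{item:pair_pasting} requires, so no null-set modification is needed.
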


Combined with Corollary~\ref{cor:stochasticrepn_original}, the following proposition implies that every dynamic sublinear valuation rule 
can be represented as a robust valuation under a DUS.
The proof is given in Appendix~\ref{sec:proofmarkov} and  Appendix~\ref{subsec:compactness_criterion}.

\begin{proposition}\label{prop:PG_is_representing_VUS}
Consider a function
\(
G:D\times\mathbb{R}\times\mathbb{R}^d\times\mathbb{S}(d)\to\mathbb{R}
\)
satisfying \ref{item:G1}--\ref{item:G3} and Assumption~\ref{assume:lyapunov}.
For each $(t,x)\in[0,\infty)\times\hat D$, define
\[
\mathcal U_{t,x}(G)
:=
\mathcal U_x(G)\circ\theta_t^{-1}.
\]
Then the class
$\mathcal U(G)=\{\mathcal U_{t,x}(G)\}_{(t,x)\in[0,\infty)\times\hat D}$
is a time-homogeneous DUS. 
\end{proposition}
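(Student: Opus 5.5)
We have to check (U1)–(U4) for $\mathcal U_{t,x}(G)=\mathcal U_x(G)\circ\theta_t^{-1}$; time-homogeneity then holds by definition. Since every condition is compatible with the shift $\theta_t$ (shifted coefficient fields are admissible when the original ones are, and conditioning/concatenation commute with the shift), we will work mostly with $t=0$ and transport the results by $\theta_t$. The only exception is the joint regularity in $(t,x)$ in (U2).

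\textbf{(U1).} This is immediate. $\mathbb Q\in\mathcal P_x(L^\gamma)$ forces $X_0=x$, and $A^k_0=0$. After shifting, the path is frozen at $x$ on $[0,t]$, and $A$ vanishes there because $k=0$ before the shift. For $x=\triangle$ the class is the singleton $\{(0,\delta_\triangle)\}$ by Definition~\ref{def:DUSassociatedG}.

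\textbf{(U3) and (U4).} These are the algebraic part. For conditioning at a finite stopping time $\tau$, take the regular conditional probability $\mathbb Q^{\tau,\omega}$. By the standard argument in Stroock--Varadhan, extended to the localized martingales $M^n$ in \eqref{eqn:mart_p}, $\mathbb Q^{\tau,\omega}$ solves, for $\mathbb Q$-a.e.\ $\omega$, the generalized martingale problem from $\omega(\tau(\omega))$ with the conditioned coefficients $\gamma^{\tau,\omega}(s,\omega'):=\gamma(s,\omega\otimes_\tau\omega')$. Countability of a determining subclass of $C_c^\infty(D)$ lets us fix the null set. Admissibility is pointwise: $\beta(s,\omega\otimes_\tau\omega')\in A(\omega'(s))$. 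Hence the conditioned coefficient field is admissible, and the reset discount $A^k_\cdot-A^k_\tau$ equals $A^{k^{\tau,\omega}}$.

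For concatenation with a kernel $\nu(\omega)=(A^{k^\omega},\mathbb Q^\omega)$, $\mathbb Q^\omega\in\mathcal P(L^{\gamma^\omega})$, we first need a jointly measurable choice of the coefficient fields $\omega\mapsto(k^\omega,\gamma^\omega)$. I expect this to follow from a measurable selection theorem, since the graph of the relation ``$(k,\gamma)$ generates $\nu(\omega)$'' is analytic. Such a choice might require universal measurability; since the requirement is only almost-sure, we can modify on a null set. We then paste coefficients, $\tilde\beta(s,\omega)=\beta(s,\omega)$ for $s<\tau(\omega)$ and $\beta^{\omega}(s,\omega)$ for $s\ge\tau(\omega)$. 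This field is admissible, the pasted measure solves the martingale problem for $\tilde\gamma$ (again the classical pasting lemma, localized by $\tau_n$), and the discounts add.

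\textbf{(U2).} This is the main obstacle. For compactness, Assumption~\ref{assume:lyapunov} combined with $L^\beta\le G$ gives
\[
L^{\gamma}\phi+C_\beta\phi\le C\phi .
\]
Because $C_\beta\le0$, this does not directly bound $L^\gamma\phi$. I would instead use that $e^{-Ct-A_t}\phi(X_t)$ is a local supermartingale, which yields uniform control of the discounted mass near $\partial D$. Tightness of the laws on $\hat\Omega$ in the extended topology then comes from the compactness of $\hat D$ together with Kolmogorov-type moment bounds on each $D_m$, using the local boundedness of $A(x)$ (continuity of $G$ plus compactness of the support sets).

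Closedness comes from the convexity and compactness of $A(x)$ and the continuity of $x\mapsto A(x)$ (upper hemicontinuity from continuity of $G$). Limits of solutions solve a martingale problem whose drift and diffusion, identified as densities of the limiting characteristics, lie in $A(X_s)$ by a convexity/Mazur argument. The discount is handled in the same way as part of the joint characteristics: $(A_t,\int B\,ds,\int\Sigma\,ds)$ is absolutely continuous with density in the convex compact set $A(X_s)$.

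Upper hemicontinuity in $(t,x)$ follows by the same tightness-plus-closedness argument applied to sequences $(t_n,x_n)\to(t,x)$, since the estimates are locally uniform in $x$. The delicate point throughout is behaviour near explosion and the identification convention \eqref{eq:prop:pairDUS_to_DSVR_1} at $\tau_{\mathrm{exp}}$; the Lyapunov function is used precisely there.
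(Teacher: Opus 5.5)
\textbf{The main gap is in (U2).} By Definition~\ref{def:U_topology}, weak compactness and upper hemicontinuity of $\mathcal U_{t,x}(G)$ are statements about the killed laws $\Phi(\mathcal U_{t,x}(G))\subset\mathfrak M$ on $\tilde\Omega$. You instead aim at tightness and closedness of the undiscounted state laws $\mathbb Q$ on $\hat\Omega$, together with their characteristics, and that set is not closed.

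As you observe, the Lyapunov condition only gives $L^\gamma\phi\le (C+k)\phi$. So admissible models may carry enormous drift paid for by enormous killing. Take $d=1$, $D=\mathbb R$, $b(x)=e^{x^2}$ and $A(x)=\operatorname{conv}\{(0,0,1),(-b(x),\pm b(x),1)\}$, that is, $G(x,r,p,X)=\tfrac12X+\max\{0,\,b(x)(|p|-r)\}$. This $G$ satisfies \ref{item:G1}--\ref{item:G3}, and with $\phi=1+x^2$ one gets $G(x,\phi,\phi',\phi'')=1\le\phi$.

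Consider the path-dependent admissible field with drift $+b(X)$ until $X$ hits $M_n$, and drift $-b(X)$ until it returns to $0$. Its state laws $\mathbb Q^n$ have paths that reach $M_n$ and come back within a time that is tight in $n$, because $\dot x=e^{x^2}$ explodes in finite time. Compactness of $\hat D$ plus local Kolmogorov bounds gives at best tightness in $C([0,\infty),\hat D)$. Limit points then charge paths that reach $\triangle$ and re-enter $D$, and these are not in $\hat\Omega$. So ``limits of solutions solve a martingale problem'' does not produce an element of $\mathcal P_x(L^\gamma)$.

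Along these excursions $A^n$ is of order $M_n$, so the killed laws see nothing pathological. This is why the paper works with $\mathbb P=\Phi(A,\mathbb Q)$ throughout:
\begin{itemize}
\item The Lyapunov inequality, \emph{including} the killing term, makes $e^{-Ct}\phi(X_t)\mathbb I_{\{\tau_{\mathrm{kill}}>t\}}$ a supermartingale. This yields the uniform bound $\mathbb P(\tau_n\le T<\tau_{\mathrm{kill}})\le e^{CT}\phi(x)/\inf_{\partial D_n}\phi$ (Propositions~\ref{prop:lyap_tail_bounds} and~\ref{prop:compactness_criterion}).
\item Limits are identified by passing the coefficient-free $G$-supermartingale inequalities to the limit (Proposition~\ref{prop:propertiesP_t,x_revised}).
\item The pair is then recovered via Lemma~\ref{lem:remove_killing_gmp} and injectivity of $\Phi$.
\end{itemize}
Your route could be salvaged only by two additional steps, neither of which appears: absorbing limit paths at $\tau_{\exp}$ (using your discounted Lyapunov bound to get $A_{\tau_{\exp}}=\infty$), and proving that $\Phi$ is continuous along joint convergence of $(X,A)$.

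\textbf{(U4) has a second, independent gap.} Choosing $\omega\mapsto(k^\omega,\gamma^\omega)$ measurably is not a routine selection-theorem step. Progressively measurable coefficient fields on $[0,\infty)\times\hat\Omega$ carry no natural Polish or standard Borel structure, so ``the graph is analytic'' has no content as stated. You would have to compute the characteristics of the concatenated measure intrinsically and show they lie in $A(X_s)$.

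The paper avoids the issue entirely. By Theorem~\ref{thm:coefficient_free_virtual_models}, $\Phi(\mathcal U_{t,x}(G))$ is exactly the solution set of the generalized $G$-supermartingale problem, a condition on the measure alone. By Proposition~\ref{prop:Phi_preserves_dynamic_operations}, $\Phi$ intertwines conditioning and concatenation. Such a problem is stable under regular conditional distributions and pasting by the Stroock--Varadhan arguments, and no coefficient is ever chosen for the kernel. The only selection needed is a pointwise selector from the finite-dimensional sets $A(X_u)$, done once inside Theorem~\ref{thm:coefficient_free_virtual_models}.

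A smaller point in (U3): the restarted law $\mathbb Q^{\tau,\omega}$ freezes the path on $[0,\tau(\omega)]$. Your conditioned field $\gamma(s,\omega\otimes_\tau\omega')$ must therefore be set to $0$ for $s<\tau(\omega)$, or the martingale condition fails on the frozen segment.
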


Propositions~\ref{prop:pairDUS_to_DSVR} and~\ref{prop:PG_is_representing_VUS}, together with Corollary~\ref{cor:stochasticrepn_original}, show that dynamic uncertainty structures provide the model-side counterpart of time consistency for robust valuation rules.
On the one hand, under mild regularity conditions, a DUS induces a dynamic sublinear valuation rule and hence a time-consistent valuation rule.
On the other hand, every dynamic sublinear valuation rule admits a robust valuation representation under a DUS, namely the uncertainty structure \(\mathcal U(G)\) recovered from its generating function $G$.
Thus, DUSs are not merely a sufficient class of uncertainty structures for generating time-consistent robust valuations, but the natural model-side formulation of time consistency itself.

\section{Representing DUSs} 
\label{sec:char}

In this section, we characterize the class of DUSs that represent a given dynamic sublinear valuation rule.
The preceding sections established that every dynamic sublinear valuation rule admits a representation as a robust valuation under a DUS.
Such a representation, however, need not be unique.
Indeed, distinct DUSs may induce the same valuation rule:
\[
\mathcal U^1\neq\mathcal U^2,
\qquad
\mathcal T_t^{\mathcal U^1}
=
\mathcal T_t^{\mathcal U^2}.
\]
A natural question is therefore which DUSs represent a given valuation rule.
We answer this question by providing an economically meaningful characterization of the class of all such DUSs.
%To this end, we introduce the notion of an effective model associated with a state and a test payoff.

The uncertainty structure \(\mathcal U(G)\) constructed in Section~\ref{sec:probrepn} plays a central role in this characterization.
The following theorem shows that \(\mathcal U(G)\) is maximal among all DUSs representing the given dynamic sublinear valuation rule.
Consequently, \(\mathcal U(G)\) serves as an upper envelope for the class of all representing DUSs.
For this reason, \(\mathcal U(G)\) may be interpreted as the largest, or most robust, DUS representing the given valuation rule.
The proof of this theorem is given in Appendix~\ref{sec:proofchar}.
\begin{theorem}[Maximal DUS]\label{thm:maximalrepre}
Let \(\{\mathcal T_t\}_{t\ge0}\) be a dynamic sublinear valuation rule satisfying Assumption~\ref{assume:Cinftyindomain}, and let \(G\) denote its generating function.
Suppose that \(G\) satisfies Assumptions~\ref{assume:paraboliccomparison} and~\ref{assume:lyapunov}.
Then the uncertainty structure
 $\mathcal U(G)=\{\mathcal U_x(G)\}_{x\in \hat D}$ 
is maximal among all time-homogeneous DUSs representing \(\{\mathcal T_t\}_{t\ge0}\).
More precisely, if
 $\mathcal U=\{\mathcal U_x\}_{x\in \hat D}$
is any time-homogeneous DUS representing \(\{\mathcal T_t\}_{t\ge0}\),  then $\mathcal U_{x}\subseteq \mathcal U_{x}(G)$ for all $x\in\hat D$. 
\end{theorem}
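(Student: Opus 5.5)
Fix a time-homogeneous DUS $\mathcal U$ representing $\{\mathcal T_t\}_{t\ge0}$, a point $x\in D$ and a model $(A,\mathbb Q)\in\mathcal U_x$; the case $x=\triangle$ is trivial by \ref{item:initialcondition}. I will show $(A,\mathbb Q)\in\mathcal U_x(G)$ in three steps. First, $A$ has the form $A^k$ with $k\ge0$. Second, the state law $\mathbb Q$ solves a generalized $L^\gamma$-martingale problem for some progressively measurable $\gamma=(B,\Sigma)$. Third, the coefficient field $(-k,\gamma)$ can be chosen admissible, i.e.\ $L^{(-k,\gamma)}(t,\omega,\cdot)\le G(\omega(t),\cdot)$ for $t<\tau_{\mathrm{exp}}$.

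The tool is a one-sided generator inequality extracted from the representation property combined with conditioning \ref{item:pair_conditioning} and time homogeneity. For $f\in C_b^\infty(D)$ (localized by $\tau_n$), a stopping time $\tau$ and $h>0$, conditioning at $\tau$ gives $(A,\mathbb Q)^{\tau,\omega}\in\mathcal U_{\tau(\omega),\omega(\tau)}$, which is a time shift of an element of $\mathcal U_{\omega(\tau)}$. Hence
\[
\mathbb E^{\mathbb Q}\bigl[e^{-(A_{\tau+h}-A_\tau)}f(X_{\tau+h})\mathbb I_{\{\tau_{\mathrm{exp}}>\tau+h\}}\,\big|\,\hat{\mathcal F}_\tau\bigr]\le \mathcal T_hf(X_\tau).
\]
Now set $Y_t:=e^{-A_t}f(X_t)-\int_0^t e^{-A_s}\mathcal G[f](X_s)\,ds$, stopped at $\tau_n$. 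Using $\mathcal T_hf=f+h\,\mathcal G[f]+o(h)$ locally uniformly (Assumption~\ref{assume:Cinftyindomain}) and summing over fine partitions, I expect $Y$ to be a $\mathbb Q$-supermartingale for every such $f$. The same holds with $f$ replaced by $e^{\lambda t}$-type perturbations and by $-f$ paired with the sublinear bound $-\mathcal G[-f]\le\mathcal G[f]$.

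Next I identify the structure from these supermartingales. Apply the statement to $f\equiv1$ (truncated to $\mathbb I_{D_n}$-compatible smooth cutoffs) and to the affine and quadratic test functions $\pm x_i$ and $x_ix_j$, all localized. Doob--Meyer then shows that $e^{-A}$ times $X$-functionals are special semimartingales with compensators dominated by $\int G(\cdot)\,ds$. Since $G$ is continuous and hence locally bounded, these drift terms are absolutely continuous in $t$. This yields $A=\int_0^\cdot k_s\,ds$, and shows that $X$ is a continuous semimartingale up to $\tau_{\mathrm{exp}}$ with $dX=B\,dt+dM$ and $d\langle M\rangle=\Sigma\,dt$ for progressively measurable $B,\Sigma$ with $\Sigma\ge0$. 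In particular $\mathbb Q\in\mathcal P_x(L^\gamma)$ by It\^o's formula, and $A_{\tau_{\mathrm{exp}}}=\infty$ is inherited from $(A,\mathbb Q)\in\mathfrak U$. It\^o's formula then gives the drift of $e^{-A_t}f(X_t)$ as $e^{-A_t}L^{(-k,\gamma)}(t,\omega,f(X_t),\nabla f(X_t),\nabla^2 f(X_t))$. The supermartingale property forces, $dt\otimes d\mathbb Q$-a.e.,
\[
L^{(-k,\gamma)}(t,\omega,f(X_t),\nabla f(X_t),\nabla^2f(X_t))\le G(X_t,f(X_t),\nabla f(X_t),\nabla^2f(X_t)).
\]

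The main obstacle is upgrading this inequality, which holds along each test function, to the full admissibility condition $L^\beta\le G(X_t,U)$ for all $U$ simultaneously. I would use a countable dense set of triples $U=(r,p,X)$, realized at every point $y$ by polynomial test functions $f_{U,y}$ with the prescribed $0$th through 2nd order jets at $y$ and cut off near $y$. I then need the exceptional null set to be chosen independently of the point $X_t$. For that I would take $y$ from a countable dense set, use continuity of $G$ and of $y\mapsto f_{U,y}$ jets, and argue on rational neighborhoods. Sublinearity and continuity then extend the inequality from the countable set to all $U$. Finally, I redefine $\beta$ on the $dt\otimes d\mathbb Q$-null set by a measurable selection from $A(\omega(t))$ (nonempty and compact), set $\beta=0$ after $\tau_{\mathrm{exp}}$, and note that this does not change $A^k$ up to indistinguishability or the martingale problem. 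Thus $(A,\mathbb Q)\in\mathcal U_x(G)$.
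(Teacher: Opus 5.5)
Your proposal is correct and follows essentially the paper's route. The representation property together with conditioning \ref{item:pair_conditioning} and time-homogeneity yields the one-step bound by $\mathcal T_h f$. The generator expansion upgrades this to the generalized $G$-supermartingale property; your discounted process $Y$ is the pair-space counterpart, via $\Phi$ and Theorem~\ref{thm:Phi_properties}, of the paper's $M^{f,n}$. Admissible characteristics are then read off.

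The differences are in packaging. The paper works on the virtual space and uses the Dini-derivative Lemma~\ref{lem:monotone-l} with the uniform drift estimate of Lemma~\ref{lem:representing_one_step_drift} in place of your partition sum; the short-time exit bound behind it (Proposition~\ref{prop:representing_short_time_exit}) also handles the stopped-versus-unstopped comparison at $\tau_n$, which you leave implicit. The paper also delegates the Doob--Meyer extraction and measurable selection to Theorem~\ref{thm:coefficient_free_virtual_models}. You carry these out directly and make explicit the countable-dense argument giving $L^\beta\le G(\omega(t),\cdot)$ for all jets simultaneously, which the paper passes over in one line.
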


We now introduce subgradient sets and effective coefficient fields.
For \(x\in D\) and \(U\in\mathbb R\times\mathbb R^d\times\mathbb S(d)\), we denote by \(\nabla G(x,U)\) the subgradient set of the sublinear map \(G(x,\cdot)\) at \(U\),  that is,  
\[
\nabla G(x,U)
:=
\Bigl\{
V\in A(x):
L^V(U)=G(x,U)
\Bigr\}.
\]
While the support set \(A(x)\) collects all local characteristics of admissible coefficient fields at time-path pairs \((t,\omega)\) with \(\omega(t)=x\), the subgradient set \(\nabla G(x,U)\) selects those characteristics for which the domination is binding at the jet \(U\).
For $\varphi\in C_b^\infty(D)$, 
a coefficient field $\beta$ is said to be $\varphi$-\emph{effective} if
\begin{align}\label{eq:effectivecoefficientfield}
\begin{split}
		 &\beta(t,\omega)\in \nabla G\bigl(\omega(t),\varphi(\omega(t)),\nabla\varphi(\omega(t)),\nabla^2\varphi(\omega(t))\bigr)
		 \quad\text{for } t<\tau_{\mathrm{exp}}(\omega)\,,\\
		 & \beta(t,\omega)=0
		 \quad\text{for } t\ge\tau_{\mathrm{exp}}(\omega)\,.
\end{split}
\end{align}
We denote by $\mathcal B_{\mathrm{eff}}(G;\varphi)$ the collection of all $\varphi$-effective coefficient fields.
In particular, every \(\beta\in\mathcal B_{\mathrm{eff}}(G;\varphi)\) satisfies the admissibility condition~\eqref{eq:coefficientadmissible} and  the pointwise binding condition
\[
L^\beta\bigl(t,\omega,\varphi(\omega(t)),\nabla\varphi(\omega(t)),\nabla^2\varphi(\omega(t))\bigr)
=
G\bigl(\omega(t),\varphi(\omega(t)),\nabla\varphi(\omega(t)),\nabla^2\varphi(\omega(t))\bigr)
\]
for all $(t,\omega)$ with $t<\tau_{\mathrm{exp}}(\omega)$. 
Economically, \(\varphi\) serves as a local test payoff, and the \(\varphi\)-effective coefficient fields are precisely those admissible coefficient fields that attain the generating function \(G\)  along the jet of \(\varphi\).

Parallel to the construction in Section~\ref{subsec:admissiblevirtualmodel}, each $\beta=(-k,\gamma)\in\mathcal B_{\mathrm{eff}}(G;\varphi)$ determines a cumulative discounting process $A^k$ and a class $\mathcal P_x(L^\gamma)$ of laws for the underlying state process. 
This leads to  the following definition.

\begin{definition}
	For each $x\in\hat D$ and $\varphi\in C_b^\infty(D)$, define
	\[
	\mathcal U_x(G;\varphi)
	:=
	\begin{cases}
		\displaystyle
		\Bigl\{
		(A^k,\mathbb Q)\in \mathfrak U
		:\;
		(-k,\gamma)\in\mathcal B_{\mathrm{eff}}(G;\varphi)
		\text{ and } \mathbb Q\in\mathcal P_x(L^\gamma)
		\Bigr\},
		& x\in D,\\[1.2em]
		\{(0,\delta_{\triangle})\},
		& x=\triangle.
	\end{cases}
	\]
	%We call $\mathcal U_x(G;\varphi)$ the \emph{effective model class at $x$ in direction $\varphi$}.
 	The family $\mathcal U(G;\varphi):=\{\mathcal U_x(G;\varphi)\}_{x\in\hat D}$ is called the \emph{effective uncertainty structure associated with $G$ and $\varphi$}.
\end{definition}

We are now ready to state the main result of this section, which gives an if-and-only-if characterization of the DUSs representing a given dynamic sublinear valuation rule.
The criterion consists of two conditions, \ref{item:equi1} and \ref{item:equi2} in Theorem~\ref{thm:effective_characterization}.
Condition~\ref{item:equi1} is an outer admissibility requirement inherited from Theorem~\ref{thm:maximalrepre}: every representing model must belong to the maximal uncertainty structure \(\mathcal U(G)\).
Condition~\ref{item:equi2} is an inner effectiveness requirement: for each smooth test payoff, the representing class must contain at least one model that is locally binding for that test.
The proof is deferred to Appendix~\ref{sec:proofchar}.

\begin{theorem}\label{thm:effective_characterization}
Let \(\{\mathcal T_t\}_{t\ge0}\) be a dynamic sublinear valuation rule on \(C_b(D)\) satisfying Assumption~\ref{assume:Cinftyindomain}, and let \(G\) denote its generating function.
Suppose that \(G\) satisfies Assumptions~\ref{assume:paraboliccomparison} and~\ref{assume:lyapunov}.
Then, for any time-homogeneous DUS
$\mathcal U=\{\mathcal U_x\}_{x\in\hat D}$,
the following statements are equivalent.
\begin{enumerate}[label=(\roman*), ref=(\roman*)]
    \item\label{thm:effective_characterization_1}
    \(\mathcal U\) represents \(\{\mathcal T_t\}_{t\ge0}\).

    \item\label{thm:effective_characterization_2}
    The following two conditions hold.
    \begin{enumerate}[label=(\alph*), ref=(\alph*)]
        \item\label{item:equi1}
        \(\mathcal U_x\subseteq \mathcal U_x(G)\) for all \(x\in D\);

        \item\label{item:equi2}
        \(\mathcal U_x\cap \mathcal U_x(G;\varphi)\neq\varnothing\) for all \(x\in D\) and \(\varphi\in C_b^\infty(D)\).
    \end{enumerate}
\end{enumerate}
\end{theorem}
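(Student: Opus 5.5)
The plan is to prove the two implications separately. Both directions reduce, via Proposition~\ref{prop:pairDUS_to_DSVR} and Proposition~\ref{prop:PDErepn}, to comparing generating functions.

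\emph{(i) $\Rightarrow$ (ii).} Condition \ref{item:equi1} is exactly Theorem~\ref{thm:maximalrepre}, so only \ref{item:equi2} needs work. Fix $x\in D$ and $\varphi\in C_b^\infty(D)$. Since $\mathcal U$ represents $\{\mathcal T_t\}$, we have
\[
\mathcal T_t\varphi(x)=\sup_{(A,\mathbb Q)\in\mathcal U_x}\mathbb E^{\mathbb Q}\bigl[e^{-A_t}\varphi(X_t)\mathbb I_{\{\tau_{\mathrm{exp}}>t\}}\bigr].
\]
By weak compactness \ref{item:weakcptnessDUS} and upper semicontinuity of the functional (after localizing with $\tau_n$), the supremum is attained for each $t$ by some model $(A^{(t)},\mathbb Q^{(t)})\in\mathcal U_x\subseteq\mathcal U_x(G)$. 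Write $A^{(t)}=A^{k}$ with $(-k,\gamma)$ admissible. Itô's formula under the martingale problem gives
\[
\mathbb E\bigl[e^{-A_t}\varphi(X_t)\bigr]-\varphi(x)=\mathbb E\int_0^t e^{-A_s}L^{\beta}(s,\cdot,J\varphi(X_s))\,ds,
\qquad J\varphi:=(\varphi,\nabla\varphi,\nabla^2\varphi),
\]
which holds up to localization. Since $L^\beta\le G(X_s,J\varphi(X_s))$ pointwise, dividing by $t$ and using $\mathcal G[\varphi](x)=G(x,J\varphi(x))$ shows that the nonnegative "defect" $\frac1t\mathbb E\int_0^t e^{-A_s}\bigl[G-L^\beta\bigr](s)\,ds$ tends to $0$.

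This only gives optimality at time $0^+$, whereas effectiveness must hold for all $(t,\omega)$. To upgrade it, use the DUS structure. By conditioning \ref{item:pair_conditioning}, the continuation models stay in $\mathcal U_{X_s}$. By concatenation \ref{item:pair_pasting}, together with a measurable selection of near-optimal continuation models, one can build a single model in $\mathcal U_x$ whose local characteristic at every $(s,\omega)$ is a limit point of defect-vanishing characteristics. The closedness of $\nabla G(y,U)$, which follows from continuity \ref{item:G1} and compactness of $A(y)$, then gives effectiveness. Concretely, I would first prove the following lemma: for each $y$, the set of models in $\mathcal U_y$ whose characteristic at time $0$ lies in $\nabla G(y,J\varphi(y))$ is nonempty. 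I would then paste these along a sequence of dyadic stopping times, take a weak limit using compactness, and check that both admissibility and binding pass to the limit. \textbf{This step is the main obstacle.} The difficulties are turning the integrated defect estimate into a pointwise (in $(t,\omega)$) statement about coefficients, and showing that weak limits of pasted models retain the effective characteristics, i.e., the stability of the martingale problem with constraints $\beta\in\nabla G$, which are convex and compact. If the direct route fails, I would instead use that effectiveness is an a.e.\ property in $dt\otimes d\mathbb Q$ and modify $\beta$ on null sets, which does not change the pair class.

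\emph{(ii) $\Rightarrow$ (i).} Under \ref{item:equi1}, $\mathcal T^{\mathcal U}_t f\le \mathcal T^{\mathcal U(G)}_t f=\mathcal T_t f$ by Corollary~\ref{cor:stochasticrepn_original}. For continuity, note that $\mathcal T^{\mathcal U}f$ is squeezed appropriately, or more simply that compactness and upper hemicontinuity of $\mathcal U$ together with the Lyapunov tightness inherited from $\mathcal U(G)$ give the continuity hypothesis. Proposition~\ref{prop:pairDUS_to_DSVR} then shows that $\mathcal T^{\mathcal U}$ is a time-homogeneous dynamic sublinear valuation rule. Next, compute its generator on $\varphi\in C_b^\infty(D)$. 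The upper bound $\limsup_{t\downarrow0}(\mathcal T^{\mathcal U}_t\varphi-\varphi)/t\le G(x,J\varphi(x))$ comes from \ref{item:equi1} and the Itô expansion above. For the lower bound, pick $(A,\mathbb Q)\in\mathcal U_x\cap\mathcal U_x(G;\varphi)$ by \ref{item:equi2}. Its drift equals $G(X_s,J\varphi(X_s))$ exactly, so
\[
\frac{\mathbb E^{\mathbb Q}\bigl[e^{-A_t}\varphi(X_t)\bigr]-\varphi(x)}{t}\to G(x,J\varphi(x)),
\]
using continuity of $G$ and path continuity; uniformity on compacts follows from Lyapunov bounds. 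Hence Assumption~\ref{assume:Cinftyindomain} holds for $\mathcal T^{\mathcal U}$ with generating function $G$, and locality is evident from the formula. The uniqueness part of Proposition~\ref{prop:PDErepn}, which uses Assumption~\ref{assume:paraboliccomparison}, then yields $\mathcal T^{\mathcal U}_t=\mathcal T_t$.
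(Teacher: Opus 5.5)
Your direction (ii)$\Rightarrow$(i) fails at the continuity step. Proposition~\ref{prop:pairDUS_to_DSVR} assumes outright that $(t,x)\mapsto\mathcal T^{\mathcal U}_tf(x)$ is continuous for every $f\in C_b(D)$. The uniqueness part of Proposition~\ref{prop:PDErepn} only applies to genuine valuation rules on $C_b(D)$, so it needs the same thing.

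Weak compactness and upper hemicontinuity of $x\mapsto\mathcal U_x$, together with the Lyapunov tightness inherited from $\mathcal U(G)$ via (a), give only \emph{upper} semicontinuity of a supremum over the correspondence. Lower semicontinuity would require lower hemicontinuity, which a DUS is not assumed to have. Nor is there a squeeze. The only a priori information is $\mathcal T^{\mathcal U}_tf\le\mathcal T_tf$ from above, and from below (for smooth $f$) the value of a single effective model, which is not $\mathcal T_tf$.

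Continuity is part of what must be proved, and it only emerges from comparison. That is why the paper does not go through Proposition~\ref{prop:pairDUS_to_DSVR} here. It works with the envelopes $v^*,v_*$ of the value function of $\mathcal U$:
\begin{itemize}
\item $v^*$ is a subsolution, exactly as in Lemma~\ref{lem:canonical_value_continuity}, using (a);
\item $v_*$ is a supersolution, using effective models after (b) is lifted to space-time test functions by Theorem~\ref{thm:static_to_time_dependent_effective};
\item Assumption~\ref{assume:paraboliccomparison} then gives $v^*\le v_*$, hence continuity and $v=\mathcal T_tf$ at once.
\end{itemize}
Your generator computation is essentially fine. Given continuity, your route would even avoid time-dependent effective models. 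But it cannot supply the continuity it relies on.

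In (i)$\Rightarrow$(ii), your skeleton matches the paper's: paste near-optimal one-step models on a grid using (U4), then take a weak limit using (U2). However, the step you flag as the main obstacle is left open, and your proposed lemma does not close it. Local characteristics are defined only $ds\otimes d\mathbb Q$-a.e., so ``the characteristic at time $0$'' has no meaning. If you read it as ``effective on an initial interval'', the lemma is essentially the statement itself.

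The missing idea is that you never need to track coefficients. By Lemma~\ref{lem:virtual_characterization} (Theorem~\ref{thm:coefficient_free_virtual_models}), a law in $\Phi(\mathcal U_x(G))$ lies in $\Phi(\mathcal U_x(G;\varphi))$ if and only if $M^{\varphi,n}$ is a martingale for every $n$. That is a condition on expectations of a.s.\ continuous path functionals, so it survives weak limits. Membership of the limit in $\Phi(\mathcal U_x(G))$ comes for free from (a). Your integrated defect is just $-\mathbb E[M^{\varphi,n}_t-M^{\varphi,n}_0]$.

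Concretely:
\begin{itemize}
\item For each $h$, take an exact measurable maximizer $y\mapsto\mathbb P^h_y\in\Phi(\mathcal U_y)$ of $\mathbb P\mapsto\mathbb E^{\mathbb P}[\varphi(X_h)\mathbb I_{\{\tau_\infty>h\}}]$. It is attained because $\mathbb P(\tau_{\mathrm{exp}}\le h)=\mathbb P(\tau_{\mathrm{kill}}=h)=0$ makes this functional weakly continuous on the compact set.
\item Paste these along $t_k=kh$. With $g:=G(\cdot,\varphi,\nabla\varphi,\nabla^2\varphi)$, the conditional grid increments of $M^{\varphi,n}$ are $\mathcal T_h\varphi(X_{t_k})-\varphi(X_{t_k})-hg(X_{t_k})+o(h)=o(h)$, locally uniformly.
\item These increments sum to $o(1)$ on bounded horizons, so $M^{\varphi,n}$ is a martingale under any weak limit as $h_j\downarrow0$.
\end{itemize}
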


Consequently, the generating function \(G\) identifies not only the dynamic valuation rule, but also the class of dynamic uncertainty structures that represent it.
Recall from Proposition~\ref{prop:PDErepn} that \(G\) uniquely determines the global valuation rule through the associated parabolic equation~\eqref{eq:mainHJB}.
Theorem~\ref{thm:effective_characterization} goes further by showing that, under suitable additional conditions, \(G\) also determines which dynamic uncertainty structures represent the same valuation rule.
More precisely, it provides an explicit characterization in terms of discounting--state-process-law pairs, which encode the local characteristics of uncertainty.
Thus, \(G\) does not merely describe the local valuation mechanism.
Its geometry reveals the local characteristics of uncertainty and characterizes the global dynamic uncertainty structures representing the valuation rule.
In this sense, dynamic uncertainty structures can be viewed as the probabilistic shadow cast by the geometry of the generating function $G$.

\section{Recovering Uncertainty from Partial  Observations}\label{sec:identification}

This section studies the recovery of uncertainty structures from partial observations of the valuation rule.
The preceding sections showed how to recover the uncertainty structure under full knowledge of the values \(\mathcal T_t f(x)\) for all payoffs \(f\in C_b(D)\), states \(x\in D\), and times \(t\ge0\).
In practice, however, valuation data are available only for a restricted set of observable payoffs, states, and times.
The central question is therefore whether the underlying uncertainty structure can still be recovered from such limited valuation information.

Throughout this section, fix \(T>0\), and let \(\mathcal K\subset C_b(D)\) denote the \emph{observable payoff set}.
For each \(f\in\mathcal K\), the valuation function $v^f:[0,T]\times D\to\mathbb R$ is defined by
\[
v^f(t,x):=\mathcal T_t f(x)\,.
\]
We impose the following assumption on \(\mathcal K\).

\begin{assume}\label{assume:observable-payoff-set}
The observable payoff set \(\mathcal K\subset C_b(D)\) satisfies the following properties:
\begin{enumerate}[label=(\roman*)]
    \item \(\mathcal K\) is nonempty and closed with respect to the mixed topology on \(C_b(D)\).

    \item For every \(f\in\mathcal K\) and \(c>0\), we have \(cf\in\mathcal K\) and
    $v^{cf}=c\,v^f$.
\end{enumerate}
\end{assume}

Assumption~\ref{assume:observable-payoff-set} is economically and structurally natural.
Closedness of \(\mathcal K\) ensures stability of the observable payoff set under mixed-topology limits.
Positive homogeneity reflects a basic implication of sublinearity: if a payoff is rescaled by \(c>0\), then its value is rescaled by the same factor.
Thus, even when \(cf\) is not directly observed, it can be included in the observable class without loss of generality whenever \(f\) is observed.

Section~\ref{subsec:identification_fulldata} considers the case in which the values \(v^f(t,x)\) are known for all observable payoffs \(f\in\mathcal K\) and all \((t,x)\in[0,T]\times D\).
Section~\ref{subsec:estimation} then turns to the finite-data setting, where the values \(v^f(t,x)\) are observed for all \(f\in\mathcal K\), but only at finitely many points \((t,x)\in[0,T]\times D\).

\subsection{Consistent Generating Functions}\label{subsec:identification_fulldata}

We study the recovery of uncertainty when the valuation functions
\(\{v^f\}_{f\in\mathcal K}\) are known.
Since the uncertainty structure is fully encoded in the generating function \(G\), our main objective is to determine how much of \(G\) can be recovered from this partial valuation information.
To this end, we characterize the class of generating functions that are consistent with the observable valuation functions \(\{v^f\}_{f\in\mathcal K}\).
We then identify pointwise lower and upper bounds for this class.
Among all consistent generating functions, we single out a canonical choice, namely the pointwise largest one.
This generating function corresponds to the most conservative valuation rule  consistent with the observable valuation data.

We begin by introducing two envelopes, \(\overline G\) and \(\underline G\), motivated by the viscosity inequalities.
Let \(v:[0,T]\times D\to\mathbb R\) be a continuous function.
The parabolic \emph{second-order subjet} of \(v\) at \((t,x)\in(0,T]\times D\) is defined by
\begin{equation}
\begin{aligned}
\mathcal J^{2,-}v(t,x)
:=
\Big\{(q,p,X)\in\mathbb R\times\mathbb R^d\times\mathbb S(d)\ \Big|\ 
&\ \exists\,\varphi\in C_b^\infty((0,T)\times D)\ \text{such that}  \\
&\ \partial_t\varphi(t,x)=q,\;
\nabla\varphi(t,x)=p,\;
\nabla^2\varphi(t,x)=X,\\
&\ v-\varphi\ \text{attains a local minimum equal to \(0\) at }(t,x)
\Big\}.
\end{aligned}
\end{equation}
In the definition above, local minima are taken with respect to the backward parabolic topology, that is, 
\(v-\varphi\) attains its minimum in a neighborhood of \((t,x)\) of the form
\[
\mathcal C_r^-(t,x):=(t-r,t]\times B_r(x).
\]
The parabolic second-order superjet is defined by
\[
\mathcal J^{2,+}v(t,x):=-\mathcal J^{2,-}(-v)(t,x).
\]

For \(x\in D\) and \(U=(r,p,X)\in\mathbb R\times\mathbb R^d\times\mathbb S(d)\), define the \emph{upper and lower jet-derivative sets} by
\begin{align*}
\mathcal D^+_{\mathcal K}(x,U)
&:=
\Bigl\{
q\in\mathbb R\;\Big|\;
\exists\, t\in(0,T],\ \exists\, f\in\mathcal K:
\ v^f(t,x)=r,\ (q,p,X)\in \mathcal J^{2,+}v^f(t,x)
\Bigr\},\\[0.5em]
\mathcal D^-_{\mathcal K}(x,U)
&:=
\Bigl\{
q\in\mathbb R\;\Big|\;
\exists\, t\in(0,T],\ \exists\, f\in\mathcal K:
\ v^f(t,x)=r,\ (q,p,X)\in \mathcal J^{2,-}v^f(t,x)
\Bigr\}
\cup \mathcal Z(U),
\end{align*}
where
\begin{equation}\label{eq:defstructuralconstraint}
\mathcal Z(U)
:=
\begin{cases}
\{0\}, & \text{if } U=(1,0,0)\text{ or }U=(0,0,X)\text{ with }X\le 0,\\
\varnothing, & \text{otherwise}.
\end{cases}
\end{equation}
The set \(\mathcal Z(U)\) is included to enforce the structural properties \ref{item:G2} and \ref{item:G3}, which are necessary requirements for a function to be the generating function of a dynamic sublinear valuation rule.\footnote{Conditions \ref{item:G2} and \ref{item:G3} are equivalent to \(G(x,1,0,0)\le0\) and \(G(x,0,0,X)\le0\) for \(X\le0\), respectively.}
We define the envelopes \(\overline G\) and \(\underline G\) by
\begin{align}\label{eq:envelopes_definition}
\overline G(x,U)
:=
\inf\,\mathcal D^-_{\mathcal K}(x,U),
\qquad
\underline G(x,U)
:=
\sup\,\mathcal D^+_{\mathcal K}(x,U),
\end{align}
with the conventions \(\inf\varnothing=+\infty\) and \(\sup\varnothing=-\infty\).
These envelopes are determined solely by the observable valuation functions.

The envelopes \(\overline G\) and \(\underline G\) characterize the pointwise upper and lower bounds of all generating functions consistent with the observable valuation functions.
Indeed, if \(v^f\) is a viscosity supersolution of \eqref{eq:mainHJB}, then every
\(q\in\mathcal D^-_{\mathcal K}(x,U)\) must satisfy \(q\ge G(x,U)\).
Similarly, if \(v^f\) is a viscosity subsolution, then every
\(q\in\mathcal D^+_{\mathcal K}(x,U)\) must satisfy \(q\le G(x,U)\).
Thus, by Theorems~\ref{thm:representationgenerator} and~\ref{thm:abstractconstructionSG}, any generating function consistent with the observable valuation functions must satisfy
\begin{equation}
\underline G(x,U)
\le
G(x,U)
\le
\overline G(x,U),
\qquad
(x,U)\in D\times\mathbb R\times\mathbb R^d\times\mathbb S(d).
\end{equation}
The next theorem shows that these inequalities are the tightest possible pointwise bounds imposed by the observable valuation functions.
It is worth noting that the envelopes \(\overline G(x,\cdot)\) and \(\underline G(x,\cdot)\) themselves need not be sublinear.
The proof is provided in Appendix~\ref{subsec:proofID-NS}.

\begin{theorem}\label{thm:ID-NS}
Suppose that the observable payoff set \(\mathcal K\) satisfies Assumption~\ref{assume:observable-payoff-set}.

\begin{enumerate}[label=(\roman*)]
\item \emph{(Necessity)}
Let \(\{\mathcal T_t\}_{t\ge0}\) be a dynamic sublinear valuation rule on \(C_b(D)\) satisfying Assumption~\ref{assume:Cinftyindomain}, and let \(G\) denote its generating function.
If the valuation rule is consistent with the observable valuation functions, namely,
\begin{equation}\label{eq:thm:ID-NS_consistency}
\mathcal T_t f(x)=v^f(t,x)
\qquad
\text{for all } f\in\mathcal K \text{ and } (t,x)\in[0,T]\times D,
\end{equation}
then
\begin{equation}\label{eq:thm:ID-NS_Gbound}
\underline G(x,U)\le G(x,U)\le \overline G(x,U)
\qquad
\text{for all } (x,U)\in D\times\mathbb R\times\mathbb R^d\times\mathbb S(d).
\end{equation}

\item \emph{(Sufficiency)}
Conversely, suppose that \(G\) satisfies \ref{item:G1}, Assumptions~\ref{assume:paraboliccomparison} and~\ref{assume:lyapunov}, and the bounds in \eqref{eq:thm:ID-NS_Gbound}.
Then there exists a unique dynamic sublinear valuation rule \(\{\mathcal T_t\}_{t\ge0}\) on \(C_b(D)\) satisfying Assumption~\ref{assume:Cinftyindomain} whose generating function is \(G\).
Moreover, this valuation rule satisfies the data-consistency condition \eqref{eq:thm:ID-NS_consistency}.
\end{enumerate}
Consequently, the envelopes \(\overline G\) and \(\underline G\) characterize exactly the set of all generating functions consistent with the observable valuation functions.
\end{theorem}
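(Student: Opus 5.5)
For necessity, I will use Proposition~\ref{prop:PDErepn}. Under Assumption~\ref{assume:Cinftyindomain} and the consistency condition \eqref{eq:thm:ID-NS_consistency}, each $v^f$ with $f\in\mathcal K$ is a bounded viscosity solution of \eqref{eq:mainHJB} on $(0,T]\times D$. The viscosity inequalities only need test functions touching from below or above in a backward parabolic neighborhood, which matches the definition of $\mathcal J^{2,\pm}$ (this is the standard reduction to the terminal time, cf.\ \cite{crandall1992user}).

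Fix $(x,U)$ with $U=(r,p,X)$ and take $q\in\mathcal D^-_{\mathcal K}(x,U)$ coming from some $t\in(0,T]$ and $f\in\mathcal K$. There is then $\varphi$ with $v^f-\varphi$ having local minimum $0$ at $(t,x)$ and jet $(q,p,X)$. Since $\varphi(t,x)=v^f(t,x)=r$, the supersolution inequality gives $q\ge G(x,r,p,X)$. If instead $q\in\mathcal Z(U)$, the inequality $0\ge G(x,U)$ follows from \ref{item:G3} and \ref{item:G2} together with positive homogeneity and $G(x,0,0,0)=0$:
\begin{itemize}
\item $G(x,1,0,0)\le G(x,0,0,0)=0$;
\item $G(x,0,0,X)\le G(x,0,0,0)=0$ for $X\le0$.
\end{itemize}
Taking the infimum over $q$ gives $G\le\overline G$. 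The subsolution inequality applied to superjets gives $G\ge\underline G$ in the same way.

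For sufficiency, I first verify that $G$ inherits \ref{item:G2} and \ref{item:G3} from the bound $G\le\overline G$. Since $\overline G(x,1,0,0)\le 0$ and $\overline G(x,0,0,X)\le0$ for $X\le0$, we get $G(x,1,0,0)\le0$ and $G(x,0,0,X)\le 0$. Sublinearity then gives monotonicity, for instance
\[
G(x,r,p,X)\le G(x,s,p,X)+G(x,r-s,0,0)\le G(x,s,p,X)
\]
for $r\ge s$, and similarly in $X$. Now Theorem~\ref{thm:abstractconstructionSG} applies and yields the dynamic sublinear valuation rule $\mathcal T^{\mathcal U(G)}$, which satisfies Assumption~\ref{assume:Cinftyindomain} and has generating function $G$. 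Uniqueness follows from Proposition~\ref{prop:PDErepn} under Assumption~\ref{assume:paraboliccomparison}.

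It remains to check data consistency: $\mathcal T_tf=v^f$ for $f\in\mathcal K$. I plan to show that each $v^f$ is a bounded viscosity solution of \eqref{eq:mainHJB} with generator $G$ on $[0,T]\times D$, with initial datum $v^f(0,\cdot)=f$. Comparison then gives $v^f=\mathcal T_tf$, since the latter is also a viscosity solution by Proposition~\ref{prop:PDErepn}.

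For the supersolution property, let $\varphi$ touch $v^f$ from below at $(t,x)$ in the backward parabolic sense. Then $(\partial_t\varphi,\nabla\varphi,\nabla^2\varphi)(t,x)\in\mathcal J^{2,-}v^f(t,x)$ and $\varphi(t,x)=v^f(t,x)$. Hence $\partial_t\varphi(t,x)\in\mathcal D^-_{\mathcal K}(x,U)$ with $U$ the jet of $\varphi$, so
\[
\partial_t\varphi\ge \overline G(x,U)\ge G(x,U).
\]
For the subsolution property, the same argument gives $\partial_t\varphi\le\underline G\le G$.

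I expect the main obstacle to be the technical reconciliation of viscosity-solution definitions. Three points need care:
\begin{itemize}
\item Test functions touching in full two-sided neighborhoods versus backward parabolic ones, with strict versus non-strict extrema and the requirement that the extremum equal $0$. This should be handled via the equivalent jet formulation and by shifting $\varphi$ by a constant.
\item Extending the test class from $C^\infty_b$ to $C^{1,2}$ by the usual density argument.
\item Continuity of $v^f$ at $t=0$, which is needed for comparison; strong continuity \ref{S5} supplies the initial condition.
\end{itemize}
The homogeneity assumption on $\mathcal K$ seems inessential for this equivalence. It mainly ensures that the envelopes are positively homogeneous, and I would invoke it only if the Lyapunov or comparison step requires it.
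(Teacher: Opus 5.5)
Your proposal is correct and follows essentially the same route as the paper: for necessity you apply the viscosity super/subsolution inequalities from Proposition~\ref{prop:PDErepn} to the jets that define $\mathcal D^\pm_{\mathcal K}$, with $\mathcal Z(U)$ handled by \ref{item:G2}--\ref{item:G3}; for sufficiency you show that each $v^f$ is a viscosity solution for $G$ because of the envelope bounds, and then conclude by comparison. You are in fact more explicit than the paper's proof in two places it leaves implicit: you derive \ref{item:G2}--\ref{item:G3} from $G\le\overline G$ via $\mathcal Z(U)$, and you invoke Theorem~\ref{thm:abstractconstructionSG} for existence of the valuation rule.
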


Although Theorem~\ref{thm:ID-NS} is stated in terms of generating functions, its implications go beyond the identification of \(G\).
As discussed in Section~\ref{sec:canonicality}, a generating function completely characterizes the class of dynamic uncertainty structures representing the corresponding dynamic sublinear valuation rule.
Thus, the theorem does not merely describe the set of generating functions consistent with the observable valuation data.
It also characterizes the dynamic uncertainty structures that remain consistent with those observations.

We now construct the largest generating function consistent with the observable valuation functions.
The construction relies on the dual representation of sublinear functions and depends only on the upper envelope \(\overline G\).
The proof is provided in Appendix~\ref{subsec:proofID-NS}.

\begin{theorem}\label{thm:ID-exist}
Suppose that Assumption~\ref{assume:observable-payoff-set} holds.
For each \(x\in D\), define
\[
A_{\max}(x)
:=
\left\{
V\in\mathbb R\times\mathbb R^d\times\mathbb S(d)
:
L^V(U)\le \overline G(x,U)
\ \text{for all }\
U\in\mathbb R\times\mathbb R^d\times\mathbb S(d)
\right\}.
\]
Then, for each \(x\in D\), the function
$G_{\max}(x,\cdot\,):
\mathbb R\times\mathbb R^d\times\mathbb S(d)
\to[-\infty,\infty]$
defined by
\[
G_{\max}(x,U)
:=
\sup_{V\in A_{\max}(x)} L^V(U)
\]
is the largest lower semicontinuous sublinear function dominated by \(\overline G(x,\cdot\,)\).
Moreover, \(A_{\max}(x)\) is the support set of \(G_{\max}(x,\cdot\,)\), that is,
\[
A_{\max}(x)
=
\left\{
V\in\mathbb R\times\mathbb R^d\times\mathbb S(d)
:
L^V(U)\le G_{\max}(x,U)
\ \text{for all}\
U\in\mathbb R\times\mathbb R^d\times\mathbb S(d)
\right\}.
\]
\end{theorem}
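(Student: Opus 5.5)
The plan is to treat the statement as a pure convex-duality fact about positively homogeneous functions on the finite-dimensional space $E:=\mathbb R\times\mathbb R^d\times\mathbb S(d)$, with $x\in D$ fixed throughout. The first thing I would record is that the pairing $(V,U)\mapsto L^V(U)=\frac12\operatorname{tr}(\Sigma X)+B\cdot p+Cr$ is a nondegenerate bilinear form on $E\times E$. It is, up to the factor $\frac12$ on the matrix part, the Frobenius–Euclidean inner product. Hence every linear functional on $E$ is of the form $L^V$ for a unique $V\in E$, and the classical results on closed sublinear functions and support functions (e.g.\ \cite[Thm.~13.1 and Cor.~13.2.1]{rockafellar2015convex}) transfer verbatim to this pairing. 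Note also that no structural properties of $\overline G$ are needed: the argument uses only that $\overline G(x,\cdot)$ is an extended-real function on $E$. Assumption~\ref{assume:observable-payoff-set} enters only to make $\overline G$ well defined.

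\emph{Step 1 ($G_{\max}$ is admissible).} The function $G_{\max}(x,\cdot)$ is a pointwise supremum of linear functionals. It is therefore convex, positively homogeneous and lower semicontinuous, hence a closed sublinear function, with the convention $\sup\varnothing=-\infty$. If $A_{\max}(x)=\varnothing$, then $G_{\max}\equiv-\infty$, which is the improper closed sublinear function. Since every $V\in A_{\max}(x)$ satisfies $L^V\le\overline G(x,\cdot)$, taking the supremum gives $G_{\max}(x,\cdot)\le\overline G(x,\cdot)$.

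\emph{Step 2 (maximality).} Let $H$ be any lower semicontinuous sublinear function on $E$ with $H\le\overline G(x,\cdot)$. If $H$ is proper, the support-function theorem gives
\[
H(U)=\sup\{L^V(U): V\in E,\ L^V\le H\},
\]
that is, $H$ is the support function of $\partial H(0)$. Every $V$ with $L^V\le H$ also satisfies $L^V\le\overline G(x,\cdot)$, so $V\in A_{\max}(x)$ and therefore $H\le G_{\max}(x,\cdot)$.

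If $H$ is improper, then $H$ is a closed convex function that takes the value $-\infty$ somewhere. By \cite[Thm.~7.2]{rockafellar2015convex} it then has no finite values. I would adopt Rockafellar's convention that the only \emph{closed} improper convex functions are $\equiv\pm\infty$. This is the delicate point, and I would state it explicitly as the meaning of ``lower semicontinuous'' in the theorem. With that convention, $H\le\overline G$ forces either $H\equiv-\infty$, which is trivially $\le G_{\max}$, or $\overline G\equiv+\infty$. In the latter case $A_{\max}(x)=E$, so $G_{\max}\equiv+\infty\ge H$.

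\emph{Step 3 (support set).} Write $S(x)$ for the support set of $G_{\max}(x,\cdot)$. If $V\in A_{\max}(x)$, then $L^V\le G_{\max}(x,\cdot)$ directly from the definition of $G_{\max}$ as a supremum over $A_{\max}(x)$, so $A_{\max}(x)\subseteq S(x)$. Conversely, if $L^V\le G_{\max}(x,\cdot)$, then $L^V\le\overline G(x,\cdot)$ by Step 1, so $V\in A_{\max}(x)$. Hence $S(x)=A_{\max}(x)$.

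The main obstacle is not the duality itself but handling extended values and improper functions, since $\overline G$ may equal $\pm\infty$ and $A_{\max}(x)$ may be empty or unbounded. I would dispose of this with the case split in Step 2 and the closedness convention stated there.
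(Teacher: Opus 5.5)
Your Step 1 and the proper case of Step 2 are essentially the paper's argument. The paper also writes a proper lower semicontinuous sublinear $H\le\overline G(x,\cdot)$ as the support function of $A_H=\{V:L^V\le H\}$, notes $A_H\subseteq A_{\max}(x)$, and concludes $H\le G_{\max}(x,\cdot)$. The paper treats $A_{\max}(x)=\varnothing$ separately. Your argument covers that case automatically, since a proper $H$ has $A_H\neq\varnothing$.

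There are two differences from the paper.
\begin{itemize}
\item For the support-set identity, the paper proves $\{V:L^V\le G_{\max}(x,\cdot)\}\subseteq A_{\max}(x)$ by separating a point $V_0\notin A_{\max}(x)$ from the closed convex set $A_{\max}(x)$. Your Step 3 gets the same inclusion in one line from $G_{\max}(x,\cdot)\le\overline G(x,\cdot)$. This is simpler and does not even need $A_{\max}(x)$ to be closed and convex.
\item For maximality, the paper restricts competitors to proper $H:E\to(-\infty,+\infty]$. You additionally admit closed improper competitors, which is a slightly more general reading of the statement.
\end{itemize}

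That extra branch contains a slip. When $A_{\max}(x)=E$, one has $G_{\max}(x,0)=\sup_V L^V(0)=0$. So $G_{\max}(x,\cdot)$ is the indicator of $\{0\}$ (zero at the origin, $+\infty$ elsewhere), not $\equiv+\infty$. The competitor $H\equiv+\infty$, which is closed and positively homogeneous in Rockafellar's sense, would then violate $H\le G_{\max}(x,\cdot)$ at $U=0$.

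The case is harmless, but for a reason you explicitly disclaimed. The structural set gives $\mathcal Z(0)=\{0\}$, hence $\overline G(x,0)\le 0$, so $\overline G(x,\cdot)\equiv+\infty$ never occurs. Alternatively, you can exclude $H\equiv+\infty$ by requiring $H(0)=0$ as part of sublinearity. So drop the remark that no structural properties of $\overline G$ are needed and use one of these two fixes; your treatment of improper $H$ then correctly complements the paper's proper-only argument.
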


The next corollary is one of the main results of this section.
The function \(G_{\max}\) and its support set \(A_{\max}\) generate a dynamic uncertainty structure and a dynamic valuation rule through the procedure
\[
G_{\max}
\;\longrightarrow\;
A_{\max}(\cdot)
\;\longrightarrow\;
\mathcal U(G_{\max})
\;\longrightarrow\;
\{\mathcal T_t^{\max}\}_{t\ge0},
\]
through the recovery chain presented in~\eqref{eq:mainchain}.
The corollary shows that \(\{\mathcal T_t^{\max}\}_{t\ge0}\) is the largest dynamic sublinear valuation rule consistent with the observable valuation functions.
It also shows that \(\mathcal U(G_{\max})\) is maximal among all dynamic uncertainty structures representing this valuation rule.

\begin{corollary}
\label{cor:maximal_data_consistent_DUS}
Suppose that Assumption~\ref{assume:observable-payoff-set} holds and that the function
$G_{\max}:D\times\mathbb R\times\mathbb R^d\times\mathbb S(d)\to\mathbb R$
is finite, continuous, and satisfies Assumptions~\ref{assume:paraboliccomparison} and~\ref{assume:lyapunov}.
Then there exists a dynamic sublinear valuation rule on \(C_b(D)\) satisfying Assumption~\ref{assume:Cinftyindomain} and consistent with the observable valuation functions in the sense of~\eqref{eq:thm:ID-NS_consistency} if and only if
\begin{equation}\label{eq:compatibilitycondition1}
\underline G(x,U)
\le
G_{\max}(x,U)
\quad
\text{for all }
(x,U)\in D\times\mathbb R\times\mathbb R^d\times\mathbb S(d)\,.
\end{equation}
In this case, the following statements hold.
\begin{enumerate}[label=(\roman*)]
\item
The function \(G_{\max}\) satisfies \ref{item:G1}-\ref{item:G3}.

\item Let \(\mathcal B_{\mathrm{ad}}(G_{\max})\) denote the collection of all admissible coefficient fields associated with \(G_{\max}\), and let
$\mathcal U(G_{\max})
=
\{\mathcal U_x(G_{\max})\}_{x\in\hat D}$
be the corresponding time-homogeneous DUS.
Define the robust valuation rule \(\{\mathcal T_t^{\max}\}_{t\ge0}\) by
\begin{align}
\begin{split}
\mathcal T_t^{\max}f(x)
&:=
\sup_{(A,\mathbb Q)\in\mathcal U_x(G_{\max})}
\mathbb E^{\mathbb Q}\!\left[
e^{-A_t}f(X_t)\mathbb I_{\{\tau_{\mathrm{exp}}>t\}}
\right] \notag\\
&=
\sup_{(-k,\gamma)\in\mathcal B_{\mathrm{ad}}(G_{\max})}
\sup_{\mathbb Q\in\mathcal P_x(L^\gamma)}
\mathbb E^{\mathbb Q}\!\left[
e^{-\int_0^t k_s\,ds}
f(X_t)\mathbb I_{\{\tau_{\mathrm{exp}}>t\}}
\right]
\end{split}
\end{align}
for \(t\ge0\), \(x\in D\), and \(f\in C_b(D)\).
Then \(\{\mathcal T_t^{\max}\}_{t\ge0}\) is a dynamic sublinear valuation rule satisfying Assumption~\ref{assume:Cinftyindomain}.
Its generating function is \(G_{\max}\), and it is consistent with the observable valuation functions in the sense of~\eqref{eq:thm:ID-NS_consistency}.

\item
The valuation rule \(\{\mathcal T_t^{\max}\}_{t\ge0}\) is the largest dynamic sublinear valuation rule consistent with the observable valuation functions.
More precisely, if \(\{\mathcal T_t\}_{t\ge0}\) is any dynamic sublinear valuation rule satisfying Assumption~\ref{assume:Cinftyindomain} and the consistency condition~\eqref{eq:thm:ID-NS_consistency}, then
\[
\mathcal T_t f(x)
\le
\mathcal T_t^{\max}f(x)
\qquad
\text{for all } t\ge0,\ x\in D,\ f\in C_b(D).
\]

\item
The DUS \(\mathcal U(G_{\max})\) is maximal among all DUSs representing \(\{\mathcal T_t^{\max}\}_{t\ge0}\).
That is, if
$\mathcal U=\{\mathcal U_x\}_{x\in\hat D}$
is any DUS representing \(\{\mathcal T_t^{\max}\}_{t\ge0}\), then
\[
\mathcal U_x
\subseteq
\mathcal U_x(G_{\max})
\qquad
\text{for all }x\in\hat D.
\]
Consequently, \(\mathcal U(G_{\max})\) is conservative in two senses:
it represents the largest dynamic sublinear valuation rule consistent with the observable data, and it is the maximal DUS among all DUSs representing that rule.
\end{enumerate}
\end{corollary}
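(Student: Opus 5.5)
The argument reduces everything to Theorem~\ref{thm:ID-NS}, Theorem~\ref{thm:ID-exist}, the recovery chain of Section~\ref{sec:probrepn}, and Theorem~\ref{thm:maximalrepre}. The organizing observation is that a generating function \(G\) consistent with the data satisfies \(G\le\overline G\). Since \(G(x,\cdot)\) is sublinear and continuous, hence lower semicontinuous, Theorem~\ref{thm:ID-exist} then gives \(G\le G_{\max}\) pointwise.

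\emph{Necessity.} Suppose some dynamic sublinear valuation rule satisfying Assumption~\ref{assume:Cinftyindomain} is consistent with the data, and let \(G\) be its generating function. By Theorem~\ref{thm:ID-NS}(i), \(\underline G\le G\le\overline G\). The observation above gives \(G\le G_{\max}\), so \(\underline G\le G_{\max}\), which is \eqref{eq:compatibilitycondition1}.

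\emph{Sufficiency.} Assume \eqref{eq:compatibilitycondition1}. We have \(G_{\max}\le\overline G\) by construction. Together with the hypothesis, \(G_{\max}\) satisfies the bounds \eqref{eq:thm:ID-NS_Gbound}. By assumption it is finite, continuous, and satisfies Assumptions~\ref{assume:paraboliccomparison} and~\ref{assume:lyapunov}. Theorem~\ref{thm:ID-NS}(ii) therefore yields a unique data-consistent rule with generating function \(G_{\max}\).

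For (i), sublinearity and continuity are given, so (G1) holds. For (G2)--(G3) we use the structural set \(\mathcal Z\). By definition, \(\overline G(x,1,0,0)\le0\) and \(\overline G(x,0,0,X)\le 0\) for \(X\le0\), and these inequalities pass to \(G_{\max}\le\overline G\). By the footnote's equivalence, they are exactly (G3) and (G2) for a sublinear function. Concretely, \(G(x,r,p,X)\le G(x,s,p,X)+G(x,r-s,0,0)\) with \(G(x,r-s,0,0)=(r-s)G(x,1,0,0)\le0\), and similarly for \(X\). One point needs checking: that each \(V\in A_{\max}(x)\) lies in \((-\infty,0]\times\mathbb R^d\times\mathbb S^+(d)\), so that \(A_{\max}\) is the support set of \eqref{def:supportset}. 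This follows from \(L^V(1,0,0)=C\le0\) and \(L^V(0,0,X)\le0\) for all \(X\le0\), which forces \(\Sigma\ge0\).

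For (ii), Theorem~\ref{thm:abstractconstructionSG} shows that \(\{\mathcal T^{\max}_t\}:=\{\mathcal T_t^{\mathcal U(G_{\max})}\}\) is a dynamic sublinear valuation rule satisfying Assumption~\ref{assume:Cinftyindomain} with generating function \(G_{\max}\). By the uniqueness statement in Proposition~\ref{prop:PDErepn}, it coincides with the rule from Theorem~\ref{thm:ID-NS}(ii), and is therefore data-consistent.

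For (iii), any consistent \(\{\mathcal T_t\}\) has generating function \(G\le G_{\max}\) by the opening observation. Fix \(f\), and set \(v=\mathcal T_tf\) and \(w=\mathcal T^{\max}_tf\). By Proposition~\ref{prop:PDErepn}, \(v\) is a bounded viscosity solution of \(\partial_tv=G(x,v,\nabla v,\nabla^2v)\). Since \(G\le G_{\max}\), \(v\) is a viscosity subsolution of the \(G_{\max}\)-equation, and \(w\) is a viscosity solution of that equation with the same initial datum. Comparison for \(G_{\max}\) (Assumption~\ref{assume:paraboliccomparison}) gives \(v\le w\) on each \([0,T)\times D\). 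As \(T\) is arbitrary, \(v\le w\) for all \(t\). One could alternatively argue via \(A(x)\subseteq A_{\max}(x)\), hence \(\mathcal U(G)\subseteq\mathcal U(G_{\max})\), and compare suprema. That route needs Corollary~\ref{cor:stochasticrepn_original} for \(G\), which requires the Lyapunov and comparison hypotheses for \(G\), so the PDE route is preferable.

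For (iv), the claim is Theorem~\ref{thm:maximalrepre} applied to \(\{\mathcal T^{\max}_t\}\), whose generating function \(G_{\max}\) satisfies the required assumptions.

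The main obstacle I expect is the semicontinuity and sign bookkeeping: verifying that \(A_{\max}\) has the right cone structure so that \(\mathcal U(G_{\max})\) is the paper's construction, and that the subsolution transfer in (iii) is legitimate with the monotonicity conventions of \eqref{eq:mainHJB}. The remaining steps are direct invocations of earlier results.
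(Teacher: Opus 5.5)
Your proposal is correct, and it is the assembly the paper's results are set up for; the paper states this corollary without a separate proof. You use Theorems~\ref{thm:ID-NS} and~\ref{thm:ID-exist} for the equivalence, the structural set $\mathcal Z$ for \ref{item:G2}--\ref{item:G3} and the sign constraints on $A_{\max}(x)$, Theorem~\ref{thm:abstractconstructionSG} with the uniqueness in Proposition~\ref{prop:PDErepn} for (ii), comparison for $G_{\max}$ applied to the subsolution $\mathcal T_tf$ for (iii) (which rightly avoids needing comparison for the unknown $G$), and Theorem~\ref{thm:maximalrepre} for (iv), where the DUSs are the time-homogeneous ones implicit in the notation $\{\mathcal U_x\}_{x\in\hat D}$.
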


\subsection{Recovering from Finite Sample Data}\label{subsec:estimation}

%Fix a bounded subdomain $D_m\subset D$. For each $f\in\mathcal K$, let $v^f:[0,T]\times D\to\mathbb R$ denote the underlying valuation surface.  We emphasize that the family $\{v^f\}_{f\in\mathcal K}$ is fixed throughout: the discrete-data problem considered below concerns partial observation of this latent continuum family, not arbitrary extensions outside the observation grid.

In the previous section, we identified the largest generating function \(G_{\max}\) and the corresponding support sets \(A_{\max}(\cdot)\), which generate the largest valuation rule \(\{\mathcal T_t^{\max}\}_{t\ge0}\) consistent with the observable valuation functions \(\{v^f\}_{f\in\mathcal K}\) on \([0,T]\times D\).
In practice, however, the observable valuation functions \(v^f\) are not available on the entire domain; rather, their values are sampled only at finitely many points in \([0,T]\times D\).
The objective of this section is to construct finite-sample approximations of \(G_{\max}\) and \(A_{\max}(\cdot)\), derive the corresponding valuation rule, and estimate the discrepancy between this approximate valuation rule and the original largest valuation rule \(\{\mathcal T_t^{\max}\}_{t\ge0}\).

Recall that $D\subset\mathbb R^d$ is a convex open domain, possibly unbounded, which can be exhausted by bounded convex subdomains $D_m$ with smooth boundary satisfying $\overline D_m\subset D_{m+1}$ for all $m\ge1$.
On each truncated domain \([0,T]\times D_m\), sample data are available only on a rectangular grid \[ \mathcal I_{m,n}:=\mathbb T_n\times\Gamma_{m,n}\,,\quad n\ge1, \] where \[ \mathbb T_n=\{t_1^n,\dots,t_{N_n}^n\}\subset(0,T]\,,\quad \Gamma_{m,n}\subset D_m \] are finite temporal and spatial grids, respectively.
For fixed $m$, the stage-$n$ sample data are given by 
$\{v^f(t,x)\}_{f\in\mathcal K,\,(t,x)\in\mathcal I_{m,n}}$ for the observable payoff set $\mathcal{K}$.
Let $
P_{m,n}:=\operatorname{conv}(\Gamma_{m,n})$
be the convex hull of the spatial grid \(\Gamma_{m,n}\).
A triangulation of \(P_{m,n}\) with vertices in \(\Gamma_{m,n}\) is a family \(\mathfrak S\) of \(d\)-simplices of the form
\[
S=\operatorname{conv}\{x_0,\ldots,x_d\}\,,\;\quad
x_0,\ldots,x_d\in\Gamma_{m,n},
\]
such that
\[
P_{m,n}=\bigcup_{S\in\mathfrak S}S
\]
and the intersection of any two simplices in \(\mathfrak S\) is either empty or a common face of both.
We denote by
$\operatorname{Tri}(P_{m,n};\Gamma_{m,n})$
the collection of all such triangulations.
Define
\[
\Delta_{m,n}
:=
\inf_{\mathfrak S\in \operatorname{Tri}(P_{m,n};\Gamma_{m,n})}
\max_{S\in\mathfrak S}\operatorname{diam}(S),
\]
with the convention \(\inf\varnothing:=\infty\).

We measure the mesh size of \(\mathcal I_{m,n}\) on \([0,T]\times\overline D_m\) by
\[
\|\mathcal I_{m,n}\|_{T,m}
:=
\max\left\{
d_H(\mathbb T_n,[0,T]),
d_H(P_{m,n},\overline D_m),
\Delta_{m,n}
\right\},
\]
where \(d_H\) denotes the Hausdorff distance.
Throughout this section, we work in the regime
\[
\|\mathcal I_{m,n}\|_{T,m}\to0
\qquad
\text{as } n\to\infty
\]
for each fixed \(m\ge1\).
This condition means that the temporal grid, the spatial convex hull, and the spatial triangulation become increasingly fine.
In particular, \(\bigcup_{n\ge1}\mathcal I_{m,n}\) is dense in \([0,T]\times\overline D_m\); intuitively, the rectangular grids asymptotically fill the truncated domain.

We now impose a regularity condition on the family of valuation functions
$\{v^f\}_{f\in\mathcal K,\,\|f\|_\infty=1}$.
The following assumption requires only equi-H\"older continuity, which is significantly weaker than smoothness.
It plays a key role in the proofs of Theorems~\ref{thm:consistency_local} and~\ref{thm:consistency_valuation}.

\begin{assume}\label{ass:estimation_viscosity}
The family of valuation functions $\{v^f\}_{f\in\mathcal K,\,\|f\|_\infty=1}$ is equi-H\"older continuous on $[0,T]\times \overline D_m$.
More precisely, for each $m\ge1,$ there exist constants $C_m>0$ and $\alpha_m\in(0,1]$ such that
\begin{equation}
    |v^f(t,x)-v^f(t',x')|
    \le
    C_m|(t,x)-(t',x')|^{\alpha_m}
\end{equation}
for all $(t,x),(t',x')\in[0,T]\times \overline D_m$ and all $f\in\mathcal K$ with $\|f\|_\infty=1$.
\end{assume}

We now approximate the lower jet-derivative set $\mathcal D^-_{\mathcal K}(x,U)$
using the sample data $\{v^f(t,x)\}_{f\in\mathcal K,\,(t,x)\in\mathcal I_{m,n}}$.
Suppose that $G_{\max}$ is finite and continuous.
For $R>0$, define
\begin{align}
    \mathbb B_R:=&\left\{
    V\in\mathbb R\times\mathbb R^d\times\mathbb S(d)
    : \|V\|\le R
    \right\}\,,\\
    \mathbb B_R':=&\left\{
    V\in(-\infty,0]\times\mathbb R^d\times\mathbb S^+(d)
    : \|V\|\le R
    \right\}\,.
\end{align}
Since $\overline D_m$ is compact and $G_{\max}(x,\cdot\,)$ is sublinear, there exists $N_m>0$ such that $A_{\max}(x)\subseteq\mathbb B_{N_m}'$ for all $x\in\overline D_m$.
Fix $\beta_m\in(0,\alpha_m/2)$ and $\delta_m\in(0,\min\{\alpha_m-2\beta_m,\alpha_m\beta_m\})$, and define
\[
    \varepsilon_{m,n}:=\|\mathcal I_{m,n}\|_{T,m}^{\beta_m}\,,   \quad
    R_{m,n}:=\|\mathcal I_{m,n}\|_{T,m}^{-\delta_m}.
\]
For each $m\ge1$, let $(\eta_{m,n})_{n\ge1}$ be a sequence of positive numbers such that $\eta_{m,n}\downarrow0$ as $n\to\infty$ and
\begin{equation}\label{eq:eta_rate}
    \lim_{n\to\infty}\frac{R_{m,n} \|\mathcal I_{m,n}\|_{T,m}^{\alpha_m}+\|\mathcal I_{m,n}\|_{T,m}}{\eta_{m,n}}
    =
    \lim_{n\to\infty}\frac{\eta_{m,n}}{\varepsilon_{m,n}^2}
    =
    0\,.
\end{equation}
For instance, one may take $\eta_{m,n}=\|\mathcal I_{m,n}\|_{T,m}^{\theta_m}$ with $2\beta_m<\theta_m<\alpha_m-\delta_m$.
For $x\in\Gamma_{m,n}$, $U=(r,p,X)\in \mathbb R\times\mathbb R^d\times\mathbb S(d)$, and $\ell>0$, define
\[
\mathcal D_{\mathcal K,m,n}^{-,\ell}(x,U)
:=
\left\{
a\in\mathbb R:
\begin{array}{l}
\exists\, y\in\Gamma_{m,n}\cap B_\ell(x),\ 
\exists\, t\in\mathbb T_n,\ 
\exists\, f\in\mathcal K \;\;\mbox{such that} \\[1mm]
\|f\|_\infty\le R_{m,n},\ |v^f(t,y)-r|\le \eta_{m,n}\ \text{and}\\[1mm]
v^f(s,z)
\ge
v^f(t,y)
+P^{(a,p,X)}(s,z;t,y)
-\eta_{m,n}\\[1mm]\text{for all }(s,z)\in\mathcal I_{m,n}\cap\mathcal C_{\varepsilon_{m,n}}^-(t,y)
\end{array}
\right\}
\cup \mathcal Z(U),
\]
where
$$P^{(a,p,X)}(s,z;t,y):=a(s-t)
+p\cdot(z-y)+\frac12(z-y)^\top X(z-y)\,.$$

This set is a data-driven approximation of the lower jet-derivative set $\mathcal D^-_{\mathcal K}(x,U)$ based on the sample data $\{v^f(t,x)\}_{f\in\mathcal K,\,(t,x)\in\mathcal I_{m,n}}$.
The point $y\in\Gamma_{m,n}\cap B_\ell(x)$ serves as a grid-based proxy for the state $x$, allowing a spatial tolerance of radius $\ell$.
The restriction $\|f\|_\infty\le R_{m,n}$ limits attention to bounded observable payoffs.
Since $R_{m,n}\to\infty$ as $n\to\infty$ for each fixed $m\ge1$, this restriction becomes asymptotically negligible and eventually recovers the full observable class $\mathcal K$.
In the definition of $\mathcal D^-_{\mathcal K}(x,U)$, the condition that $v^f-\varphi\ge0$ in a neighborhood of $(t,x)$, for a test function $\varphi\in C_b^\infty((0,T)\times D)$, is approximated here by the discrete inequality
\[
v^f(s,z)\ge v^f(t,y)+P^{(a,p,X)}(s,z;t,y)-\eta_{m,n}
\quad
\text{for all }(s,z)\in\mathcal I_{m,n}\cap\mathcal C_{\varepsilon_{m,n}}^-(t,y)\,.
\]
The quadratic polynomial $P^{(a,p,X)}(\cdot,\cdot;t,y)$ replaces the smooth test function $\varphi$, making the construction tractable in practice.
The parameter $\eta_{m,n}$ serves as a tolerance level, allowing errors of size $\eta_{m,n}$ both in the discrete inequality and in the anchoring condition:
the exact identity $v^f(t,x)=r$ is relaxed to $|v^f(t,y)-r|\le \eta_{m,n}$.

We now construct estimators for the support set $A_{\max}$ and the generating function $G_{\max}$ using the approximation $\mathcal D_{\mathcal K,m,n}^{-,\ell}(x,U)$ of the lower jet-derivative set.
Choose \((\lambda_{m,\ell})_{\ell>0}\) such that
\[
    \lambda_{m,\ell}\downarrow0,
    \qquad
    \omega_m(\ell)=o(\lambda_{m,\ell})
    \quad\text{as }\ell\downarrow0,
\]
where $\omega_m:[0,\infty)\to[0,\infty)$ denotes the modulus of continuity of $G_{\max}$ on \(\overline D_m\), defined by
\[
\omega_m(\ell)
:=
\sup_{\substack{x,y\in\overline D_m,\ |x-y|\le \ell\\ U\in\mathbb B_1}}
\left|
G_{\max}(x,U)-G_{\max}(y,U)
\right|.
\]
For $x\in \Gamma_{m,n}$, define the support-set estimator by
\begin{equation}\label{eq:tolerant_support_estimator}
A_{\max,m,n}^{\ell}(x)
:=
\left\{
V\in\mathbb B_{N_m}' :
L^V(U)\le a+N_m\eta_{m,n}+\lambda_{m,\ell}
\quad
\text{for all }U\in\mathbb B_1
\text{ and all }
a\in\mathcal D_{\mathcal K,m,n}^{-,\ell}(x,U)
\right\}
\end{equation}
and the generating-function estimator by
\begin{equation}\label{eq:tolerant_generator_estimator}
G_{\max,m,n}^{\ell}(x,U)
:=
\sup_{V\in A_{\max,m,n}^{\ell}(x)}
L^V(U).
\end{equation}

We next extend these estimators from $\Gamma_{m,n}$ to $\overline D_m$.
Fix a triangulation $\mathfrak S_{m,n}\in \operatorname{Tri}(P_{m,n};\Gamma_{m,n})$ such that $\max_{S\in\mathfrak S_{m,n}}\operatorname{diam}(S)\le 2\|\mathcal I_{m,n}\|_{T,m}$.
For $x\in\overline D_m$, let $y$ be the projection of $x$ onto the convex set $P_{m,n}$, and choose a simplex $S\in\mathfrak S_{m,n}$ containing $y$.
Let \((\mu_0,\ldots,\mu_d)\) be the barycentric coordinates of $y$ relative to $S$.\footnote{For
\(y\in S:=\operatorname{conv}\{x_0,\ldots,x_d\}\),
the barycentric coordinates of \(y\) relative to $S$ are a tuple \((\mu_0,\ldots,\mu_d)\in \mathbb{R}^{d+1}\) such that
\[
    y=\sum_{i=0}^d \mu_i x_i,\qquad
    \mu_i\ge 0,\quad \sum_{i=0}^d \mu_i=1.
\]
}
We define
\[
G_{\max,m,n}^{\ell}(x,U):=\sum_{i=0}^d \mu_i\,G_{\max,m,n}^{\ell}(x_i,U)
\quad\mbox{and}\quad
A_{\max,m,n}^{\ell}(x):=\sum_{i=0}^d \mu_i\,A_{\max,m,n}^{\ell}(x_i),
\]
where the latter denotes the Minkowski convex combination.
These extensions agree with the original estimators on $\Gamma_{m,n}$ and preserve the dual relation
\[
G_{\max,m,n}^{\ell}(x,U)
=
\sup_{V\in A_{\max,m,n}^{\ell}(x)}L^V(U),
\qquad
(x,U)\in\overline D_m\times\mathbb R\times\mathbb R^d\times\mathbb S(d).
\]

The next theorem establishes the convergence of these estimators.
While the largest generating function $G_{\max}$ and its support set $A_{\max}$ are constructed from valuation functions defined on the entire domain $[0,T]\times D$, the estimators $G_{\max,m,n}^{\ell}$ and $A_{\max,m,n}^{\ell}$ are constructed only from the sample data observed on $\mathcal I_{m,n}$.
The theorem shows that, as $n\to\infty$ and $\ell\downarrow0$, these estimators converge to $G_{\max}$ and $A_{\max}$, respectively.
The detailed proof is deferred to Appendix~\ref{subsec:proof_recovery_procedure}.

\begin{theorem}\label{thm:consistency_local}
Suppose that Assumptions~\ref{assume:observable-payoff-set} and~\ref{ass:estimation_viscosity} hold and that $G_{\max}$ is finite and continuous.
Then, for each fixed $m\ge1$, the following statements hold.

\begin{enumerate}[label=(\roman*), ref=(\roman*)]
    \item\label{thm:consistency_local_supportset} \emph{(Convergence of support sets)}
    The estimator $A_{\max,m,n}^{\ell}$ converges to $A_{\max}$ uniformly on $\overline D_m$ in the Hausdorff metric, that is,
\begin{equation}\label{eq:tolerant_support_uniform_convergence}
\lim_{\ell\downarrow0}
\limsup_{n\to\infty}
\sup_{x\in\overline D_m}
d_H\!\left(
A_{\max,m,n}^{\ell}(x),
A_{\max}(x)
\right)
=0.
\end{equation}

    \item\label{thm:consistency_local_generatingftn} \emph{(Convergence of maximal generating functions)}
    The estimator $G_{\max,m,n}^{\ell}$ converges uniformly to $G_{\max}$ on $\overline D_m\times\mathbb B_1$, that is,
\begin{equation}\label{eq:tolerant_generator_uniform_convergence}
\lim_{\ell\downarrow0}
\limsup_{n\to\infty}
\sup_{x\in\overline D_m,\, U\in\mathbb B_1}
\left|
G_{\max,m,n}^{\ell}(x,U)
-
G_{\max}(x,U)
\right|
=0.
\end{equation}
\end{enumerate}
\end{theorem}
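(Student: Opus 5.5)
The plan is to prove part (ii) first and then derive part (i) from it by support-function duality. Both $A_{\max,m,n}^{\ell}(x)$ and $A_{\max}(x)$ are compact convex subsets of $\mathbb B'_{N_m}$. Their support functions on $\mathbb B_1$ are $G_{\max,m,n}^{\ell}(x,\cdot)$ and $G_{\max}(x,\cdot)$. The Hausdorff distance between two compact convex sets equals the sup-norm distance between their support functions on the unit sphere of the dual norm. Here $L^V(U)$ is a bilinear pairing, which is the Euclidean inner product up to the fixed linear change $\tfrac12$ on the matrix block. So the support-function distance on $\mathbb B_1$ dominates a constant multiple of $d_H$, and (ii) implies (i). For the extension off the grid I would use the barycentric definition: $G_{\max}$ is uniformly continuous on $\overline D_m\times\mathbb B_1$ with modulus $\omega_m$, and the simplex diameter is at most $2\|\mathcal I_{m,n}\|_{T,m}$. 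Hence it suffices to prove uniform convergence at grid points $x\in\Gamma_{m,n}$, with an extra error of order $\omega_m(\|\mathcal I_{m,n}\|_{T,m})$, plus the projection distance $d_H(P_{m,n},\overline D_m)\to0$.

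At grid points I would prove two one-sided bounds.

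\emph{Lower bound: $A_{\max}(x)\subseteq A^{\ell}_{\max,m,n}(x)$ for $n$ large, uniformly in $x$.} Take any $a\in\mathcal D^{-,\ell}_{\mathcal K,m,n}(x,U)$ with $U\in\mathbb B_1$; the case $a\in\mathcal Z(U)$ is trivial by (G2)--(G3) for $G_{\max}$. The witness $(t,y,f)$ satisfies a discrete quadratic minorization on the parabolic cylinder $\mathcal C^-_{\varepsilon_{m,n}}(t,y)$, up to error $\eta_{m,n}$. I would use equi-Hölder continuity (Assumption~\ref{ass:estimation_viscosity}, scaled by $\|f\|_\infty\le R_{m,n}$) to pass from grid points to all of the cylinder, at cost $R_{m,n}\|\mathcal I\|^{\alpha_m}+\|\mathcal I\|$. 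This cost is $o(\eta_{m,n})$ by \eqref{eq:eta_rate}. Then I perturb the polynomial by a penalization $-\kappa(|z-y|^2+(t-s))$ with $\kappa\sim\eta_{m,n}/\varepsilon_{m,n}^2\to0$, so that the slack becomes a strict local minimum. A standard perturbation of the minimum point then produces a genuine subjet element of $v^f$ at a nearby point $(t',y')$ with $|y'-y|\lesssim\varepsilon_{m,n}$. Its components are within $O(\eta_{m,n}/\varepsilon_{m,n}^2+\|X\|\varepsilon_{m,n})$ of $(a,p,X)$, and its value is within $O(\eta_{m,n})$ of $r$. This places $a$, up to such errors, in $\overline G(y',\tilde U)\le a$. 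By definition of $A_{\max}$, $L^V(\tilde U)\le \overline G(y',\tilde U)$, and $|V|\le N_m$. Moving from $y'$ back to $x$ costs $\omega_m(\ell+\varepsilon)$, which is $o(\lambda_{m,\ell})$. This shows $L^V(U)\le a+N_m\eta_{m,n}+\lambda_{m,\ell}$ eventually, giving $G_{\max,m,n}^\ell\ge G_{\max}$ at grid points.

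\emph{Upper bound.} Fix $x$ and $U\in\mathbb B_1$. By Theorem~\ref{thm:ID-exist}, $G_{\max}(x,\cdot)$ is the lower semicontinuous sublinear hull of $\overline G(x,\cdot)$. The task is to show that any $V\in A^{\ell}_{\max,m,n}(x)$ satisfies $L^V\le \overline G(x,\cdot)+o(1)$. Given $q\in\mathcal D^-_{\mathcal K}(x,U')$ with $U'\in\mathbb B_1$, realized by $f,t$ and a smooth $\varphi$, I rescale $f$ by homogeneity (Assumption~\ref{assume:observable-payoff-set}) so that $\|f\|_\infty\le R_{m,n}$. Taylor expansion of $\varphi$ on $\mathcal C^-_{\varepsilon}$ gives the discrete inequality up to $O(\varepsilon^3)$ in space and $O(\varepsilon^2)$ in time. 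Grid approximation of $(t,x)$ costs Hölder terms, and these are absorbed into $\eta_{m,n}$ after replacing $q$ by $q+\epsilon$ and $X$ by $X-\epsilon I$ to gain strictness. Hence $q+o(1)\in\mathcal D^{-,\ell}_{\mathcal K,m,n}(x,U'')$ with $U''$ close to $U'$, and so $L^V(U')\le q+o(1)+\lambda_{m,\ell}$. Taking the infimum over $q$ gives $L^V\le\overline G(x,\cdot)+o(1)+\lambda_{m,\ell}$ on $\mathbb B_1$. This yields $V$ within $o(1)$ of $A_{\max}(x)$, using compactness in $\mathbb B'_{N_m}$ and the continuity of $G_{\max}$.

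The main obstacle is uniformity in $x$ and $U$ in the upper bound. A fixed test function $\varphi$ has no uniform curvature control, so the Taylor error cannot be made uniform directly. I would handle this first by a compactness argument on $\overline D_m\times\mathbb B_1$: use continuity of $G_{\max}$ to reduce to a finite net of $(x,U)$ and near-optimal jets, each with a fixed test function. Only after that would I let $n\to\infty$ and then $\ell\downarrow0$.
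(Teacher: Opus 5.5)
Your architecture is the paper's, with (i) and (ii) in the other order. Both arguments get two-sided control at grid points from two steps. The reverse step turns tolerant discrete jets into exact jets at nearby points (cf.\ Lemma~\ref{lemma:Discretejet-growingHolder} and Proposition~\ref{prop:H_reverse_tolerant_constraints}). The forward step turns exact jets into discrete witnesses (cf.\ Lemma~\ref{lem:H_fixed_point_discrete_subjet} and Proposition~\ref{prop:H_centered_forward_consistency}). Both then finish with barycentric extension and support-function duality.

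Your inclusion $A_{\max}(x)\subseteq A^{\ell}_{\max,m,n}(x)$ is sound, and being quantitative it is a bit more direct than the paper's subsequence argument. It needs two small repairs.
\begin{itemize}
\item Restrict to bounded $a$. This is legitimate because the non-$\mathcal Z$ part of $\mathcal D^{-,\ell}_{\mathcal K,m,n}(x,U)$ is closed upward in $a$: on the backward cylinder $s\le t$, so increasing $a$ weakens the discrete inequality.
\item The value error at the new contact point is $O(C_mR_{m,n}\varepsilon_{m,n}^{\alpha_m})$, not $O(\eta_{m,n})$. It still vanishes, since $\delta_m<\alpha_m\beta_m$.
\end{itemize}

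The genuine gap is the sentence ``This yields $V$ within $o(1)$ of $A_{\max}(x)$.'' The forward step only gives $L^V(U)\le\overline G(y,U)+\lambda_{m,\ell}+o(1)$ for $U\in\mathbb B_1$ and $y$ near $x$. First, $\lambda_{m,\ell}$ is not $o(1)$ in $n$. More importantly, an additive slack against the non-sublinear $\overline G(x,\cdot)$ does not pass to its hull $G_{\max}(x,\cdot)$ with the same size. The map $U\mapsto L^V(U)-\lambda\|U\|$ is not sublinear, so the maximality in Theorem~\ref{thm:ID-exist} does not apply.

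For instance, take a planar toy with $h(\pm e_1)=h(\pm e_2)=1$ and $h\bigl((e_1+e_2)/\sqrt2\bigr)=0$. The point $V=(1+\lambda,\,-1+(\sqrt2-1)\lambda)$ satisfies $V\cdot W\le h(W)+\lambda$ for each of the five unit directions $W$ where $h$ is finite. Yet at $U=(\cos\frac\pi8,\sin\frac\pi8)$ it exceeds the sublinear hull of $h$ by $(\cos\frac\pi8+(\sqrt2-1)\sin\frac\pi8)\lambda>\lambda$.

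In general the slack is multiplied by the weights $\sum_k c_k\|W_k\|$ of the positive combinations that produce $G_{\max}(x,U)$ from the values $\overline G(x,W_k)$. Continuity of $G_{\max}$ does not bound these weights uniformly in $x$. Suppose a vertex of $A_{\max}(x)$ is cut out by two data half-spaces whose normals become nearly opposite as $x\to x_0$, with the limiting constraint supplied at $x_0$ itself. Then $x\mapsto A_{\max}(x)$ stays continuous, but relaxing those half-spaces by $\lambda$ moves the vertex by about $\lambda/\kappa(x)$ with $\kappa(x)\to0$.

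Compactness of $\mathbb B'_{N_m}$ gives your implication only pointwise in $x$. What is needed is
\[
\sup_{x\in\overline D_m}d_H\Bigl(\{V\in\mathbb B'_{N_m}:L^V\le\overline G(x,\cdot)+\lambda\text{ on }\mathbb B_1\},\,A_{\max}(x)\Bigr)\longrightarrow0
\quad(\lambda\downarrow0),
\]
at a rate compatible with $\lambda_{m,\ell}\gg\omega_m(\ell)$, or an extra hypothesis that supplies this. The paper's Lemma~\ref{lem:H_sequential_tolerant_support}(i) makes the same jump (``using the defining characterization of $G_{\max}$''), so its proof does not fill this step either.

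Your finite-net device has a related weakness. A grid point $x$ inherits witnesses from a net point $x_i$ only if $|x-x_i|<\ell$. So the net, the near-optimal jets and their combination weights all depend on $\ell$, and must then be beaten by $\lambda_{m,\ell}$.

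The paper handles test-function uniformity more cleanly. For fixed $\ell$ it argues by contradiction along grid points $x_n\to x$ and uses exact jets at the single limit point $x$, with fixed test functions. No rescaling of $f$ is needed then, since $\|f\|_\infty\le R_{m,n}$ eventually; your rescaling would change $U'$ to $cU'$ anyway. The $\ell$-tolerance puts the witnesses $y_n^*\to x$ into $B_\ell(x_n)$. Minimizing over grid points of $\mathcal C_r^-(t,x)$ keeps the anchoring time at or before $t$. That settles the Taylor errors, but not the $\overline G$-to-$G_{\max}$ step above.
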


We emphasize that the method is genuinely nonparametric.
Starting from discrete observations of valuation functions, it recovers the maximal support sets $A_{\max}(x)$ and the associated maximal generating function $G_{\max}$ without imposing any parametric specification on the valuation mechanism or on the underlying uncertainty structure.

We now develop a procedure for recovering the uncertainty structure and its associated valuation rule from sample data.
We first construct an approximate valuation rule and then establish its convergence to the largest valuation rule \(\{\mathcal T_t^{\max}\}_{t\ge0}\).
Let
\[
\tau_m := \inf\{t \ge 0 : X_t \notin D_m\}
\]
denote the first exit time from $D_m$.
Since \(G_{\max,m,n}^{\ell}\) and \(A_{\max,m,n}^{\ell}\) are defined on \(\overline{D}_m\), we consider coefficient fields stopped at \(\tau_m\).
More precisely, let \(\mathcal B_{\mathrm{ad}}^{m}(G_{\max,m,n}^{\ell})\) be the collection of progressively measurable coefficient fields
\[
\beta = (-k,\gamma) : [0,\infty)\times\hat\Omega \to (-\infty,0]\times\mathbb R^d\times\mathbb S^+(d)
\]
satisfying
\[
\begin{aligned}
&\beta(t,\omega) \in A_{\max,m,n}^{\ell}(\omega(t))
\quad \text{for } t < \tau_m(\omega),\\
&\beta(t,\omega) = 0
\quad \text{for } t \ge \tau_m(\omega).
\end{aligned}
\]
For \(x \in D_m\), define
\[
\mathcal U_x^{m}(G_{\max,m,n}^{\ell})
:=
\left\{
(A^k,\mathbb Q)\in\mathfrak U :
(-k,\gamma)\in \mathcal B_{\mathrm{ad}}^{m}(G_{\max,m,n}^{\ell})
\ \text{and}\
\mathbb Q \in \mathcal P_x(L^\gamma)
\right\},
\]
where \(A_t^k := \int_0^t k_s\,ds\) and \(\mathcal P_x(L^\gamma)\) denotes the class of laws solving the generalized \(L^\gamma\)-martingale problem.
For each \(n \ge 1\) and \(\ell > 0\), define the $D_m$-truncated robust valuation rule generated by \(G_{\max,m,n}^{\ell}\) by
\begin{align}
\mathcal T_t^{\max,m,n,\ell} f(x)
:=&
\sup_{(A,\mathbb Q)\in \mathcal U_x^{m}(G_{\max,m,n}^{\ell})}
\mathbb E^{\mathbb Q}\!\left[
e^{-A_t} f(X_t)\,\mathbb{I}_{\{\tau_m > t\}}
\right]
\\
=&
\sup_{(-k,\gamma)\in \mathcal B_{\mathrm{ad}}^{m}(G_{\max,m,n}^{\ell})}
\ \sup_{\mathbb Q\in \mathcal P_x(L^\gamma)}
\mathbb E^{\mathbb Q}\!\left[
e^{-\int_0^t k_s\,ds}
f(X_t)\,\mathbb{I}_{\{\tau_m > t\}}
\right],
\end{align}
for \((t,x)\in [0,\infty)\times D_m\).

The next theorem shows that the $D_m$-truncated robust valuation rule \(\{\mathcal T_t^{\max,m,n,\ell}\}_{t\ge0}\) converges to \(\{\mathcal T_t^{\max}\}_{t\ge0}\) as \(n\to\infty\), \(\ell\downarrow0\), and \(m\to\infty\).
While Theorem~\ref{thm:consistency_local} establishes the convergence of the estimators of the largest generating function and its support set, Theorem~\ref{thm:consistency_valuation} extends this convergence result to the corresponding largest dynamic valuation rule.
The proof is deferred to Appendix~\ref{subsec:proof_recovery_procedure}.
We recall the Lyapunov pair \((C,\phi)\) from Assumption~\ref{assume:lyapunov}.

\begin{theorem} 
\label{thm:consistency_valuation}
Suppose that Assumptions~\ref{assume:observable-payoff-set} and~\ref{ass:estimation_viscosity} hold, and that \(G_{\max}\) is finite-valued and continuous, satisfies \eqref{eq:compatibilitycondition1}, and fulfills Assumptions~\ref{assume:paraboliccomparison} and~\ref{assume:lyapunov}.
Then, for every \(m\ge1\), \(t\ge0\), \(x\in D_m\), and \(f\in C_b(D)\), we have
\begin{equation}\label{eq:thm:consistency_valuation_bound}
    \lim_{\ell\downarrow0}
    \limsup_{n\to\infty}
    \bigl|
    \mathcal T_t^{\max,m,n,\ell}f(x)
    -
    \mathcal T_t^{\max}f(x)
    \bigr|
    \le
    \frac{e^{Ct}\phi(x)}
    {\inf_{y\in\partial D_m}\phi(y)}
    \|f\|_\infty \,.
\end{equation}
In particular, for every \(t\ge0\), \(x\in D\), and \(f\in C_b(D)\),
\begin{equation}\label{eq:thm:consistency_valuation_limit}
    \lim_{m\to\infty}
    \lim_{\ell\downarrow0}
    \limsup_{n\to\infty}
    \bigl|
    \mathcal T_t^{\max,m,n,\ell}f(x)
    -
    \mathcal T_t^{\max}f(x)
    \bigr|
    =0 \,.
\end{equation}
\end{theorem}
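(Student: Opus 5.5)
The plan is to compare both valuation rules with an intermediate one: the robust valuation under $\mathcal U(G_{\max})$ truncated at $\tau_m$. For $x\in D_m$ we set
\[
\mathcal T_t^{\max,m}f(x):=\sup_{(A,\mathbb Q)\in\mathcal U_x^m(G_{\max})}\mathbb E^{\mathbb Q}\!\left[e^{-A_t}f(X_t)\mathbb I_{\{\tau_m>t\}}\right],
\]
where $\mathcal U_x^m(G_{\max})$ uses coefficient fields lying in $A_{\max}(\omega(t))$ before $\tau_m$ and vanishing afterwards. The triangle inequality then splits the error into a truncation term $|\mathcal T_t^{\max,m}f(x)-\mathcal T_t^{\max}f(x)|$ and an estimation term $|\mathcal T_t^{\max,m,n,\ell}f(x)-\mathcal T_t^{\max,m}f(x)|$. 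We will show that the estimation term vanishes in the iterated limit $\lim_{\ell}\limsup_n$, and that the truncation term is bounded by the right-hand side of \eqref{eq:thm:consistency_valuation_bound}.

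\textbf{Truncation term.} Every model in $\mathcal U_x^m(G_{\max})$ agrees, up to $\tau_m$, with some model of $\mathcal U_x(G_{\max})$, and conversely: restrict a field before $\tau_m$, or extend it after $\tau_m$ by concatenation as in Proposition~\ref{prop:PG_is_representing_VUS}. On $\{\tau_m>t\}$ the two payoffs coincide. Hence the difference of the suprema is at most $\|f\|_\infty\sup_{(A,\mathbb Q)}\mathbb E^{\mathbb Q}[e^{-A_{t\wedge\tau_m}}\mathbb I_{\{\tau_m\le t\}}]$, with the supremum over $\mathcal U_x(G_{\max})$.

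To bound this, fix an admissible $\beta=(C,B,\Sigma)\in\mathcal B_{\mathrm{ad}}(G_{\max})$. Admissibility gives $L^\beta(\phi,\nabla\phi,\nabla^2\phi)\le G_{\max}(X,\phi,\nabla\phi,\nabla^2\phi)\le C\phi$, using the Lyapunov condition of Assumption~\ref{assume:lyapunov} for $G_{\max}$. Itô's formula for the stopped martingale problem then shows that $e^{-Cs}e^{-A_s}\phi(X_s)$ is a supermartingale up to $\tau_m$. The discount term is handled correctly here because $L^\beta$ contains $C_s\phi=-k_s\phi$. Optional stopping gives $\mathbb E[e^{-A_{\tau_m}}\phi(X_{\tau_m})\mathbb I_{\{\tau_m\le t\}}]\le e^{Ct}\phi(x)$, and since $\phi(X_{\tau_m})\ge\inf_{\partial D_m}\phi$ we obtain exactly the stated bound.

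\textbf{Estimation term.} On the bounded domain $D_m$ the truncated rules $\mathcal T^{\max,m,n,\ell}$ and $\mathcal T^{\max,m}$ are value functions of stochastic control problems with killing at $\partial D_m$ and compact control sets. The control sets are $A_{\max,m,n}^\ell(y)$ and $A_{\max}(y)$, both contained in $\mathbb B_{N_m}'$. By Theorem~\ref{thm:consistency_local}, $\sup_{y\in\overline D_m}d_H(A_{\max,m,n}^\ell(y),A_{\max}(y))\to0$ in the iterated limit.

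The key step, and the main obstacle, is a stability estimate for robust valuations under Hausdorff perturbation of the control sets. I would first try a coupling argument. Given an admissible $\beta$ for one family, take a measurable selection $\tilde\beta(s,\omega)$ of a nearest point in the other family; this exists by measurable selection because the set-valued maps are compact-valued and continuous. We then need a $\tilde{\mathbb Q}\in\mathcal P_x(L^{\tilde\gamma})$ close in law to $\mathbb Q$. Realize both laws via Remark~\ref{remark:MG} on a common Brownian motion; since matrix square roots are $\tfrac12$-Hölder, $\|\tilde\sigma-\sigma\|\lesssim d_H^{1/2}$. However, a path-dependent $\tilde\beta$ evaluated along the new path breaks the closeness, so weak existence plus Lipschitz control is problematic. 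For that reason the fallback is analytic rather than probabilistic.

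In the analytic route, both truncated values are viscosity solutions of the Dirichlet problems $\partial_tv=G(x,v,\nabla v,\nabla^2v)$ on $D_m$ with $v=0$ on $\partial D_m$, with $G=G_{\max,m,n}^\ell$ and $G=G_{\max}$ respectively. On bounded $U$-sets these Hamiltonians differ by at most $N_m\,d_H$. Stability of viscosity solutions (half-relaxed limits) together with the comparison principle for the limit problem, as in Assumption~\ref{assume:paraboliccomparison}, then yields pointwise convergence. The price is that one must prove the Dirichlet comparison and the viscosity characterization of the truncated control problems, including the boundary behavior at $\partial D_m$, where degeneracy may cause loss of boundary data. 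Handling this boundary issue is where I expect the real difficulty. Finally, \eqref{eq:thm:consistency_valuation_limit} follows by letting $m\to\infty$, since $\phi\to\infty$ at $\partial D$ forces $\inf_{\partial D_m}\phi\to\infty$.
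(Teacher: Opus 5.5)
\textbf{Genuine gap: the estimation term.} Your truncation term is fine. It is exactly the Lyapunov tail estimate of Proposition~\ref{prop:lyap_tail_bounds}, and it produces the stated constant.

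Your decomposition, however, needs $\lim_{\ell\downarrow0}\limsup_{n\to\infty}\bigl|\mathcal T_t^{\max,m,n,\ell}f(x)-\mathcal T_t^{\max,m}f(x)\bigr|=0$ for \emph{every} $t\ge0$ and $x\in D_m$. That is a stability statement for two problems killed at $\partial D_m$, under Hausdorff perturbation of the control sets. You leave it open, and it is false in general, so neither your coupling route nor the Dirichlet/half-relaxed-limit route can supply it.

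Here is a counterexample. Take $d=1$, $D=\mathbb R$, $D_m=(-m,m)$ and $G_{\max}(x,r,p,X)=p$. Then $A_{\max}(x)=\{(0,1,0)\}$, the only model is the translation $X_s=x+s$, and Assumption~\ref{assume:lyapunov} holds with $\phi(x)=1+x^2$, $C=1$.
\begin{itemize}
\item The truncated value is $\mathcal T_t^{\max,m}f(x)=f(x+t)\mathbb I_{\{x+t<m\}}$. For $f(m)\neq0$ it jumps across the line $\{x+t=m\}$ in the interior of $(0,\infty)\times D_m$. The incompatibility between $f$ and the zero boundary datum at the corner propagates along characteristics, so there is no well-posed Dirichlet problem whose continuous viscosity solution this is.
\item Any control set containing $(0,1-\epsilon,0)$ is Hausdorff-close to $A_{\max}$. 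The tolerance $\lambda_{m,\ell}$ in \eqref{eq:tolerant_support_estimator} lets $A_{\max,m,n}^{\ell}$ contain this point for $\epsilon<\lambda_{m,\ell}/2$ and large $n$.
\item For $f\ge0$, the estimated truncated value at $(t,x)=(m,0)$ is then at least $f((1-\epsilon)m)$, while $\mathcal T_m^{\max,m}f(0)=0$.
\end{itemize}
So your estimation term stays bounded below by about $f(m)>0$.

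\textbf{The missing idea: truncate on the estimated side.} The paper puts the truncation on the estimated side, not on the $G_{\max}$ side. It fixes $\ell_j\downarrow0$, $n_j\to\infty$ and extends the estimated support sets to all of $D$ by
$\widehat A_j(y):=\chi_j(y)A_j(\Pi_my)+(1-\chi_j(y))A_{\max}(y)$.
Here $\Pi_m$ is the projection onto $\overline D_m$, and $\chi_j$ is a cutoff equal to $1$ on $\overline D_m$ that vanishes outside a shrinking neighbourhood of it. The support function $\widehat G_j$ of $\widehat A_j$ equals $G^{\ell_j}_{\max,m,n_j}$ on $\overline D_m$. It converges to $G_{\max}$ locally uniformly on $D$, and satisfies the Lyapunov inequality with constant $C+\delta_j$, $\delta_j\downarrow0$. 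Two consequences follow.
\begin{itemize}
\item[(a)] $\mathcal T_t^{\max,m,n_j,\ell_j}f(x)$ is exactly the $\tau_m$-truncation of the full-domain robust valuation $\widehat{\mathcal T}^j_tf(x)$ under $\mathcal U(\widehat G_j)$. The Lyapunov tail bound therefore controls their difference by $e^{(C+\delta_j)t}\phi(x)\|f\|_\infty/\inf_{\partial D_m}\phi$.
\item[(b)] $\widehat{\mathcal T}^j_tf(x)\to\mathcal T_t^{\max}f(x)$ by Proposition~\ref{prop:stability_locally_uniform_generators}. This uses only the Cauchy-problem comparison on $D$ from Assumption~\ref{assume:paraboliccomparison}, so no boundary ever appears.
\end{itemize}
This gives a single truncation error, and hence exactly \eqref{eq:thm:consistency_valuation_bound}.

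If you instead repaired your estimation term by routing it through the full-domain valuations, you would pay the Lyapunov error twice. You would then get \eqref{eq:thm:consistency_valuation_bound} only with an extra factor $2$, which would still suffice for \eqref{eq:thm:consistency_valuation_limit}.
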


Recall from Theorem~\ref{thm:consistency_local} that the largest generating function and its support set can be fully recovered on \(D_m\) from sample data collected on \(D_m\) as \(n\to\infty\) and \(\ell\downarrow0\).
This recovery result, however, does not extend directly to the largest valuation rule.
The generating function and its support set are local objects: their values at a state \(x\) are determined by information in a neighborhood of \(x\), so data restricted to \(D_m\) suffice to recover them on \(D_m\).
By contrast, a valuation rule is a global object: the value it assigns at a state \(x\) depends on the evolution of the state process over the entire domain \(D\).
Consequently, data from a fixed subdomain \(D_m\) are insufficient to fully recover the largest valuation rule on \(D\).

Nevertheless, Theorem~\ref{thm:consistency_valuation} shows that the error between the largest valuation rule and its approximation based on sample data from \(D_m\) is controlled and vanishes as \(n\to\infty\), \(\ell\downarrow0\), and \(m\to\infty\).
The theorem also provides an explicit error bound in terms of the Lyapunov pair \((C,\phi)\).
Thus, sufficiently rich data on \(D\) allow the largest valuation rule to be approximated accurately.
In this sense, our method provides a nonparametric procedure for recovering continuous-time dynamic valuation from finite-sample data.
The resulting estimators recover not only the largest generating function and its support set, but also the associated largest valuation rule.

\section{Conclusion}\label{sec:conclusion}

This paper develops a unified framework connecting dynamic sublinear valuation rules with robust valuation under uncertainty and makes four main contributions.
First, we show that every dynamic sublinear valuation rule admits a representation as a robust valuation under uncertainty and provide an explicit procedure for identifying the underlying uncertainty structure from the valuation rule.
Second, we introduce the notion of a dynamic uncertainty structure (DUS) as the model-side counterpart of time consistency in valuation.
Third, we characterize the entire class of DUSs that represent a given valuation rule.
Finally, we develop nonparametric estimators for recovering uncertainty from limited valuation data and establish their convergence.
Taken together, these results show that valuation contains sufficient information to identify, characterize, and statistically recover the uncertainty structures underlying it.

Several directions remain for future research.
One natural extension is to move beyond the Markovian and sublinear settings by considering path-dependent models and more general convex valuation rules.
It would also be valuable to examine whether uncertainty structures can be recovered when valuation data are noisy, incomplete, or 
available only over restricted time intervals.
Another direction is to adapt and apply the present framework to empirical asset pricing and dynamic decision problems.
Observable valuations may reveal economically meaningful information about latent beliefs, market frictions, and ambiguity.
We hope that the framework developed in this paper provides a useful foundation for these theoretical extensions and empirical applications.

\appendix

\section{Analytic Identification}\label{sec:analyticidentification}

This appendix proves the analytic identification results used in Section~\ref{sec:probrepn}.

\begin{lemma}\label{lem:local_jet_lipschitz}
Set \(E:=\mathbb R\times\mathbb R^d\times\mathbb S(d)\) and \(J_x^2f:=(f(x),\nabla f(x),\nabla^2f(x))\).
Assume that \(G(x,\cdot)\) is sublinear for each \(x\in D\), that
\(\mathcal G[f]\in C(D)\) for every \(f\in C_b^\infty(D)\), and that \eqref{eq:thm:representationgenerator_eq1} is satisfied.
Then for every
\(x_0\in D\), there is a neighborhood \(U\) of \(x_0\) such that, for every
compact subset \(K\subset U\), there exists \(L_K>0\) satisfying \(|G(y,Z_1)-G(y,Z_2)|
\le L_K\|Z_1-Z_2\|\) for all \(y\in K\) and \(Z_1,Z_2\in E\).
Moreover, \(G\) is continuous on \(D\times E\).
\end{lemma}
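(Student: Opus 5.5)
The plan is to use the standard fact that a sublinear function on a finite-dimensional space is Lipschitz with constant equal to its supremum over the unit sphere. More precisely, for $Z_1,Z_2\in E$,
\[
G(y,Z_1)\le G(y,Z_2)+G(y,Z_1-Z_2)\quad\text{and}\quad G(y,Z_2)\le G(y,Z_1)+G(y,Z_2-Z_1).
\]
Hence
\[
|G(y,Z_1)-G(y,Z_2)|\le M(y)\,\|Z_1-Z_2\|,\qquad M(y):=\sup_{\|W\|\le1}|G(y,W)|.
\]
A finite basis controls $M(y)$. Fix the finite set $\{\pm e_j\}$, where $\{e_j\}$ is a basis of $E$ that is orthonormal for $\|\cdot\|$. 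Write $W=\sum_j c_je_j$ with $|c_j|\le\|W\|\le1$. Subadditivity and positive homogeneity give
\[
G(y,W)\le \sum_j |c_j|\max\bigl(G(y,e_j),G(y,-e_j)\bigr),
\]
and $G(y,W)\ge -G(y,-W)$ gives the matching lower bound. Therefore
\[
M(y)\le \dim E\cdot\max_j\max\bigl(|G(y,e_j)|,|G(y,-e_j)|\bigr),
\]
so it suffices to show that $y\mapsto G(y,\pm e_j)$ is locally bounded.

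This local boundedness comes from the representation \eqref{eq:thm:representationgenerator_eq1}. Fix $x_0\in D$ and choose a ball $U=B_\rho(x_0)$ with $\overline{B_{2\rho}(x_0)}\subset D$. For each basis element $\pm e_j=(r,p,X)$, consider the quadratic polynomial
\[
q_j^{z}(y):=r+p\cdot(y-z)+\tfrac12(y-z)^\top X(y-z),
\]
whose jet at $y=z$ is exactly $\pm e_j$. The dependence on the base point $z$ is removed as follows. Take the fixed function $\psi_j(y):=r+p\cdot(y-x_0)+\tfrac12 (y-x_0)^\top X(y-x_0)$. At a point $y$ its jet is
\[
\bigl(\psi_j(y),\,p+X(y-x_0),\,X\bigr),
\]
which differs from $\pm e_j$ by a vector $\Delta_j(y)$ with $\|\Delta_j(y)\|\le C|y-x_0|(1+|y-x_0|)$. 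Multiply $\psi_j$ by a cutoff $\chi\in C_c^\infty(D)$ with $\chi\equiv1$ on $B_{2\rho}(x_0)$. Then $f_j:=\chi\psi_j\in C_b^\infty(D)$, and $\mathcal G[f_j](y)=G(y,J^2_yf_j)$ is continuous, hence bounded on compact $K\subset U$. This bounds $G(y,J_y^2 f_j)$, but it does not yet bound $G(y,\pm e_j)$, because of the perturbation $\Delta_j(y)$.

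The main obstacle is exactly this perturbation. To remove it, I will instead use the base-point-free polynomials $\psi_j$ for all $j$ simultaneously and show that the jets $\{J^2_yf_j\}_j$, together with their negatives, still positively span $E$ with uniformly controlled coefficients on $K$. At $y=x_0$ these jets are exactly $\pm e_j$. Continuity of $y\mapsto J^2_yf_j$ then shows that for $\rho$ small the matrix expressing the $e_j$ in terms of the jets $J_y^2 f_j$ is invertible with uniformly bounded inverse on $\overline U$. The jets of $-f_j$ give the negative directions. Applying the subadditivity bound to these spanning vectors yields
\[
\sup_{y\in K}M(y)\le C'\max_j\sup_{y\in K}\bigl(|\mathcal G[f_j](y)|+|\mathcal G[-f_j](y)|\bigr)=:L_K<\infty,
\]
which is the claimed Lipschitz estimate. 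This is why the neighborhood $U$ must be chosen small.

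Continuity on $D\times E$ then follows by a triangle argument. For $(y_n,Z_n)\to(y,Z)$ with the $y_n$ in a compact $K\subset U$, write
\[
|G(y_n,Z_n)-G(y,Z)|\le L_K\|Z_n-Z\|+|G(y_n,Z)-G(y,Z)|.
\]
For the second term, pick $f\in C_b^\infty(D)$ with $J^2_yf=Z$, namely a cut-off quadratic centered at $y$. Then
\[
G(y_n,Z)=G(y_n,J^2_{y_n}f)+O\bigl(L_K\|Z-J^2_{y_n}f\|\bigr)=\mathcal G[f](y_n)+o(1)\to\mathcal G[f](y)=G(y,Z),
\]
using the continuity of $\mathcal G[f]$ and of the jet of $f$.
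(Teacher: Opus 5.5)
Your proposal is correct and follows essentially the same route as the paper. You take cut-off quadratics centered at $x_0$ whose jets $J_y^2 f_j$ form a basis of $E$ with uniformly bounded inverse near $x_0$, use sublinearity to bound $|G(y,\cdot)|$ on the unit ball by $\max_j\sup_K\bigl(|\mathcal G[f_j]|+|\mathcal G[-f_j]|\bigr)$, and prove joint continuity by the same triangle argument with a cut-off quadratic whose jet at the limit point is $Z$; the only cosmetic difference is that you pass through a fixed orthonormal basis $\{e_j\}$ (after the abandoned detour via $q_j^z$), whereas the paper expands $Z$ directly in the moving basis $\{J_y^2\psi_i\}$.
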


\begin{proof}
Fix \(x_0\in D\). Choose \(\psi_1,\ldots,\psi_N\in C_b^\infty(D)\),
\(N=\dim E\), such that \(\{J_y^2\psi_i\}_{i=1}^N\) is a basis of \(E\)
for all \(y\) in a neighborhood \(U\) of \(x_0\); this is obtained from
cutoff quadratic polynomials centered at \(x_0\), after shrinking \(U\).

Let \(K\subset U\) be compact. Write
\(Z=\sum_{i=1}^N a_i(y,Z)J_y^2\psi_i\). Since the coordinate maps are
continuous and linear in \(Z\), there exists \(C_K>0\) such that \(\sum_{i=1}^N |a_i(y,Z)|\le C_K\|Z\|\) for all $(y,Z)\in K\times E$.
Set
\[
M_K:=
\max_{1\le i\le N}\sup_{y\in K}
\max\{|\mathcal G[\psi_i](y)|,|\mathcal G[-\psi_i](y)|\}<\infty .
\]
By sublinearity and the representation formula,
\[
G(y,Z)
\le
\sum_{i=1}^N
\left(
a_i(y,Z)^+\mathcal G[\psi_i](y)
+a_i(y,Z)^-\mathcal G[-\psi_i](y)
\right)
\le M_KC_K\|Z\|.
\]
Applying the same estimate to \(-Z\) and using
\(0=G(y,0)\le G(y,Z)+G(y,-Z)\), we get \(|G(y,Z)|\le M_KC_K\|Z\|\) for all \((y,Z)\in K\times E\).
Thus, with \(L_K:=M_KC_K\), sublinearity gives \(|G(y,Z_1)-G(y,Z_2)|\le L_K\|Z_1-Z_2\|\) for all $y\in K$ and \(Z_1,Z_2\in E\).

Finally, let \((x_n,Z_n)\to(x,Z)\). Choose \(\phi\in C_b^\infty(D)\)
with \(J_x^2\phi=Z\), and let \(K\subset U\) contain \(x\) and all large
\(x_n\), where \(U\) is the neighborhood obtained above for \(x\). Then
\[
|G(x_n,Z_n)-G(x,Z)|
\le
L_K\|Z_n-J_{x_n}^2\phi\|
+
|\mathcal G[\phi](x_n)-\mathcal G[\phi](x)|.
\]
Since \(J_{x_n}^2\phi\to J_x^2\phi=Z\) and
\(\mathcal G[\phi]\in C(D)\), the right-hand side tends to zero. Hence
\(G\) is continuous.
\end{proof}

\begin{proof}[Proof of Theorem~\ref{thm:representationgenerator}]
By \citet[Lemma~2.1]{kuhn2021infinitesimal}, the pointwise infinitesimal
generator of a sublinear Markov semigroup satisfies the positive maximum
principle; by locality, the same holds for local maxima. The standard local
comparison argument in \citet[Theorem~2]{alvarez1993axioms} then yields the
finite-jet representation. Since their use of translation and grey-level-shift
invariance only removes the variables \(x\) and \(r=f(x)\), the same proof,
with these variables retained, gives a well-defined map \(G:D\times\mathbb R\times\mathbb R^d\times\mathbb S(d)\to\mathbb R\) satisfying \eqref{eq:thm:representationgenerator_eq1}. Sublinearity follows
from that of \(\mathcal G\), and continuity from
Lemma~\ref{lem:local_jet_lipschitz}.

It remains to verify \ref{item:G2} and \ref{item:G3}. The positive maximum
principle gives
\begin{align}\label{eq:thm:representationgenerator_1}
G(x,0,0,Y)\le0 \quad (Y\le0),
\qquad
G(x,a,0,0)\le0 \quad (a\ge0),
\end{align}
where the first inequality follows by testing a smooth function with
\(J_x^2f=(0,0,Y)\) and local maximum \(0\) at \(x\), and the second by testing
the constant function \(a\).
By sublinearity of \(G\), the two inequalities in
\eqref{eq:thm:representationgenerator_1}, applied respectively to \(Y-X\le0\) and \(a=r-s\ge0\), imply \ref{item:G2} and \ref{item:G3}.
\end{proof}

\begin{proof}[Proof of Proposition~\ref{prop:PDErepn}]
By \ref{S4}, \(v(t,\cdot)=\mathcal T_t f\) is bounded. Moreover, by the
strong continuity assumption, \((t,x)\mapsto v(t,x)\) is continuous and
\(v(0,x)=f(x)\).
By the semigroup-to-viscosity theorem of Hollender
\cite[Proposition 4.10]{hollender2016levy}, or equivalently
\cite[Proposition 5.2]{kuhn2021infinitesimal}, \(v(t,x)=\mathcal T_t f(x)\)
is a bounded viscosity solution of \eqref{eq:mainHJB}.
If \(G\) satisfies the parabolic comparison principle and
\(\{\widetilde{\mathcal T}_t\}_{t\ge0}\) is another dynamic sublinear valuation
rule satisfying Assumption~\ref{assume:Cinftyindomain} with the same generating
function \(G\), then both \(\mathcal T_tf(x)\) and \(\widetilde{\mathcal T}_tf(x)\)
are bounded viscosity solutions of the same Cauchy problem. Comparison gives
\(\mathcal T_t f=\widetilde{\mathcal T}_t f\) for every \(t\ge0\) and
\(f\in C_b(D)\). Thus the valuation rule is uniquely determined by \(G\).
\end{proof}

\section{Probabilistic Realization and Dynamic Consistency}
\label{sec:probabilistic_realization_dynamic_consistency}

This appendix proves the probabilistic realization and dynamic-consistency
results used in Sections~\ref{sec:probrepn} and~\ref{sec:canonicality}.
We begin by recording the virtual-model facts used throughout the appendix.
All constructions of the pair space, the extended canonical path space, the
embedding \(\Phi\), and the conditioning and concatenation operations are given
in Appendix~\ref{sec:extended_path_and_pair_space}. The corresponding
martingale-problem characterizations and compactness results are proved in
Online Appendix~\ref{sec:GMP_virtual_model_space}. We then prove
Proposition~\ref{prop:PG_is_representing_VUS},
Proposition~\ref{prop:pairDUS_to_DSVR}, and
Theorem~\ref{thm:abstractconstructionSG}, in this order.

\subsection{Virtual-Model Characterization and Compactness}
\label{subsec:virtual_model_backbone}

We use the virtual-model formulation to encode discounting and state dynamics
in a single probability measure. Given \((A,\mathbb Q)\in\mathfrak U\), the
injective map \(\Phi:\mathfrak U\to\mathfrak M\) introduces an independent
exponential clock and kills the state path when \(A\) crosses that clock. The
resulting killed-path law is a virtual model. We write
\(\tau_{\mathrm{kill}}\) for the jump-to-cemetery killing time and
\(\tau_\infty\) for the terminal time on the enlarged path space. The precise
construction of \(\Phi\), the topology, and the conditioning and concatenation
operations are given in Appendix~\ref{sec:extended_path_and_pair_space}.

For \(s\ge0\), write \(\bar s_n^t:=(s\wedge\tau_n)\vee t\).
A virtual law \(\mathbb P\in\mathfrak M\) starting from \((t,x)\) solves the
generalized \(G\)-supermartingale problem if, for every
\(f\in C_b^\infty(D)\) and \(n\ge1\),
\begin{equation}\label{eq:def_Mfn}
M^{f,n}_s
:=
f(X_{\bar s_n^t})\mathbb I_{\{\tau_\infty>\bar s_n^t\}}
-
\int_t^{\bar s_n^t}
G\bigl(X_r,f(X_r),\nabla f(X_r),\nabla^2f(X_r)\bigr)
\mathbb I_{\{\tau_\infty>r\}}\,dr
\end{equation}
is a \(\mathbb P\)-supermartingale.

For \(u\in C_b^\infty([0,\infty)\times D)\), the notions of effective
coefficient field and effective model are extended as follows. A coefficient
field \(\beta\) is said to be \(u\)-effective if it satisfies
\eqref{eq:effectivecoefficientfield} with
\(\varphi(\omega(s))\), \(\nabla\varphi(\omega(s))\), and
\(\nabla^2\varphi(\omega(s))\) replaced by
\(u(s,\omega(s))\), \(\nabla u(s,\omega(s))\), and
\(\nabla^2u(s,\omega(s))\), respectively. We denote the collection of all
such coefficient fields by \(B_{\mathrm{eff}}(G;u)\).
For \(\beta=(-k,\gamma)\), set
\[
A^{k,t}_s(\omega)
:=
A^k_s(\omega)-A^k_{s\wedge t}(\omega)
=
\int_t^{s\vee t} k_r(\omega)\,dr,
\qquad s\ge0.
\]
For \((t,x)\in[0,\infty)\times\widehat D\), define
\[
\mathcal U_{t,x}(G;u)
:=
\begin{cases}
\bigl\{(A^{k,t},\mathbb Q)\in\mathfrak U:
(-k,\gamma)\in B_{\mathrm{eff}}(G;u),\
\mathbb Q\in\mathcal P_{t,x}(L^\gamma)
\bigr\},
& x\in D,\\[0.4em]
\{(0,\delta_\triangle)\},
& x=\triangle.
\end{cases}
\]
The generalized \((G,u)\)-martingale problem is the binding version of the
generalized \(G\)-supermartingale problem: in addition, for every \(n\ge1\),
\[
M^{u,n}_s
:=
u(\bar s_n^t,X_{\bar s_n^t})
\mathbb I_{\{\tau_\infty>\bar s_n^t\}}
-
\int_t^{\bar s_n^t}
\Bigl[
\partial_tu(r,X_r)
+
G\bigl(X_r,u(r,X_r),\nabla_xu(r,X_r),\nabla_x^2u(r,X_r)\bigr)
\Bigr]
\mathbb I_{\{\tau_\infty>r\}}\,dr
\]
is required to be a \(\mathbb P\)-martingale.

The following characterization is proved in
Theorem~\ref{thm:coefficient_free_virtual_models}.

\begin{lemma}\label{lem:virtual_characterization}
Let \(G\) satisfy \ref{item:G1}-\ref{item:G3}. Then, for every
\((t,x)\in[0,\infty)\times D\),
\[
\Phi\bigl(\mathcal U_{t,x}(G)\bigr)
=
\{\text{solutions to the generalized }G\text{-supermartingale problem
starting from }(t,x)\}.
\]
Moreover, for every \(u\in C_b^\infty([0,\infty)\times D)\),
\[
\Phi\bigl(\mathcal U_{t,x}(G;u)\bigr)
=
\{\text{solutions to the generalized }(G,u)\text{-martingale problem
starting from }(t,x)\}.
\]
\end{lemma}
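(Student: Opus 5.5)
We prove both identities by a two-sided inclusion, working through the killing map \(\Phi\). The first step is an Itô-type computation for a virtual model \(\mathbb P=\Phi(A,\mathbb Q)\). Take \((A^{k,t},\mathbb Q)\in\mathcal U_{t,x}(G)\) with \(\beta=(-k,\gamma)\in\mathcal B_{\mathrm{ad}}(G)\) and \(\mathbb Q\in\mathcal P_{t,x}(L^\gamma)\). Integrating out the independent exponential clock, and using that \(\Phi\) kills the path when \(A\) crosses that clock, gives
\[
\mathbb E^{\mathbb P}\bigl[h(X_s)\mathbb I_{\{\tau_\infty>s\}}\mid\cdot\bigr]=\mathbb E^{\mathbb Q}\bigl[e^{-(A_s-A_r)}h(X_s)\mid\cdot\bigr]
\]
on the survival event. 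We then apply the product rule to \(e^{-A_s}f(X_s)\) under \(\mathbb Q\), stopped at \(\bar s_n^t\), together with the martingale property \eqref{eqn:mart_p}. This shows that
\[
f(X_{\bar s_n^t})\mathbb I_{\{\tau_\infty>\bar s_n^t\}}-\int_t^{\bar s_n^t}L^{\beta}\bigl(r,X,J^2_{X_r}f\bigr)\mathbb I_{\{\tau_\infty>r\}}\,dr
\]
is a \(\mathbb P\)-martingale. Here the \(C r\) term \(=-k_r f(X_r)\) of \(L^\beta\) appears exactly as the compensator of the killing. Since admissibility \eqref{eq:coefficientadmissible} gives \(L^\beta\le G\) pointwise, \(M^{f,n}\) equals this martingale minus a nondecreasing process, so it is a supermartingale. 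This proves ``\(\subseteq\)'' in the first identity.

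In the \(u\)-effective case, \(L^\beta(r,\omega,J^2u)=G(\omega(r),J^2u)\) along paths. We run the same computation with a time-dependent \(u\), which adds the \(\partial_t u\) term. This gives the martingale property of \(M^{u,n}\), and hence ``\(\subseteq\)'' in the second identity.

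For the converse, let \(\mathbb P\) solve the generalized \(G\)-supermartingale problem. The plan is to extract coefficients from the supermartingale structure.

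\emph{Step (a): the unkilled law and \(A\).} Define \(\mathbb Q\) as the law of the unkilled path, and recover \(A\) as the compensator (hazard) of \(\tau_{\mathrm{kill}}\), i.e. the dual predictable projection of \(\mathbb I_{\{\tau_{\mathrm{kill}}\le s\}}\). Testing with \(f\equiv 1\) locally, and using \ref{item:G3} with the bound \(|G(y,Z)|\le L_K\|Z\|\) from Lemma~\ref{lem:local_jet_lipschitz}, shows that \(A\) is absolutely continuous with density \(k\le -\inf_{V\in A(y)}C\). This yields \(A=A^{k,t}\) with \(k\ge0\) bounded on compacts.

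\emph{Step (b): semimartingale structure of \(X\).} Testing with cutoffs of \(\pm x_i\) and \(\pm x_ix_j\), and using that \(G\) is Lipschitz in the jet on compacts (Lemma~\ref{lem:local_jet_lipschitz}), shows that \(X\) is a continuous semimartingale before \(\tau_n\). Its drift and quadratic-variation densities are dominated by \(d r\); call them \(B_r\) and \(\Sigma_r\). Then, for every \(f\), the Doob–Meyer decomposition of \(M^{f,n}\) has drift \(L^{\beta}(r,X,J^2f)-G(X_r,J^2f)\le 0\), \(d r\otimes d\mathbb P\)-a.e.

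\emph{Step (c): admissibility.} We pass from countably many \(f\) to all jets. Take a countable dense set of test functions whose jets are dense at each point, and use continuity of both sides in the jet. This gives \(L^{\beta(r,\omega)}(Z)\le G(\omega(r),Z)\) for all \(Z\), a.e. By the support-set duality this means \(\beta(r,\omega)\in A(\omega(r))\). We then modify \(\beta\) on a null set to a progressively measurable admissible field, using a measurable selection of the compact-valued, upper hemicontinuous map \(A(\cdot)\) there. Setting \(\beta=0\) after \(\tau_{\mathrm{exp}}\) completes the construction. Finally, Remark~\ref{remark:MG} identifies \(\mathbb Q\in\mathcal P_{t,x}(L^\gamma)\), and one checks \(\Phi(A^{k,t},\mathbb Q)=\mathbb P\).

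For the \((G,u)\) statement, the martingale requirement forces the nonpositive drift \(L^\beta(J^2u)-G(J^2u)\) to vanish \(d r\otimes d\mathbb P\)-a.e. Hence \(\beta\in\nabla G(\cdot,J^2u)\) a.e., and the same modification on a null set yields \(\beta\in B_{\mathrm{eff}}(G;u)\).

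We expect the main obstacle to be the converse direction, especially Steps (a)–(b). There one must show, from the supermartingale property alone, that the killing time admits an absolutely continuous compensator and that \(X\) has absolutely continuous characteristics. The approach would be to obtain two-sided bounds by applying the supermartingale property to \(f\) and to \(-f\) and using sublinearity, since \(-G(-Z)\le G(Z)\). This sandwiches each finite-variation part between \(\pm L_K\|J^2f\|\,dr\), which gives absolute continuity directly. A secondary technical point is the handling of explosion and killing under the localization \(\bar s_n^t\), which we would treat by working before \(\tau_n\) and letting \(n\to\infty\).
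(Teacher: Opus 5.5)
Your forward inclusions, and Steps (b)–(c) of the converse, are essentially what the paper does. That covers the absolutely continuous characteristics from the $\pm f$ sandwich, $L^\beta\le G$ on a countable jet-dense family, measurable selection, and the binding equality in the $u$-effective case. The paper does this in the forward half of Theorem~\ref{thm:pair_virtual_identification} and in Theorem~\ref{thm:coefficient_free_virtual_models}.

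\textbf{The gap is Step (a).} Under $\mathbb P$ the path sits at $\triangle$ after $\tau_{\mathrm{kill}}$, so there is no ``unkilled path'' whose law you can take. The only path-level object available is the de-killed path $\mathfrak r(\tilde X)$, which is frozen at the killing location. Its law neither solves the $L^\gamma$-martingale problem nor satisfies $\Phi(A^{k,t},\cdot)=\mathbb P$. Conditioning on survival also fails, since it tilts the dynamics.

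The state law has to be \emph{constructed} by inverting the killing identity of Theorem~\ref{thm:Phi_properties}\ref{thm:Phi_properties_1}:
\begin{itemize}
\item Once $k$ is identified, $Z_s:=\exp\bigl(\int_t^{s\wedge\tau_{\mathrm{kill}}}k_r\,dr\bigr)\mathbb I_{\{\tau_\infty>s\}}$, stopped at $\tau_m$, is a true $\mathbb P$-martingale.
\item Define laws $\mathbb Q^{m,T}$ on the $\sigma$-fields of paths stopped at $\tau_m\wedge T$, with density $Z_{\tau_m\wedge T}$. Check their consistency by optional sampling, and glue them with the stopped pre-path extension theorem.
\item Bayes' formula, not Remark~\ref{remark:MG}, transfers the $L^\beta$-martingale property under $\mathbb P$ to the $L^\gamma$-martingale property under $\mathbb Q$.
\item A separate small-time exit argument shows that the glued pre-path law charges only paths converging to $\triangle$ at their lifetime, i.e.\ that it lives on $\hat\Omega$.
\end{itemize}
This is Lemma~\ref{lem:remove_killing_gmp}. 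It is the real content of the converse, and ``one checks $\Phi(A^{k,t},\mathbb Q)=\mathbb P$'' only makes sense after it.

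\textbf{Second, you never verify $(A^{k,t},\mathbb Q)\in\mathfrak U$.} In particular you need $A_{\tau_{\mathrm{exp}}}=\infty$ on $\{\tau_{\mathrm{exp}}<\infty\}$, $\mathbb Q$-a.s. By the killing identity this amounts to $\mathbb P$ having no continuous explosion, and it does not follow from \ref{item:G1}--\ref{item:G3}.

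For instance, take $d=1$, $D=\mathbb R$ and $G(x,r,p,X)=\frac12X+x^3p$. The law of the explosive diffusion $dX=X^3\,dt+dW$ solves the generalized $G$-supermartingale problem. On the other hand, every admissible field has $k\equiv0$, so no pair built from that diffusion lies in $\mathfrak U$, and $\mathcal U_x(G)=\varnothing$.

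The paper closes this point with Assumption~\ref{assume:lyapunov}. It shows that $e^{-Cs-A^k_s}\phi(X_s)$, stopped at $\tau_n$, is a $\mathbb Q$-supermartingale, which forces $A^k_{\tau_{\mathrm{exp}}}=\infty$ on explosion. Your converse needs that hypothesis too. The paper's own proof also uses it implicitly, through Lemma~\ref{lem:remove_killing_gmp} and Theorem~\ref{thm:pair_virtual_identification}, even though the lemma is stated under \ref{item:G1}--\ref{item:G3} alone.
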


The compactness and upper hemicontinuity needed for
\ref{item:weakcptnessDUS} are collected in the next lemma; see
Proposition~\ref{prop:propertiesP_t,x_revised}.

\begin{lemma}\label{lem:compact_closed_virtual}
Let \(G\) satisfy \ref{item:G1}-\ref{item:G3} and
Assumption~\ref{assume:lyapunov}. Then, for every compact set
\(K\subset[0,\infty)\times D\),
\[
\bigcup_{(t,x)\in K}\Phi\bigl(\mathcal U_{t,x}(G)\bigr)
\]
is weakly compact in \(\mathfrak M\). Moreover, \((t,x)\mapsto\Phi\bigl(\mathcal U_{t,x}(G)\bigr)\) is upper hemicontinuous.
\end{lemma}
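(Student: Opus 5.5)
The plan is to work entirely on the virtual-model side. Lemma~\ref{lem:virtual_characterization} identifies $\Phi(\mathcal U_{t,x}(G))$ with the set of solutions to the generalized $G$-supermartingale problem started at $(t,x)$. It therefore suffices to prove that the union
\[
\mathcal M_K:=\bigcup_{(t,x)\in K}\{\mathbb P:\ \mathbb P \text{ solves the generalized } G\text{-supermartingale problem from }(t,x)\}
\]
is relatively compact and closed in $\mathfrak M$. Upper hemicontinuity will then follow from closedness of the graph combined with this compactness.

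\textbf{Step 1: tightness.} I would control the tail through the Lyapunov function $\phi$ of Assumption~\ref{assume:lyapunov}.
\begin{itemize}
\item The function $e^{-C(s-t)}\phi$ is a legitimate test function once $\phi$ is truncated to $\phi\wedge N$ and smoothed on each $D_n$. The supermartingale property, together with \ref{item:G2}--\ref{item:G3} and sublinearity, then gives
\[
\mathbb E^{\mathbb P}\bigl[\phi(X_{s\wedge\tau_n})\mathbb I_{\{\tau_\infty>s\wedge\tau_n\}}\bigr]\le e^{C(s-t)}\phi(x).
\]
\item Since $\phi\to\infty$ at $\partial D$, this yields
\[
\mathbb P(\tau_n\le s<\tau_\infty)\le \frac{e^{Cs}\sup_K\phi}{\inf_{\partial D_n}\phi},
\]
uniformly over $\mathcal M_K$.
\item Inside each $D_n$, the coefficients in the support sets $A(y)$ are uniformly bounded for $y\in\overline D_n$, because $G$ is continuous and hence locally Lipschitz in $U$ (Lemma~\ref{lem:local_jet_lipschitz}). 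Applying the supermartingale inequality to $\pm$ quadratic and quartic test functions gives moment bounds of the form
\[
\mathbb E\bigl|X_{(s'\wedge\tau_n)\vee t}-X_{(s\wedge\tau_n)\vee t}\bigr|^4\le c_n|s-s'|^2.
\]
Kolmogorov's criterion then gives equicontinuity of the stopped paths.
\item The killing time is driven by the bounded rate $k\le c_n$ on $D_n$. Hence the laws of $\tau_{\mathrm{kill}}$ stopped at $\tau_n$ are also tight.
\end{itemize}
Combining these three estimates with the exhaustion $D_n\uparrow D$ gives tightness in the topology of the extended path space of Appendix~\ref{sec:extended_path_and_pair_space}. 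That topology only needs local control up to explosion or killing.

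\textbf{Step 2: closedness.} Take $\mathbb P_j\in\mathcal M_K$ with starting points $(t_j,x_j)\to(t,x)$ and $\mathbb P_j\to\mathbb P$ weakly. I must show that $M^{f,n}$ is a $\mathbb P$-supermartingale. This means passing to the limit in
\[
\mathbb E^{\mathbb P_j}\bigl[(M^{f,n}_{s'}-M^{f,n}_s)\,h\bigr]\le0
\]
for bounded continuous $\mathcal F_s$-measurable $h\ge0$. The integrand involves $G(X_r,J^2f(X_r))$, which is bounded and continuous on $\overline D_n$. However, the maps $\omega\mapsto\tau_n(\omega)$ and $\mathbb I_{\{\tau_\infty>\cdot\}}$ are not continuous.

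\textbf{Main obstacle.} The discontinuity of the exit and killing indicators is where I expect the real difficulty. My first attempt is the standard one: replace $\tau_n$ by exit times of $D_{n'}$ for a.e.\ radius, choosing levels $\rho$ with $\mathbb P(\tau^\rho\ne\tau^{\rho+})=0$. All but countably many levels qualify, so the stopped functionals are $\mathbb P$-a.s.\ continuous and weak convergence applies. The supermartingale property for the original $\tau_n$ is then recovered by optional stopping, because $\tau_n$ and the chosen level times are ordered stopping times.

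The killing indicator is handled in the same way. It is continuous at paths where $\tau_\infty$ is not an atom, and such paths have full measure apart from countably many $s$. Right-continuity in $s$ then fills in the remaining times.

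Finally, $X_{t_j}=x_j\to x$ gives the correct initial condition in the limit. Therefore $\mathbb P\in\Phi(\mathcal U_{t,x}(G))$, which proves closedness and the graph closedness needed for upper hemicontinuity.
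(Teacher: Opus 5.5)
Your argument is correct and is essentially the paper's proof, which goes through Proposition~\ref{prop:propertiesP_t,x_revised}: tightness comes from the Lyapunov tail bound and locally bounded characteristics, graph-closedness comes from passing the coefficient-free $G$-supermartingale inequalities of Theorem~\ref{thm:coefficient_free_virtual_models} to the weak limit, and compactness of the union then follows from compactness of $K$. The differences are only technical: you derive increment bounds directly from test functions and use Kolmogorov's criterion together with the bounded killing rate, where the paper passes to coefficient fields and uses Aldous' criterion (Proposition~\ref{prop:compactness_criterion}), and you spell out the level-set approximation of $\tau_n$ with optional stopping, which the paper leaves implicit (note also that boundedness of $A(y)$ on compacts follows directly from continuity of $G$, since $\sup_{V\in A(y)}\|V\|\le\sup_{\|U\|\le1}G(y,U)$, rather than from Lemma~\ref{lem:local_jet_lipschitz}, whose hypotheses presuppose an underlying generator).
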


The following nonemptiness result is proved in
Corollary~\ref{cor:canonical_effective_nonempty}.

\begin{lemma}\label{lem:effective_virtual_nonempty}
For every \((t,x)\in[0,\infty)\times D\) and every
\(u\in C_b^\infty([0,\infty)\times D)\), \(\Phi\bigl(\mathcal U_{t,x}(G;u)\bigr)\neq\varnothing\).
\end{lemma}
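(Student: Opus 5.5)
The plan is to construct an element of \(\Phi(\mathcal U_{t,x}(G;u))\) directly. By Lemma~\ref{lem:virtual_characterization}, it is enough to produce a virtual law \(\mathbb P\) starting from \((t,x)\) that solves the generalized \((G,u)\)-martingale problem. I would obtain \(\mathbb P\) as a limit of solutions of martingale problems with continuous, Markovian coefficients that approximately select the subgradient set.

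\textbf{Step 1: set-valued selection problem.} For \((s,y)\in[0,\infty)\times D\), set
\[
\Gamma(s,y):=\nabla G\bigl(y,u(s,y),\nabla u(s,y),\nabla^2u(s,y)\bigr).
\]
Each \(\Gamma(s,y)\) is a nonempty, compact, convex subset of \(A(y)\); nonemptiness is the subgradient of a finite sublinear function at a point. The continuity of \(G\) in \ref{item:G1} and the local Lipschitz bounds of Lemma~\ref{lem:local_jet_lipschitz} make \(A(\cdot)\) locally uniformly bounded and upper hemicontinuous, and \(\Gamma\) has closed graph. A Borel selection exists by Kuratowski--Ryll-Nardzewski, but it is not enough, since martingale problems with merely measurable degenerate coefficients need not be solvable. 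I would therefore use Cellina's approximate-selection theorem. It gives continuous maps \(\beta_j=(-k_j,B_j,\Sigma_j)\) with values in \((-\infty,0]\times\mathbb R^d\times\mathbb S^+(d)\) whose graphs lie in the \(1/j\)-neighbourhood of the graph of \(\Gamma\) on \([0,j]\times\overline D_j\). Taking a continuous square root \(\sigma_j=\Sigma_j^{1/2}\), the generalized Stroock--Varadhan/Pinsky theory with explosion allowed yields a solution \(\mathbb Q^j\in\mathcal P_{t,x}(L^{\gamma_j})\). I then set \(\mathbb P^j:=\Phi(A^{k_j,t},\mathbb Q^j)\).

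\textbf{Step 2: tightness and limit.} For fixed \(n\), the stopped laws of \(\mathbb P^j\) at \(\tau_n\) are tight. The reason is that on \([0,T]\times\overline D_n\) the coefficients \(\beta_j\) are uniformly bounded, because they lie close to \(A(\cdot)\), which is bounded there; Kolmogorov-type moment estimates then apply, and the killing intensity \(k_j\) is also uniformly bounded. A diagonal argument over \(n\) and \(T\), together with the consistency of the stopped laws, yields a subsequence converging to a virtual law \(\mathbb P\) on the enlarged path space. This gives existence without the Lyapunov condition, since explosion is allowed.

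\textbf{Step 3: identification of the limit, the main obstacle.} Fix \(f\in C_b^\infty(D)\) and \(n\). Under \(\mathbb P^j\), the process \(M^{f,n}\) with \(G\) replaced by \(L^{\beta_j}\) is a martingale. I would write
\[
L^{\beta_j}(s,X_s,J f)\le G(X_s,J f)+\epsilon_j,
\]
where \(\epsilon_j\to0\) uniformly on compacts. This holds because \(\beta_j(s,y)\) is within \(1/j\) of some \(\beta^*\in A(y')\) with \(|y-y'|<1/j\), and because \(G\) is continuous with \(L^V\le G\) for \(V\in A(y')\). Hence \(M^{f,n}\) is an approximate supermartingale under \(\mathbb P^j\). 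For \(u\), the same estimate holds with equality up to \(\epsilon_j\), since \(\beta^*\in\Gamma\).

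To pass to the limit, I would test against bounded continuous \(\hat{\mathcal F}_r\)-measurable functionals. The integrands are bounded up to \(\tau_n\), which gives uniform integrability. The remaining issue is that \(\tau_n\) and \(\tau_\infty\) are only semicontinuous functionals of the path. I would handle this in the usual way: choose \(D_n\)-levels outside the countable set of radii charged by the limit, or replace \(\tau_n\) by a slightly perturbed exit time. The limit \(\mathbb P\) then solves the generalized \(G\)-supermartingale problem, and \(M^{u,n}\) is a \(\mathbb P\)-martingale. Thus \(\mathbb P\) solves the \((G,u)\)-martingale problem, and Lemma~\ref{lem:virtual_characterization} gives \(\mathbb P\in\Phi(\mathcal U_{t,x}(G;u))\).

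If the perturbed-exit-time argument proves too delicate, my first fallback is to prove the supermartingale and martingale properties for stopping at \(\tau_n'\), the exit times of a slightly larger domain \(D'_n\) with \(D_n\subset D'_n\subset D_{n+1}\). Since these are dense enough to exhaust \(D\), this recovers the statement for every \(n\).
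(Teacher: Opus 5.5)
Your argument is correct in outline, but it takes a different route from the paper's. The paper gets this lemma from Corollary~\ref{cor:canonical_effective_nonempty} in two stages.
\begin{itemize}
\item Theorem~\ref{thm:nonempty_nonMarkov} runs essentially your Cellina-approximation, tightness and limit argument, but only for a time-independent test function $\varphi$. It works with the correspondence $y\mapsto\nabla G(y,\varphi(y),\nabla\varphi(y),\nabla^2\varphi(y))$.
\item Theorem~\ref{thm:static_to_time_dependent_effective} then supplies the time-dependent case. It freezes $u$ on a dyadic time grid and pastes static $u(t_j^k,\cdot)$-effective continuation kernels inside the DUS $\mathcal U(G)$ (Proposition~\ref{prop:PG_is_representing_VUS}). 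It then controls the Riemann-sum error coming from $\partial_t u$ and passes to a weak limit using compactness of $\Phi(\mathcal U_{t,x}(G))$.
\end{itemize}
You instead apply Cellina's theorem directly to the space-time correspondence $\Gamma(s,y)$. This is legitimate, since $\Gamma$ is upper hemicontinuous with nonempty compact convex values on the metric space $[0,\infty)\times D$. The time derivative then enters only through It\^o's formula for the approximating time-inhomogeneous diffusions. The binding error $|L^{\beta_j}-G|$ along the jet of $u$ is controlled exactly as in the static case.

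Your route is shorter and more elementary. It needs neither the DUS property of $\mathcal U(G)$, nor measurable selection of continuation kernels, nor the time-discretization error analysis. What the paper's detour buys is generality. Theorem~\ref{thm:static_to_time_dependent_effective} upgrades static to time-dependent effectiveness inside an \emph{arbitrary} DUS. That is what the implication (ii)$\Rightarrow$(i) of Theorem~\ref{thm:effective_characterization} requires, because there effective models cannot be built from scratch inside a given $\mathcal U$. Your construction does not replace that step, so the paper simply reuses it here.

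A few small technical points:
\begin{itemize}
\item If $\mathbb Q^j$ explodes with finite accumulated discount, then $(A^{k_j,t},\mathbb Q^j)\notin\mathfrak U$, so $\Phi$ is not defined on it. Define $\mathbb P^j$ directly as the killed law, as the paper does.
\item The limit law does solve the $(G,u)$-martingale problem without any Lyapunov condition. However, identifying it with an element of $\Phi(\mathcal U_{t,x}(G;u))$ via Lemma~\ref{lem:virtual_characterization} goes through Lemma~\ref{lem:remove_killing_gmp}. There the Lyapunov function is what forces $A^k_{\tau_{\mathrm{exp}}}=\infty$. So your ``no Lyapunov'' remark applies only to the virtual law.
\item In your fallback, the passage from $\tau_n'$ to $\tau_n$ works by optional stopping, since $\tau_n\le\tau_n'$, not by density.
\item On $\tilde\Omega$ the paths can have a killing jump, so tightness should come from a $J_1$ criterion such as Aldous' (as in Proposition~\ref{prop:tightness_bounded_char}), not from Kolmogorov moment bounds.
\end{itemize}
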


\subsection{Proofs for Representation Results}\label{sec:proofmarkov}

This section proves the stochastic representation results from
Sections~\ref{sec:probrepn} and~\ref{sec:canonicality}.

\begin{proof}[Proof of Proposition~\ref{prop:PG_is_representing_VUS}]
For each $(t,x)\in[0,\infty)\times\hat D$, define \(\mathcal U_{t,x}(G):=\mathcal U_x(G)\circ\theta_t^{-1}\).
Then the time-homogeneity condition is automatically satisfied.
We verify \ref{item:initialcondition}-\ref{item:pair_pasting} in Definition~\ref{def:DUS_pairs}.
Note that the condition~\ref{item:initialcondition} is immediate from the definition of the uncertainty structure $\mathcal U(G)$ and the condition~\ref{item:weakcptnessDUS} is guaranteed by Lemma~\ref{lem:compact_closed_virtual}.

Finally, the conditions \ref{item:pair_conditioning} and \ref{item:pair_pasting} follows from the corresponding stability properties of $\{\Phi\bigl(\mathcal U_{t,x}(G)\bigr)\}$: \(G\)-supermartingale problem is stable under time shifts, regular conditional distributions at stopping times, and concatenation along stopping times, exactly as in the classical Stroock--Varadhan argument. We therefore refer to \cite[Lemma~6.5.1]{stroock1997multidimensional} for time-homogeneity, \cite[Theorems~6.2.1 and 6.1.2]{stroock1997multidimensional} for conditioning, and \cite[Theorem~12.2.3]{stroock1997multidimensional} for pasting.

Therefore $\mathcal U(G)$ satisfies \ref{item:initialcondition}--\ref{item:pair_pasting},
and hence is a (time-homogeneous) dynamic uncertainty structure.
\end{proof}

\begin{proof}[Proof of Proposition~\ref{prop:pairDUS_to_DSVR}]
For \((t,x)\in[0,\infty)\times\widehat D\), set
\(\mathcal P_{t,x}:=\Phi(\mathcal U_{t,x})\). Then,
\[
\mathcal T_{t,T}^{\mathcal U} f(x)
=
\sup_{\mathbb P\in\mathcal P_{t,x}}
\mathbb E^\mathbb P
\!\left[
f(X_T)\mathbb I_{\{\tau_\infty>T\}}
\right].
\]
By assumption, \(\mathcal T_{t,T}^{\mathcal U}f\in C_b(D)\) for every
\(f\in C_b(D)\), so \(\mathcal T_{t,T}^{\mathcal U}:C_b(D)\to C_b(D)\) is
well defined. Moreover, the initial condition in the DUS property gives
\(\mathcal T_{t,t}^{\mathcal U}=\operatorname{id}_{C_b(D)}\).

Properties \ref{S1} and \ref{S4} are immediate. For \ref{S2}, let
\(f_n\searrow0\) in \(C_b(D)\), and set
\[
F_n(\mathbb P)
:=
\mathbb E^\mathbb P
\!\left[
f_n(X_T)\mathbb I_{\{\tau_\infty>T\}}
\right].
\]
Then \(F_n(\mathbb P)\downarrow0\) pointwise on \(\mathcal P_{t,x}\).
For each \(\mathbb P=\Phi(A,\mathbb Q)\in\mathcal P_{t,x}\), the pair-space
condition \eqref{eq:prop:pairDUS_to_DSVR_1} implies
\(\mathbb P(\tau_{\mathrm{exp}}\le T)=0\). Moreover,
\(\tau_{\mathrm{kill}}\) has no deterministic atoms under \(\mathbb P\).
Hence the discontinuity set of
\[
\omega\longmapsto
f_n(X_T(\omega))\mathbb I_{\{\tau_\infty(\omega)>T\}}
\]
is \(\mathbb P\)-null. Hence \(F_n\) is weakly continuous
on \(\mathcal P_{t,x}\). Since \(\mathcal P_{t,x}\) is weakly compact, Dini's
theorem gives \(\sup_{\mathbb P\in\mathcal P_{t,x}}F_n(\mathbb P)\downarrow0\), and therefore \(\mathcal T_{t,T}^{\mathcal U}f_n(x)\searrow0\).
For \ref{S5}, let \((t_n,T_n)\to(t,T)\). By \ref{S4}, \(\sup_n\|\mathcal T_{t_n,T_n}^{\mathcal U}f\|_\infty\le \|f\|_\infty\).
The assumed continuity of
\((t,T,x)\mapsto\mathcal T_{t,T}^{\mathcal U}f(x)\) then gives local uniform
convergence on compact subsets of \(D\), hence convergence in the mixed
topology.
Finally, the conditioning and concatenation stability of \(\mathcal U\),
transported through \(\Phi\), give the dynamic programming principle for
\(\{\mathcal P_{t,x}\}_{(t,x)}\); see, for example,
\cite[Theorem~2.3]{nutz2013constructing}. This yields \ref{S3}. Hence
\(\{\mathcal T_{t,T}^{\mathcal U}\}_{0\le t\le T<\infty}\) is a dynamic
sublinear valuation rule.
If \(\mathcal U\) is time-homogeneous, then
\(\mathcal U_{t,x}=\mathcal U_{0,x}\circ\theta_t^{-1}\) gives \(\mathcal T_{t,T}^{\mathcal U}f(x)=\mathcal T_{0,T-t}^{\mathcal U}f(x)\), so the valuation rule is time-homogeneous.
\end{proof}

\begin{lemma} 
\label{lem:canonical_value_continuity} 
Let \(G:D\times\mathbb R\times\mathbb R^d\times\mathbb S(d)\to\mathbb R\) satisfy \ref{item:G1}--\ref{item:G3}, Assumptions~\ref{assume:paraboliccomparison} and~\ref{assume:lyapunov}.
For every \(T>0\) and \(f\in C_b(D)\), define
\[
v(t,x) := \sup_{\mathbb P\in\mathcal P_{t,x}(G)} \mathbb E^{\mathbb P} \left[ f(X_T)\mathbb I_{\{\tau_\infty>T\}} \right], \qquad (t,x)\in[0,T]\times D . 
\] 
Then \(v\) is continuous on \([0,T]\times D\).
Moreover, \(v(T-\cdot,\cdot)\) is the unique bounded viscosity solution of \eqref{eq:mainHJB}.
\end{lemma}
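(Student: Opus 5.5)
The plan is the standard Perron-type argument for control problems. I will show that the upper semicontinuous envelope $v^*$ of $v(T-\cdot,\cdot)$ is a bounded viscosity subsolution of \eqref{eq:mainHJB}, and that the lower semicontinuous envelope $v_*$ is a bounded viscosity supersolution. The parabolic comparison principle (Assumption~\ref{assume:paraboliccomparison}) then gives $v^*\le v_*$. Since trivially $v_*\le v^*$, the two envelopes coincide, so $v$ is continuous and is a viscosity solution; uniqueness among bounded viscosity solutions is again comparison. Boundedness is immediate, since $|v|\le\|f\|_\infty$. At the terminal time, $v(T,x)=f(x)$ by the initial condition, and I must also show $v^*(T,\cdot)\le f\le v_*(T,\cdot)$. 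For this I use the compactness and upper hemicontinuity of Lemma~\ref{lem:compact_closed_virtual}: for $(t_n,x_n)\to(T,x)$, any sequence of near-optimal laws has a weakly convergent subsequence whose limit lies in $\mathcal P_{T,x}(G)$, and at the horizon $T$ that limit is the law frozen at $x$. The expectation of $f(X_T)\mathbb I_{\{\tau_\infty>T\}}$ passes to the limit because the discontinuity set is null, exactly as in the proof of Proposition~\ref{prop:pairDUS_to_DSVR}.

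The key tool is a dynamic programming principle for $v$: for stopping times $\tau\in[t,T]$,
\[
v(t,x)=\sup_{\mathbb P\in\mathcal P_{t,x}(G)}\mathbb E^{\mathbb P}\bigl[v(\tau,X_\tau)\mathbb I_{\{\tau_\infty>\tau\}}\bigr].
\]
This follows from the conditioning and concatenation stability established in Proposition~\ref{prop:PG_is_representing_VUS}, via \cite[Theorem~2.3]{nutz2013constructing}. Measurability of $v$ (upper semianalyticity) comes from the analytic graph of $(t,x)\mapsto\mathcal P_{t,x}(G)$. In fact only the two one-sided inequalities are needed, and these can be phrased with $v^*$ and $v_*$.

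The supersolution property uses the effective models. Take a smooth $\varphi$ touching $v_*$ from below at $(t_0,x_0)$, and apply Lemma~\ref{lem:effective_virtual_nonempty} with $u=\varphi$ to pick $\mathbb P\in\Phi(\mathcal U_{t,x}(G;\varphi))$ at points $(t,x)$ near $(t_0,x_0)$ where $v$ nearly attains $v_*$. The binding $(G,\varphi)$-martingale identity of Lemma~\ref{lem:virtual_characterization}, combined with the DPP inequality $v(t,x)\ge\mathbb E^{\mathbb P}[v(t+h\wedge\tau,\cdot)]$ where $\tau$ is the exit time from a small ball, gives $-\partial_t\varphi-G(\cdot,\varphi,\nabla\varphi,\nabla^2\varphi)\ge -o(1)$. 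Dividing by $h$ and letting $h\to 0$ yields the supersolution inequality. For the subsolution property, take $\varphi$ touching $v^*$ from above. Every $\mathbb P\in\mathcal P_{t,x}(G)$ makes $M^{\varphi,n}$ a supermartingale, by Lemma~\ref{lem:virtual_characterization}. Suppose the subsolution inequality fails, i.e. $\partial_t\varphi+G>0$ near the touching point; then the supermartingale property together with the DPP in the $\le$ direction produces a contradiction with $v^*(t_0,x_0)=\varphi(t_0,x_0)$ after a strict perturbation of $\varphi$. I also need the exit times from the small ball to be uniformly bounded below in probability, uniformly over $\mathcal P_{t,x}(G)$. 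This follows from the local boundedness of coefficients in $A(x)$, which are compact and continuous in $x$ by \ref{item:G1}, via a standard moment estimate with a smooth cutoff.

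I expect the main obstacle to be the rigorous DPP with the envelopes, because measurable selection of near-optimal continuation kernels is needed for concatenation when $v$ is not yet known to be continuous. To avoid this I would prove only the weak DPP in the sense of Bouchard--Touzi, stated with $v^*$ and $v_*$. Its proof uses the compactness and upper hemicontinuity from Lemma~\ref{lem:compact_closed_virtual} to obtain upper semicontinuity of $(t,x)\mapsto\sup_{\mathbb P}\mathbb E^{\mathbb P}[\psi]$ for continuous $\psi$, and countable partitions for the pasting direction. I would also pay attention to the indicator $\mathbb I_{\{\tau_\infty>\cdot\}}$; it is harmless because killing has no deterministic atoms and explosion before $T$ is null.
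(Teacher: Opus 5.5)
Your proposal is correct and follows essentially the paper's route: semicontinuous envelopes, the subsolution property from the $G$-supermartingale characterization plus the ($\le$) direction of the DPP, the supersolution property from effective models (Lemma~\ref{lem:effective_virtual_nonempty}) plus the ($\ge$) direction, and then comparison. Your explicit check of $v^*(T,\cdot)\le f\le v_*(T,\cdot)$ via Lemma~\ref{lem:compact_closed_virtual} fills in a step the paper leaves implicit.

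There is one sign slip. In your convention the negation of the subsolution inequality is $\partial_t\varphi+G(\cdot,\varphi,\nabla\varphi,\nabla^2\varphi)<0$ near the touching point, as in the paper, not $>0$. Only with that sign does the supermartingale property give $\mathbb E^{\mathbb P}[\varphi(\tau,X_\tau)\mathbb I_{\{\tau_\infty>\tau\}}]\le\varphi(t_n,x_n)$ and hence the contradiction.
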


\begin{proof}
Let \(v^*\) and \(v_*\) be the upper and lower semicontinuous envelopes of
\(v\). It suffices to prove that \(v^*(T-\cdot,\cdot)\) is a viscosity
subsolution and that \(v_*(T-\cdot,\cdot)\) is a viscosity supersolution.

We prove the subsolution property. Let
\(\varphi\in C_b^\infty([0,T]\times D)\) touch \(v^*\) from above at
\((t,x)\in[0,T)\times D\). Suppose, for contradiction, that
\[
\partial_t\varphi(t,x)
+
G\bigl(x,\varphi(t,x),\nabla\varphi(t,x),\nabla^2\varphi(t,x)\bigr)<0 .
\]
After the usual strictification, there exists a parabolic cylinder
\([t,t+r]\times B_r(x)\) such that \(\partial_t\varphi
+
G(\cdot,\varphi,\nabla\varphi,\nabla^2\varphi)\le0\)
inside the cylinder and \(v^*-\varphi\le -r^2\)
on its parabolic boundary. 
Let \(\tau\) be the first exit time from this
cylinder, and choose \((t_n,x_n)\to(t,x)\) with
\(v(t_n,x_n)\to v^*(t,x)\). By Lemma~\ref{lem:virtual_characterization},
\[
\sup_{\mathbb P\in\mathcal P_{t_n,x_n}(G)}
\mathbb E^{\mathbb P}
\left[
\varphi(\tau,X_\tau)\mathbb I_{\{\tau_\infty>\tau\}}
\right]
-
\varphi(t_n,x_n)
\le0 .
\]
Since \(v\le v^*\le\varphi\) and \(v^*-\varphi\le-r^2\) on the parabolic
boundary, the left-hand side is bounded below by
\[
\sup_{\mathbb P\in\mathcal P_{t_n,x_n}(G)}
\mathbb E^{\mathbb P}
\left[
v(\tau,X_\tau)\mathbb I_{\{\tau_\infty>\tau\}}
\right]
-\varphi(t_n,x_n)+r^2 .
\]
The dynamic programming principle identifies this expression with \(v(t_n,x_n)-\varphi(t_n,x_n)+r^2\).
Letting \(n\to\infty\) gives \(0\le-r^2\), a contradiction. Hence
\(v^*(T-\cdot,\cdot)\) is a viscosity subsolution.

The supersolution property is obtained by the same argument, using the
effective model class \(\mathcal P_{t,x}(G;\varphi)\) and
Lemma~\ref{lem:effective_virtual_nonempty}; for such a model the stopped
process is a martingale rather than merely a supermartingale.

By Assumption~\ref{assume:paraboliccomparison}, \(v^*(T-\cdot,\cdot)\le v_*(T-\cdot,\cdot)\).
Since \(v_*\le v\le v^*\) by definition, we obtain \(v=v^*=v_*\). Thus
\(v\) is continuous. The same comparison principle gives uniqueness of the
bounded viscosity solution.
\end{proof}

\begin{proof}[Proof of Theorem~\ref{thm:abstractconstructionSG}]
Set \(\mathcal P_{t,x}(G):=\Phi\bigl(\mathcal U_{t,x}(G)\bigr)\) and \(\mathcal T_t:=\mathcal T_t^{\mathcal U(G)}\).
Then
\[
\mathcal T_tf(x)
=
\sup_{\mathbb P\in\mathcal P_{0,x}(G)}
\mathbb E^{\mathbb P}
\left[
f(X_t)\mathbb I_{\{\tau_\infty>t\}}
\right].
\]
By Proposition~\ref{prop:PG_is_representing_VUS},
\(\mathcal U(G)\) is a DUS. By
Lemma~\ref{lem:canonical_value_continuity}, the valuation function
\((t,T,x)\mapsto\mathcal T_{T-t}^{\mathcal U(G)}f(x)\) is continuous for
every \(f\in C_b(D)\). Therefore
Proposition~\ref{prop:pairDUS_to_DSVR} implies that
\(\{\mathcal T_{t}^{\mathcal U(G)}\}_{0\le t\le T<\infty}\) is a time-homogeneous dynamic
sublinear valuation rule.

It remains to identify its generator. It suffices to show that, for every
\(f\in C_b^\infty(D)\),
\[
\lim_{h\downarrow0}
\frac{\mathcal T_hf-f}{h}
=
G(\,\cdot\,,f,\nabla f,\nabla^2f)
\quad\text{locally uniformly on }D .
\]
By the usual localization argument, using small-time exit estimate
(Proposition~\ref{prop:small_time_exit}), it is enough to consider
\(f\in C_c^\infty(D)\). 
Let \(g(y):=
G\bigl(y,f(y),\nabla f(y),\nabla^2f(y)\bigr)\).
For any \(t_n\downarrow0\), \(x_n\to x\), and any
\(\mathbb P_n\in\mathcal P_{0,x_n}(G)\), the small-time exit and killing
estimates (Propositions~\ref{prop:small_time_exit} and \ref{prop:small_time_kill}) imply
\begin{equation}\label{eq:thm:abstractconstructionSG_local_average}
\frac1{t_n}
\mathbb E^{\mathbb P_n}
\left[
\int_0^{t_n}
g(X_s)\mathbb I_{\{\tau_\infty>s\}}\,ds
\right]
\longrightarrow
g(x).
\end{equation}

For the lower bound, view \(f\) as a time-independent space-time test
function. By Lemma~\ref{lem:effective_virtual_nonempty}, choose
\(\mathbb P_n\in\mathcal P_{0,x_n}(G;f)\). The coefficient-free
characterization in Lemma~\ref{lem:virtual_characterization} gives the
martingale identity
\begin{align}\label{eq:thm:abstractconstructionSG_1}
\mathbb E^{\mathbb P_n}
\left[
f(X_{t_n})\mathbb I_{\{\tau_\infty>t_n\}}
\right]
=
f(x_n)
+
\mathbb E^{\mathbb P_n}
\left[
\int_0^{t_n}
g(X_s)\mathbb I_{\{\tau_\infty>s\}}\,ds
\right].
\end{align}
Since \(\mathcal T_{t_n}f(x_n)\) is the supremum over
\(\mathcal P_{0,x_n}(G)\), \eqref{eq:thm:abstractconstructionSG_local_average}
yields
\begin{align}\label{eq:thm:abstractconstructionSG_2}
\liminf_{n\to\infty}
\frac{\mathcal T_{t_n}f(x_n)-f(x_n)}{t_n}
\ge g(x).
\end{align}

For the upper bound, choose
\(\mathbb P_n\in\mathcal P_{0,x_n}(G)\) such that \(\mathcal T_{t_n}f(x_n)\le\mathbb E^{\mathbb P_n}\left[f(X_{t_n})\mathbb I_{\{\tau_\infty>t_n\}}\right]+o(t_n)\).
Using the supermartingale inequality instead of martingale identity \eqref{eq:thm:abstractconstructionSG_1} and Lemma~\ref{lem:virtual_characterization}, we obtain
\begin{align}\label{eq:thm:abstractconstructionSG_3}
\limsup_{n\to\infty}
\frac{\mathcal T_{t_n}f(x_n)-f(x_n)}{t_n}
\le g(x).
\end{align}
Combining \eqref{eq:thm:abstractconstructionSG_2} and \eqref{eq:thm:abstractconstructionSG_3}, we conclude \(f\in\mathcal D(\mathcal G)\) and \(\mathcal G[f](x)=G(x,f(x),\nabla f(x),\nabla^2f(x)\).
Thus the infinitesimal generator of
\(\{\mathcal T_t\}_{t\ge0}\) satisfies Assumption~\ref{assume:Cinftyindomain},
and its generating function is \(G\).
\end{proof}

\section{Characterization of Representing DUSs}\label{sec:proofchar}

This appendix is devoted to prove the characterization results in Section~\ref{sec:char}.

\begin{lemma}\label{lem:monotone-l}
Let $l:[0,\infty)\to\mathbb R$ be continuous.
If
\begin{align}\label{eq:lem:monotone-l_1}
\limsup_{h\downarrow0}\frac{l(t+h)-l(t)}{h}\le0
\quad\text{for all }t\ge0,
\end{align}
then $l$ is non-increasing. Similarly, if
$\liminf_{h\downarrow0}\frac{l(t+h)-l(t)}{h}\ge0$ for all $t$, then $l$ is non-decreasing.
\end{lemma}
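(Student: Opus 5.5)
This is a Dini-derivative monotonicity lemma, and I will prove it by an \(\varepsilon\)-perturbation combined with a supremum argument. The second statement follows from the first by applying it to \(-l\). So it suffices to show that \(l\) is non-increasing under \eqref{eq:lem:monotone-l_1}.

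Fix \(\varepsilon>0\) and set \(l_\varepsilon(t):=l(t)-\varepsilon t\). Then
\[
\limsup_{h\downarrow0}\frac{l_\varepsilon(t+h)-l_\varepsilon(t)}{h}\le-\varepsilon<0
\qquad\text{for every } t\ge0 .
\]
Fix \(0\le a<b\). The goal is \(l_\varepsilon(b)\le l_\varepsilon(a)\). Letting \(\varepsilon\downarrow0\) then gives \(l(b)\le l(a)\).

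Consider \(S:=\{s\in[a,b]: l_\varepsilon(u)\le l_\varepsilon(a)\text{ for all }u\in[a,s]\}\) and put \(s^*:=\sup S\).

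\begin{itemize}
\item \(S\) contains \(a\), so it is nonempty.
\item By continuity of \(l_\varepsilon\), \(S\) is closed, so \(s^*\in S\).
\item Suppose \(s^*<b\). Since the upper right Dini derivative at \(s^*\) is strictly negative, there is \(\delta>0\) with \(l_\varepsilon(s^*+h)<l_\varepsilon(s^*)\le l_\varepsilon(a)\) for all \(h\in(0,\delta)\).
\item Hence \([a,s^*+h]\subset S\) for small \(h>0\), contradicting the maximality of \(s^*\).
\end{itemize}

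Therefore \(s^*=b\), which gives \(l_\varepsilon(b)\le l_\varepsilon(a)\).

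The only delicate point is that the hypothesis is merely \(\limsup\le0\) rather than \(<0\). A supremum argument using \(l\) directly would not close, because the Dini derivative could be exactly zero. The linear perturbation \(-\varepsilon t\) is precisely what produces the strict inequality needed in the third step above. Everything else is routine continuity.
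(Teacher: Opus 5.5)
Your proof is correct. It differs from the paper's mainly in being self-contained.

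The paper argues by contradiction and cites a mean value inequality for Dini derivatives (Duda, Lemma~2.8). If $l(b)>l(a)$ for some $a<b$, that lemma gives a point $t\in(a,b)$ with
\[
\limsup_{h\downarrow0}\frac{l(t+h)-l(t)}{h}\ \ge\ \frac{l(b)-l(a)}{b-a}>0,
\]
which contradicts the hypothesis.

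Your argument is in effect a direct proof of what that citation supplies. The tilt $-\varepsilon t$ plays the role of the secant slope. The supremum/continuity step over $S$ is the standard way to locate a point where the right Dini derivative is controlled.

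The paper's route buys brevity and a quantitative conclusion (a point whose Dini derivative is at least the secant slope), which is more than this lemma needs. Your route avoids the external reference and makes explicit where continuity of $l$ enters (closedness of $S$). It also shows where the perturbation is needed, namely to turn $\limsup\le 0$ into a strict inequality.

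Both arguments reduce the $\liminf$ statement to the $\limsup$ statement in the same way, by applying the first claim to $-l$.
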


\begin{proof}
By \citet[Lemma~2.8]{duda2009semiconvex}, if \(l(b)>l(a)\) for some
\(0\le a<b\), then there exists \(t\in(a,b)\) such that
\[
\limsup_{h\downarrow0}\frac{l(t+h)-l(t)}{h}
\ge \frac{l(b)-l(a)}{b-a}>0,
\]
a contradiction. Thus \(l\) is non-increasing. The second claim follows by
applying the first to \(-l\).
\end{proof}

\begin{lemma}\label{lem:representing_one_step_drift}
Let \(\{\mathcal U_x\}_{x\in\widehat D}\) be a time-homogeneous DUS representing \(\{\mathcal T_t\}_{t\ge0}\) and let \(\mathcal P_x:=\Phi(\mathcal U_x)\) for \(x\in\widehat D\). Fix
\(f\in C_b^\infty(D)\), \(n\ge1\), and define a process \(M^{f,n}\) on \(\tilde\Omega\) as \eqref{eq:def_Mfn}.
Then, for every compact set \(K\subset D_n\),
\[
\limsup_{h\downarrow0}
\sup_{y\in K}\sup_{\mathbb P\in\mathcal P_y}
\frac1h
\mathbb E^{\mathbb P}
\!\left[
M_h^{f,n}-M_0^{f,n}
\right]
\le0.
\]
\end{lemma}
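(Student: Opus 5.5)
Here $t=0$, so $\bar h_n^0=h\wedge\tau_n$. Since $\mathcal U$ represents $\{\mathcal T_t\}_{t\ge0}$, we have $\mathcal T_hf(y)=\sup_{\mathbb P\in\mathcal P_y}\mathbb E^{\mathbb P}[f(X_h)\mathbb I_{\{\tau_\infty>h\}}]$. Theorem~\ref{thm:representationgenerator} gives $(\mathcal T_hf-f)/h\to g:=G(\cdot,f,\nabla f,\nabla^2f)$ locally uniformly. The plan is to compare $\mathbb E^{\mathbb P}[M_h^{f,n}-M_0^{f,n}]$ with $\mathcal T_hf(y)-f(y)-h\,g(y)$, with errors that are $o(h)$ uniformly in $y\in K$ and $\mathbb P\in\mathcal P_y$. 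We write
\[
\mathbb E^{\mathbb P}[M_h^{f,n}-M_0^{f,n}]
=\Bigl(\mathbb E^{\mathbb P}[f(X_h)\mathbb I_{\{\tau_\infty>h\}}]-f(y)\Bigr)
+R_1-\mathbb E^{\mathbb P}\Bigl[\int_0^{h\wedge\tau_n}g(X_r)\mathbb I_{\{\tau_\infty>r\}}\,dr\Bigr],
\]
where
\[
R_1:=\mathbb E^{\mathbb P}\bigl[f(X_{h\wedge\tau_n})\mathbb I_{\{\tau_\infty>h\wedge\tau_n\}}-f(X_h)\mathbb I_{\{\tau_\infty>h\}}\bigr].
\]

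\textbf{First term.} The first bracket is at most $\mathcal T_hf(y)-f(y)=h\,g(y)+o(h)$, uniformly on $K$.

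\textbf{Remainder $R_1$.} $R_1$ vanishes on $\{\tau_n>h\}$, so $|R_1|\le 2\|f\|_\infty\,\mathbb P(\tau_n\le h)$. We need $\sup_{y\in K,\mathbb P\in\mathcal P_y}\mathbb P(\tau_n\le h)=o(h)$. We would get this from the representation itself, not from Proposition~\ref{prop:small_time_exit}, which concerns $\mathcal U(G)$. Choose a cutoff $\psi\in C_c^\infty(D_n)$ with $0\le\psi\le1$, $\psi=1$ on a neighborhood $K'$ of $K$. Then
\[
\mathbb P(\tau_n\le h)\le \mathbb P\bigl(X \text{ exits } K' \text{ before } h\bigr).
\]
Bound the latter by the standard device: take $\chi_z\in C_b^\infty(D)$ vanishing at $z$, equal to $1$ off a ball, with bounded jets. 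Sublinearity and monotonicity give
\[
\mathbb E^{\mathbb P}[\chi_z(X_h)]\le \mathcal T_h\chi_z(z)+\bigl(1-\mathcal T_h1(z)\bigr)\le Ch,
\]
uniformly for $z$ in compacts. Applying the DUS conditioning property \ref{item:pair_conditioning} at the exit time of a small ball, iterating, and using a reflection-type argument then yields an $O(h^2)$ exit bound. This is the same mechanism as Kolmogorov/Dynkin small-time estimates, and we expect it to be the main technical obstacle. Alternatively, one could first prove Theorem~\ref{thm:maximalrepre}-type containment, but that is circular. So we go via conditioning together with fourth-order test functions $|x-z|^4$-type cutoffs, whose generator is bounded by $C$ near $z$ because $G$ is locally Lipschitz (Lemma~\ref{lem:local_jet_lipschitz}). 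This gives $\mathbb P(\sup_{s\le h}|X_s-z|\ge\epsilon)\le C h^2/\epsilon^4$ via a supermartingale-maximal inequality obtained from \ref{item:pair_conditioning}.

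\textbf{Integral term.} Since $g$ is continuous, it is bounded and uniformly continuous on $\overline{D_n}$. On the event $\{\sup_{s\le h}|X_s-y|<\epsilon\}\cap\{\tau_\infty>h\}$, the integral is at least $h(g(y)-\omega_g(\epsilon))$. The complement has probability $o(1)$ uniformly, by the same exit estimate together with killing: $\mathbb P(\tau_\infty\le h)$ is controlled by $1-\mathcal T_h1(y)$ up to exits, and this is $O(h)$. Hence
\[
\mathbb E^{\mathbb P}\Bigl[\int_0^{h\wedge\tau_n}g\,\mathbb I\,dr\Bigr]\ge h\,g(y)-h\,\omega_g(\epsilon)-o(h).
\]

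\textbf{Conclusion.} Combining the three pieces gives
\[
\frac1h\,\mathbb E^{\mathbb P}[M_h^{f,n}-M_0^{f,n}]\le \omega_g(\epsilon)+o(1)
\]
uniformly in $y\in K$ and $\mathbb P\in\mathcal P_y$. Letting $h\downarrow0$ and then $\epsilon\downarrow0$ proves the lemma.
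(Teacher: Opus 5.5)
Your outline follows the paper: you bound the terminal payoff by the representation, control the stopped-versus-unstopped remainder $R_1$, handle the integral term, and derive the exit estimate from the representing DUS itself (correctly avoiding Proposition~\ref{prop:small_time_exit} and the circular use of maximality). The gap is in how killing enters $R_1$. On $\tilde\Omega$ a killed path jumps to $\triangle\notin D_n$, so $\tau_n\le\tau_{\mathrm{kill}}$, and $\tau_n=\tau_\infty$ whenever the path is still in $D_n$ at the killing time. Hence $\mathbb P(\tau_n\le h)$ includes the probability of being killed inside $D_n$ before $h$, which is at least of order $h$ as soon as discounting is nondegenerate. For example, $G(x,r,p,X)=\tfrac12\operatorname{tr}X-r$ (Brownian motion killed at rate $1$) gives $\mathbb P(\tau_n\le h)\gtrsim h$. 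So your requirement $\sup_{y,\mathbb P}\mathbb P(\tau_n\le h)=o(h)$ is false. The same goes for $\mathbb P(\sup_{s\le h}|X_s-z|\ge\epsilon)\le Ch^2/\epsilon^4$ if a jump to $\triangle$ counts as leaving the ball. With $|R_1|\le2\|f\|_\infty\mathbb P(\tau_n\le h)$ you only get $R_1=O(h)$, and the lemma does not follow. The observation you are missing, which the paper uses, is that both terms of $R_1$ vanish on $\{\tau_n=\tau_\infty\le h\}$. Therefore $|R_1|\le2\|f\|_\infty\,\mathbb P(\tau_n\le h,\ \tau_n<\tau_\infty)$, and only continuous exit before killing has to be $o(h)$. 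Your killing control is also in the wrong direction. In fact $1-\mathcal T_h1(y)=\inf_{\mathbb P\in\mathcal P_y}\mathbb P(\tau_\infty\le h)$, whereas the integral term needs $\sup_{\mathbb P\in\mathcal P_y}\mathbb P(\tau_\infty\le h)=1+\mathcal T_h(-1)(y)=h\,G(y,-1,0,0)+o(h)$.

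The exit estimate, which you call the main obstacle, is only sketched, and the sketch has a uniformity problem. With $\chi_z=|\cdot-z|^4$ centred at the moving starting point, the locally uniform limit $(\mathcal T_hf-f)/h\to\mathcal G[f]$ holds for a fixed $f$, not uniformly over the family $\{\chi_z\}_z$. A fixed centre instead gives $\chi(y)\neq0$ and no $o(h)$ bound. The $O(h^2)$ rate is also unnecessary. Proposition~\ref{prop:representing_short_time_exit} avoids both issues:
\begin{enumerate}
\item Take a single $\psi\in C_b^\infty(D)$ with $\psi=0$ on $B_r(y_0)$ and $\psi=1$ off $B_{2r}(y_0)$. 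Locality gives $\mathcal G[\psi]=0$ on $B_r(y_0)$, hence $\sup_{y\in B_{r/2}(y_0)}\mathcal T_h\psi(y)=o(h)$.
\item The process $s\mapsto\mathcal T_{h-s}\psi(X_s)\mathbb I_{\{\tau_\infty>s\}}$ is a nonnegative $\mathbb P$-supermartingale on $[0,h]$, by \ref{item:pair_conditioning}, time homogeneity, representation and \ref{S3}. Optional sampling at the exit time $\sigma$ of $B_{2r}(y_0)$ yields $\mathbb E^{\mathbb P}\bigl[\mathcal T_{h-\sigma}\psi(X_\sigma)\mathbb I_{\{\sigma\le h,\ \sigma<\tau_\infty\}}\bigr]\le\mathcal T_h\psi(y)$.
\item By \ref{S5}, $\mathcal T_s\psi\ge\tfrac12$ on $\partial B_{2r}(y_0)$ for small $s$. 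Hence $\mathbb P(\sigma\le h,\ \sigma<\tau_\infty)\le2\,\mathcal T_h\psi(y)=o(h)$.
\item A finite cover of $K$ by such balls gives the uniform estimate.
\end{enumerate}
With these two repairs your argument closes.
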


\begin{proof}
Fix a compact subset \(K\subset D_n\). Choose a bounded open set \(O\) such that \(K\subset O\subset D_n\).
By small-time estimate of representing DUS (Proposition~\ref{prop:representing_short_time_exit}),
\begin{align}\label{eq:lem:representing_one_step_drift_1}
\sup_{y\in K}\sup_{\mathbb P\in\mathcal P_y}
\mathbb P(\tau_n\le h,\ \tau_n<\tau_\infty)
=o(h).
\end{align}
Indeed, \(\{\tau_n\le h,\tau_n<\tau_\infty\}\subset
\{\sigma_O\le h,\sigma_O<\tau_\infty\}\) for \(h\) small enough.

We first compare the stopped terminal payoff with the unstopped terminal
payoff. Since the two can differ only on
\(\{\tau_n\le h,\tau_n<\tau_\infty\}\), we have
\[
\mathbb E^{\mathbb P}
\!\left[
f(X_{h\wedge\tau_n})
\mathbb I_{\{\tau_\infty>h\wedge\tau_n\}}
\right]\le
\mathbb E^{\mathbb P}
\!\left[
f(X_h)\mathbb I_{\{\tau_\infty>h\}}
\right]
+
2\|f\|_\infty
\mathbb P(\tau_n\le h,\tau_n<\tau_\infty).
\]
By the representation property, \(\mathbb E^{\mathbb P}
\!\left[
f(X_h)\mathbb I_{\{\tau_\infty>h\}}
\right]
\le
\mathcal T_hf(y)\).
Therefore, uniformly over \(y\in K\) and \(\mathbb P\in\mathcal P_y\),
\[
\mathbb E^{\mathbb P}
\!\left[
f(X_{h\wedge\tau_n})
\mathbb I_{\{\tau_\infty>h\wedge\tau_n\}}
\right]
\le
\mathcal T_hf(y)+o(h).
\]
Moreover, by \eqref{eq:lem:representing_one_step_drift_1},
\[
\frac1h
\mathbb E^{\mathbb P}
\!\left[
\int_0^{h\wedge\tau_n}
G(X_u,f(X_u),\nabla f(X_u),\nabla^2f(X_u))
\mathbb I_{\{\tau_\infty>u\}}\,du
\right]
=
G(y,f(y),\nabla f(y),\nabla^2f(y))+o(1),
\]
uniformly over \(y\in K\) and \(\mathbb P\in\mathcal P_y\).

Combining the preceding estimates gives
\[
\begin{aligned}
\frac1h
\mathbb E^{\mathbb P}
\!\left[
M_h^{f,n}-M_0^{f,n}
\right]
&\le
\frac{\mathcal T_hf(y)-f(y)}{h}
-
G(y,f(y),\nabla f(y),\nabla^2f(y))
+
o(1).
\end{aligned}
\]
Clearly, the right-hand side converges to \(0\)
uniformly over \(y\in K\), and the desired inequality follows.
\end{proof}
\begin{proof}[Proof of Theorem~\ref{thm:maximalrepre}]
For \(x=\triangle\), the claim is immediate. Fix \(x\in D\), and write \(\mathcal P_x:=\Phi(\mathcal U_x)\) and \(\mathcal P_x(G):=\Phi(\mathcal U_x(G))\).
By the injectivity of \(\Phi\), it is enough to prove \(\mathcal P_x\subset \mathcal P_x(G)\).

Let \(\mathbb P\in\mathcal P_x\).
Fix \(f\in C_b^\infty(D)\) and
\(n\ge1\).
By Lemma~\ref{lem:virtual_characterization}, it suffices to show that the process \(M^{f,n}\) on \(\tilde\Omega\) defined in \eqref{eq:def_Mfn} is a \(\mathbb P\)-supermartingale.
Let \(K_m\subset D_n\) be an increasing compact exhaustion of \(D_n\), and set \(\rho_m:=\inf\{s\ge0:X_s\notin K_m\}\).
By Lemma~\ref{lem:representing_one_step_drift}, for each \(m\),
\[
\limsup_{h\downarrow0}
\sup_{y\in K_m}\sup_{\mathbb Q\in\mathcal P_y}
\frac1h
\mathbb E^{\mathbb Q}
\!\left[
M_h^{f,n}-M_0^{f,n}
\right]
\le0.
\]
Using the stability of \(\mathcal P\) under conditioning and the
time-homogeneity of \(\mathcal P\), this one-step estimate applies after
conditioning at any time before \(\rho_m\). Hence, for every bounded stopping
times \(0\le \sigma\le \theta\), the function
\[
l_m(u):=
\mathbb E^{\mathbb P}
\!\left[
M_{(\sigma+u)\wedge\theta\wedge\rho_m}^{f,n}
\right],
\qquad u\ge0,
\]
satisfies the condition \eqref{eq:lem:monotone-l_1}.
By Lemma~\ref{lem:monotone-l}, \(l_m\) is nonincreasing. Taking
\(u=0\) and then \(u\) large enough so that
\((\sigma+u)\wedge\theta=\theta\), we obtain \(\mathbb E^{\mathbb P}
\!\left[
M_{\sigma\wedge\rho_m}^{f,n}
\right]
\ge
\mathbb E^{\mathbb P}
\!\left[
M_{\theta\wedge\rho_m}^{f,n}
\right]\).
Thus \(M^{f,n}_{\cdot\wedge\rho_m}\) is a \(\mathbb P\)-supermartingale.

Letting \(m\to\infty\), we have \(\rho_m\uparrow\tau_n\). Since \(f\) is
bounded and \(g_f\) is bounded on \(D_n\), bounded convergence yields \(\mathbb E^{\mathbb P}
\!\left[
M_{\sigma}^{f,n}
\right]
\ge
\mathbb E^{\mathbb P}
\!\left[
M_{\theta}^{f,n}
\right]\)
for all bounded stopping times \(0\le\sigma\le\theta\). Therefore
\(M^{f,n}\) is a \(\mathbb P\)-supermartingale.

Since \(f\in C_b^\infty(D)\) and \(n\ge1\) were arbitrary,
\(\mathbb P\) solves the generalized \(G\)-supermartingale problem, and hence \(\mathbb P\in\mathcal P_x(G)\).
Thus \(\mathcal P_x\subset\mathcal P_x(G)\). By the injectivity of \(\Phi\), \(\mathcal U_x\subset\mathcal U_x(G)\).
This proves the maximality of \(\mathcal U(G)\).
\end{proof}

\begin{proof}[Proof of Theorem~\ref{thm:effective_characterization}]
For \(x\in\hat D\), let \(\mathcal P_x:=\Phi(\mathcal U_x)\) and \(\mathcal P_x(G):=\Phi(\mathcal U_x(G))\).
We first prove \ref{thm:effective_characterization_1}$\Rightarrow$\ref{thm:effective_characterization_2}. Condition~\ref{item:equi1} follows directly from Theorem~\ref{thm:maximalrepre}.

For \ref{item:equi2}, fix \(x\in D\) and
\(\varphi\in C_b^\infty(D)\), and write \(g(y):=
G\bigl(y,\varphi(y),\nabla\varphi(y),\nabla^2\varphi(y)\bigr)\).
By Lemma~\ref{lem:virtual_characterization}, it suffices to show that \(\mathcal P_x\) contains a solution to the generalized \((G,\varphi)\)-martingale problem starting from \(x\) at time \(0\).

For each \(h>0\) and \(y\in D\), the map \(\mathbb P\mapsto
\mathbb E^{\mathbb P}
\left[
\varphi(X_h)\mathbb I_{\{\tau_\infty>h\}}
\right]\)
is weakly continuous on \(\mathcal P_y\). 
Indeed, by Theorem~\ref{thm:maximalrepre}, \(\mathcal P_y\subseteq\mathcal P_y(G)\).
For every \(\mathbb P\in\mathcal P_y(G)\), Assumption~\ref{assume:lyapunov} rules out continuous explosion (Proposition~\ref{prop:lyap_tail_bounds}~\ref{prop:lyap_tail_bounds_1}), so
\(\mathbb P(\tau_{\mathrm{exp}}\le h)=0\). Moreover, under the virtual representation,
\(\tau_{\mathrm{kill}}\) is generated by a continuous cumulative hazard and an
independent exponential clock, hence \(\mathbb P(\tau_{\mathrm{kill}}=h)=0\). Since the
discontinuity set of \(\omega\mapsto \varphi(X_h(\omega))\mathbb I_{\{\tau_\infty(\omega)>h\}}\)
is contained in \(\{\tau_{\mathrm{exp}}\le h\}\cup\{\tau_{\mathrm{kill}}=h\}\),
the integrand is \(\mathbb P\)-a.s. continuous. The extended mapping theorem therefore
implies the desired weak continuity.
Since \(\mathcal P_y\) is weakly compact by the
DUS property, the supremum is attained. By the measurable maximum theorem, we
may choose a measurable maximizer \(y\mapsto \mathbb P_y^h\in\mathcal P_y\)
such that
\[
\mathcal T_h\varphi(y)
=
\mathbb E^{\mathbb P_y^h}
\left[
\varphi(X_h)\mathbb I_{\{\tau_\infty>h\}}
\right].
\]

Pasting these one-step maximizers along the grid \(t_k=kh\) yields, by the
pasting stability of the DUS, a law \(\mathbb P^h\in\mathcal P_x\). Since
\(\mathcal P_x\) is weakly compact, we may choose a sequence \(h_j\downarrow0\)
such that \(\mathbb P^{h_j}\to \mathbb P\) weakly for some \(\mathbb P\in\mathcal P_x\). By condition \ref{item:equi1}, we also have
\(\mathcal P_x\subseteq\mathcal P_x(G)\), and hence \(\mathbb P\in\mathcal P_x(G)\).

Now, since \(\mathbb P\in\mathcal P_x(G)\), to show that \(\mathbb P\) solves
the generalized \((G,\varphi)\)-martingale problem starting from \(x\) at time
\(0\), it remains only to verify the additional binding martingale condition.
That is, it suffices to show that, for every \(n\ge1\), the process \(M^{\varphi,n}\) (see \eqref{eq:def_Mfn} for the definition) is a \(\mathbb P\)-martingale.

The one-step maximizing property gives, for each grid time \(t_k=kh\),
\[
\mathbb E^{\mathbb P^h}
\left[
\varphi(X_{t_{k+1}})
\mathbb I_{\{\tau_\infty>t_{k+1}\}}
-
\varphi(X_{t_k})
\mathbb I_{\{\tau_\infty>t_k\}}
\,\middle|\,\tilde{\mathcal F}_{t_k}
\right]
=
\mathcal T_h\varphi(X_{t_k})-\varphi(X_{t_k})
\]
on \(\{\tau_\infty>t_k\}\). Since \(G\) is the generating function of
\(\{\mathcal T_t\}_{t\ge0}\), we have
\[
\mathcal T_h\varphi(y)-\varphi(y)
=
h\,g(y)+o(h)
\]
locally uniformly in \(y\). Moreover, by the small-time exit and killing
estimates,
\[
\mathbb E^{\mathbb P^h}
\left[
\int_{t_k}^{t_{k+1}}
g(X_s)\mathbb I_{\{\tau_{\mathrm{kill}}>s\}}\,ds
\,\middle|\,\tilde{\mathcal F}_{t_k}
\right]
=
h\,g(X_{t_k})+o(h)
\]
locally uniformly before exit from compact subsets of \(D\). Therefore,
for every \(n\ge1\),
\[
\mathbb E^{\mathbb P^h}
\left[
M_{t_{k+1}\wedge\tau_n}^{\varphi,n}
-
M_{t_k\wedge\tau_n}^{\varphi,n}
\,\middle|\,\tilde{\mathcal F}_{t_k}
\right]
=o(h)
\]
along the grid, uniformly on compact subsets before exit.

Summing these increments and passing to the weak limit \(h_j\downarrow0\), as
in the discretization--pasting--compactness argument used in the proof of  Theorem~\ref{thm:nonempty_nonMarkov}, we obtain that
\(M^{\varphi,n}\) is a \(\mathbb P\)-martingale for every \(n\ge1\). 
Hence \(\mathbb P\in\mathcal P_x(G;\varphi)\).
Thus \(\mathcal P_x\cap\mathcal P_x(G;\varphi)\neq\varnothing\).
By the injectivity of \(\Phi\), this implies \(\mathcal U_x\cap\mathcal U_x(G;\varphi)\neq\varnothing\) and this proves \ref{item:equi2}.

We now prove \ref{thm:effective_characterization_2}$\Rightarrow$\ref{thm:effective_characterization_1}.
By condition~\ref{item:equi1}, \(\mathcal P_x\subseteq\mathcal P_x(G)\) for all \(x\in D\).
Therefore the subsolution part of the proof of
Lemma~\ref{lem:canonical_value_continuity} applies verbatim to the value
function generated by \(\mathcal P=\{\mathcal P_x\}_{x\in\hat D}\).

For the supersolution part, one needs effective models for time-dependent test
functions. Condition~\ref{item:equi2} gives effective models for every time-independent test function $\varphi\in C_b^\infty(D)$. 
By Theorem~\ref{thm:static_to_time_dependent_effective}, this static effectiveness yields the time-dependent version: for every \(u\in C_b^\infty([0,\infty)\times D)\) and every \(x\in D\), \(\mathcal P_x\cap\mathcal P_x(G;u)\neq\varnothing\).
With this time-dependent effective model in hand, the supersolution part of Lemma~\ref{lem:canonical_value_continuity} also applies verbatim. 
Hence, for each \(f\in C_b(D)\), the value function generated by \(\mathcal P\),
\[
v(t,x)
:=
\sup_{\mathbb P\in\mathcal P_x}
\mathbb E^{\mathbb P}
\left[
f(X_t)\mathbb I_{\{\tau_\infty>t\}}
\right],
\]
is a bounded viscosity solution of \eqref{eq:mainHJB}.
By Assumption~\ref{assume:paraboliccomparison}, this solution is unique.
On the other hand, by Theorem~\ref{thm:abstractconstructionSG}, the canonical
valuation generated by \(\mathcal U(G)\) is the unique bounded viscosity
solution with the same initial condition, and this canonical valuation
coincides with the original dynamic sublinear valuation rule
\(\{\mathcal T_t\}_{t\ge0}\). Therefore \(v(t,x)=\mathcal T_tf(x)\) for all \(t\ge0\), \(x\in D\) and \(f\in C_b(D)\).
Equivalently, \(\mathcal U\) represents \(\{\mathcal T_t\}_{t\ge0}\). This
proves \ref{thm:effective_characterization_1}.
\end{proof}

\section{Mathematical Preliminaries}

\subsection{Mixed Topology}\label{subsec:mixedtopology}

The mixed topology is the strongest locally convex topology on $C_b(D)$ that coincides, on $\lVert\,\cdot\,\lVert_\infty$-bounded sets, with the topology of uniform convergence on compact sets.
It is well known that a sequence  $\{f_n\}_{n\geq1}\subset C_b(D)$ converges to $f\in C_b(D)$ with respect to the mixed topology if and only if
\begin{align}
    \sup_{n\geq1}\;\lVert f_n\rVert_\infty<\infty\quad \mbox{and} \quad \lim_{n\to\infty}\;\lVert f-f_n\rVert_{\infty,K}=0
\end{align}
for all compact subsets $K\subset D$ where $\lVert f\rVert_{\infty,K}:=\sup_{x\in K}\;\lvert f(x)\rvert$ (see, e.g., \cite[Proposition A.4]{goldys2024operator}). Similarly, for a family of functions $\{f_s\}_{s\geq0}\subset C_b(D)$ and $t\geq0$, we have $f_s\to f_t$ as $s\to t$ if and only if there exists $\delta_0>0$ such that
\begin{align}
    \sup_{\lvert s-t\rvert\leq\delta_0}\;\lVert f_s\rVert_\infty<\infty
\end{align}
and, for every $\varepsilon>0$ and compact subset $K\subset D$, there exists $\delta>0$ satisfying \(\lVert f_s- f_t\rVert_{\infty,K}<\varepsilon\) for all $s\in[t-\delta,t+\delta]$. 
Unless stated otherwise, all limits in $ C_b(D) $ are taken with respect to the mixed topology.
Although the mixed topology is not metrizable, 
a monotone operator $\mathcal{T}:C_b(D)\to C_b(D)$ (that is, $\mathcal{T}f\geq\mathcal{T}g$ if $f\geq g$)
is continuous if and only if it is sequentially continuous
(see, e.g., \cite{nendel2025lower}).

\subsection{Painlev\'e--Kuratowski limits and hemicontinuity}\label{subsec:pkconv}

Let $E$ and $F$ be metric spaces, and let $\Gamma:E\rightrightarrows F$ be a set-valued map. For a sequence of sets $A_n\subset F$, define the upper and lower Painlev\'e--Kuratowski limits by
\[
\limsup_{n\to\infty} A_n
:=
\left\{
y\in F:
\exists n_k\uparrow\infty,\ \exists y_{n_k}\in A_{n_k}
\text{ such that } y_{n_k}\to y
\right\},
\]
and
\[
\liminf_{n\to\infty} A_n
:=
\left\{
y\in F:
\exists y_n\in A_n
\text{ such that } y_n\to y
\right\}.
\]

\begin{definition}\label{def:pkconv}
Let $x\in E$.
\begin{enumerate}[label=(\roman*)]
    \item The correspondence $\Gamma$ is \emph{upper hemicontinuous} at $x$ if, for every sequence $x_n\to x$,
    \[
    \limsup_{n\to\infty}\Gamma(x_n)\subset \Gamma(x).
    \]

    \item The correspondence $\Gamma$ is \emph{lower hemicontinuous} at $x$ if, for every sequence $x_n\to x$,
    \[
    \Gamma(x)\subset \liminf_{n\to\infty}\Gamma(x_n).
    \]
\end{enumerate}
If both conditions hold at $x$, then $\Gamma$ is continuous at $x$ in the Painlev\'e--Kuratowski sense.
\end{definition}

Equivalently, upper hemicontinuity means that whenever $x_n\to x$, $y_n\in\Gamma(x_n)$, and $y_n\to y$, one has $y\in\Gamma(x)$; lower hemicontinuity means that for every $y\in\Gamma(x)$ and every sequence $x_n\to x$, there exists $y_n\in\Gamma(x_n)$ such that $y_n\to y$.

\section{Quotient Pair Space and Extended Canonical Path Space}
\label{sec:extended_path_and_pair_space}

This section defines the pair space of cumulative discounting processes and
state-process laws, its induced topology, and its representation on an extended
canonical path space that separates continuous explosion from jump-to-cemetery
killing.

\subsection{Definitions of Pair Space and Extended Canonical Path Space}
\label{subsec:defspaces}

Let \(\hat D:=D\cup\{\triangle\}\) be the one-point compactification of \(D\),
where \(\triangle\) denotes the cemetery state. Following
\cite[Chapter~1]{pinsky1995positive}, we equip \(\hat D\) with the Riemannian
metric \(\rho_D\), and consider the canonical path spaces \(\hat\Omega\) and
\(\tilde\Omega\), with filtrations
\((\hat{\mathcal F}_t)_{t\ge0}\) and
\((\tilde{\mathcal F}_t)_{t\ge0}\). The space \(\hat\Omega\) consists of
continuous paths absorbed at \(\triangle\), while \(\tilde\Omega\) additionally
allows a jump to \(\triangle\) at the terminal time. Both spaces are Polish in
their natural topologies, and their Borel \(\sigma\)-fields are generated by
the corresponding canonical filtrations. We denote by \(X\) and \(\tilde X\)
the canonical processes on \(\hat\Omega\) and \(\tilde\Omega\), respectively.

Define
\[
\tau_n(\omega):=\inf\{t>0:\omega(t)\notin D_n\},
\qquad
\tau_\infty(\omega):=\lim_{n\to\infty}\tau_n(\omega).
\]
On \(\tilde\Omega\), we decompose the terminal time as
\(\tau_\infty=\tau_{\mathrm{kill}}\wedge\tau_{\mathrm{exp}}\), where
\[
\tau_{\mathrm{kill}}(\omega)
:=
\begin{cases}
\tau_\infty(\omega), & \omega\in\tilde\Omega\setminus\hat\Omega,\\
\infty, & \omega\in\hat\Omega,
\end{cases}
\qquad
\tau_{\mathrm{exp}}(\omega)
:=
\begin{cases}
\infty, & \omega\in\tilde\Omega\setminus\hat\Omega,\\
\tau_\infty(\omega), & \omega\in\hat\Omega.
\end{cases}
\]
Thus \(\tau_{\mathrm{kill}}\) records discontinuous killing, whereas
\(\tau_{\mathrm{exp}}\) records continuous explosion; in particular,
\(\tau_\infty=\tau_{\mathrm{exp}}\) on \(\hat\Omega\).

\begin{definition}
Let \(\mathfrak M\) be the set of probability measures on
\((\tilde\Omega,\tilde{\mathcal F})\), equipped with the weak topology induced
by the \(J_1\)-topology on \(\tilde\Omega\). We call \(\mathfrak M\) the space
of virtual models.\footnote{The term ``virtual'' reflects that these measures
are introduced as a technical enlarged-space representation in which
discounting is encoded as jump-to-cemetery killing.}
\end{definition}

We now encode a discounting--state-law pair as a single law on
\(\tilde\Omega\). Conditional on the underlying \(\hat\Omega\)-path, the
cumulative discounting process \(A\) is interpreted as the cumulative hazard
of a jump to \(\triangle\).

\begin{definition}[Definition of \(\Phi\)]\label{def:Phi}
Define \(\Phi:\mathfrak U\to\mathfrak M\) as follows. For
\((A,\mathbb Q)\in\mathfrak U\), choose a representative and let \(\xi\) be an
independent \(\mathrm{Exp}(1)\) random variable on an auxiliary extension of
\((\hat\Omega,\hat{\mathcal F},\mathbb Q)\). Set
\begin{align}\label{eq:defkappa}
\kappa:=\inf\{t\ge0:A_t\ge \xi\}.
\end{align}
Since \(A\) is continuous, nondecreasing, and constant after
\(\tau_{\mathrm{exp}}\), \(\kappa\) is well defined. We define
\(\Phi(A,\mathbb Q)\) as the law on \((\tilde\Omega,\tilde{\mathcal F})\) of
the path that follows the canonical \(\hat\Omega\)-path until
\(\kappa\wedge\tau_{\mathrm{exp}}\), jumps to \(\triangle\) at \(\kappa\) on
\(\{\kappa<\tau_{\mathrm{exp}}\}\), and otherwise retains the original
continuous explosion.
\end{definition}

The map \(\Phi\) is well defined on equivalence classes. Indeed, if
\((A,\mathbb Q)\) and \((A',\mathbb Q')\) represent the same element of
\(\mathfrak U\), then \(\mathbb Q=\mathbb Q'\) and \(A,A'\) are
indistinguishable under \(\mathbb Q\). Using the same exponential clock,
\[
\kappa=\inf\{t\ge0:A_t\ge \xi\},
\qquad
\kappa'=\inf\{t\ge0:A'_t\ge \xi\}
\]
are indistinguishable, and hence the induced laws on \(\tilde\Omega\) coincide.

\begin{lemma}
\label{lem:pre_explosion_identification}
Let \(\mathbb Q\) and \(\mathbb Q'\) be probability measures on
\((\hat\Omega,\hat{\mathcal F})\). Suppose that, for every \(t\ge0\) and
every \(B\in\hat{\mathcal F}_t\),
\[
    \mathbb Q\bigl(B\cap\{\tau_{\mathrm{exp}}>t\}\bigr)
    =
    \mathbb Q'\bigl(B\cap\{\tau_{\mathrm{exp}}>t\}\bigr).
\]
Then \(\mathbb Q=\mathbb Q'\) on \(\hat{\mathcal F}\).
\end{lemma}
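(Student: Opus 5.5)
Because every path in \(\hat\Omega\) is absorbed at \(\triangle\) once it reaches it, two laws that agree before explosion must agree afterwards as well. The plan is to show that \(\mathbb Q\) and \(\mathbb Q'\) coincide on each \(\hat{\mathcal F}_t\), and then to pass to \(\hat{\mathcal F}=\sigma(\bigcup_t\hat{\mathcal F}_t)\) with a \(\pi\)–\(\lambda\) (monotone class) argument.

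Fix \(t\ge0\) and \(B\in\hat{\mathcal F}_t\). We split
\[
B=\bigl(B\cap\{\tau_{\mathrm{exp}}>t\}\bigr)\cup\bigl(B\cap\{\tau_{\mathrm{exp}}\le t\}\bigr).
\]
The hypothesis handles the first piece directly. On \(\{\tau_{\mathrm{exp}}\le t\}\) the path satisfies \(\omega(s)=\triangle\) for all \(s\ge\tau_{\mathrm{exp}}(\omega)\), because of absorption and continuity (the path exits every \(D_n\) and therefore converges to \(\triangle\) in the one-point compactification). Hence \(\omega\) is determined by its restriction to \([0,\tau_{\mathrm{exp}}(\omega))\).

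To exploit this, we approximate from below. For \(q<t\) rational we have \(\{\tau_{\mathrm{exp}}\in(q,t]\}\subset\{\tau_{\mathrm{exp}}>q\}\), and the hypothesis at time \(q\) gives equality of \(\mathbb Q\) and \(\mathbb Q'\) on \(\hat{\mathcal F}_q\cap\{\tau_{\mathrm{exp}}>q\}\). The difficulty is that \(B\cap\{\tau_{\mathrm{exp}}\le t\}\) need not lie in \(\hat{\mathcal F}_q\).

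The cleaner route is to work with the measures \(\mu_s(\cdot):=\mathbb Q(\cdot\cap\{\tau_{\mathrm{exp}}>s\})\) restricted to \(\hat{\mathcal F}_s\), and to use the stopped path \(X^{\tau_n}\).
\begin{itemize}
\item First, \(\{\tau_n>s\}\subset\{\tau_{\mathrm{exp}}>s\}\) and \(\{\tau_n>s\}\in\hat{\mathcal F}_s\). So the hypothesis gives \(\mathbb Q(C\cap\{\tau_n>s\})=\mathbb Q'(C\cap\{\tau_n>s\})\) for \(C\in\hat{\mathcal F}_s\).
\item From this we get equality of the laws of the stopped process \(X_{\cdot\wedge\tau_n}\). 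For a finite-dimensional cylinder event of \(X_{\cdot\wedge\tau_n}\) at times \(s_1<\dots<s_k\), decompose according to which interval \([s_{j},s_{j+1})\) contains \(\tau_n\), or whether \(\tau_n\ge s_k\). On \(\{s_j\le\tau_n<s_{j+1}\}\) the event depends only on the path up to \(\tau_n\le s_j\). Each such piece is the difference of events of the form \(C\cap\{\tau_n>s\}\) with \(C\in\hat{\mathcal F}_s\), suitably arranged. Since \(\tau_n<\tau_{\mathrm{exp}}\) wherever \(\tau_n<\infty\), the hypothesis then applies after a short induction over \(j\).
\item Next, \(X_{\cdot\wedge\tau_n}\to X\) pathwise as \(n\to\infty\). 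Before \(\tau_{\mathrm{exp}}\) the stopped path eventually agrees with \(X\) on compacts. After \(\tau_{\mathrm{exp}}\) we have \(X_{\tau_n}\to\triangle\), and \(X\equiv\triangle\) there. Hence \(X\) is a measurable function of the sequence \((X_{\cdot\wedge\tau_n})_n\).
\item The joint law of the sequence \((X_{\cdot\wedge\tau_n})_n\) is likewise identified by the same cylinder argument with several \(n\) at once, using \(\tau_n\le\tau_m\) for \(n\le m\). Therefore \(\mathbb Q=\mathbb Q'\).
\end{itemize}

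The main obstacle is the bookkeeping in the cylinder decomposition. The delicate points are that \(\{\tau_n\le s\}\) is controlled only through events living strictly before explosion, and that \(\tau_n\) is the exit time of an open set, so \(\{\tau_n>s\}\in\hat{\mathcal F}_s\) requires some care; one can use the closed-set hitting time of \(\overline D_n\) complement, or right-continuity, to secure this. Once equality on \(\sigma(X_{\cdot\wedge\tau_n}:n\ge1)=\hat{\mathcal F}\) is established, Dynkin's theorem concludes.
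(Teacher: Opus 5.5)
Your second bullet carries the whole content of the lemma, and the mechanism you describe for it does not work. Consider the pieces of the cylinder decomposition on which $\tau_n$ has already occurred. On $\{s_j\le\tau_n<s_{j+1}\}$ the event is $\{X_{s_1}\in E_1,\dots,X_{s_j}\in E_j,\ X_{\tau_n}\in E_{j+1}\cap\dots\cap E_k\}$. It depends on the exit value $X_{\tau_n}$ at a time $\tau_n\ge s_j$, not ``up to $\tau_n\le s_j$''.

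In general such an event is not a finite Boolean combination of events $C\cap\{\tau_n>u\}$ with $C\in\hat{\mathcal F}_u$. Take finitely many such events with times $u_1<\dots<u_N$. For a path with $\tau_n\in(u_i,u_{i+1}]$, membership in any of them is decided by its restriction to $[0,u_i]$. But two paths that agree on $[0,u_i]$ can leave $D_n$ at different boundary points. So no rearrangement or induction over $j$ closes this step; you need a limiting argument. One option is $X_{\tau_n}=\lim_{u\uparrow\tau_n}X_u$ together with a monotone-class argument on $\hat{\mathcal F}_{\tau_n-}$. Another is the countable union $\{\tau_n\le s\}\cap F=\bigcup_{q\in\mathbb Q_+}F\cap\{\tau_n\le q\wedge s\}\cap\{\tau_{\mathrm{exp}}>q\wedge s\}$ for $F\in\hat{\mathcal F}_{\tau_n}$, followed by an intersection-stability argument. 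That union is valid on $\{X_0\in D\}$. On $\{X_0=\triangle\}$ one has $\tau_n=\tau_{\mathrm{exp}}=0$, so your claim $\tau_n<\tau_{\mathrm{exp}}$ needs that exception. Calling this ``bookkeeping'' understates it: as written, the step is unproved.

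Since a $\pi$--$\lambda$ step is unavoidable anyway, the detour through stopped processes is unnecessary, and this is how the paper proceeds. Let $\mathcal C:=\{B\cap\{\tau_{\mathrm{exp}}>t\}:t\ge0,\ B\in\hat{\mathcal F}_t\}$; $\mathbb Q$ and $\mathbb Q'$ agree on $\mathcal C$ by hypothesis.

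\begin{itemize}
\item $\mathcal C$ is already a $\pi$-system, because $(B\cap\{\tau_{\mathrm{exp}}>t\})\cap(B'\cap\{\tau_{\mathrm{exp}}>t'\})=(B\cap B')\cap\{\tau_{\mathrm{exp}}>t\vee t'\}$ with $B\cap B'\in\hat{\mathcal F}_{t\vee t'}$.
\item $\mathcal C$ generates $\hat{\mathcal F}$. Absorption and continuity, which you already invoke, give $\{\tau_{\mathrm{exp}}>u\}=\{X_u\in D\}$. Hence $\{X_u\in E\cap D\}=\{X_u\in E\cap D\}\cap\{\tau_{\mathrm{exp}}>u\}\in\mathcal C$, and $\{X_u=\triangle\}=\{\tau_{\mathrm{exp}}>u\}^c$.
\end{itemize}

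Both are probability measures, so Dynkin's theorem yields $\mathbb Q=\mathbb Q'$ on $\sigma(\mathcal C)=\hat{\mathcal F}$ in one step. In particular, the obstacle you raised at the outset, that $B\cap\{\tau_{\mathrm{exp}}\le t\}$ is not directly covered by the hypothesis, never has to be handled event by event.
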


\begin{proof}
Let \(\mathcal C:=\{B\cap\{\tau_{\mathrm{exp}}>t\}:t\ge0,\ B\in\hat{\mathcal F}_t\}\).
Then \(\mathcal C\) is a \(\pi\)-system, since intersections are of the form \((B\cap C)\cap\{\tau_{\mathrm{exp}}>t\vee s\}\), where \(B\cap C\in\hat{\mathcal F}_{t\vee s}\).
Moreover, \(\mathcal C\) generates \(\hat{\mathcal F}\): for every
\(u\ge0\) and Borel \(E\subset\hat D\), the event \(\{X_u\in E\}\) is obtained
from \(\{X_u\in E\cap D\}\cap\{\tau_{\mathrm{exp}}>u\}\in\mathcal C\) and
\(\{\tau_{\mathrm{exp}}\le u\}=\{\tau_{\mathrm{exp}}>u\}^c\). Since the
coordinate maps generate \(\hat{\mathcal F}\), we have
\(\sigma(\mathcal C)=\hat{\mathcal F}\). The assumption and the
\(\pi\)-\(\lambda\) theorem now imply \(\mathbb Q=\mathbb Q'\).
\end{proof}

For a process \(A\) on \(\hat\Omega\), define its lift to \(\tilde\Omega\) as
follows. For \(\tilde\omega\in\tilde\Omega\), let the de-killed recovery
\(\mathfrak r(\tilde\omega)\in\hat\Omega\) be
\begin{align}\label{eq:defdekilledmap}
\mathfrak r(\tilde\omega)(s)
:=
\begin{cases}
\tilde\omega(s),
& s<\tau_{\mathrm{kill}}(\tilde\omega),\\[1mm]
\displaystyle\lim_{u\uparrow\tau_{\mathrm{kill}}(\tilde\omega)}
\tilde\omega(u),
& s\ge \tau_{\mathrm{kill}}(\tilde\omega),
\end{cases}
\end{align}
with \(\mathfrak r(\tilde\omega)=\tilde\omega\) if
\(\tau_{\mathrm{kill}}(\tilde\omega)=\infty\). The lifted process is
\[
\tilde A_t(\tilde\omega)
:=
A_{t\wedge\tau_{\mathrm{kill}}(\tilde\omega)}
\bigl(\mathfrak r(\tilde\omega)\bigr).
\]
If \(A\) is continuous, nondecreasing, and predictable on \(\hat\Omega\), then
\(\tilde A\) has the same properties on \(\tilde\Omega\).

\begin{theorem}[Properties of $\Phi$]\label{thm:Phi_properties}
Let $\Phi:\mathfrak U\to\mathfrak M$ be the mapping in
Definition~\ref{def:Phi}. Let $(A,\mathbb Q)\in\mathfrak U$ and set
$\mathbb P:=\Phi(A,\mathbb Q)$. Then the following hold.

\begin{enumerate}[label=(\roman*), ref=(\roman*)]
    \item\label{thm:Phi_properties_1} For every $t\ge0$ and every bounded
    $\tilde{\mathcal F}_t$-measurable random variable $Y$,
    \begin{equation}\label{eq:Phi_basic_identity_prop}
        \mathbb E^{\mathbb P}
        \left[
            Y\,\mathbb I_{\{\tau_\infty>t\}}
        \right]
        =
        \mathbb E^{\mathbb Q}
        \left[
            e^{-A_t}\hat Y\,
            \mathbb I_{\{\tau_{\mathrm{exp}}>t\}}
        \right],
    \end{equation}
    where $\hat Y$ denotes the restriction of $Y$ to $\hat\Omega$.

    \item\label{thm:Phi_properties_2} Under $\mathbb P$, the killing time $\tau_{\mathrm{kill}}$
    is totally inaccessible. More precisely, the predictable compensator
    of $N_t:=\mathbb I_{\{\tau_{\mathrm{kill}}\le t\}}$ is the lifted
    process
    \[
        \tilde A_t(\tilde\omega)
        :=
        A_{t\wedge\tau_{\mathrm{kill}}(\tilde\omega)}
        \bigl(\mathfrak r(\tilde\omega)\bigr).
    \]
    Hence, \(\Phi(\mathfrak U)\subseteq\{\mathbb P\in\mathfrak M:\tau_{\mathrm{kill}}\text{ is totally inaccessible under }\mathbb P\}\).
    \item\label{thm:Phi_properties_3} The mapping $\Phi$ is injective.
\end{enumerate}
\end{theorem}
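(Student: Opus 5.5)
The plan is to realize \(\mathbb P=\Phi(A,\mathbb Q)\) explicitly on the product space \(\hat\Omega\times[0,\infty)\). This space carries \(\mathbb Q\otimes\mathrm{Exp}(1)\), with \(\xi\) the second coordinate. The killed path is \(\Psi(\omega,\xi)\in\tilde\Omega\), and \(\mathbb P=(\mathbb Q\otimes\mathrm{Exp}(1))\circ\Psi^{-1}\). All three items then reduce to Fubini computations in \(\xi\) for fixed \(\omega\), together with standard facts on compensators.

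\emph{Item \ref{thm:Phi_properties_1}.} For fixed \(t\), the event \(\{\tau_\infty>t\}\) pulls back under \(\Psi\) to \(\{\tau_{\mathrm{exp}}(\omega)>t,\ \kappa>t\}=\{\tau_{\mathrm{exp}}(\omega)>t,\ A_t(\omega)<\xi\}\). This uses the continuity and monotonicity of \(A\) and the fact that \(A_t<\infty\) before \(\tau_{\mathrm{exp}}\). On this event the paths \(\Psi(\omega,\xi)\) and \(\omega\) coincide on \([0,t]\). Since \(Y\) is \(\tilde{\mathcal F}_t\)-measurable, a Galmarino-type argument on the canonical space gives \(Y(\Psi(\omega,\xi))=\hat Y(\omega)\) there. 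Integrating first in \(\xi\) gives \(\mathrm{Exp}(1)(\xi>A_t(\omega))=e^{-A_t(\omega)}\), which is exactly \eqref{eq:Phi_basic_identity_prop}.

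\emph{Item \ref{thm:Phi_properties_2}.} The lift \(\tilde A\) is continuous, nondecreasing and adapted, hence predictable. It therefore suffices to show that \(N-\tilde A\) is a \(\mathbb P\)-martingale. The first step is to replace \(\tilde A\) by its stopped version, localizing with \(\tau_n\) and with \(\{\tilde A\le M\}\) so that everything is integrable.

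Fix \(s<t\) and \(B\in\tilde{\mathcal F}_s\). Only the part of \(B\) inside \(\{\tau_\infty>s\}\) contributes, so the same reduction as in \ref{thm:Phi_properties_1} gives
\[
\mathbb E^{\mathbb P}[\mathbb I_B(N_t-N_s)]
=\mathbb E^{\mathbb Q}\bigl[\hat{\mathbb I}_B\,(e^{-A_s}-e^{-A_t})\bigr].
\]
Similarly,
\[
\mathbb E^{\mathbb P}\Bigl[\mathbb I_B\int_s^t \mathbb I_{\{\tau_{\mathrm{kill}}\ge u\}}\,d\tilde A_u\Bigr]
=\mathbb E^{\mathbb Q}\Bigl[\hat{\mathbb I}_B\int_s^t e^{-A_u}\,dA_u\Bigr].
\]
The two right-hand sides agree by the chain rule for the continuous function \(A\). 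On explosion events, \(A_{\tau_{\mathrm{exp}}}=\infty\) forces \(\kappa<\tau_{\mathrm{exp}}\) almost surely, so no mass escapes. A point process whose compensator is continuous has a totally inaccessible jump time, by the Dellacherie--Meyer criterion. Hence \(\tau_{\mathrm{kill}}\) is totally inaccessible under \(\mathbb P\).

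\emph{Item \ref{thm:Phi_properties_3}.} Suppose \(\Phi(A,\mathbb Q)=\Phi(A',\mathbb Q')=\mathbb P\). By uniqueness of the Doob--Meyer compensator, \(\tilde A=\tilde A'\) on \([0,\tau_{\mathrm{kill}}]\), \(\mathbb P\)-a.s. Apply \ref{thm:Phi_properties_1} with \(Y=\mathbb I_B\,e^{\tilde A_t\wedge M}\) and let \(M\uparrow\infty\) by monotone convergence. This yields
\[
\mathbb Q(B\cap\{\tau_{\mathrm{exp}}>t\})=\mathbb Q'(B\cap\{\tau_{\mathrm{exp}}>t\})
\quad\text{for all } B\in\hat{\mathcal F}_t .
\]
Lemma~\ref{lem:pre_explosion_identification} then gives \(\mathbb Q=\mathbb Q'\).

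It remains to show that \(A\) and \(A'\) are indistinguishable. The identity \(\tilde A=\tilde A'\) holds on a \(\mathbb P\)-full set of killed paths. Transfer it back to \(\mathbb Q\)-a.e.\ \(\omega\): the weight \(e^{-A_t}\) is strictly positive before \(\tau_{\mathrm{exp}}\), so \(A=A'\) on \([0,\tau_{\mathrm{exp}})\), \(\mathbb Q\)-a.s. Both processes equal \(\infty\) from \(\tau_{\mathrm{exp}}\) on, so they are indistinguishable.

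I expect this last transfer in \ref{thm:Phi_properties_3}, from \(\mathbb P\)-a.s.\ equality of the lifted processes to \(\mathbb Q\)-indistinguishability, to be the main technical point. The integrability bookkeeping around \(e^{A_t}\) near explosion also needs care. I would handle it by working on \(\{\tau_n>t\}\) and with the truncations \(\tilde A\wedge M\) throughout.
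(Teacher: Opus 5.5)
Your proposal is correct and follows essentially the same route as the paper. It uses the same three ingredients:
\begin{itemize}
\item in (i), the realization of $\mathbb P$ through an independent $\mathrm{Exp}(1)$ clock, integrating in $\xi$ for a fixed path;
\item in (ii), the identity $e^{-A_s}-e^{-A_t}=\int_s^t e^{-A_u}\,dA_u$ to identify $\tilde A$ as the compensator;
\item in (iii), compensator uniqueness, item (i) applied with truncated weights $e^{\tilde A_t}\wedge M$, Lemma~\ref{lem:pre_explosion_identification}, and strict positivity of $e^{-A}$ before $\tau_{\mathrm{exp}}$.
\end{itemize}
The differences are cosmetic. You test against $B\in\tilde{\mathcal F}_s$ instead of conditioning on the whole underlying path. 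Your localization in (ii) is unnecessary, since $\mathbb E^{\mathbb P}[\tilde A_t]=\mathbb E^{\mathbb Q}[1-e^{-A_t}]\le 1$. In (iii), for $B\in\hat{\mathcal F}_t$ you should name an $\tilde{\mathcal F}_t$-measurable extension such as $\{\mathfrak r(\tilde X)\in B\}$, as the paper does.
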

\begin{proof}
Work on the auxiliary probability space carrying the canonical process on
\((\hat\Omega,\hat{\mathcal F},\mathbb Q)\) and an independent
\(\mathrm{Exp}(1)\) random variable \(\xi\). 
Recall the stopping time \(\kappa\) in \eqref{eq:defkappa}.
We first prove \ref{thm:Phi_properties_1}. On
\(\{\tau_\infty>t\}=\{\kappa>t\}\cap\{\tau_{\mathrm{exp}}>t\}\), the enlarged
path agrees with the original \(\hat\Omega\)-path up to time \(t\), so
\(Y=\hat Y\). Since \(\mathbb P(\kappa>t\mid\hat{\mathcal F}_\infty)=\mathbb P(\xi>A_t\mid\hat{\mathcal F}_\infty)=e^{-A_t}\), we obtain
\[
\mathbb E^{\mathbb P}
\left[
Y\mathbb I_{\{\tau_\infty>t\}}
\right]
=
\mathbb E^{\mathbb Q}
\left[
e^{-A_t}\hat Y
\mathbb I_{\{\tau_{\mathrm{exp}}>t\}}
\right].
\]

We next prove \ref{thm:Phi_properties_2}. Let
\(N_t:=\mathbb I_{\{\tau_{\mathrm{kill}}\le t\}}\). On the auxiliary space, set \(\bar N_t:=\mathbb I_{\{\kappa\le t,\ \kappa<\tau_{\mathrm{exp}}\}}\) and \(\bar A_t:=A_{t\wedge\kappa\wedge\tau_{\mathrm{exp}}}\).
Under the construction map, \(\bar N\) and \(\bar A\) are the pullbacks of
\(N\) and \(\tilde A\), respectively. Let
\((\bar{\mathcal G}_t)_{t\ge0}\) be the pullback of
\((\tilde{\mathcal F}_t)_{t\ge0}\). We show that
\(\bar N-\bar A\) is a \((\bar{\mathcal G}_t)\)-martingale.

Fix \(0\le s\le t\) and let \(H\) be bounded and
\(\bar{\mathcal G}_s\)-measurable. On
\(\{\kappa\le s\}\cup\{\tau_{\mathrm{exp}}\le s\}\), both \(\bar N\) and
\(\bar A\) are constant on \([s,t]\). On the complement, the enlarged path
agrees with the underlying \(\hat\Omega\)-path up to time \(s\), so \(H\) may
be written as some \(\hat{\mathcal F}_s\)-measurable \(\hat H\). Conditional
on the underlying path, put \(a:=A_s\) and \(b:=A_{t\wedge\tau_{\mathrm{exp}}}\).
Then
\[
\bar N_t-\bar N_s=\mathbb I_{\{a<\xi\le b\}},
\qquad
\bar A_t-\bar A_s
=
\int_s^{t\wedge\tau_{\mathrm{exp}}}
\mathbb I_{\{\xi>A_u\}}\,dA_u
\quad\text{on }\{\xi>a\}.
\]
Using the independence of \(\xi\) and
\(\mathbb P(\xi>u)=e^{-u}\),
\[
\mathbb E[\bar N_t-\bar N_s\mid\hat{\mathcal F}_\infty]
=
e^{-a}-e^{-b}
=
\mathbb E[\bar A_t-\bar A_s\mid\hat{\mathcal F}_\infty],
\]
where the second equality uses the Stieltjes chain rule for the continuous
finite-variation process \(A\). Hence \(\mathbb E\left[H\{(\bar N_t-\bar A_t)-(\bar N_s-\bar A_s)\}\right]=0\).
Thus \(\bar N-\bar A\) is a martingale, and pushing this identity forward gives
that \(N-\tilde A\) is a \(\mathbb P\)-martingale. Therefore \(\tilde A\) is
the predictable compensator of \(N\). Since \(\tilde A\) is continuous,
\(\tau_{\mathrm{kill}}\) is totally inaccessible.

It remains to prove \ref{thm:Phi_properties_3}. Suppose \(\Phi(A,\mathbb Q)=\Phi(A',\mathbb Q')=:\mathbb P\).
By \ref{thm:Phi_properties_2} and uniqueness of predictable compensators, the
lifted processes \(\tilde A\) and \(\tilde A'\) are indistinguishable under
\(\mathbb P\).

We first identify the state laws. Fix \(t\ge0\) and
\(B\in\hat{\mathcal F}_t\). Since the de-killing map
\(\mathfrak r:\tilde\Omega\to\hat\Omega\) is adapted,
\(\{\mathfrak r(\tilde X)\in B\}\in\tilde{\mathcal F}_t\). For \(m\ge1\), set \(Y_m:=(e^{\tilde A_t}\wedge m)\mathbb I_{\{\mathfrak r(\tilde X)\in B\}}\) and \(Y'_m:=(e^{\tilde A'_t}\wedge m)\mathbb I_{\{\mathfrak r(\tilde X)\in B\}}\).
Since \(Y_m=Y'_m\), \(\mathbb P\)-a.s., applying
\ref{thm:Phi_properties_1} to both \((A,\mathbb Q)\) and
\((A',\mathbb Q')\) gives
\[
\mathbb E^{\mathbb Q}
\left[
e^{-A_t}(e^{A_t}\wedge m)
\mathbb I_B
\mathbb I_{\{\tau_{\mathrm{exp}}>t\}}
\right]
=
\mathbb E^{\mathbb Q'}
\left[
e^{-A'_t}(e^{A'_t}\wedge m)
\mathbb I_B
\mathbb I_{\{\tau_{\mathrm{exp}}>t\}}
\right].
\]
Letting \(m\to\infty\) yields \(\mathbb Q\bigl(B\cap\{\tau_{\mathrm{exp}}>t\}\bigr)=\mathbb Q'\bigl(B\cap\{\tau_{\mathrm{exp}}>t\}\bigr)\).
By Lemma~\ref{lem:pre_explosion_identification}, \(\mathbb Q=\mathbb Q'\).

It remains to identify the discounting processes under the common law
\(\mathbb Q\). For rational \(q\ge0\), set
\(B_q:=\{A_q\ne A'_q\}\in\hat{\mathcal F}_q\). Since
\(\tilde A_q=\tilde A'_q\), \(\mathbb P\)-a.s.,
\[
\mathbb P
\left(
\{\mathfrak r(\tilde X)\in B_q\}
\cap
\{\tau_\infty>q\}
\right)=0.
\]
Applying \ref{thm:Phi_properties_1} with
\(Y=\mathbb I_{\{\mathfrak r(\tilde X)\in B_q\}}\), we get
\[
\mathbb E^{\mathbb Q}
\left[
e^{-A_q}\mathbb I_{B_q}
\mathbb I_{\{\tau_{\mathrm{exp}}>q\}}
\right]
=0.
\]
Since \(e^{-A_q}>0\) on \(\{\tau_{\mathrm{exp}}>q\}\), \(\mathbb Q\bigl(B_q\cap\{\tau_{\mathrm{exp}}>q\}\bigr)=0\) for every rational \(q\ge0\).
Thus, outside a single \(\mathbb Q\)-null set, \(A_q=A'_q\) for every rational
\(q<\tau_{\mathrm{exp}}\). By continuity, the equality extends to every
\(s<\tau_{\mathrm{exp}}\); since both processes are constant after
\(\tau_{\mathrm{exp}}\), \(A\) and \(A'\) are indistinguishable under
\(\mathbb Q\). Hence \((A,\mathbb Q)\) and \((A',\mathbb Q')\) represent the
same element of \(\mathfrak U\), so \(\Phi\) is injective.
\end{proof}

By Theorem~\ref{thm:Phi_properties}, each element of pair space $\mathfrak U$ can be identified canonically with a virtual model in $\mathfrak M$. 
This makes it natural to view the quotient pair space as a subset of $\mathfrak M$, and to equip it with the topology and measurable structure induced by $\mathfrak M$. 
We now formalize this construction.

\begin{definition}[Topology and measurability on $\mathfrak U$]\label{def:U_topology}
A subset $K\subset\mathfrak U$ is called
\begin{itemize}
    \item \emph{weakly compact} if $\Phi(K)$ is weakly compact in $\mathfrak M$;
    \item \emph{measurable} if $\Phi(K)$ is measurable in $\mathfrak M$.
\end{itemize}
Thus $\mathfrak U$ is equipped with the topology and measurable structure induced from $\mathfrak M$ through the canonical embedding $\Phi$.
\end{definition}

\subsection{Conditioning and Concatenation}
\label{subsec:conditioning_concatenation}

We now define conditioning and concatenation directly on the pair space \(\mathfrak U\) and on the virtual model space \(\mathfrak M\). 
We then show that the embedding \(\Phi:\mathfrak U\to\mathfrak M\) preserves both operations.

Throughout this subsection, \(\mathfrak r:\tilde\Omega\to\hat\Omega\) denotes the de-killing map in \eqref{eq:defdekilledmap}.
If \(\tau\) is a stopping time on \((\hat\Omega,\hat{\mathcal F},(\hat{\mathcal F}_t)_{t\ge0})\), we denote by
\[
    \tilde\tau(\tilde\omega)
    :=
    \tau(\mathfrak r(\tilde\omega))
    \wedge
    \tau_{\mathrm{kill}}(\tilde\omega)
\]
its natural extension to the enlarged path space.

For \(t\ge0\), and for paths \(\omega,\eta\in\hat\Omega\) satisfying \(\eta(s)=\omega(t)\) for \(0\le s\le t\), define the concatenated continuous path \(\omega\otimes_t\eta\in\hat\Omega\) by
\[
    (\omega\otimes_t\eta)(s)
    :=
    \begin{cases}
        \omega(s), & 0\le s<t,\\
        \eta(s), & s\ge t.
    \end{cases}
\]
The same notation is used on \(\tilde\Omega\), with the convention that if the first path has already reached the cemetery state before \(t\), then the concatenated path stays at \(\triangle\) thereafter.

We first define the conditioning in the pair space $\mathfrak U$.

\begin{definition}[Conditioning in the pair space]
\label{def:conditioning_pair_space}
Let \(u=(A,\mathbb Q)\in\mathfrak U\), and let \(\tau\) be a finite
\((\hat{\mathcal F}_t)_{t\ge0}\)-stopping time. Let
\((\mathbb Q_{\tau,\omega})_{\omega\in\hat\Omega}\) be a regular
conditional distribution of \(\mathbb Q\) given \(\hat{\mathcal F}_\tau\).
For \(\mathbb Q\)-a.e. \(\omega\), define the restarted conditional law
\(\mathbb Q^{\tau,\omega}\) as the law of the path which is kept fixed at
\(\omega(\tau(\omega))\) up to time \(\tau(\omega)\) and then follows the
future path under \(\mathbb Q_{\tau,\omega}\). Equivalently,
\(\mathbb Q^{\tau,\omega}\) is the pushforward of
\(\mathbb Q_{\tau,\omega}\) by the map
\[
    R_{\tau,\omega}(\eta)(s)
    :=
    \begin{cases}
        \omega(\tau(\omega)), & 0\le s\le \tau(\omega),\\
        \eta(s), & s>\tau(\omega).
    \end{cases}
\]
The conditioned discounting process is defined by
\[
    A^{\tau,\omega}_s(\zeta)
    :=
    A_s\bigl(\omega\otimes_{\tau(\omega)}\zeta\bigr)
    -
    A_{s\wedge\tau(\omega)}(\omega),
    \qquad s\ge0.
\]
Then the conditioned pair is \(u^{\tau,\omega}:=(A^{\tau,\omega},\mathbb Q^{\tau,\omega})\in\mathfrak U\) for \(\mathbb Q\)-a.e. \(\omega\), with the convention that if
\(\tau(\omega)\ge\tau_{\mathrm{exp}}(\omega)\), then
\(u^{\tau,\omega}=(0,\delta_\triangle)\).
\end{definition}

We can also define the concatenation in the pair space $\mathfrak U$.

\begin{definition}[Concatenation in the pair space]
\label{def:concatenation_pair_space}
Let \(u=(A,\mathbb Q)\in\mathfrak U\), let \(\tau\) be a finite
\((\hat{\mathcal F}_t)_{t\ge0}\)-stopping time, and let \(\nu:\hat\Omega\to\mathfrak U\), \(\nu(\omega):=(A^\omega,\mathbb Q^\omega)\) be an \(\hat{\mathcal F}_\tau\)-measurable kernel such that
\(\mathbb Q^\omega(\{X_s=\omega(\tau(\omega)),\,s\in[0,\tau(\omega)]\})=1\).
The concatenated state law \(\mathbb Q\otimes_\tau\mathbb Q^\cdot\) is defined by
\[
(\mathbb{Q}\otimes_\tau \mathbb{Q}^\cdot)(B)
:=
\int_{\hat{\Omega}}
\mathbb{Q}^\omega\!\left(B^{\tau,\omega}\right)\,\mathbb{Q}(d\omega),
\qquad B\in\hat{\mathcal{F}},
\]
where $B^{\tau,\omega}
:=
\{\eta \in \hat{\Omega} : \omega \otimes_{\tau(\omega)} \eta \in B\}$.
The concatenated discounting process is defined by
\[
    (A\otimes_\tau A^\cdot)_s(\omega)
    :=
    A_{s\wedge\tau(\omega)}(\omega)
    +
    A^\omega_s(\omega)
    -
    A^\omega_{s\wedge\tau(\omega)}(\omega),
    \qquad s\ge0.
\]
The pair $u\otimes_\tau\nu:=(A\otimes_\tau A^\cdot,\mathbb Q\otimes_\tau\mathbb Q^\cdot)$ is called the concatenation of \(u\) with \(\nu\) at \(\tau\).
\end{definition}

Finally, we define the conditioning and concatenation in the virtual model space $\mathfrak M$.

\begin{definition}[Conditioning and concatenation in the virtual model space]
\label{def:conditioning_concatenation_virtual_space}
Let \(\mathbb P\in\mathfrak M\), and let \(\tilde\tau\) be a finite
\((\tilde{\mathcal F}_t)_{t\ge0}\)-stopping time.

\begin{enumerate}[label=(\roman*)]
    \item Let \((\mathbb P_{\tilde\tau,\tilde\omega})_{\tilde\omega}\)
    be a regular conditional distribution of \(\mathbb P\) given
    \(\tilde{\mathcal F}_{\tilde\tau}\). The restarted conditional law
    \(\mathbb P^{\tilde\tau,\tilde\omega}\) is the pushforward of
    \(\mathbb P_{\tilde\tau,\tilde\omega}\) by the map
    \[
        \tilde R_{\tilde\tau,\tilde\omega}(\tilde\eta)(s)
        :=
        \begin{cases}
            \tilde\omega(\tilde\tau(\tilde\omega)),
            & 0\le s\le\tilde\tau(\tilde\omega),\\
            \tilde\eta(s),
            & s>\tilde\tau(\tilde\omega).
        \end{cases}
    \]
    If \(\tilde\tau(\tilde\omega)\ge\tau_\infty(\tilde\omega)\), we set
    \(\mathbb P^{\tilde\tau,\tilde\omega}=\delta_\triangle\).

    \item Let \(\tilde\nu:\tilde\Omega\to\mathfrak M\) be a
    \(\tilde{\mathcal F}_{\tilde\tau}\)-measurable kernel.
    The concatenation \(\mathbb P\otimes_{\tilde\tau}\tilde\nu\) is defined by
    \[
(\mathbb{P}\otimes_{\tilde\tau} \mathbb{P}^\cdot)(\tilde B)
:=
\int_{\tilde{\Omega}}
\mathbb{P}^\omega\!(\tilde B^{\tilde\tau,\tilde\omega})\,\mathbb{P}(d\tilde\omega),
\qquad\tilde B\in\tilde{\mathcal{F}},
\]
where $\tilde B^{\tilde\tau,\tilde\omega}
:=
\{\tilde\eta \in \tilde{\Omega} : \tilde\omega \otimes_{\tilde\tau(\tilde\omega)} \tilde\eta \in\tilde B\}$
\end{enumerate}
\end{definition}

The following proposition states that the canonical embedding $\Phi:\mathfrak U\to\mathfrak M$ preserves both the conditioning and concatenation.

\begin{proposition}
\label{prop:Phi_preserves_dynamic_operations}
Let \(u=(A,\mathbb Q)\in\mathfrak U\), and set
\(\mathbb P:=\Phi(u)\). Let \(\tau\) be a finite
\((\hat{\mathcal F}_t)_{t\ge0}\)-stopping time and set \(\tilde\tau:=(\tau\circ\mathfrak r)\wedge\tau_{\mathrm{kill}}\).
Then the following hold.

\begin{enumerate}[label=(\roman*), ref=(\roman*)]
    \item\label{prop:Phi_preserves_conditioning}
    For \(\mathbb P\)-a.e. \(\tilde\omega\), \(\mathbb P^{\tilde\tau,\tilde\omega}
    =
    \Phi\bigl(u^{\tau,\mathfrak r(\tilde\omega)}\bigr)\)
    with both sides understood as \(\delta_\triangle\) on
    \(\{\tilde\tau\ge\tau_\infty\}\).

    \item\label{prop:Phi_preserves_concatenation}
    Let \(\nu:\hat\Omega\to\mathfrak U\) be an
    \(\hat{\mathcal F}_\tau\)-measurable kernel. Define
    \[
    \tilde\nu(\tilde\omega)
    :=
    \begin{cases}
    \Phi\bigl(\nu(\mathfrak r(\tilde\omega))\bigr),
    & \tilde\tau(\tilde\omega)<\tau_\infty(\tilde\omega),\\[1mm]
    \delta_\triangle,
    & \tilde\tau(\tilde\omega)\ge\tau_\infty(\tilde\omega).
    \end{cases}
    \]
    Then \(\Phi(u\otimes_\tau\nu)
    =
    \mathbb P\otimes_{\tilde\tau}\tilde\nu\).
\end{enumerate}
\end{proposition}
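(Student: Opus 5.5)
The plan is to prove both identities by testing the two laws against a generating class of bounded functionals and using Theorem~\ref{thm:Phi_properties}\ref{thm:Phi_properties_1}. That identity shows that on \(\{\tau_\infty>s\}\), any \(\Phi\)-image law is determined by its pair through \(\mathbb E^{\mathbb Q}[e^{-A_s}\hat Y\mathbb I_{\{\tau_{\mathrm{exp}}>s\}}]\). By the \(\pi\)-system argument of Lemma~\ref{lem:pre_explosion_identification}, transported to \(\tilde\Omega\), a law on \(\tilde\Omega\) is determined by the values \(\mathbb E[Y\mathbb I_{\{\tau_\infty>s\}}]\) for \(s\ge0\) and bounded \(\tilde{\mathcal F}_s\)-measurable \(Y\). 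The paths absorbed at \(\triangle\) are recovered from complements, as in that lemma. It therefore suffices to compare such quantities on both sides.

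For \ref{prop:Phi_preserves_conditioning}, I would realize \(\mathbb P\) on the auxiliary space \((\hat\Omega\times[0,\infty),\mathbb Q\otimes\mathrm{Exp}(1))\) with clock \(\xi\), exactly as in Definition~\ref{def:Phi}. On \(\{\tilde\tau<\tau_\infty\}=\{\tau<\tau_{\mathrm{exp}}\}\cap\{\xi>A_\tau\}\) the enlarged path agrees with the \(\hat\Omega\)-path up to \(\tau\), so \(\mathfrak r(\tilde\omega)\) restricted to \([0,\tau]\) is the observed history. The key fact is the memoryless property: conditionally on \(\hat{\mathcal F}_\infty\) and \(\{\xi>A_\tau\}\), the residual \(\xi':=\xi-A_\tau\) is again \(\mathrm{Exp}(1)\) and independent of \(\hat{\mathcal F}_\infty\). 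The killing time after \(\tau\) is then \(\inf\{s:A_s-A_{\tau}\ge\xi'\}\), which is the clock built from \(A^{\tau,\omega}\). Consequently, for \(s\ge\tau(\omega)\) and test functionals \(Y\), the conditional expectation of \(Y(\text{restarted path})\mathbb I_{\{\tau_\infty>s\}}\) given \(\tilde{\mathcal F}_{\tilde\tau}\) equals \(\mathbb E^{\mathbb Q_{\tau,\omega}}[e^{-(A_s-A_\tau)}\hat Y\circ R_{\tau,\omega}\,\mathbb I_{\{\tau_{\mathrm{exp}}>s\}}]\). By Theorem~\ref{thm:Phi_properties}\ref{thm:Phi_properties_1} this is the same quantity for \(\Phi(u^{\tau,\omega})\). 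Since \(\tilde{\mathcal F}_{\tilde\tau}\) on this event is generated by \(\hat{\mathcal F}_\tau\) pulled back through \(\mathfrak r\), the family \(\Phi(u^{\tau,\mathfrak r(\tilde\omega)})\) is a version of the restarted regular conditional distribution. On the complement both sides are \(\delta_\triangle\) by convention. Uniqueness of regular conditional distributions up to null sets then gives the a.e. identity.

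For \ref{prop:Phi_preserves_concatenation}, I would compute \(\mathbb E[Y\mathbb I_{\{\tau_\infty>s\}}]\) under \(\Phi(u\otimes_\tau\nu)\) by Theorem~\ref{thm:Phi_properties}\ref{thm:Phi_properties_1}. This gives \(\int\mathbb Q(d\omega)\,\mathbb E^{\mathbb Q^\omega}[e^{-A_{s\wedge\tau}(\omega)}e^{-(A^\omega_s-A^\omega_{s\wedge\tau})}\hat Y(\omega\otimes_\tau\cdot)\mathbb I_{\{\tau_{\mathrm{exp}}>s\}}]\), using the additive pasting of discounting and the integral definition of \(\mathbb Q\otimes_\tau\mathbb Q^\cdot\). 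I would split this on \(\{s\le\tau\}\) and \(\{s>\tau\}\). For the right-hand side \(\mathbb P\otimes_{\tilde\tau}\tilde\nu\), I condition at \(\tilde\tau\) and apply Theorem~\ref{thm:Phi_properties}\ref{thm:Phi_properties_1} twice: once to \(\mathbb P\) up to \(\tau\), which produces the factor \(e^{-A_\tau}\) on survival, and once to \(\Phi(\nu(\mathfrak r\tilde\omega))\) after \(\tau\). The two expressions then coincide by Fubini. Alternatively, I could derive \ref{prop:Phi_preserves_concatenation} from \ref{prop:Phi_preserves_conditioning}: the concatenated measure is characterized by having marginal \(\mathbb P\) on \(\tilde{\mathcal F}_{\tilde\tau}\) and conditionals \(\tilde\nu\). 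Part \ref{prop:Phi_preserves_conditioning} applied to \(u\otimes_\tau\nu\), whose conditioned pair at \(\tau\) is \(\nu(\omega)\) a.s., gives exactly this characterization.

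I expect the main obstacle to be the measurability and identification of \(\tilde{\mathcal F}_{\tilde\tau}\) with the pullback of \(\hat{\mathcal F}_\tau\) together with the killing indicator. This requires showing that \(\tilde\tau\) is a stopping time and that \(\mathfrak r\) is adapted, which is asserted in Theorem~\ref{thm:Phi_properties}. It also requires handling the null sets where \(\tau\ge\tau_{\mathrm{exp}}\) or \(\xi=A_\tau\); the latter is harmless because the conditional law of \(\xi\) has no atoms.
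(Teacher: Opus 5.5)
Your part (i), and the overall mechanism, are the paper's own argument. You realize \(\mathbb P\) on \(\hat\Omega\) with an independent \(\mathrm{Exp}(1)\) clock, identify \(\{\tilde\tau<\tau_\infty\}\) with \(\{\tau<\tau_{\mathrm{exp}},\ \xi>A_\tau\}\), and use memorylessness of \(\xi-A_\tau\). Testing against \(Y\mathbb I_{\{\tau_\infty>s\}}\) via Theorem~\ref{thm:Phi_properties}\ref{thm:Phi_properties_1} and invoking uniqueness of regular conditional distributions only makes that argument more explicit, and it is fine.

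The gap is in (ii), at ``the two expressions then coincide by Fubini.'' The same gap appears in your alternative route, at the claim that the conditioned pair of \(u\otimes_\tau\nu\) at \(\tau\) is \(\nu(\omega)\). Applying Theorem~\ref{thm:Phi_properties}\ref{thm:Phi_properties_1} to \(\Phi(\nu(\omega))\) yields the survival factor \(e^{-A^\omega_s}\). The concatenated discounting instead yields \(e^{-(A^\omega_s-A^\omega_{s\wedge\tau})}\). These agree only if \(A^\omega_{\tau(\omega)}=0\), \(\mathbb Q^\omega\)-a.s., and that is not among the hypotheses: only the state path is required to be frozen before \(\tau(\omega)\). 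If it fails, \(\Phi(\nu(\omega))\) can kill before \(\tau(\omega)\). Then \(\mathbb P\otimes_{\tilde\tau}\tilde\nu\) puts mass on killing exactly at \(\tilde\tau\), while \(\Phi(u\otimes_\tau\nu)\) does not. Strictly, the virtual concatenation is not even well posed in that case, since the continuation path must sit at \(\tilde\omega(\tilde\tau)\) up to \(\tilde\tau\).

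There is a second, related point. Definition~\ref{def:concatenation_pair_space} evaluates \(A^\omega\) on the concatenated path \(\omega\otimes_\tau\eta\), which lies outside the support of \(\mathbb Q^\omega\). So the concatenated discounting is not determined by the class \(\nu(\omega)\in\mathfrak U\). For example, take \(\tau\equiv t_0\) and \(g\ge0\) nonconstant. Then \(A^\omega_s(\zeta)=(s-t_0)^+g(\zeta(0))\) and \(A^\omega_s(\zeta)=(s-t_0)^+g(\omega(t_0))\) agree \(\mathbb Q^\omega\)-a.s. Yet they produce post-\(t_0\) hazard rates \(g(X_0)\) and \(g(X_{t_0})\) respectively in \(u\otimes_\tau\nu\), while \(\mathbb P\otimes_{\tilde\tau}\tilde\nu\) does not depend on the representative. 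Your formula leaves the argument of \(A^\omega\) unspecified; for the identity to hold it must be the restarted path \(\eta\).

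To close the gap, add (or read into the definitions) two things:
\begin{itemize}
\item \(A^\omega=0\) on \([0,\tau(\omega)]\). This is exactly what \ref{item:initialcondition} supplies when \(\nu\) takes values in \(\mathcal U_{\tau(\omega),\omega(\tau(\omega))}\), as in \ref{item:pair_pasting}.
\item The increment \(A^\omega_s-A^\omega_{s\wedge\tau}\) is evaluated on the restarted path.
\end{itemize}
With these, your split on \(\{s\le\tau\}\) and \(\{s>\tau\}\), together with integrating \(\mathbb I_{\{\xi>A_\tau\}}\) against the clock, closes the argument. The paper's proof (``the residual clock \ldots\ kills the continuation pair \(\nu(\omega)\) according to its own cumulative hazard'') makes the same identification without comment. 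So this is a gap you share with the paper, not one created by your route.
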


\begin{proof}
Work on the auxiliary probability space carrying the canonical
\(\hat\Omega\)-path under \(\mathbb Q\) and an independent
\(\mathrm{Exp}(1)\) random variable \(\xi\). Recall the stopping time \(\kappa\) in \eqref{eq:defkappa}.

We first prove \ref{prop:Phi_preserves_conditioning}. On
\(\{\tilde\tau<\tau_\infty\}=\{\tau<\kappa\wedge\tau_{\mathrm{exp}}\}\),
conditioning on the enlarged path up to \(\tilde\tau=\tau\) is the same as
conditioning the underlying \(\hat\Omega\)-path on
\(\hat{\mathcal F}_\tau\), together with the survival event
\(\{\xi>A_\tau\}\). Since \(A_\tau\) is
\(\hat{\mathcal F}_\tau\)-measurable and \(\xi\) is independent of the
underlying path, the future state law is the usual regular conditional law
\(\mathbb Q_{\tau,\omega}\). By the memoryless property,
\(\xi-A_\tau\), conditional on \(\{\xi>A_\tau\}\), is again
\(\mathrm{Exp}(1)\) and independent of the future path. The remaining
cumulative hazard is
\[
A_s\bigl(\omega\otimes_{\tau(\omega)}\zeta\bigr)
-
A_{s\wedge\tau(\omega)}(\omega),
\]
which is precisely the discounting process in \(u^{\tau,\omega}\). Hence the
conditional virtual law is
\(\Phi(u^{\tau,\omega})\), with
\(\omega=\mathfrak r(\tilde\omega)\). On
\(\{\tilde\tau\ge\tau_\infty\}\), the path is already terminal and both sides
are \(\delta_\triangle\). This proves
\ref{prop:Phi_preserves_conditioning}.

For \ref{prop:Phi_preserves_concatenation}, consider the pair
\(u\otimes_\tau\nu\). Its state law first follows \(u=(A,\mathbb Q)\) up to
\(\tau\) and then, conditionally on the past \(\omega\), follows
\(\nu(\omega)=(A^\omega,\mathbb Q^\omega)\). Its cumulative hazard is the
additive process
\[
A_{s\wedge\tau(\omega)}(\omega)
+
A^\omega_s(\omega)
-
A^\omega_{s\wedge\tau(\omega)}(\omega).
\]
Killing this concatenated pair with one exponential clock is equivalent to
killing the initial segment first. If killing occurs before \(\tau\), the
virtual path is sent to \(\triangle\). If the path survives to \(\tau\), then
the residual clock \(\xi-A_\tau\) is, by memorylessness, an independent
\(\mathrm{Exp}(1)\) clock and kills the continuation pair
\(\nu(\omega)\) according to its own cumulative hazard. Therefore the
post-\(\tilde\tau\) virtual continuation is
\(\Phi(\nu(\omega))\) on \(\{\tilde\tau<\tau_\infty\}\) and
\(\delta_\triangle\) otherwise. This is exactly the construction of
\(\mathbb P\otimes_{\tilde\tau}\tilde\nu\). Hence \(\Phi(u\otimes_\tau\nu)
=
\mathbb P\otimes_{\tilde\tau}\tilde\nu\).
\end{proof}

\section{Generalized Martingale Problem on the Virtual Model Space}
\label{sec:GMP_virtual_model_space}

Recall that the uncertainty structure associated with \(G\) is the pair-space
family \(\mathcal U(G)=\{\mathcal U_x(G)\}_{x\in\hat D}\), where, for
\(x\in D\),
\[
\mathcal U_x(G)
=
\Bigl\{
(A^k,\mathbb Q)\in\mathfrak U:
(-k,\gamma)\in\mathcal B_{\mathrm{ad}}(G),\
\mathbb Q\in\mathcal P_x(L^\gamma)
\Bigr\},
\]
and \(\mathcal U_\triangle(G)=\{(0,\delta_\triangle)\}\). Here
\(\gamma=(B,\Sigma)\), \(k=-C\), and \(A^k_t:=\int_0^t k_s\,ds\), where the integral is understood in the extended pathwise sense, with
\(A^k_t=\infty\) for all \(t\ge\tau_{\exp}\) whenever the integral diverges at \(\tau_{\exp}\).

Appendix~\ref{sec:extended_path_and_pair_space} defines the embedding
\(\Phi:\mathfrak U\to\mathfrak M\), which encodes discounting as endogenous
killing on the extended canonical path space. This section gives the
corresponding intrinsic martingale-problem formulation on \(\tilde\Omega\):
the discounting component becomes the compensator of killing, while the
state-law component is described by the martingale problem before killing.

Let \(\beta=(C,B,\Sigma)\in\mathcal B_{\mathrm{ad}}(G)\). We extend \(\beta\)
from \([0,\infty)\times\hat\Omega\) to
\([0,\infty)\times\tilde\Omega\) through the de-killed recovery map
\(\mathfrak r:\tilde\Omega\to\hat\Omega\) by
\[
\tilde\beta(t,\tilde\omega)
:=
\begin{cases}
\beta(t,\mathfrak r(\tilde\omega)),
& t<\tau_{\mathrm{kill}}(\tilde\omega),\\[0.3em]
0,
& t\ge \tau_{\mathrm{kill}}(\tilde\omega).
\end{cases}
\]
We still denote this extension by \(\beta\), and write
\[
L^\beta(t,\tilde\omega,r,p,X)
:=
\frac12\operatorname{tr}\!\left(\Sigma(t,\tilde\omega)X\right)
+
B(t,\tilde\omega)\cdot p
+
C(t,\tilde\omega)r .
\]

\begin{definition}\label{def:mtgproblem_appendix}
Fix \((t,\omega)\in[0,\infty)\times\tilde\Omega\). A probability measure
\(\mathbb P\in\mathfrak M\) solves the generalized \(L^\beta\)-martingale
problem starting from \((t,\omega)\) if:
\begin{enumerate}[label=(\roman*)]
\item
\(\mathbb P(X_s=\omega_s,\ \forall s\in[0,t])=1\);

\item
for every \(f\in C_b^\infty(D)\) and \(n\ge1\), the process
\begin{align}\label{eq:defM_f,nlinearmtg}
M_s^{f,n,\beta}
:=
f\!\left(X_{\bar s_n^t}\right)
\mathbb I_{\{\tau_\infty>\bar s_n^t\}}
-
\int_t^{\bar s_n^t}
L^\beta\!\left(
u,X,f(X_u),\nabla f(X_u),\nabla^2 f(X_u)
\right)
\mathbb I_{\{\tau_\infty>u\}}\,du,
\end{align}
where \(\bar s_n^t:=(s\wedge\tau_n)\vee t\),
is a \(\mathbb P\)-martingale with respect to
\((\tilde{\mathcal F}_s)_{s\ge t}\).
\end{enumerate}
We denote the set of such solutions by \(\mathcal P_{t,\omega}(L^\beta)\) and
set
\[
\mathcal P_{t,\omega}(G)
:=
\bigcup_{\beta\in\mathcal B_{\mathrm{ad}}(G)}
\mathcal P_{t,\omega}(L^\beta).
\]
If \(\omega\) is the constant path at \(x\in D\), we write
\[
\mathcal P_{t,x}(L^\beta):=\mathcal P_{t,\omega}(L^\beta),
\quad
\mathcal P_{t,x}(G):=\mathcal P_{t,\omega}(G),
\quad
\mathcal P_x(L^\beta):=\mathcal P_{0,x}(L^\beta),
\quad
\mathcal P_x(G):=\mathcal P_{0,x}(G).
\]
\end{definition}

\subsection{Small-time Exit and Tail Estimates}
\label{subsec:nonexplosive_criterion}

This subsection introduces some useful estimates. 
First, we record a Bernstein-type small-time exit bound.

\begin{proposition}\label{prop:small_time_exit}
Let \(G\) satisfy \ref{item:G1}--\ref{item:G3}.
Fix \(x\in D\) and \(r>0\) such that \(\overline{B_r(x)}\subset D\).
Then there exist constants \(K_r,M_r>0\), depending only on \(x\) and \(r\),
and \(C_d>0\), depending only on the dimension \(d\), such that for every
\(\beta\in\mathcal B_{\mathrm{ad}}(G)\), every
\(t\in(0,r/(4K_r))\),
\begin{align}\label{eq:small_time_exit_bound}
\sup_{y\in B_{r/2}(x)}
\sup_{\mathbb P\in\mathcal P_y(L^\beta)}
\mathbb P\bigl(\tau_{B_r(x)}\le t<\tau_{\rm kill}\bigr)
\le
C_d\exp\!\left(-\frac{r^2}{32M_r^2\,t}\right).
\end{align}
In particular, for every \(\beta\in\mathcal B_{\mathrm{ad}}(G)\),
\[
\lim_{t\downarrow0}
\frac{1}{t}
\sup_{y\in B_{r/2}(x)}
\sup_{\mathbb P\in\mathcal P_y(L^\beta)}
\mathbb P\bigl(\tau_{B_r(x)}\le t<\tau_{\rm kill}\bigr)
=0.
\]
\end{proposition}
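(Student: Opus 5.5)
We will prove the bound by combining local boundedness of the coefficients with a Bernstein-type exponential martingale inequality applied to the stopped process. Fix $x$ and $r$ with $\overline{B_r(x)}\subset D$. By \ref{item:G1}, $G$ is continuous on the compact set $\overline{B_r(x)}\times\mathbb B_1$. Since $A(y)$ is the support set of the sublinear map $G(y,\cdot)$, we get
\[
\sup_{y\in\overline{B_r(x)}}\sup_{V\in A(y)}\|V\|\le \sup_{y\in\overline{B_r(x)},\,U\in\mathbb B_1}|G(y,U)|<\infty .
\]
Hence every admissible $\beta=(C,B,\Sigma)$ satisfies $|B(s,\omega)|\le K_r$ and $\|\Sigma(s,\omega)\|\le M_r^2$ whenever $\omega(s)\in\overline{B_r(x)}$ and $s<\tau_{\mathrm{exp}}(\omega)$, with $K_r,M_r$ independent of $\beta$. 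The $C$-component only enters through killing. The event $\{\tau_{B_r(x)}\le t<\tau_{\rm kill}\}$ concerns the path before killing, and by Theorem~\ref{thm:Phi_properties}\ref{thm:Phi_properties_1} (with $e^{-A}\le1$) its probability is bounded by the $\mathbb Q$-probability that the underlying state path exits $B_r(x)$ by time $t$. So it suffices to bound $\mathbb Q(\tau_{B_r(x)}\le t)$ uniformly over $\mathbb Q\in\mathcal P_y(L^\gamma)$ with $y\in B_{r/2}(x)$.

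Let $\sigma:=\tau_{B_r(x)}$. Using the martingale problem with test functions $f\in C_c^\infty(D)$ that agree with the coordinate maps $z\mapsto z_i$ and $z\mapsto z_iz_j$ on $\overline{B_r(x)}$ (localization is legitimate because $\sigma\le\tau_n$ for $n$ large), we find that
\[
N_s:=X_{s\wedge\sigma}-y-\int_0^{s\wedge\sigma}B_u\,du
\]
is a continuous $\mathbb Q$-martingale with quadratic variation $\langle N^i,N^j\rangle_s=\int_0^{s\wedge\sigma}\Sigma_{ij}(u)\,du$; this is the standard identification as in Remark~\ref{remark:MG}. Since $|y-x|<r/2$, on $\{\sigma\le t\}$ we have $|X_\sigma-y|\ge r/2$. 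For $t<r/(4K_r)$ the drift contributes at most $K_rt<r/4$, so $\sup_{s\le t}|N_s|\ge r/4$.

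Next we treat each coordinate separately. Every $N^i$ satisfies $\langle N^i\rangle_t\le M_r^2t$, so the exponential martingale (Bernstein) inequality gives
\[
\mathbb Q\Bigl(\sup_{s\le t}|N^i_s|\ge \tfrac{r}{4\sqrt d}\Bigr)\le 2\exp\!\Bigl(-\frac{r^2}{32 d M_r^2t}\Bigr).
\]
A union bound over the coordinates, after absorbing $d$ into $M_r$ (or by a direct vector-valued exponential supermartingale argument), yields \eqref{eq:small_time_exit_bound} with $C_d=2d$. The right-hand side is $o(t)$ as $t\downarrow0$, which gives the second statement.

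The main obstacle is the transfer to the enlarged space: we have to make precise that the martingale-problem coordinate martingales can be taken under $\mathbb Q$ on $\hat\Omega$ rather than on the virtual space, and that killing only reduces the relevant probability. This is exactly what Theorem~\ref{thm:Phi_properties}\ref{thm:Phi_properties_1} provides. The remaining point to check is the uniform bound on $A(y)$, which uses only sublinearity and continuity of $G$.
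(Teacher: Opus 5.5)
Your quantitative core is sound and matches the paper's argument. This covers the uniform bound on $A(\cdot)$ over $\overline{B_r(x)}$, where your route via $\|V\|=\sup_{U\in\mathbb B_1}L^V(U)\le\sup_{U\in\mathbb B_1}G(y,U)$ and continuity of $G$ is more explicit than the paper's appeal to hemicontinuity of the support correspondence. It also covers absorbing the drift for $t<r/(4K_r)$ and the coordinatewise Bernstein bound plus union bound, with $d$ absorbed into $M_r$.

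The gap is the reduction to $\mathbb Q$. Theorem~\ref{thm:Phi_properties}\ref{thm:Phi_properties_1} is a statement about $\mathbb P=\Phi(A,\mathbb Q)$ for a \emph{given} pair $(A,\mathbb Q)\in\mathfrak U$. It does not say that an arbitrary $\mathbb P\in\mathcal P_y(L^\beta)$, i.e.\ an arbitrary solution of the virtual martingale problem of Definition~\ref{def:mtgproblem_appendix} on $\tilde\Omega$, has the form $\Phi(A^k,\mathbb Q)$ with $\mathbb Q\in\mathcal P_y(L^\gamma)$. That representation is Lemma~\ref{lem:remove_killing_gmp}, equivalently the reverse inclusion in Theorem~\ref{thm:pair_virtual_identification}. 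That lemma assumes Assumption~\ref{assume:lyapunov}, which is not available here because the proposition assumes only \ref{item:G1}--\ref{item:G3}. Worse, its proof invokes Proposition~\ref{prop:small_time_exit} itself to show that the de-killed law lives on $\hat\Omega$. So, as written, your step ``it suffices to bound $\mathbb Q(\tau_{B_r(x)}\le t)$'' is circular.

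The paper sidesteps this by working under $\mathbb P$ directly. It uses the $L^\beta$-problem with the constant $1$, which identifies the killing compensator $\int k_u\,du$, together with cutoffs of $z_i$ and $z_iz_j$. This shows that, up to $\rho=\tau_{B_r(x)}\wedge\tau_{\rm kill}$, the de-killed coordinate process is $y+\int_0^{\cdot}B_u\,du+M$ with $M$ continuous and $d\langle\theta\cdot M\rangle_u/du\le M_r^2$. Your Bernstein step then applies verbatim.

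If you want to keep the de-killing idea, localize it:
\begin{itemize}
\item Since $k=-C$ is bounded on $\overline{B_r(x)}$, the test function $1$ shows that $Z_s:=\exp\bigl(\int_0^{s\wedge\tau_{\rm kill}}k_u\,du\bigr)\,\mathbb I_{\{\tau_\infty>s\}}$, stopped at $\sigma:=\tau_{B_r(x)}$, is a true $\mathbb P$-martingale.
\item Put $d\mathbb Q^{\mathrm{loc}}:=Z_{t\wedge\sigma}\,d\mathbb P$ on $\tilde{\mathcal F}_{t\wedge\sigma}$, and check by Bayes' formula that the stopped coordinates solve the stopped $L^\gamma$-problem under $\mathbb Q^{\mathrm{loc}}$.
\item Then use
\[
\mathbb P(\sigma\le t<\tau_{\rm kill})
\le
\mathbb E^{\mathbb Q^{\mathrm{loc}}}\bigl[e^{-\int_0^\sigma k_u\,du}\,\mathbb I_{\{\sigma\le t\}}\bigr]
\le
\mathbb Q^{\mathrm{loc}}(\sigma\le t).
\]
\end{itemize}
This is only the first step of Lemma~\ref{lem:remove_killing_gmp}, which needs neither the Lyapunov condition nor the present proposition, so the circularity disappears.
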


\begin{proof}
By the upper hemicontinuity and compactness of the support correspondence
\(z\mapsto A(z)\), the set \(A(z)\) is uniformly bounded on compact set \(\overline{B_r(x)}\).
Hence, enlarging the constants if necessary, there exist
\(K_r,M_r<\infty\) such that, for every
\((c,b,\Sigma)\in A(z)\) with \(z\in\overline{B_r(x)}\), \(|b|\le K_r\) and \(\Sigma\preceq M_r^2 I\).
Let \(y\in B_{r/2}(x)\) and \(\mathbb P\in\mathcal P_y(L^\beta)\) for some
\(\beta=(C,B,\Sigma)\in\mathcal B_{\mathrm{ad}}(G)\). Up to
\(\rho:=\tau_{B_r(x)}\wedge\tau_{\rm kill}\), the coordinate process has the
semimartingale decomposition
\[
X_s-y=\int_0^s B_u\,du+M_s,
\qquad s\le\rho,
\]
with \(|B_u|\le K_r\) and
\(d\langle \theta\cdot M\rangle_u/du\le M_r^2\) for every unit vector
\(\theta\). Therefore, for \(t<r/(4K_r)\),
\[
\{\tau_{B_r(x)}\le t<\tau_{\rm kill}\}
\subset
\left\{\sup_{0\le s\le t}|M_s|\ge \frac r4\right\}.
\]
The standard Bernstein maximal inequality for continuous local martingales
with quadratic variation density bounded by \(M_r^2\) gives
\[
\mathbb P\left(\sup_{0\le s\le t}|M_s|\ge a\right)
\le
C_d\exp\!\left(-\frac{a^2}{2M_r^2t}\right),
\qquad a>0,
\]
after possibly increasing \(M_r\) by a dimension-dependent factor. Taking
\(a=r/4\) yields \eqref{eq:small_time_exit_bound}. The final assertion follows
from \(t^{-1}e^{-c/t}\to0\).
\end{proof}

\begin{proposition}\label{prop:representing_short_time_exit}
Let \(\{\mathcal U_x\}_{x\in\widehat D}\) be a time-homogeneous DUS representing a dynamic sublinear valuation rule \(\{\mathcal T_t\}_{t\ge0}\) satisfying Assumption~\ref{assume:Cinftyindomain}, and let \(\mathcal P_x:=\Phi(\mathcal U_x)\) for \(x\in\widehat D\).
Let \(K\) be a compact set and \(O\) be a bounded open set such that \(K\subset O\subset D\), and define \(\sigma_O:=\inf\{s\ge0:X_s\notin O\}\).
Then
\[
\lim_{h\downarrow0}
\frac1h
\sup_{y\in K}\sup_{\mathbb P\in\mathcal P_y}
\mathbb P(\sigma_O\le h,\ \sigma_O<\tau_\infty)
=0.
\]
\end{proposition}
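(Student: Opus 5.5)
The idea is to bound the exit probability by the value of a single smooth payoff that vanishes near \(K\) and equals one near \(\partial O\). The representation property then gives a bound by \(\mathcal T_h g(y)\), which is \(o(h)\) uniformly on \(K\) because the generator kills such a payoff there. The only probabilistic input is the conditioning stability \ref{item:pair_conditioning}, transported to \(\mathfrak M\) via Proposition~\ref{prop:Phi_preserves_dynamic_operations}. We never use \(\mathcal P_y\subseteq\mathcal P_y(G)\); this matters because the present proposition is itself used to prove Theorem~\ref{thm:maximalrepre}.

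\emph{Construction of the test payoff.} Choose open sets \(K\subset O_1\Subset O_2\Subset O\). Choose \(g\in C_b^\infty(D)\) with \(0\le g\le1\), \(g\equiv0\) on \(O_1\), and \(g\equiv1\) on \(D\setminus O_2\). Then \(g\equiv1\) on a neighbourhood of the compact set \(\partial O\).
\begin{itemize}
\item Since \(J^2_yg=(0,0,0)\) for \(y\in K\), Assumption~\ref{assume:Cinftyindomain} and Theorem~\ref{thm:representationgenerator} give \(\mathcal G[g](y)=G(y,0,0,0)=0\). By local uniform convergence of the generator limit,
\[
\sup_{y\in K}\mathcal T_hg(y)=o(h).
\]
\item Similarly \(-g\in C_b^\infty(D)\). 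For \(z\in\partial O\) we have \(J^2_z(-g)=(-1,0,0)\), so
\[
\sup_{z\in\partial O,\ s\le h}\bigl|\mathcal T_s(-g)(z)+g(z)\bigr|\le C s
\]
with \(C:=\sup_{z\in\partial O}|G(z,-1,0,0)|+1\), for \(h\) small. Since \(g(z)=1\), this yields
\[
\inf_{z\in\partial O,\ s\le h}\bigl(-\mathcal T_s(-g)(z)\bigr)\ge 1-Ch.
\]
\end{itemize}

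\emph{Conditioning step.} Fix \(y\in K\) and \(\mathbb P\in\mathcal P_y\), and put \(\sigma:=\sigma_O\wedge h\), a finite stopping time. On \(\{\sigma_O\le h,\ \sigma_O<\tau_\infty\}\), path continuity before killing gives \(X_\sigma=z\in\partial O\). By \ref{item:pair_conditioning}, Proposition~\ref{prop:Phi_preserves_dynamic_operations} and time-homogeneity, the restarted conditional law is, up to a time shift by \(\sigma(\omega)\), an element of \(\mathcal P_z\). The representation property applied with the nonpositive payoff \(-g\) gives, for any \(\mathbb P'\in\mathcal P_z\),
\[
\mathbb E^{\mathbb P'}\bigl[g(X_s)\mathbb I_{\{\tau_\infty>s\}}\bigr]\ge-\mathcal T_s(-g)(z),
\]
because \(\mathcal T_s(-g)(z)\) is the supremum of \(\mathbb E[-g(X_s)\mathbb I]\) over \(\mathcal P_z\). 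Take \(s=h-\sigma\). On the complementary event the integrand \(g(X_h)\mathbb I_{\{\tau_\infty>h\}}\) is nonnegative. Applying the tower property at \(\tilde{\mathcal F}_\sigma\) therefore gives
\[
\mathcal T_hg(y)\ge\mathbb E^{\mathbb P}\bigl[g(X_h)\mathbb I_{\{\tau_\infty>h\}}\bigr]\ge(1-Ch)\,\mathbb P(\sigma_O\le h,\ \sigma_O<\tau_\infty).
\]
Dividing by \(h(1-Ch)\) and taking suprema over \(y\in K\) and \(\mathbb P\in\mathcal P_y\) gives the claim from the first bullet above.

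\emph{Main obstacle.} The delicate step is the conditioning step. One must check that the regular conditional law at \(\sigma\), restarted and time-shifted, indeed lies in \(\mathcal P_{X_\sigma}\), and that the lower bound above holds for \(\mathbb P\)-a.e.\ \(\omega\) with a measurable dependence on \(\sigma(\omega)\in[0,h]\). This follows from the a.s.\ statement in \ref{item:pair_conditioning} together with time-homogeneity. We also need \(X_{\sigma_O}\in\partial O\), which holds because before \(\tau_\infty\) the path is continuous in \(D\); the jump to \(\triangle\) is excluded by the event \(\{\sigma_O<\tau_\infty\}\). Finally, uniformity in \(z\in\partial O\) uses that the generator limit is locally uniform and that \(\partial O\) is compact in \(D\).
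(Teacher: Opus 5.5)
Your argument is correct, but it takes a different route from the paper. The paper works locally around each \(y_0\in K\). It takes a cutoff \(\psi\) vanishing on \(B_r(y_0)\) and equal to \(1\) off \(B_{2r}(y_0)\), where \(\overline B_{2r}(y_0)\subset O\). It then invokes an ``optional domination inequality'' \(\mathbb E^{\mathbb P}[\mathcal T_{h-\sigma}\psi(X_\sigma)\mathbb I_{\{\sigma\le h,\sigma<\tau_\infty\}}]\le\mathcal T_h\psi(y)\), bounds the upper value \(\mathcal T_s\psi\ge\frac12\) on \(\partial B_{2r}(y_0)\) by strong continuity, and finishes with a finite covering of \(K\).

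The direction of that inequality is expectation of a supremum bounded by a supremum. It is therefore a pasting statement: one must glue \(\varepsilon\)-optimal continuation models after \(\sigma\) via \ref{item:pair_pasting}, which needs a measurable selection of near-maximizers. Conditioning alone only gives the reverse inequality.

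You avoid this entirely. You bound the actual continuation of \(\mathbb P\) from below by the lower value \(\inf_{\mathbb P'\in\mathcal P_z}\mathbb E^{\mathbb P'}[g(X_s)\mathbb I_{\{\tau_\infty>s\}}]=-\mathcal T_s(-g)(z)\). This only requires that the conditioned model lie in \(\mathcal P_{X_\sigma}\), which follows from \ref{item:pair_conditioning} plus time-homogeneity. The upper comparison \(\mathbb E^{\mathbb P}[g(X_h)\mathbb I_{\{\tau_\infty>h\}}]\le\mathcal T_hg(y)\) is then immediate from the representation. Your proof thus uses strictly less structure: no concatenation, no selection, and a single cutoff instead of a covering. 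The cost is evaluating the rule at the second payoff \(-g\). Your generator estimate for \(-g\) on the exit boundary could equally be replaced by strong continuity, as in the paper.

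One small correction concerns your claim that \(\partial O\) is a compact subset of \(D\). The statement only assumes \(O\subset D\), so \(\partial O\) need not lie in \(D\), for example when \(D\) is bounded and \(O\) reaches \(\partial D\). In that case \(\partial O\cap D\) is not compact, and your bound \(-\mathcal T_s(-g)(z)\ge 1-Ch\) uniformly over exit points is not justified by the merely local uniformity of the generator limit.

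The fix is immediate. Choose an open \(O'\) with \(\overline{O_2}\subset O'\) and \(\overline{O'}\subset O\), so that \(\overline{O'}\) is compact in \(D\) and \(g\equiv1\) near \(\partial O'\). Since \(\{\sigma_O\le h,\sigma_O<\tau_\infty\}\subset\{\sigma_{O'}\le h,\sigma_{O'}<\tau_\infty\}\), you can run the same argument with \(\sigma_{O'}\). This is exactly what the paper's choice \(\overline B_{2r}(y_0)\subset O\) achieves. In the paper's application (Lemma~\ref{lem:representing_one_step_drift}) one has \(O\subset D_n\), hence \(\overline O\subset D\), so the issue does not arise there.
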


\begin{proof}
It is enough to prove the estimate locally and then use a finite covering of
\(K\). Fix \(y_0\in K\), and choose \(r>0\) such that \(\overline B_{2r}(y_0)\subset O\).
Choose \(\psi\in C_b^\infty(D)\) such that
\[
0\le \psi\le1,\qquad
\psi=0\ \text{on }B_r(y_0),\qquad
\psi=1\ \text{on }D\setminus B_{2r}(y_0).
\]
For \(y\in B_{r/2}(y_0)\), Assumption~\ref{assume:Cinftyindomain} gives \(\psi(y)=0\) and \(\mathcal G[\psi](y)=0\).
Hence, by the generator identity and locally uniform convergence, \(\sup_{y\in B_{r/2}(y_0)}\mathcal T_h\psi(y)=o(h)\).

Let \(\sigma:=\inf\{s\ge0:X_s\notin B_{2r}(y_0)\}\).
By stability under conditioning and concatenation, the usual optional
domination inequality holds, thus we get
\[
\mathbb E^{\mathbb P}
\!\left[
\mathcal T_{h-\sigma}\psi(X_\sigma)
\mathbb I_{\{\sigma\le h,\ \sigma<\tau_\infty\}}
\right]
\le
\mathcal T_h\psi(y).
\]
On the event \(\{\sigma\le h,\ \sigma<\tau_\infty\}\), one has
\(X_\sigma\in \partial B_{2r}(y_0)\). Since \(\psi=1\) on
\(\partial B_{2r}(y_0)\), strong continuity of \(\{\mathcal T_t\}_{t\ge0}\) gives
\[
\inf_{0\le s\le h}
\inf_{z\in\partial B_{2r}(y_0)}
\mathcal T_s\psi(z)\ge \frac12
\]
for all sufficiently small \(h\). Therefore
\[
\sup_{y\in B_{r/2}(y_0)}
\sup_{\mathbb P\in\mathcal P_y}
\mathbb P(\sigma\le h,\ \sigma<\tau_\infty)
\le
2\sup_{y\in B_{r/2}(y_0)}\mathcal T_h\psi(y)
=o(h).
\]
A finite covering of \(K\) by such balls yields the stated estimate for
\(\sigma_O\).
\end{proof}

Next, we prove the small-time killing bound.

\begin{proposition}\label{prop:small_time_kill}
Let $\{\mathbb P_i\}_{i\in I}\subset\mathfrak M$ be a tight family. 
Assume that there exists $n\ge1$ such that $\mathbb P_i(X_0\in D_n)=1$ for all $i\in I$.
Then
\[
\lim_{t\downarrow0}\sup_{i\in I}\mathbb P_i(\tau_{\mathrm{kill}}\le t)=0.
\]
\end{proposition}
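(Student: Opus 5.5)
The claim follows from tightness alone, by showing that the law of the killing time is uniformly controlled near $0$. The key observation is that killing is a jump to the cemetery $\triangle$ from a path that starts in $D_n$. The cemetery sits at positive $\rho_D$-distance from $\overline D_{n+1}$, because $\overline D_{n+1}$ is a compact subset of $D$ and $\triangle$ is the point at infinity of the one-point compactification. So on the event $\{\tau_{\mathrm{kill}}\le t\}$, the path $\tilde X$ moves a macroscopic distance, at least $\delta:=\rho_D(\overline D_{n+1},\triangle)>0$, within the time interval $[0,t]$. I would therefore bound
\[
\mathbb P_i(\tau_{\mathrm{kill}}\le t)\le \mathbb P_i\Bigl(\sup_{0\le s\le t}\rho_D(\tilde X_s,\tilde X_0)\ge \delta\Bigr)+\mathbb P_i\bigl(\tau_{\mathrm{kill}}\le t,\ \tau_{n+1}>\tau_{\mathrm{kill}}^-\bigr),
\]
and in fact organise the argument so that only the first term matters. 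Since $X_0\in D_n$ almost surely and the path reaches $\triangle$ by time $t$, either the continuous part already left $D_{n+1}$ before the jump, or the jump itself has size at least $\delta$. In both cases the oscillation of the path on $[0,t]$ is at least $\min\{\delta,\rho_D(\overline D_n,\partial D_{n+1})\}=:\delta'>0$.

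The second step controls small-time oscillation uniformly through tightness in the $J_1$-topology. By the Skorokhod characterisation of $J_1$-compactness (Ethier--Kurtz, Theorem 3.7.2, or Billingsley), a tight family satisfies a uniform modulus condition near time $0$:
\[
\lim_{t\downarrow0}\sup_i\mathbb P_i\Bigl(\sup_{0\le s\le t}\rho_D(\tilde X_s,\tilde X_0)\ge\varepsilon\Bigr)=0 \quad\text{for every }\varepsilon>0.
\]
This holds because the $J_1$ modulus $w'$ forces paths in a relatively compact set to be right-continuous at $0$ uniformly; no jump is allowed to be pushed arbitrarily close to time $0$. Concretely, I would take a compact $\mathcal K\subset\tilde\Omega$ with $\inf_i\mathbb P_i(\mathcal K)\ge1-\eta$. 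Arguing by contradiction and using compactness, I would then show that $\sup_{\omega\in\mathcal K}\sup_{s\le t}\rho_D(\omega(s),\omega(0))\to0$ as $t\downarrow0$. The contradiction runs as follows: a sequence $\omega_k\in\mathcal K$ with oscillation at least $\varepsilon$ on $[0,t_k]$, $t_k\to0$, has a $J_1$-limit $\omega$. The time changes in $J_1$ fix $0$, so the oscillation near $0$ passes to the limit up to vanishing errors, which contradicts right-continuity of $\omega$ at $0$.

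Combining the two steps with $\varepsilon=\delta'$ gives $\sup_i\mathbb P_i(\tau_{\mathrm{kill}}\le t)\le\eta$ for all small $t$, and the claim follows since $\eta$ is arbitrary. I expect the main obstacle to be the compactness argument at time $0$. Care is needed because $J_1$ convergence allows time deformation, but the deformations $\lambda_k$ satisfy $\lambda_k(0)=0$ and converge uniformly to the identity. The oscillation of $\omega_k$ on $[0,t_k]$ is then controlled by the oscillation of $\omega$ on $[0,\lambda_k^{-1}(t_k)]$ plus $\sup_s\rho_D(\omega_k(s),\omega(\lambda_k(s)))$, and both terms tend to $0$.
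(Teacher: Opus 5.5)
Your argument is correct and uses the same mechanism as the paper. Tightness supplies a $J_1$-compact set of paths, and a sequential-compactness argument (time changes fix $0$, and the limit path is right-continuous at $0$) prevents jumps to $\triangle$ from accumulating at time $0$. The paper applies this directly to obtain $\inf_{\omega\in K}\tau_{\mathrm{kill}}(\omega)>0$, whereas you route it through a uniform small-time oscillation bound on the compact set.

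Your packaging has one minor advantage. The starting condition enters only through the pathwise inclusion $\{\tau_{\mathrm{kill}}\le t,\ X_0\in D_n\}\subset\{\sup_{s\le t}\rho_D(\tilde X_s,\tilde X_0)\ge\rho_D(\overline D_n,\triangle)\}$, which also makes your $D_{n+1}$ case split unnecessary. By contrast, the paper's assertion that every $\omega\in K$ has $\omega(0)\in D_n$ tacitly requires replacing $K$ by the compact set $K\cap\{X_0\in\overline D_n\}$; this suffices since $\overline D_n\subset D$.
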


\begin{proof}
Fix $\varepsilon>0$. By tightness, there exists a compact set $K\subset\tilde\Omega$ such that \(\mathbb P_i(K^c)<\varepsilon\) for all \(i\in I\).
Since every $\omega\in K$ satisfies $\omega(0)\in D_n\subset D$, compactness implies \(\inf_{\omega\in K}\tau_{\mathrm{kill}}(\omega)>0\).
Indeed, otherwise there would exist $\omega_m\in K$ with $\tau_{\mathrm{kill}}(\omega_m)\downarrow0$. Passing to a convergent subsequence $\omega_{m_k}\to\omega\in K$, the càdlàg topology and the fact that killing occurs by a jump to the cemetery state would force $\omega(0)=\triangle$, contradicting $\omega(0)\in D_n$.

Fix any $t_0\in(0,\inf_{\omega\in K}\tau_{\mathrm{kill}}(\omega))$.
Then, \(K\subset\{\tau_{\mathrm{kill}}>t\}\) for every $t\le t_0$, and therefore, \(\mathbb P_i(\tau_{\mathrm{kill}}\le t)\le \mathbb P_i(K^c)<\varepsilon\) for every $i\in I$.
Since $\varepsilon>0$ was arbitrary, the claim follows.
\end{proof}

Next, we record a basic Lyapunov estimate on the extended canonical space.
Besides ruling out continuous explosion, the same argument yields a uniform lower bound
on survival probabilities up to $t\wedge\tau_n$.

\begin{proposition}\label{prop:lyap_tail_bounds}
Let \(G\) satisfy \ref{item:G1}--\ref{item:G3} and $\mathbb{P}\in\mathcal{P}_{t,\omega}(L^\beta)$ for some $\beta\in\mathcal B_{\mathrm{ad}}(G)$.
If \(G\) further satisfies Assumption~\ref{assume:lyapunov}, then:
\begin{enumerate}[label=(\roman*), ref=(\roman*)]
\item\label{prop:lyap_tail_bounds_1} $\mathbb{P}(\tau_{\mathrm{exp}}=\infty)=1$.
\item\label{prop:lyap_tail_bounds_2} For every $n\ge1$ and $s\ge0$,
\begin{equation}\label{eq:survival_lower_bound}
\mathbb{P}\bigl(\tau_n\le s<\tau_{\mathrm{kill}}\bigr)
\;\le\;
\frac{e^{Cs}V(\omega(t))}{\inf_{y\in \partial D_n}V(y)}.
\end{equation}
\end{enumerate}
\end{proposition}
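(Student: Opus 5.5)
We apply Itô's formula, in the form of the generalized $L^\beta$-martingale problem, to the Lyapunov function $V=\phi$ of Assumption~\ref{assume:lyapunov}, and then use an optional stopping argument. Fix $n\ge1$. Since $\phi\in C^2(D)$ is not bounded, we first localize. We pick $f_n\in C_b^\infty(D)$ with $f_n=\phi$ on a neighbourhood of $\overline D_n$; after mollification, $C^2$ test functions are allowed because only the jets on $\overline D_n$ enter. Stopping at $\tau_n$, Definition~\ref{def:mtgproblem_appendix} shows that
\[
\phi(X_{\bar s_n^t})\mathbb I_{\{\tau_\infty>\bar s_n^t\}}-\int_t^{\bar s_n^t}L^\beta\bigl(u,X,\phi(X_u),\nabla\phi(X_u),\nabla^2\phi(X_u)\bigr)\mathbb I_{\{\tau_\infty>u\}}\,du
\]
is a $\mathbb P$-martingale. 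Admissibility~\eqref{eq:coefficientadmissible} gives $L^\beta(u,X,J^2\phi(X_u))\le G(X_u,J^2\phi(X_u))\le C\phi(X_u)$ before $\tau_\infty$. The killing enters only through the indicator, and its effect is already encoded by the $C$-component of $\beta$ through Theorem~\ref{thm:Phi_properties}~\ref{thm:Phi_properties_2}. Consequently the process $Y_s:=e^{-C(\bar s_n^t-t)}\phi(X_{\bar s_n^t})\mathbb I_{\{\tau_\infty>\bar s_n^t\}}$ is a nonnegative $\mathbb P$-supermartingale. To verify this, we apply the product rule with the deterministic factor $e^{-Cs}$ to the martingale above; the drift then becomes $e^{-Cu}\bigl(L^\beta-C\phi\bigr)\le0$.

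For part (ii), we take expectations at time $s$. By the supermartingale property, $\mathbb E^{\mathbb P}[Y_s]\le Y_t=\phi(\omega(t))$. On the event $\{\tau_n\le s<\tau_{\mathrm{kill}}\}$, continuous explosion cannot occur before $\tau_n$, because paths are continuous up to $\tau_{\mathrm{exp}}$ and $\tau_n<\tau_{\mathrm{exp}}$. Hence $\tau_\infty>\tau_n=\bar s_n^t$ and $X_{\tau_n}\in\partial D_n$, which gives $Y_s\ge e^{-Cs}\inf_{\partial D_n}\phi$ on this event. Here we assume $s\ge t$ and $\omega(t)\in D_n$; otherwise the bound is trivial or follows after shifting time. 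Markov's inequality then yields \eqref{eq:survival_lower_bound}. If $C<0$ we replace $C$ by $C^+$, and the time shift by $t$ only improves the constant, so we get $e^{Cs}$ as stated.

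For part (i), fix $s$ and let $n\to\infty$. The event $\{\tau_{\mathrm{exp}}\le s\}$ is contained, up to the killing event, in $\bigcap_n\{\tau_n\le s<\tau_{\mathrm{kill}}\}$, since on $\hat\Omega$-type paths before killing $\tau_n<\tau_{\mathrm{exp}}=\tau_\infty$ and killing after $\tau_{\mathrm{exp}}$ is impossible. Because $\inf_{\partial D_n}\phi\to\infty$ when $\phi(x)\to\infty$ as $x\to\partial D$, part (ii) forces $\mathbb P(\tau_{\mathrm{exp}}\le s)=0$ for every $s$. Taking $s\in\mathbb N$ then gives $\mathbb P(\tau_{\mathrm{exp}}=\infty)=1$. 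For unbounded $D$, "$x\to\partial D$" is read in $\hat D$ and includes $|x|\to\infty$.

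We expect the main obstacle to be the first step: making rigorous the passage from the martingale problem to a supermartingale for the unbounded $C^2$ function $\phi$ on the killed space. This requires checking that the localized test function $f_n$ is legitimate, that the killing indicator interacts correctly with the zeroth-order term $C r$, and that $\tau_n$ is handled uniformly in $\beta$. We would first try to reduce to the $(A,\mathbb Q)$ picture via \eqref{eq:Phi_basic_identity_prop}. There $e^{-A_s}\phi(X_{s\wedge\tau_n})$ is treated by the ordinary Itô formula under the weak SDE of Remark~\ref{remark:MG}, and the factor $e^{-A}$ cancels exactly the $C r$ term.
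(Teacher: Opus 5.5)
Your proposal is correct and follows essentially the same route as the paper. You apply the localized $L^\beta$-martingale problem to the Lyapunov function, use admissibility and Assumption~\ref{assume:lyapunov} with the factor $e^{-Cu}$ to obtain the stopped nonnegative supermartingale, and then apply Markov's inequality on $\{\tau_n\le s<\tau_{\mathrm{kill}}\}$ using $X_{\tau_n}\in\partial D_n$, letting $n\to\infty$ by properness of $V$. The differences are cosmetic: you deduce (i) from (ii) rather than directly from the supermartingale bound, and you explicitly justify using the merely $C^2$ function $V$ as a test function, which the paper glosses over; both arguments tacitly take $C\ge 0$ and the starting point in $D_n$.
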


\begin{proof}
By time-shift, take $t=0$ and set $x:=\omega(0)$. Fix $n\ge1$ and $s\ge0$.
For $\mathbb{P}\in\mathcal{P}_{0,\omega}(L^\beta)$, the process \(M^{V,n,\beta}\) defined as in \eqref{eq:defM_f,nlinearmtg} is a $\mathbb{P}$-martingale. Integration by parts applied to
$e^{-Cu}V(X_u)\mathbb{I}_{\{\tau_{\mathrm{kill}}>u\}}$ up to $s\wedge\tau_n$ and Assumption~\ref{assume:lyapunov} imply that
$u\mapsto e^{-C(u\wedge\tau_n)}V(X_{u\wedge\tau_n})\mathbb{I}_{\{\tau_{\mathrm{kill}}>u\wedge\tau_n\}}$
is a $\mathbb{P}$-supermartingale. Hence
\begin{equation}\label{eq:lyap_supermg_bound2_short}
\mathbb{E}^{\mathbb{P}}\!\left[
e^{-C(s\wedge\tau_n)}V(X_{s\wedge\tau_n})\,
\mathbb{I}_{\{\tau_{\mathrm{kill}}>s\wedge\tau_n\}}
\right]\le V(x).
\end{equation}

We first prove \ref{prop:lyap_tail_bounds_1}.
On $\{\tau_{\mathrm{exp}}\le s\}$ we have
$s\wedge\tau_n=\tau_n$ and $\tau_{\mathrm{kill}}=\infty$, hence from \eqref{eq:lyap_supermg_bound2_short},
\[
\sup_{n\ge1}\mathbb{E}^{\mathbb{P}}\!\left[V(X_{\tau_n})\,\mathbb{I}_{\{\tau_{\mathrm{exp}}\le s\}}\right]
\le e^{Cs}V(x).
\]
But $X_{\tau_n}\in\partial D_n$ on $\{\tau_{\mathrm{exp}}\le s\}$, so
$V(X_{\tau_n})\ge \inf_{\partial D_n}V\to\infty$ by properness, forcing
$\mathbb{P}(\tau_{\mathrm{exp}}\le s)=0$. Since $s$ is arbitrary,
$\mathbb{P}(\tau_{\mathrm{exp}}=\infty)=1$.

Now we prove \ref{prop:lyap_tail_bounds_2}.
On $\{\tau_n\le s<\tau_{\mathrm{kill}}\}\subseteq\{\tau_{\mathrm{kill}}>s\wedge\tau_n\}$, we have
$e^{-C(s\wedge\tau_n)}\ge e^{-Cs}$ and $V(X_{s\wedge\tau_n})\ge \inf_{\partial D_n}V$, so
\[
e^{-Cs}\Big(\inf_{\partial D_n}V\Big)\,\mathbb{P}(\tau_n\le s<\tau_{\mathrm{kill}})
\le
\mathbb{E}^{\mathbb{P}}\!\left[
e^{-C(s\wedge\tau_n)}V(X_{s\wedge\tau_n})\,
\mathbb{I}_{\{\tau_{\mathrm{kill}}>s\wedge\tau_n\}}
\right]
\le V(x),
\]
which yields the claimed estimate.
\end{proof}

\subsection{Compactness Criterion on Extended Canonical Space}\label{subsec:compactness_criterion}

We record two convenient tightness criteria for families of virtual models on
$(\tilde{\Omega},\tilde{\mathcal{F}})$ under the Skorokhod $J_1$ topology:
the first one is a criterion under uniformly bounded characteristics, and
the second is a Lyapunov-based criterion.

\begin{proposition}
\label{prop:tightness_bounded_char}
Let $(\mathbb{P}_i)_{i\in I}$ be such that
$\mathbb{P}_i\in\mathcal{P}_{t_i,\omega_i}(L^{\beta_i})$ with
$\beta_i=(C^i,B^i,\Sigma^i)$.
Assume:
\begin{enumerate}[label=(\roman*)]
\item $\{(t_i,\omega_i(\cdot\wedge t_i))\}_{i\in I}$
      is precompact in $[0,\infty)\times\tilde{\Omega}$ and $(t_i)_{i\in I}$ is bounded;
\item there exists $K<\infty$ such that for all $i\in I$,
\[
|C^i_t|+|B^i_t|+\|\Sigma^i_t\|\le K,
\qquad
\mathbb{P}_i\text{-a.s. for all }t\ge0.
\]
\end{enumerate}
Then $(\mathbb{P}_i)_{i\in I}$ is tight on $\tilde{\Omega}$ under $J_1$.
\end{proposition}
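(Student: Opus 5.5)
We prove tightness on \(\tilde\Omega\) under \(J_1\) via Aldous' criterion for the càdlàg canonical process, after handling the killing jump separately. Since each \(\mathbb P_i\) lives on paths that are continuous in \(\hat D\) up to \(\tau_\infty\) and then have at most one jump, to \(\triangle\), it suffices to control two things. The first is the pre-killing continuous part. The second is the law of \(\tau_{\mathrm{kill}}\), which needs no modulus control because a single terminal jump to an absorbing state is compatible with \(J_1\) compactness.

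The first step is compact containment. By (i), the initial segments \(\omega_i(\cdot\wedge t_i)\) lie in a precompact set. In particular \(\{\omega_i(t_i)\}\) is precompact in \(\hat D\), and since \(\hat D\) is compact the state process stays in a compact set automatically. Hence only the modulus needs control.

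The second step is the modulus on each finite horizon \([0,N]\), which we obtain from Aldous' criterion. Before \(t_i\) the paths are frozen on the precompact family from (i), so only \(s\ge t_i\) matters. Fix \(f\in C_c^\infty(D)\). By the martingale property of \(M^{f,n,\beta_i}\) together with the bound (ii), for any bounded stopping times \(\sigma\le\sigma+\delta\) we get
\[
\bigl|\mathbb E^{\mathbb P_i}\bigl[f(X_{\sigma+\delta})\mathbb I_{\{\tau_\infty>\sigma+\delta\}}-f(X_\sigma)\mathbb I_{\{\tau_\infty>\sigma\}}\bigr]\bigr|\le K\|f\|_{C^2}\,\delta ,
\]
uniformly in \(i\). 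We use the metric \(\rho_D\) on \(\hat D\). A countable family of such \(f\), together with the function \(1\) (killing can only reduce it), separates points and generates the topology of \(\hat D\). Applying the same estimate to \(f^2\) gives a second-moment bound. Together these yield \(\mathbb E[\rho_D(X_{\sigma+\delta},X_\sigma)\wedge1]\to0\) uniformly as \(\delta\downarrow0\), which is Aldous' condition. Equivalently, one may follow Jacod–Shiryaev, Thm.~VI.4.18: the modified characteristics, with killing viewed as a jump of size bounded by \(\operatorname{diam}\hat D\) and compensator \(\int C\,du\) (Theorem~\ref{thm:Phi_properties}\ref{thm:Phi_properties_2}), all have densities bounded by \(K\). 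Their strong majoration by \(K\,dt\) gives C-tightness of the characteristics, and hence tightness.

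The third step, and the main obstacle, is the behaviour near \(\tau_{\mathrm{exp}}\) and the boundary, where the localization \(\tau_n\) degenerates. Near \(\partial D\) the metric \(\rho_D\) shrinks relative to the Euclidean one. We therefore plan to bound the increments of \(\rho_D(X,\cdot)\) by those of test functions that are constant outside \(D_n\), using the fact that \(\rho_D\)-small neighbourhoods of \(\triangle\) correspond to complements of \(D_n\). For a given \(\varepsilon\) we pick \(n\) so that \(D\setminus D_n\) lies in an \(\varepsilon\)-ball around \(\triangle\). Oscillations outside \(D_n\) are then automatically less than \(2\varepsilon\), and inside \(D_n\) the bounded-coefficient estimate above applies.

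Finally, the killing jump. Its compensator is bounded by \(K\,dt\), so \(\mathbb P_i(\tau_{\mathrm{kill}}\in[\sigma,\sigma+\delta])\le K\delta\), uniformly in \(i\). This is compatible with Aldous' criterion because the jump is to an absorbing point, so no further oscillation occurs after it. Combining these steps, with the boundedness of \(t_i\) used for shifting the horizon, gives tightness of \((\mathbb P_i)_{i\in I}\) under \(J_1\).
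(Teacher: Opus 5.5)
Up to one caveat that applies equally to the paper's proof (second paragraph), your argument is sound, but its main line differs from the paper's. The paper settles the proposition in one step. It views $X_{\cdot\wedge T}$ under each $\mathbb P_i$ as a semimartingale whose drift, diffusion and killing characteristics have densities bounded by $K$, and cites Jacod--Shiryaev, Theorem~VI.4.18 (Rebolledo), which you mention only as an aside. You instead run the classical test-function/Aldous argument on the compact state space $\hat D$. For $f\in C_c^\infty(D)$, extended by $0$ at $\triangle$, the bound $|L^{\beta_i}(u,X,f(X_u),\nabla f(X_u),\nabla^2 f(X_u))|\le K\|f\|_{C^2}$ gives first- and (via $f^2$) second-moment bounds of order $\delta$ on increments between stopping times, uniformly in $i$. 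Since $C_c^\infty(D)\oplus\mathbb R$ is a point-separating subalgebra of $C(\hat D)$, this controls $\rho_D$-increments. Your route has the advantage of being intrinsic to a $\hat D$-valued process that jumps to $\triangle$. The paper's citation tacitly treats such a process as an $\mathbb R^d$-valued semimartingale, which needs an embedding it does not spell out; your ``jump of size at most $\operatorname{diam}\hat D$'' aside has the same problem. The paper's route is shorter. Your Steps~3 and~4 are already contained in Step~2. Letting $n\to\infty$ in $M^{f,n,\beta_i}$ by bounded convergence ($f=0$ on $\partial D_n$ for large $n$) gives the unlocalized identity. Because $f(\triangle)=0$ and the $C^i r$-term belongs to $L^{\beta_i}$, neither the boundary nor the killing jump needs separate treatment. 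Also, the killing compensator is $\int(-C^i_u)\,du$, not $\int C^i_u\,du$.

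The caveat is that both your argument and the paper's only yield tightness in $D([0,\infty);\hat D)$. Your remark that compact containment is automatic because $\hat D$ is compact is exactly where this is hidden. Since $\tilde\Omega$ is not closed in $D([0,\infty);\hat D)$, this does not give tightness on $\tilde\Omega$, and under (i)--(ii) alone it cannot. Take $d=1$ and $D=(0,\infty)$. The laws $\delta_{x^i}$ with $x^i(t)=|t-1|+1/i$ solve the $L^{\beta_i}$-problem with $\beta_i(t)=(0,\operatorname{sgn}(t-1),0)$ and satisfy (i)--(ii). They converge to $\delta_x$ with $x(t)=|t-1|$, a path that reaches $\triangle$ at $t=1$ and re-enters $D$, so it lies outside $\tilde\Omega$. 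In the paper's actual use (laws stopped at $\tau_m$, in the proof of Theorem~\ref{thm:nonempty_nonMarkov}), limit points do stay in $\tilde\Omega$. You should therefore either build in such a localization or state the conclusion in $D([0,\infty);\hat D)$.
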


\begin{proof}
Fix $T>0$. Under each $\mathbb{P}_i$, the stopped coordinate process $X_{\cdot\wedge T}$
is a c\`adl\`ag semimartingale on $[0,T]$ with drift and quadratic variation rates bounded by $K$,
and with at most one jump (to the cemetery state) at $\tau_{\mathrm{kill}}$.
Hence Rebolledo's criterion (equivalently, Jacod--Shiryaev tightness for semimartingales)
implies tightness of $\{ \mathbb{P}_i\circ(X_{\cdot\wedge T})^{-1}\}_{i\in I}$ on $D([0,T];\hat D)$;
see \cite[Theorem~VI.4.18]{jacod2013limit}. Since $T$ is arbitrary, tightness on $\tilde{\Omega}$
follows.
\end{proof}

\begin{proposition}
\label{prop:compactness_criterion}
Let $\mathbb{P}_i\in\mathcal{P}_{t_i,\omega_i}(L^{\beta_i})$ for $i\in I$.
Assume:
\begin{enumerate}[label=(\roman*), ref=(\roman*)]
\item\label{prop:compactness_criterion_1} $\{(t_i,\omega_i(\cdot\wedge t_i))\}_{i\in I}$
      is precompact and $(t_i)_{i\in I}$ is bounded;
\item\label{prop:compactness_criterion_2} for each $n\ge1$ there exists $K_n<\infty$ such that
      the stopped coefficients satisfy $\sup_{t\ge0}|\beta_i^{\tau_n}(t,\cdot)|\le K_n$ for all $i$;
\item\label{prop:compactness_criterion_3} there exist $V\in C^2(D)$ with $V>0$ and $C>0$ such that
\[
L^{\beta_i}\!\bigl(t,\omega,V(\omega(t)),\nabla V(\omega(t)),\nabla^2V(\omega(t))\bigr)
\le C\,V(\omega(t))
\quad\text{whenever }\tau_\infty(\omega)>t,
\]
and $V$ is proper: $V(x)\to\infty$ as $x\to\partial D$.
\end{enumerate}
Then $(\mathbb{P}_i)_{i\in I}$ is tight on $\tilde{\Omega}$ under $J_1$.
\end{proposition}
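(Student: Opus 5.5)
Fix \(T>0\); since \(T\) is arbitrary, tightness of the family of laws \(\{\mathbb P_i\circ(X_{\cdot\wedge T})^{-1}\}_{i\in I}\) on \(D([0,T];\hat D)\) for every \(T\) gives tightness on \(\tilde\Omega\). The plan is to reduce the problem to Proposition~\ref{prop:tightness_bounded_char} by localization. On each \(D_n\) we stop the process. The Lyapunov condition then shows that, uniformly in \(i\), the stopped process coincides with the original one on \([0,T]\) with probability close to one.

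\textbf{Step 1 (uniform tail bound).} The argument of Proposition~\ref{prop:lyap_tail_bounds} only uses the domination \(L^{\beta_i}(V)\le CV\) and the martingale property of \(M^{V,n,\beta_i}\), so it applies directly under \ref{prop:compactness_criterion_3}. It makes \(e^{-C(u\wedge\tau_n)}V(X_{u\wedge\tau_n})\mathbb I_{\{\tau_{\mathrm{kill}}>u\wedge\tau_n\}}\) a supermartingale after time \(t_i\). This yields
\[
\mathbb P_i\bigl(\tau_n\le T<\tau_{\mathrm{kill}}\bigr)\le \frac{e^{CT}V(\omega_i(t_i))}{\inf_{\partial D_n}V},
\]
together with \(\mathbb P_i(\tau_{\mathrm{exp}}=\infty)=1\). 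By \ref{prop:compactness_criterion_1}, the points \(\omega_i(t_i)\) lie in a compact subset of \(\hat D\). We need them to lie in a compact subset of \(D\), so that \(\sup_i V(\omega_i(t_i))<\infty\). We read precompactness, together with the standing requirement that paths start in \(D\), as giving this; otherwise we first discard the initial mass near \(\triangle\) using tightness of the initial points. Since \(V\) is proper, \(\inf_{\partial D_n}V\to\infty\). Hence \(\sup_i\mathbb P_i(\tau_n\le T<\tau_{\mathrm{kill}})\to0\) as \(n\to\infty\).

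\textbf{Step 2 (localized tightness).} Let \(\mathbb P_i^n\) denote the law of the path stopped at \(\tau_n\). Concretely, it follows \(X\) up to \(\tau_n\wedge\tau_{\mathrm{kill}}\), keeps a possible kill, and is frozen afterwards. This law solves the generalized martingale problem with coefficient \(\beta_i^{\tau_n}:=\beta_i\mathbb I_{[0,\tau_n)}\), which is admissible-type and bounded by \(K_n\) by \ref{prop:compactness_criterion_2}. Proposition~\ref{prop:tightness_bounded_char} then gives tightness of \(\{\mathbb P_i^n\}_{i}\) for each fixed \(n\). If \(\tau_{\mathrm{kill}}\) occurs after \(\tau_n\), it is suppressed in \(\mathbb P_i^n\); this is harmless, because it happens only on the small event of Step 1.

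\textbf{Step 3 (gluing).} Fix \(\varepsilon>0\) and choose \(n\) with \(\sup_i\mathbb P_i(\tau_n\le T<\tau_{\mathrm{kill}})<\varepsilon\). Paths are killed before reaching \(\partial D_n\) or before \(T\) on the complementary event. On that event the paths \(X_{\cdot\wedge T}\) under \(\mathbb P_i\) and \(\mathbb P_i^n\) coincide on \([0,T]\). By Step 2, pick a compact \(K\) with \(\mathbb P_i^n(K)\ge1-\varepsilon\) for all \(i\). Then \(\mathbb P_i(X_{\cdot\wedge T}\in K)\ge1-2\varepsilon\) for all \(i\). This gives tightness on each \([0,T]\), and hence on \(\tilde\Omega\).

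The main obstacle is Step 2, specifically the identification of the stopped laws within the framework of Proposition~\ref{prop:tightness_bounded_char}. The killing mechanism must be preserved or harmlessly truncated under stopping, and the modulus-of-continuity/Aldous control has to treat the single jump to \(\triangle\) correctly. If this identification becomes awkward, the fallback is to verify Aldous' criterion directly for the stopped processes, using the bounds \(K_n\) on the drift and quadratic-variation rates.
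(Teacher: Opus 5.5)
\textbf{There is a gap in the bookkeeping of Step 3, though the repair is cheap.} A killed path sits at \(\triangle\notin D_n\), so \(\tau_n\le\tau_{\mathrm{kill}}\) always. Hence \(X_{\cdot\wedge T}\) can differ from its \(\tau_n\)-stopped version only on \(\{\tau_n<\tau_{\mathrm{kill}},\ \tau_n<T\}\).

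It certainly differs on \(\{\tau_n<\tau_{\mathrm{kill}}\le T\}\). There the path leaves \(D_n\) and is then killed before \(T\). So \(X_T=\triangle\), while the stopped path stays frozen at \(X_{\tau_n}\in\partial D_n\).

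That event is disjoint from the event you control in Step 1, \(\{\tau_n\le T<\tau_{\mathrm{kill}}\}\). Consequently:
\begin{itemize}
\item your sentence that on the complementary event paths are killed before reaching \(\partial D_n\) or before \(T\) is false there;
\item your remark that a kill after \(\tau_n\) is suppressed only on the small event of Step 1 is exactly backwards.
\end{itemize}
As written, \(\sup_i\mathbb P_i\bigl(X_{\cdot\wedge T}\neq X_{\cdot\wedge T\wedge\tau_n}\bigr)\) is not shown to be small. The real difficulty is here, not in Step 2.

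The same supermartingale from Step 1 controls the larger event. Work after shifting to \(t_i=0\). On \(\{\tau_n\le T,\ \tau_n<\tau_{\mathrm{kill}}\}\) three things hold: \(T\wedge\tau_n=\tau_n\), the survival indicator equals \(1\), and \(X_{\tau_n}\in\partial D_n\) because \(\tau_{\mathrm{exp}}=\infty\). Hence
\[
e^{-CT}\Bigl(\inf_{\partial D_n}V\Bigr)\,\mathbb P_i\bigl(\tau_n\le T,\ \tau_n<\tau_{\mathrm{kill}}\bigr)\le V(\omega_i(t_i)).
\]
This is essentially the event \(\{\tau_n<\tau_{\mathrm{kill}}\wedge(T+1)\}\) that the paper bounds. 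With this bound in place of yours, the coupling in Step 3 does give \(\mathbb P_i(X_{\cdot\wedge T}\in K)\ge1-2\varepsilon\).

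Separately, your fallback of ``discarding the initial mass near \(\triangle\)'' is not meaningful, because the initial points are deterministic. You genuinely need \(\{\omega_i(t_i)\}\) to be precompact in \(D\), which is how the paper reads (i).

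Once repaired, your route is valid but differs from the paper's. The paper verifies Aldous' criterion directly for the unstopped laws. It splits \(\{\rho_D(X_\tau,X_\sigma)\ge\varepsilon\}\) into three pieces:
\begin{itemize}
\item a kill in \((\tau,\sigma\wedge\tau_n]\), bounded by \(\Lambda_n\delta\) through the killing compensator;
\item an exit from \(D_n\) before \(\tau_{\mathrm{kill}}\wedge(T+1)\), handled by the Lyapunov bound;
\item a large increment while alive in \(D_n\), handled by a BDG-type estimate.
\end{itemize}

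You instead delegate all local estimates, including the single jump to \(\triangle\), to Proposition~\ref{prop:tightness_bounded_char} applied to the stopped laws, and then glue by coupling. Your Step 2 is fine. The stopped law solves the martingale problem with coefficient \(\beta_i\mathbb I_{[0,\tau_n)}\), which is bounded by \(K_n\), and that proposition does not use admissibility.

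Your version is more modular. The paper's is self-contained and handles the killing jump explicitly inside the Aldous estimate. Both rest on the same two ingredients: a uniform Lyapunov exit bound and bounded characteristics inside \(D_n\).
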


\begin{proof}
By time-shift, we may assume \(t_i=0\) and
\(x_i:=\omega_i(0)\) is precompact in \(D\). By
Proposition~\ref{prop:lyap_tail_bounds},
\(\tau_{\mathrm{exp}}=\infty\), \(\mathbb P_i\)-a.s., and hence
\(\tau_\infty=\tau_{\mathrm{kill}}\).
We verify Aldous' tightness criterion on each finite horizon \(T\). Fix
\(\varepsilon>0\), and let \(\tau\le\sigma\le T\) be stopping times with
\(\sigma-\tau\le\delta\). For each \(n\ge1\),
\[
\{\rho_D(X_\tau,X_\sigma)\ge\varepsilon\}
\subset
\{\tau<\tau_{\mathrm{kill}}\le\sigma\wedge\tau_n\}
\cup
\{\tau_n<\tau_{\mathrm{kill}}\wedge(T+1)\}
\cup
\{\tau_{\mathrm{kill}}\wedge\tau_n>\sigma,\ 
\rho_D(X_\tau,X_\sigma)\ge\varepsilon\}.
\]
The Lyapunov estimate gives, uniformly in \(i\),
\[
\mathbb P_i(\tau_n<\tau_{\mathrm{kill}}\wedge(T+1))
\le
\frac{e^{C(T+1)}V(x_i)}{\inf_{\partial D_n}V}
\longrightarrow0
\qquad(n\to\infty),
\]
since \((x_i)_{i\in I}\) is precompact and \(V\) is proper. On \(\{t\le\tau_n\}\),
the killing rate and the drift/diffusion characteristics are uniformly
bounded by \ref{prop:compactness_criterion_2}. Hence
\[
\sup_i
\mathbb P_i(\tau<\tau_{\mathrm{kill}}\le\sigma\wedge\tau_n)
\le
\Lambda_n\delta,
\]
and the usual BDG--Markov semimartingale increment estimate yields
\[
\sup_i
\sup_{\substack{\tau\le\sigma\le\tau+\delta\\ \sigma\le T}}
\mathbb P_i\bigl(
\tau_{\mathrm{kill}}\wedge\tau_n>\sigma,\ 
\rho_D(X_\tau,X_\sigma)\ge\varepsilon
\bigr)
\longrightarrow0
\qquad(\delta\downarrow0)
\]
for each fixed \(n\). Thus, first choosing \(n\) large and then
\(\delta\downarrow0\), Aldous' criterion
\cite[Theorem~VI.4.5]{jacod2013limit} gives tightness on
\(\tilde\Omega\) under \(J_1\).
\end{proof}

\subsection{Virtual Model Classes and Coefficient-Free Characterizations}
\label{subsec:virtual_model_classes}

In this subsection we identify the pair-space model classes with their
killing-encoded virtual counterparts, and then give coefficient-free
martingale characterizations of the resulting virtual model classes.

Throughout this subsection, coefficient fields on $\hat\Omega$ are understood
on $\tilde\Omega$ through the de-killed recovery map $\mathfrak r:\tilde\Omega\to\hat\Omega$
introduced in Appendix~\ref{subsec:defspaces}. We use the following
notational convention. The symbol $\mathcal U$ is reserved for pair-space
model classes in $\mathfrak U$, while the symbol $\mathcal P$ is reserved
for virtual model classes in $\mathfrak M$.

We introduce a canonical pre-path
space. Let $\Omega^{\mathrm{pre}}$ be the space of pairs
$\omega=(\zeta,\xi)$, where $\zeta\in(0,\infty]$ and
$\xi:[0,\zeta)\to D$ is continuous. We write $X_t(\omega)=\xi(t)$ for
$t<\zeta$ and use the value $\triangle$ after $\zeta$ only as a bookkeeping
convention; no convergence to $\triangle$ at $\zeta$ is imposed. For
$m\ge1$ and $\omega=(\zeta,\xi)\in\Omega^{\mathrm{pre}}$, set
\[
\tau_m(\omega):=\inf\{t<\zeta:X_t(\omega)\notin D_m\}\wedge\zeta,
\qquad
\tau_\infty(\omega):=\zeta.
\]
Let \(\rho_{m,T}:=\tau_m\wedge T\) and let $\mathcal G_{m,T}$ be the $\sigma$-field generated by the stopped coordinate path $X_{\cdot\wedge\rho_{m,T}}$.
We write $\mathcal G^{\mathrm{pre}}:=\sigma(\cup_{m,T}\mathcal G_{m,T})$. 
The continuous canonical space $\hat\Omega$ is identified with the subset of $\Omega^{\mathrm{pre}}$ consisting of those pre-paths for which, whenever $\zeta<\infty$, one has $X_t\to\triangle$ in the one-point compactification $\hat D$ as $t\uparrow\zeta$. 
We shall use the standard extension theorem for consistent stopped pre-path laws; see \cite[Theorem~1.10.5 and Exercise~1.11]{pinsky1995positive}.

\begin{lemma}
\label{lem:remove_killing_gmp}
Let \(G\) satisfy \ref{item:G1}--\ref{item:G3} and
Assumption~\ref{assume:lyapunov}. Fix \(x\in D\) and let
\(\beta=(C,B,\Sigma)\in\mathcal B_{\mathrm{ad}}(G)\). Write
\(\gamma=(B,\Sigma)\) and \(k=-C\). If
\(\mathbb P\in\mathcal P_x(L^\beta)\), then there exists a unique
\(\mathbb Q\in\mathcal P_x(L^\gamma)\) such that
\((A^k,\mathbb Q)\in\mathfrak U\) and \(\mathbb P=\Phi(A^k,\mathbb Q)\).
\end{lemma}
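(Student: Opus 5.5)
The approach is to undo the killing by an exponential change of measure. By Theorem~\ref{thm:Phi_properties}\ref{thm:Phi_properties_1}, if $\mathbb P=\Phi(A^k,\mathbb Q)$ then for all bounded $\tilde{\mathcal F}_t$-measurable $Y$,
\[
\mathbb E^{\mathbb P}\bigl[Y\,\mathbb I_{\{\tau_\infty>t\}}\bigr]=\mathbb E^{\mathbb Q}\bigl[e^{-A^k_t}\hat Y\,\mathbb I_{\{\tau_{\mathrm{exp}}>t\}}\bigr].
\]
This formula dictates the definition of $\mathbb Q$: on stopped $\sigma$-fields it must be $\mathbb P$ reweighted by $e^{A^k}$ and restricted to survival. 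Uniqueness then comes for free. If $\mathbb Q,\mathbb Q'$ both work, they agree on every set $B\cap\{\tau_{\mathrm{exp}}>t\}$ with $B\in\hat{\mathcal F}_t$; this follows by inserting $Y=(e^{\tilde A_t}\wedge m)\mathbb I_B$ and letting $m\to\infty$, exactly as in the injectivity proof of Theorem~\ref{thm:Phi_properties}. Lemma~\ref{lem:pre_explosion_identification} then gives $\mathbb Q=\mathbb Q'$.

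\textbf{Step 1 (killing compensator).} First I identify the killing compensator under $\mathbb P$. Test the generalized $L^\beta$-martingale problem with $f\equiv1\in C_b^\infty(D)$; since $C=-k$, this says that
\[
\mathbb I_{\{\tau_\infty>s\wedge\tau_n\}}-\int_0^{s\wedge\tau_n}k_u\,\mathbb I_{\{\tau_\infty>u\}}\,du
\]
is a $\mathbb P$-martingale. By Assumption~\ref{assume:lyapunov} and Proposition~\ref{prop:lyap_tail_bounds}\ref{prop:lyap_tail_bounds_1}, $\tau_{\mathrm{exp}}=\infty$ $\mathbb P$-a.s., so $\tau_\infty=\tau_{\mathrm{kill}}$. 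Hence the compensator of $N_t=\mathbb I_{\{\tau_{\mathrm{kill}}\le t\}}$ on $[0,\tau_n]$ is the lifted process $\tilde A^k$. Because $A(z)$ is uniformly bounded for $z\in\overline D_m$, we have $k\le K_m$ before $\tau_m$. It follows that the process $Z_t:=e^{\tilde A^k_t}\mathbb I_{\{\tau_{\mathrm{kill}}>t\}}$, stopped at $\rho_{m,T}=\tau_m\wedge T$, is a bounded, positive-mean $\mathbb P$-martingale with $\mathbb E^{\mathbb P}[Z_{\rho_{m,T}}]=1$. This follows from the product rule $d(e^{\tilde A}(1-N))=e^{\tilde A}(\,d\tilde A-dN-d\tilde A\,)$.

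\textbf{Step 2 (construction).} For each $(m,T)$, define a probability on the pre-path $\sigma$-field $\mathcal G_{m,T}$ by
\[
\mathbb Q_{m,T}(B):=\mathbb E^{\mathbb P}\bigl[Z_{\rho_{m,T}}\,\mathbb I_B(\mathfrak r(\tilde X))\bigr].
\]
By the optional stopping property of $Z$ from Step 1, these laws are consistent as $m$ and $T$ increase. The extension theorem for consistent stopped pre-path laws (\cite[Theorem~1.10.5 and Exercise~1.11]{pinsky1995positive}) then yields a law on $\Omega^{\mathrm{pre}}$ that is carried by $\hat\Omega$. On paths with $\tau_{\mathrm{exp}}<\infty$ we set $A^k=\infty$ from $\tau_{\mathrm{exp}}$ on, following the convention of this appendix. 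This gives \eqref{eq:prop:pairDUS_to_DSVR_1}, so $(A^k,\mathbb Q)\in\mathfrak U$.

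\textbf{Step 3 (martingale problem for $\gamma$).} For $f\in C_c^\infty(D)$, the process $M^{f,n,\beta}$ is a $\mathbb P$-martingale. I form the product $Z\,M^{f,n,\beta}$ and apply integration by parts using the compensator from Step 1. The killing jump cancels against the $e^{\tilde A}$ factor, and the $-kf$ term from $C r$ cancels against $f\,d\tilde A$. The result is that
\[
Z_{\cdot\wedge\tau_n}\Bigl(f(X_{\cdot\wedge\tau_n})-\int_0^{\cdot\wedge\tau_n}L^\gamma\bigl(u,X,\nabla f(X_u),\nabla^2 f(X_u)\bigr)\,du\Bigr)
\]
is a $\mathbb P$-martingale on survival. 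By the definition of $\mathbb Q$, this is equivalent to \eqref{eq:mainHJB}-free statement that \eqref{eqn:mart_p} holds under $\mathbb Q$, i.e.\ $\mathbb Q\in\mathcal P_x(L^\gamma)$.

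\textbf{Step 4 (identification).} By construction, $\Phi(A^k,\mathbb Q)$ and $\mathbb P$ give the same value to $\mathbb E[Y\,\mathbb I_{\{\tau_\infty>t\}}]$ for every bounded $\tilde{\mathcal F}_t$-measurable $Y$. On $\{\tau_\infty\le t\}$ both laws put the path at $\triangle$ after $\tau_\infty$. Since killing is totally inaccessible under both laws with the same compensator, the distribution of $\tau_\infty$ jointly with the pre-kill path also coincides. Hence $\mathbb P=\Phi(A^k,\mathbb Q)$.

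I expect Step 2 to be the main obstacle. The difficulty is verifying consistency under the change of measure (optional stopping for $Z$ only holds locally, before $\tau_m$). A second difficulty is ensuring that the extended law lives on $\hat\Omega$, i.e.\ that paths with finite lifetime actually converge to $\triangle$. To handle the latter, I would first try to rely on Pinsky's extension theorem applied to the exhaustion $\tau_m\uparrow\zeta$. If that is insufficient, I would use the Lyapunov function: $L^\gamma\phi\le(C+k)\phi$ with $k$ locally bounded gives local supermartingale control, which forces $\phi(X)\to\infty$ along explosive paths.
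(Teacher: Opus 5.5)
Your architecture is the paper's: you reweight $\mathbb P$ by $Z=e^{\tilde A^k}\mathbb I_{\{\tau_{\mathrm{kill}}>\cdot\}}$ on the stopped $\sigma$-fields $\mathcal G_{m,T}$, get consistency by optional sampling, apply Pinsky's extension theorem, use Bayes' formula for the $L^\gamma$-martingale problem, identify via Theorem~\ref{thm:Phi_properties}\ref{thm:Phi_properties_1}, and get uniqueness from injectivity of $\Phi$. However, the two steps on which the lemma actually hinges are asserted rather than proved.

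First, the extension theorem only produces a law $\overline{\mathbb Q}$ on $\Omega^{\mathrm{pre}}$. It does not tell you that a pre-path with $\zeta<\infty$ converges to $\triangle$ as $t\uparrow\zeta$, which is what being carried by $\hat\Omega$ means. You flag this, but your Lyapunov fallback does not close it. The supermartingale produced by $L^\gamma\phi\le(C+k)\phi$ is $e^{-Ct-A^k_t}\phi(X_t)$, and once $A^k_t\to\infty$ it tends to $0$ no matter how $\phi(X_t)$ oscillates. So it gives no control on repeated returns of $X$ to $\overline D_m$ before $\zeta$. The missing idea is the crossing (passage) argument the paper uses. Admissible $(B,\Sigma)$ are bounded on $\overline D_{m+1}$, so the Bernstein exit bound of Proposition~\ref{prop:small_time_exit}, together with a finite covering of $\overline D_m$, yields $h_m>0$ and $\alpha_m<1$ with the following property: conditionally at any stopping time with $X\in\overline D_m$, the probability of reaching $\partial D_{m+1}$ within time $h_m$ is at most $\alpha_m$. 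A conditional Borel--Cantelli argument then shows that infinitely many crossings between $\overline D_m$ and $\partial D_{m+1}$ before a finite time form a $\overline{\mathbb Q}$-null event for every $m$. Hence $\overline{\mathbb Q}(\hat\Omega)=1$.

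Second, $(A^k,\mathbb Q)\in\mathfrak U$ is not a matter of convention. The appendix sets $A^k=\infty$ after $\tau_{\mathrm{exp}}$ only when $\int_0^{\tau_{\mathrm{exp}}}k_s\,ds$ already diverges. On any non-null part of $\{\tau_{\mathrm{exp}}<\infty\}$ where the integral is finite, declaring $A^k=\infty$ would make $A^k$ discontinuous, and the pair would not belong to $\mathfrak U$. Divergence must be proved, and this is where the Lyapunov function genuinely enters. With $(C,\phi)$ the Lyapunov pair, $e^{-C(t\wedge\tau_n)-A^k_{t\wedge\tau_n}}\phi(X_{t\wedge\tau_n})$ is a $\mathbb Q$-supermartingale, so $\mathbb Q(\tau_n\le T,\ A^k_{\tau_n}\le R)\le e^{CT+R}\phi(x)/\inf_{\partial D_n}\phi\to0$. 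Alternatively, the fact $\mathbb P(\tau_{\mathrm{exp}}=\infty)=1$ that you invoke in Step~1 gives, through your own construction, $\mathbb E^{\mathbb Q}[e^{-A^k_{\tau_n}}\mathbb I_{\{\tau_n\le T\}}]=\mathbb P(\tau_n\le T,\ \tau_n<\tau_{\mathrm{kill}})\to\mathbb P(\tau_{\mathrm{exp}}\le T)=0$. This forces $\int_0^{\tau_{\mathrm{exp}}}k_s\,ds=\infty$ on $\{\tau_{\mathrm{exp}}\le T\}$.

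There are also two smaller slips. Testing with $f\equiv1$ gives the martingale $\mathbb I_{\{\tau_\infty>\cdot\}}+\int_0^{\cdot}k_u\mathbb I_{\{\tau_\infty>u\}}\,du$ (stopped at $\tau_n$), with a plus sign, not the difference you wrote. The product rule should read $d\bigl(e^{\tilde A}(1-N)\bigr)=-e^{\tilde A}\,d(N-\tilde A)$ up to the killing time.
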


\begin{proof}
Set
\[
\widetilde A^k_t
:=
\int_0^{t\wedge\tau_{\mathrm{kill}}} k_s\,ds,
\qquad
Z_t:=\exp(\widetilde A^k_t)\mathbb I_{\{\tau_\infty>t\}} .
\]
Applying the generalized \(L^\beta\)-martingale problem to the constant test
function \(1\) shows that \(Z_{\cdot\wedge\tau_m}\) is a true
\(\mathbb P\)-martingale on every finite horizon.

For \(m\ge1\) and \(T>0\), set \(\rho_{m,T}:=\tau_m\wedge T\). For every
bounded \(\mathcal G_{m,T}\)-measurable functional \(F\) on
\(\Omega^{\mathrm{pre}}\), define
\[
\mathbb E^{\mathbb Q^{m,T}}[F]
:=
\mathbb E^{\mathbb P}
\left[
Z_{\rho_{m,T}}
F\bigl(X_{\cdot\wedge\rho_{m,T}}\bigr)
\right].
\]
Optional sampling for \(Z_{\cdot\wedge\tau_m}\) implies that the family
\(\{\mathbb Q^{m,T}\}_{m,T}\) is consistent. Hence, by the stopped pre-path
extension theorem (see \cite[Theorem~1.10.5 and Exercise~1.11]{pinsky1995positive}), there is a unique probability measure
\(\overline{\mathbb Q}\) on
\((\Omega^{\mathrm{pre}},\mathcal G^{\mathrm{pre}})\) with these stopped
marginals.

Bayes' formula transfers the \(L^\beta\)-martingale problem under
\(\mathbb P\) into the \(L^\gamma\)-martingale problem under
\(\overline{\mathbb Q}\): for every \(f\in C_b^\infty(D)\) and \(m\ge1\),
\[
f(X_{t\wedge\tau_m})
-
\int_0^{t\wedge\tau_m}
L^\gamma
\bigl(u,X,\nabla f(X_u),\nabla^2f(X_u)\bigr)\,du
\]
is an \(\overline{\mathbb Q}\)-martingale.

We next show that \(\overline{\mathbb Q}\) is supported on \(\hat\Omega\).
On \([0,\tau_{m+1}]\), admissibility and local boundedness of the support
sets imply that the drift and diffusion characteristics are bounded by
constants depending only on \(m\). Hence, by
Proposition~\ref{prop:small_time_exit} and a finite covering of
\(\overline D_m\) by balls compactly contained in \(D_{m+1}\), there exist
\(h_m>0\) and \(\alpha_m<1\) such that
\[
\sup_{x\in\overline D_m}
\sup_{\beta\in\mathcal B_{\mathrm{ad}}(G)}
\sup_{\mathbb P\in\mathcal P_x(L^\beta)}
\mathbb P(\tau_{m+1}\le h_m<\tau_{\rm kill})
\le \alpha_m .
\]
For the de-killed stopped pre-path laws constructed above, this gives the
conditional estimate
\[
\overline{\mathbb Q}
\left(
\tau_{m+1}\circ\theta_\sigma\le h_m
\,\middle|\,
\mathcal G_\sigma
\right)
\le \alpha_m
\]
on \(\{X_\sigma\in\overline D_m,\ \sigma<\tau_\infty\}\). The standard
passage argument of \cite[Chapter~1.11]{pinsky1995positive} therefore yields
\[
\overline{\mathbb Q}
\left(
\tau_\infty\le T
\text{ and }
X_t\in\overline D_m
\text{ for infinitely many }t\uparrow\tau_\infty
\right)
=0
\]
for every \(m\ge1\) and \(T>0\).
Taking the union over \(m\) and rational
\(T\), we get \(\overline{\mathbb Q}(\hat\Omega)=1\). We henceforth denote
its restriction to \((\hat\Omega,\hat{\mathcal F})\) by \(\mathbb Q\). The
localized martingale identities above imply \(\mathbb Q\in\mathcal P_x(L^\gamma)\).

It remains to check that \((A^k,\mathbb Q)\in\mathfrak U\). Local boundedness
of admissible characteristics implies that \(A^k_t<\infty\) for every
\(t<\tau_{\exp}\), \(\mathbb Q\)-a.s. To prove divergence at explosion, let
\(V\) be the Lyapunov function in Assumption~\ref{assume:lyapunov}.
Then \(e^{-C(t\wedge\tau_n)}e^{-A^k_{t\wedge\tau_n}}V(X_{t\wedge\tau_n})\) is a \(\mathbb Q\)-supermartingale. 
Consequently, for every \(T,R>0\),
\[
\mathbb Q(\tau_n\le T,\ A^k_{\tau_n}\le R)
\le
\frac{e^{CT+R}V(x)}{\inf_{\partial D_n}V}.
\]
Letting \(n\to\infty\) and using the properness of \(V\) gives \(\mathbb Q(\tau_{\exp}\le T,\ A^k_{\tau_{\exp}}\le R)=0\).
Taking the union over \(R\in\mathbb N\) and rational \(T>0\), we obtain \(A^k_{\tau_{\exp}}=\infty\) on \(\{\tau_{\exp}<\infty\}\), \(\mathbb Q\)-a.s.
Hence, by the extended-integral convention, \(A^k_t=\infty\) for all
\(t\ge\tau_{\exp}\) on \(\{\tau_{\exp}<\infty\}\). Therefore
\((A^k,\mathbb Q)\in\mathfrak U\).

Finally, for every \(t\ge0\) and bounded
\(\hat{\mathcal F}_t\)-measurable \(Y\), the definition of the stopped
de-killed laws gives, first on \(\{t<\tau_m\}\) and then by letting
\(m\to\infty\),
\[
\mathbb E^{\mathbb Q}
\left[
e^{-A^k_t}Y\mathbb I_{\{\tau_{\exp}>t\}}
\right]
=
\mathbb E^{\mathbb P}
\left[
Y\circ\mathfrak r\,\mathbb I_{\{\tau_\infty>t\}}
\right].
\]
By Theorem~\ref{thm:Phi_properties}~\ref{thm:Phi_properties_1}, this identifies
\(\mathbb P\) with \(\Phi(A^k,\mathbb Q)\). Uniqueness follows from Theorem~\ref{thm:Phi_properties}~\ref{thm:Phi_properties_3}.
\end{proof}

The first theorem says that the pair-space construction and the virtual martingale-problem construction describe the same model class.

\begin{theorem}
\label{thm:pair_virtual_identification}
Let $G:D\times\mathbb R\times\mathbb R^d\times\mathbb S(d)\to\mathbb R$
satisfy \ref{item:G1}-\ref{item:G3}, and
Assumption~\ref{assume:lyapunov}. Fix $x\in D$ and for every $\beta=(C,B,\Sigma)\in\mathcal B_{\mathrm{ad}}(G)$, write \(\gamma=(B,\Sigma)\) and \(k=-C\).
Then,
\[
\Phi\bigl(\mathcal U_x(G)\bigr)
=
\Bigl\{
\mathbb P\in\mathfrak M:
\mathbb P\in\mathcal P_x(L^\beta)
\text{ for some }
\beta\in\mathcal B_{\mathrm{ad}}(G)
\Bigr\}.
\]
Moreover, for every $\varphi\in C_b^\infty(D)$,
\[
\Phi\bigl(\mathcal U_x(G;\varphi)\bigr)
=
\Bigl\{
\mathbb P\in\mathfrak M:
\mathbb P\in\mathcal P_x(L^\beta)
\text{ for some }
\beta\in\mathcal B_{\mathrm{eff}}(G;\varphi)
\Bigr\}.
\]
\end{theorem}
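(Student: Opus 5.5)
We prove the two set identities by a double inclusion. The inclusion ``$\supseteq$'' is supplied by Lemma~\ref{lem:remove_killing_gmp}, and the inclusion ``$\subseteq$'' by a direct computation through Theorem~\ref{thm:Phi_properties}. The effective version then follows verbatim, because $\mathcal B_{\mathrm{eff}}(G;\varphi)\subseteq\mathcal B_{\mathrm{ad}}(G)$ and the correspondence never alters the coefficient field $\beta$.

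For ``$\subseteq$'', take $(A^k,\mathbb Q)\in\mathcal U_x(G)$ with $\beta=(-k,\gamma)\in\mathcal B_{\mathrm{ad}}(G)$ and $\mathbb Q\in\mathcal P_x(L^\gamma)$, and set $\mathbb P:=\Phi(A^k,\mathbb Q)$. The initial condition $\mathbb P(X_0=x)=1$ is immediate. The main task is to show that $M^{f,n,\beta}$ from \eqref{eq:defM_f,nlinearmtg} is a $\mathbb P$-martingale for every $f\in C_b^\infty(D)$ and $n\ge1$. We compute its conditional increments on $\{\tau_\infty>s\}$ by using Theorem~\ref{thm:Phi_properties}\ref{thm:Phi_properties_1}, together with the conditioning identity of Proposition~\ref{prop:Phi_preserves_dynamic_operations}\ref{prop:Phi_preserves_conditioning}. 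These reduce the required identity to a statement under $\mathbb Q$: the process
\[
e^{-A^k_{t\wedge\tau_n}}f(X_{t\wedge\tau_n})-\int_0^{t\wedge\tau_n}e^{-A^k_u}L^\beta\bigl(u,X,f(X_u),\nabla f(X_u),\nabla^2 f(X_u)\bigr)\,du
\]
must be a $\mathbb Q$-martingale. This follows from It\^o's product rule, since $e^{-A^k}$ is continuous and of finite variation with $d e^{-A^k_u}=-k_u e^{-A^k_u}du=C_u e^{-A^k_u}du$. The resulting $C_u f(X_u)$ term is exactly the $r$-component of $L^\beta$. Before $\tau_n$ all characteristics are bounded, because $A(\cdot)$ is uniformly bounded on $\overline D_n$, so the local martingale is a true martingale. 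The martingale problem for $\mathbb Q$ is stated for $f\in C_c^\infty(D)$; we extend it to $f\in C_b^\infty(D)$ by modifying $f$ outside $\overline D_{n+1}$, which leaves the process stopped at $\tau_n$ unchanged.

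For ``$\supseteq$'', take $\mathbb P\in\mathcal P_x(L^\beta)$ with $\beta\in\mathcal B_{\mathrm{ad}}(G)$. Lemma~\ref{lem:remove_killing_gmp} gives $\mathbb Q\in\mathcal P_x(L^\gamma)$ with $(A^k,\mathbb Q)\in\mathfrak U$ and $\mathbb P=\Phi(A^k,\mathbb Q)$. By Definition~\ref{def:DUSassociatedG}, $(A^k,\mathbb Q)\in\mathcal U_x(G)$. The only delicate point is that $\beta$ is defined on $\tilde\Omega$ through the de-killed map $\mathfrak r$, so we must check that it corresponds to a progressively measurable field on $\hat\Omega$. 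This holds by construction of the extension $\tilde\beta$.

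The step we expect to be the main obstacle is the transfer of the martingale property across $\Phi$ in the first inclusion. There we must verify the martingale property with respect to the enlarged filtration $(\tilde{\mathcal F}_s)$, whose extra information is the killing time. We would handle this with the memoryless decomposition already used in Proposition~\ref{prop:Phi_preserves_dynamic_operations}. On $\{\tau_\infty\le s\}$ the increments vanish. On $\{\tau_\infty>s\}$, every $\tilde{\mathcal F}_s$-measurable $H$ agrees with an $\hat{\mathcal F}_s$-measurable $\hat H$. Conditioning on $\hat{\mathcal F}_\infty$ then yields the factor $e^{-(A^k_t-A^k_s)}$, and this reduces the check to the displayed $\mathbb Q$-martingale.
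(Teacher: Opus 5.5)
Your proposal is correct and follows essentially the same route as the paper. The inclusion $\subseteq$ comes from integration by parts against $e^{-A^k}$ plus the identification in Theorem~\ref{thm:Phi_properties}\ref{thm:Phi_properties_1}; the inclusion $\supseteq$ comes from Lemma~\ref{lem:remove_killing_gmp}; and the effective case is the same argument using $\mathcal B_{\mathrm{eff}}(G;\varphi)\subseteq\mathcal B_{\mathrm{ad}}(G)$. Your extra care over the enlarged filtration $(\tilde{\mathcal F}_s)$, the passage from $C_c^\infty$ to $C_b^\infty$ test functions, and the true-martingale property before $\tau_n$ supplies details the paper leaves implicit.
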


\begin{proof}
First assume that $\mathbb Q\in\mathcal P_x(L^\gamma)$ and set $\mathbb P:=\Phi(A^k,\mathbb Q)$.
For $f\in C_b^\infty(D)$ and $n\ge1$, the process
\[
N_t^{f,n,\gamma}
:=
f(X_{t\wedge\tau_n})
-
\int_0^{t\wedge\tau_n}
L^\gamma\!\left(
u,X,\nabla f(X_u),\nabla^2 f(X_u)
\right)\,du
\]
is a $\mathbb Q$-martingale. Since \(A_t^k=\int_0^t k_u\,du\) is continuous and of finite variation, integration by parts gives that
\[
\widetilde M_t^{f,n}
:=
e^{-A^k_{t\wedge\tau_n}}f(X_{t\wedge\tau_n})
-
\int_0^{t\wedge\tau_n}
e^{-A_u^k}
L^\beta\!\left(
u,X,
f(X_u),\nabla f(X_u),\nabla^2 f(X_u)
\right)\,du
\]
is a $\mathbb Q$-martingale.
By Theorem~\ref{thm:Phi_properties}, the discounted expectations under $\mathbb Q$ are identified with the killed expectations under
$\mathbb P=\Phi(A^k,\mathbb Q)$. 
Consequently, the \(t=0\) specialization of the process in \eqref{eq:defM_f,nlinearmtg}, denoted by \(M^{f,n,\beta}\), is a \(\mathbb P\)-martingale.
Therefore $\mathbb P\in\mathcal P_x(L^\beta)$.

Conversely, assume that $\mathbb P\in\mathcal P_x(L^\beta)$. By
Lemma~\ref{lem:remove_killing_gmp}, there exists a unique
$\mathbb Q\in\mathcal P_x(L^\gamma)$ such that \(\mathbb P=\Phi(A^k,\mathbb Q)\).
This proves the reverse implication.

The characterization of $\Phi\bigl(\mathcal U_x(G)\bigr)$ follows directly from the
definition of $\mathcal U_x(G)$ and the equivalence just proved. The same
argument with $\mathcal B_{\mathrm{ad}}(G)$ replaced by
$\mathcal B_{\mathrm{eff}}(G;\varphi)$ gives the characterization of
$\Phi\bigl(\mathcal U_x(G;\varphi)\bigr)$.
\end{proof}

In view of Theorem~\ref{thm:pair_virtual_identification}, 
for a general initial time-history pair
$(t,\omega)\in[0,\infty)\times\tilde\Omega$, we define the corresponding
virtual model classes directly by
\[
\mathcal P_{t,\omega}(G)
:=
\Bigl\{
\mathbb P\in\mathfrak M:
\mathbb P\in\mathcal P_{t,\omega}(L^\beta)
\text{ for some }
\beta\in\mathcal B_{\mathrm{ad}}(G)
\Bigr\},
\]
and, for $\varphi\in C_b^\infty(D)$,
\[
\mathcal P_{t,\omega}(G;\varphi)
:=
\Bigl\{
\mathbb P\in\mathfrak M:
\mathbb P\in\mathcal P_{t,\omega}(L^\beta)
\text{ for some }
\beta\in\mathcal B_{\mathrm{eff}}(G;\varphi)
\Bigr\}.
\]
When $t=0$ and $\omega$ is the constant path at $x\in D$, these definitions
agree with the pair-induced classes \(\Phi\bigl(\mathcal U_x(G)\bigr)\) and \(\Phi\bigl(\mathcal U_x(G;\varphi)\bigr)\), respectively.

The next definition gives the coefficient-free martingale formulations used below.
These formulations refer only to the generating function $G$, and not to a particular choice of coefficient field.

\begin{definition}
\label{def:G_supermartingale_problem}
Let \(G:D\times\mathbb R\times\mathbb R^d\times\mathbb S(d)\to\mathbb R\) be a function, and fix $(t,\omega)\in[0,\infty)\times\tilde\Omega$.
A probability measure $\mathbb P$ on
$(\tilde\Omega,\tilde{\mathcal F})$ is said to solve the
generalized $G$-supermartingale problem starting from $(t,\omega)$ if the
following conditions hold:

\begin{enumerate}[label=(\roman*)]
\item $\mathbb P(X_s=\omega_s,\ \forall s\in[0,t])=1$.

\item For every $f\in C_b^\infty(D)$ and every $n\ge1$, the process
\begin{align}\label{eq:defM_f,nonline}
M_s^{f,n}
:=
f(X_{\bar s_n^t})
\mathbb I_{\{\tau_\infty>\bar s_n^t\}}
-
\int_t^{\bar s_n^t}
G\bigl(
X_r,
f(X_r),
\nabla f(X_r),
\nabla^2 f(X_r)
\bigr)
\mathbb I_{\{\tau_\infty>r\}}\,dr,
\end{align}
where \(\bar s_n^t:=(s\wedge\tau_n)\vee t\),
is a $\mathbb P$-supermartingale.
\end{enumerate}

Given $\varphi\in C_b^\infty(D)$, we say that $\mathbb P$ solves the
generalized $(G,\varphi)$-martingale problem starting from $(t,\omega)$
if $\mathbb P$ solves the generalized $G$-supermartingale problem starting
from $(t,\omega)$ and, for every $n\ge1$, the process $M^{\varphi,n}$ is a
$\mathbb P$-martingale.
\end{definition}

The next theorem gives a coefficient-free formulation of these virtual model
classes. This formulation is useful because it refers only to $G$ and not to a
particular choice of coefficient field.

\begin{theorem}
\label{thm:coefficient_free_virtual_models}
Let $G$ satisfy \ref{item:G1}-\ref{item:G3}. Then, for every
$(t,\omega)\in[0,\infty)\times\tilde\Omega$,
\[
\mathcal P_{t,\omega}(G)
=
\{\text{solutions to the generalized }G\text{-supermartingale problem
starting from }(t,\omega)\}.
\]
Moreover, for every $\varphi\in C_b^\infty(D)$,
\[
\mathcal P_{t,\omega}(G;\varphi)
=
\{\text{solutions to the generalized }(G,\varphi)\text{-martingale problem
starting from }(t,\omega)\}.
\]
\end{theorem}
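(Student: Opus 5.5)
We prove the two set equalities by showing both inclusions in each. The easy direction is ``$\subseteq$''. Fix $\beta=(C,B,\Sigma)\in\mathcal B_{\mathrm{ad}}(G)$ and $\mathbb P\in\mathcal P_{t,\omega}(L^\beta)$. For $f\in C_b^\infty(D)$ and $n\ge1$, the process $M^{f,n,\beta}$ in \eqref{eq:defM_f,nlinearmtg} is a $\mathbb P$-martingale. Admissibility \eqref{eq:coefficientadmissible} gives the pointwise bound
\[
L^\beta\bigl(r,X,f(X_r),\nabla f(X_r),\nabla^2f(X_r)\bigr)\le G\bigl(X_r,f(X_r),\nabla f(X_r),\nabla^2 f(X_r)\bigr)
\]
on $\{\tau_\infty>r\}$. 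Hence $M^{f,n}$ in \eqref{eq:defM_f,nonline} equals $M^{f,n,\beta}$ minus a nondecreasing adapted process, so it is a supermartingale. Integrability is automatic, since before $\tau_n$ both integrands are bounded: $G$ is continuous, $\overline D_n$ is compact, and $A(\cdot)$ is uniformly bounded there by upper hemicontinuity of the support correspondence. If $\beta$ is $\varphi$-effective, the inequality is an equality along the jet of $\varphi$, so $M^{\varphi,n}=M^{\varphi,n,\beta}$ is a martingale. This gives both ``$\subseteq$'' inclusions.

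For ``$\supseteq$'', let $\mathbb P$ solve the $G$-supermartingale problem. We extract its semimartingale characteristics and show they lie in $A(\cdot)$. First test with $f\equiv\pm1$ (localized to $C_b^\infty$, which is harmless since only jets on $D_n$ matter before $\tau_n$). With $N_s:=\mathbb I_{\{\tau_\infty\le s\}}$, the process $1-N$ is a supermartingale up to a $G(\cdot,1,0,0)$-drift. Combined with the bound for $-1$ and sublinearity, this shows that $N$ has a compensator $\int k_r\,dr$ that is absolutely continuous with locally bounded density. Next test with coordinate functions $x_i$ and products $x_ix_j$, suitably cut off near $\overline D_n$. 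Doob--Meyer applied to each supermartingale $M^{f,n}$ shows that on $\{\tau_\infty>s\}$ the coordinate process is a continuous semimartingale before killing. The Lipschitz-type bound $|G(y,U)|\le L\|U\|$ on compacts (Lemma~\ref{lem:local_jet_lipschitz}) forces the drift $B$ and the quadratic-variation density $\Sigma$ to be absolutely continuous in time with locally bounded, progressively measurable densities.

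Setting $\beta:=(-k,B,\Sigma)$ before $\tau_\infty$ and $\beta:=0$ afterwards, Itô's formula for the killed process gives, for every $f$,
\[
M^{f,n}_s-M^{f,n,\beta}_s=\int_t^{\bar s^t_n}\Bigl(L^\beta(r,X,J^2f(X_r))-G(X_r,J^2f(X_r))\Bigr)\mathbb I_{\{\tau_\infty>r\}}\,dr,
\]
where $J^2f:=(f,\nabla f,\nabla^2 f)$. Because $M^{f,n,\beta}$ is a local martingale, the supermartingale property forces $L^\beta(r,X,J^2f(X_r))\le G(X_r,J^2f(X_r))$ for $dr\otimes d\mathbb P$-a.e.\ $(r,\omega)$ before $\tau_\infty$. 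The main obstacle is upgrading this to $\beta(r,\omega)\in A(X_r)$, which requires the inequality for all $U\in E$ simultaneously rather than along jets of individual test functions. To handle this, take a countable dense family of jets $U_q\in E$ and build functions $f_q$ with $J^2_{y}f_q=U_q$ for $y$ near a point of a countable dense set of states, using cutoff quadratics as in the proof of Lemma~\ref{lem:local_jet_lipschitz}. Continuity of $G$ and of $y\mapsto J^2_yf_q$, together with linearity of $L^\beta$ in $U$, then yields $L^\beta(r,\omega,U)\le G(X_r,U)$ for all $U$ off a single null set.

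The sign constraints $C\le0$ and $\Sigma\ge0$ are automatic, since $k\ge0$ as a compensator density and $\Sigma$ is a quadratic-variation density. We then modify $\beta$ on the null set by a measurable selection of $A(X_r)$; such a selection exists because $A$ is compact-valued and upper hemicontinuous. This makes $\beta\in\mathcal B_{\mathrm{ad}}(G)$ and $\mathbb P\in\mathcal P_{t,\omega}(L^\beta)$, since the martingale property for $M^{f,n,\beta}$ follows from the local-martingale property plus boundedness before $\tau_n$.

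For the $(G,\varphi)$ case, the identity above shows that if $M^{\varphi,n}$ is a martingale then $\int(G-L^\beta)(X_r,J^2\varphi(X_r))\,dr$ is a nonnegative nondecreasing martingale starting at $0$, hence vanishes. So $\beta(r,\omega)\in\nabla G\bigl(X_r,J^2\varphi(X_r)\bigr)$ a.e. After the same measurable modification, now selecting from the compact set $\nabla G$, we get $\beta\in\mathcal B_{\mathrm{eff}}(G;\varphi)$, which completes the second equality.
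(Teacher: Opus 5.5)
Your proposal is correct and follows essentially the same route as the paper. The easy inclusions come from $L^\beta\le G$, with equality along the jet of $\varphi$. For the converse you extract absolutely continuous characteristics $(k,B,\Sigma)$ from the supermartingale property for constants, cut-off coordinates and cut-off quadratics. Comparing finite-variation parts then gives $L^\beta\le G$ a.e., and you modify on a null set by a measurable selector from $A(\cdot)$ or $\nabla G$.

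Two of your steps usefully spell out what the paper only asserts: the countable-dense-jet argument that upgrades the inequality to all $U$, and the nonnegative nondecreasing martingale argument for the binding equality. One small point: the local bound $|G(y,U)|\le L_K\|U\|$ follows directly from \ref{item:G1}. You do not need Lemma~\ref{lem:local_jet_lipschitz}, whose hypotheses concern a generator.
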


\begin{proof}
We first prove the characterization of $\mathcal P_{t,\omega}(G)$.

Suppose that $\mathbb P\in\mathcal P_{t,\omega}(G)$. Then, by definition,
there exists $\beta\in\mathcal B_{\mathrm{ad}}(G)$ such that \(\mathbb P\in\mathcal P_{t,\omega}(L^\beta)\).
Thus the corresponding generalized $L^\beta$-martingale problem holds. 
Since \(L^\beta(r,\eta,U)\le G(X_r(\eta),U)\) for \(r<\tau_\infty(\eta)\), the process $M^{f,n}$ defined as \eqref{eq:defM_f,nonline} is obtained from the corresponding $L^\beta$-martingale by subtracting an increasing predictable process.
Hence $M^{f,n}$ is a $\mathbb P$-supermartingale.

Conversely, suppose that $\mathbb P$ is a solution to the generalized $G$-supermartingale problem. Let \(H_s:=\mathbb I_{\{\tau_{\mathrm{kill}}\le s\}}\).
The supermartingale identities applied to constants, truncated coordinate
functions, and truncated quadratic functions imply that the stopped coordinate
process has absolutely continuous characteristics. More precisely, on
each localization interval there exist progressively measurable processes \((k^{\mathbb P}, B^{\mathbb P}, \Sigma^{\mathbb P})\in[0,\infty)\times\mathbb R^d\times \mathbb S^+(d)\) such that the compensator of $H$ is \(\int_t^\cdot k_u^{\mathbb P}\,du\),
and the state component has drift $B^{\mathbb P}$ and quadratic variation
density $\Sigma^{\mathbb P}$. Put \(C^{\mathbb P}:=-k^{\mathbb P}\).
Then the associated linear operator is
\[
L_u^{\mathbb P}(r,p,X)
:=
\frac12\operatorname{tr}(\Sigma_u^{\mathbb P}X)
+
B_u^{\mathbb P}\cdot p
+
C_u^{\mathbb P}r.
\]
The finite-variation part of $M^{f,n}$ is
\[
\int_t^{(s\wedge\tau_n)\vee t}
\Bigl[
L_u^{\mathbb P}
\bigl(
f(X_u),\nabla f(X_u),\nabla^2 f(X_u)
\bigr)
-
G\bigl(
X_u,f(X_u),\nabla f(X_u),\nabla^2 f(X_u)
\bigr)
\Bigr]
\mathbb I_{\{\tau_{\mathrm{kill}}>u\}}\,du.
\]
Since $M^{f,n}$ is a supermartingale for every $f\in C_b^\infty(D)$, we get \(L_u^{\mathbb P}(r,p,X)\le G(X_u,r,p,X)\) for all $(r,p,X)\in\mathbb R\times\mathbb R^d\times\mathbb S(d)$,
$du\times d\mathbb P$-a.e. on $\{u<\tau_\infty\}$. Equivalently, \((C_u^{\mathbb P},B_u^{\mathbb P},\Sigma_u^{\mathbb P})\in A(X_u)\), \(du\times d\mathbb P\)-a.e. on \(\{u<\tau_\infty\}\).
After modifying the coefficients on a $du\times d\mathbb P$-null set by a
measurable selector from $A(\cdot)$, we obtain a coefficient field
$\beta\in\mathcal B_{\mathrm{ad}}(G)$ such that \(\mathbb P\in\mathcal P_{t,\omega}(L^\beta)\).
Thus $\mathbb P\in\mathcal P_{t,\omega}(G)$.

It remains to prove the effective characterization. If
$\mathbb P\in\mathcal P_{t,\omega}(G;\varphi)$, then
$\mathbb P\in\mathcal P_{t,\omega}(L^\beta)$ for some
$\beta\in\mathcal B_{\mathrm{eff}}(G;\varphi)$. 
Clearly, the definition of $\mathcal B_{\mathrm{eff}}(G;\varphi)$ implies that $M^{\varphi,n}$ is a $\mathbb P$-martingale.

Conversely, suppose that $\mathbb P\in\mathcal P_{t,\omega}(G)$ and
$M^{\varphi,n}$ is a $\mathbb P$-martingale for every $n\ge1$. From the first
part of the proof, $\mathbb P$ has local characteristics
$(C^{\mathbb P},B^{\mathbb P},\Sigma^{\mathbb P})$ satisfying \((C_u^{\mathbb P},B_u^{\mathbb P},\Sigma_u^{\mathbb P})\in A(X_u)\), $du\times d\mathbb P$-a.e. The martingale property of $M^{\varphi,n}$ further
implies the binding equality
\[
L_u^{\mathbb P}
\bigl(
\varphi(X_u),\nabla\varphi(X_u),\nabla^2\varphi(X_u)
\bigr)
=
G\bigl(
X_u,\varphi(X_u),\nabla\varphi(X_u),\nabla^2\varphi(X_u)
\bigr),
\]
$du\times d\mathbb P$-a.e. Therefore
\[
(C_u^{\mathbb P},B_u^{\mathbb P},\Sigma_u^{\mathbb P})
\in
\nabla G\bigl(
X_u,
\varphi(X_u),
\nabla\varphi(X_u),
\nabla^2\varphi(X_u)
\bigr),
\]
$du\times d\mathbb P$-a.e. After modifying on a null set by a measurable
selector from the subgradient correspondence, we obtain
$\beta\in\mathcal B_{\mathrm{eff}}(G;\varphi)$ such that \(\mathbb P\in\mathcal P_{t,\omega}(L^\beta)\).
Hence $\mathbb P\in\mathcal P_{t,\omega}(G;\varphi)$.
\end{proof}

\subsection{Regularity of Uncertainty Structures}\label{subsec:regularity_PG}

We record here a basic regularity property of the virtual model family $\{\mathcal P_{t,\omega}(G)\}$, namely upper hemicontinuity in the initial condition together with weak compactness of unions over compact parameter sets.

\begin{proposition}\label{prop:propertiesP_t,x_revised}
Let $G$ satisfy \ref{item:G1}-\ref{item:G3} and Assumption~\ref{assume:lyapunov}.
Then the set-valued mapping \((t,\omega)\mapsto\mathcal{P}_{t,\omega}(G)\) is upper hemicontinuous on \(\{(t,\omega):t<\tau_\infty(\omega)\}\).
Moreover, for every compact set
\(K\subset \{(t,\omega):t<\tau_\infty(\omega)\}\), the union
\[
\bigcup_{(t,\omega)\in K}\mathcal{P}_{t,\omega}(G)
\]
is a weakly compact subset of \(\mathfrak M\).
\end{proposition}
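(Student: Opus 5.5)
The approach is the standard tightness-plus-closedness argument. I would prove the compactness statement first and then read off upper hemicontinuity from it.

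\emph{Tightness.} Fix a compact $K\subset\{(t,\omega):t<\tau_\infty(\omega)\}$ and consider the family $\{\mathbb P\in\mathcal P_{t,\omega}(G):(t,\omega)\in K\}$. Each such $\mathbb P$ lies in $\mathcal P_{t,\omega}(L^\beta)$ for some $\beta\in\mathcal B_{\mathrm{ad}}(G)$, so I would check the three hypotheses of Proposition~\ref{prop:compactness_criterion}.
\begin{enumerate}
\item The initial data $\{(t,\omega(\cdot\wedge t))\}$ are precompact, with bounded $t$, because $K$ is compact. Since $t<\tau_\infty(\omega)$ on $K$, the stopped histories stay in some $D_{n_0}$ uniformly.
\item Before $\tau_n$, admissibility places $\beta(u,\omega)$ in $A(\omega(u))$ with $\omega(u)\in\overline D_n$. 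By continuity of $G$ in \ref{item:G1}, the support sets $A(\cdot)$ are uniformly bounded on compacts, as already used in Proposition~\ref{prop:small_time_exit}. This gives the constants $K_n$.
\item The Lyapunov bound follows from $L^\beta\le G$ together with Assumption~\ref{assume:lyapunov}.
\end{enumerate}
Proposition~\ref{prop:compactness_criterion} then yields tightness on $\tilde\Omega$.

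\emph{Closedness.} Take $(t_j,\omega_j)\in K$ with $(t_j,\omega_j)\to(t,\omega)$, and $\mathbb P_j\in\mathcal P_{t_j,\omega_j}(G)$ with $\mathbb P_j\to\mathbb P$ weakly. I must show $\mathbb P\in\mathcal P_{t,\omega}(G)$. By Theorem~\ref{thm:coefficient_free_virtual_models}, it suffices to show that $\mathbb P$ solves the generalized $G$-supermartingale problem from $(t,\omega)$. The advantage is that this formulation involves no coefficient field, so it passes to limits.
\begin{itemize}
\item The initial condition (i) is closed under weak convergence.
\item For (ii), fix $f\in C_b^\infty(D)$, $n\ge1$, times $s<s'$, and a bounded continuous $\tilde{\mathcal F}_s$-measurable $H\ge0$. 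I need $\mathbb E^{\mathbb P}[H(M^{f,n}_{s'}-M^{f,n}_s)]\le0$, given that the same holds for every $\mathbb P_j$ with $M^{f,n}$ built from $t_j$.
\end{itemize}
The functional $\omega\mapsto M^{f,n}_s(\omega)$ is bounded: $f$ is bounded and $G(\cdot,J^2f)$ is bounded on $\overline D_n$. It is also continuous in $(t,\omega)$ except on the set where $\tau_n$, $\tau_\infty$, or the time $s$ itself is discontinuous, or where a jump occurs at time $s$.

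\emph{Main obstacle.} The genuine difficulty is that $\tau_n$ and the killing time $\tau_{\mathrm{kill}}$ are not $J_1$-continuous, so the indicators in $M^{f,n}$ are only almost surely continuous. I would handle this in three steps.
\begin{enumerate}
\item Replace $D_n$ by a slightly perturbed domain $D_{n}^{\epsilon}$, chosen from a continuum of levels so that the exit time is $\mathbb P$-a.s. continuous. For all but countably many levels, $\mathbb P(\tau^{\epsilon}\text{ irregular})=0$; this is the usual Stroock--Varadhan device.
\item Avoid killing atoms by taking $s,s'$ outside the countable set of fixed jump times of $\mathbb P$.
\item Use right-continuity of the supermartingale to recover all $s,s'$.
\end{enumerate}
The extended continuous mapping theorem then gives the inequality for the perturbed domains. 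Letting the perturbation vanish, and noting that $D_n$-localization can be recovered from the $D_n^\epsilon$ family by monotonicity and bounded convergence, yields the inequality for $M^{f,n}$. Convergence $t_j\to t$ is harmless, since $M^{f,n}$ depends continuously on $t$ through the integral lower limit.

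\emph{Conclusion.} Tightness plus closedness gives weak compactness of the union over $K$ (Prokhorov, using that $\tilde\Omega$ is Polish). Upper hemicontinuity then follows from the closedness step: for $(t_j,\omega_j)\to(t,\omega)$ in the domain, a small compact neighborhood of $(t,\omega)$ contains all but finitely many of the $(t_j,\omega_j)$, so every limit point of $\mathbb P_j\in\mathcal P_{t_j,\omega_j}(G)$ lies in $\mathcal P_{t,\omega}(G)$.
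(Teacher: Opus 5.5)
Your proposal is correct and follows essentially the paper's argument. It gets tightness of the union from the Lyapunov compactness criterion (Proposition~\ref{prop:compactness_criterion}) and closedness by passing the coefficient-free $G$-supermartingale inequalities of Theorem~\ref{thm:coefficient_free_virtual_models} to the weak limit. The paper orders this as upper hemicontinuity first, then compactness, and is less explicit than you about the $J_1$-discontinuities of $\tau_n$ and $\tau_{\mathrm{kill}}$.

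One small slip: $[0,\infty)\times\tilde\Omega$ is not locally compact, so there is no compact neighborhood of $(t,\omega)$. None is needed, because your closedness step already applies to any convergent sequence $(t_j,\omega_j)\to(t,\omega)$. Also, to return from a good perturbed level to $D_n$, it suffices to stop at $\tau_n$ the supermartingale localized at a good $D_n^{\epsilon}\supset\overline D_n$ (optional stopping).
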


\begin{proof}
We first prove the upper hemicontinuity.
Let \((t_m,\omega_m)\to(t,\omega)\), \(t<\tau_\infty(\omega)\), \(\mathbb P_m\in\mathcal P_{t_m,\omega_m}(G)\) and \(\mathbb P_m\to \mathbb P\) weakly.
We show that \(\mathbb P\in\mathcal P_{t,\omega}(G)\).
By the coefficient-free martingale characterization of
\(\mathcal P_{t,\omega}(G)\), given in
Theorem~\ref{thm:coefficient_free_virtual_models}, it suffices to verify the
initial condition and the stopped \(G\)-supermartingale inequalities.

The initial condition is stable under weak convergence and convergence of the
initial histories. Indeed, since each \(\mathbb P_m\) starts from
\((t_m,\omega_m)\), and since \((t_m,\omega_m)\to(t,\omega)\), the weak limit
\(\mathbb P\) starts from \((t,\omega)\).

It remains to check the stopped supermartingale inequalities. Fix
\(f\in C_b^\infty(D)\) and \(m\ge1\). For \(\mathbb P_n\), the \(t=t_m\) specialization of the process in \eqref{eq:defM_f,nonline}, denoted by \(M^{f,m,n}\)
is a \(\mathbb P_m\)-supermartingale. Equivalently, for every
\(t_m\le s_1\le s_2\) and every bounded continuous functional \(H\) measurable
with respect to the stopped history up to \(s_1\), \(\mathbb E^{\mathbb P_m}\left[H\left(M^{f,m,n}_{s_2}-M^{f,m,n}_{s_1}\right)\right]\le0\).
The corresponding stopped functionals are bounded and continuous outside the
standard negligible set of terminal times. Hence, by the weak convergence
\(\mathbb P_m\to\mathbb P\), the convergence
\((t_m,\omega_m)\to(t,\omega)\), and the same approximation argument as in
Step~1 of the proof of Theorem~\ref{thm:nonempty_nonMarkov}, the preceding
inequality passes to the limit. Therefore, for the process \(M^{f,n}\) in \eqref{eq:defM_f,nonline}, we have \(\mathbb E^{\mathbb P}\left[H\left(M^{f,n}_{s_2}-M^{f,n}_{s_1}\right)\right]\le0\).
Thus \(M^{f,n}\) is a \(\mathbb P\)-supermartingale for every
\(f\in C_b^\infty(D)\) and \(m\ge1\). By
Theorem~\ref{thm:coefficient_free_virtual_models}, this implies \(\mathbb P\in\mathcal P_{t,\omega}(G)\).
This proves the upper hemicontinuity.

It remains to prove weak compactness of unions. Let
\(K\subset\{(t,\omega):t<\tau_\infty(\omega)\}\) be compact and take \(\mathbb P_m\in\mathcal P_{t_m,\omega_m}(G)\) and \((t_m,\omega_m)\in K\).
By Assumption~\ref{assume:lyapunov} and the compactness criterion on the
extended canonical space, the family \(\{\mathbb P_m\}_{m\ge1}\) is tight.
Hence it is relatively compact for weak convergence. Let \(\mathbb P\) be any
weak limit of a subsequence. Since \(K\) is compact, after passing to a further
subsequence we may assume \((t_m,\omega_m)\to(t,\omega)\in K\).
By the upper hemicontinuity, \(\mathbb P\in\mathcal P_{t,\omega}(G)\).
Therefore every sequence in
\[
\bigcup_{(t,\omega)\in K}\mathcal P_{t,\omega}(G)
\]
admits a weakly convergent subsequence whose limit still belongs to the same
union. Since \(\mathfrak M\) is the space of probability measures on a Polish
path space endowed with the weak topology, this sequential compactness is
equivalent to weak compactness.
\end{proof}

\subsection{Existence of Effective Models}\label{subsubsec:nonempty_effective}

In this subsection, we show that the effective model class $\mathcal U_x(G;\varphi)$ introduced in the main text is nonempty. 
Recall that this is equivalent to the nonemptiness of the corresponding effective virtual model class $\mathcal P_x(G;\varphi)$.

\begin{theorem}
\label{thm:nonempty_nonMarkov}
Let \(G:D\times\mathbb R\times\mathbb R^d\times\mathbb S(d)\to\mathbb R\)
satisfy \ref{item:G1}--\ref{item:G3} and
Assumption~\ref{assume:lyapunov}. Then, for every
\((t,\omega)\in[0,\infty)\times\tilde\Omega\) and every
\(\varphi\in C_b^\infty(D)\), \(\mathcal P_{t,\omega}(G;\varphi)\neq\varnothing\).
\end{theorem}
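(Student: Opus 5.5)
The plan is to build an effective model by time discretization with frozen coefficients. We then paste the pieces, extract a weak limit by compactness, and identify the limit through the coefficient-free characterization in Theorem~\ref{thm:coefficient_free_virtual_models}. That theorem reduces the claim to exhibiting some $\mathbb P\in\mathfrak M$ with the right initial history that solves the generalized $(G,\varphi)$-martingale problem. Concretely, $\mathbb P$ must satisfy two conditions: $M^{f,n}$ in \eqref{eq:defM_f,nonline} is a supermartingale for every $f\in C_b^\infty(D)$ and $n\ge1$, and $M^{\varphi,n}$ is a martingale. By time shift and conditioning on the fixed history, it suffices to take $t=0$ and $\omega\equiv x\in D$.

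\emph{Step 1 (frozen selectors).} Set $U(y):=(\varphi(y),\nabla\varphi(y),\nabla^2\varphi(y))$. The set $\nabla G(y,U(y))$ is nonempty, compact and convex, and by continuity of $G$ (see \ref{item:G1}) the correspondence $y\mapsto\nabla G(y,U(y))$ has closed graph. The measurable maximum/selection theorem then gives a Borel selector $V(y)=(C(y),B(y),\Sigma(y))\in\nabla G(y,U(y))$, which is locally bounded on $D$. For a mesh $h>0$ we define $\mathbb P^h$ inductively on the grid $t_k=kh$. On $[t_k,t_{k+1})$, conditionally on $X_{t_k}=y$, the path is a Brownian motion with constant drift $B(y)$ and covariance $\Sigma(y)$, killed at constant rate $-C(y)$ and stopped upon exit from $D_n$, and these one-step laws are concatenated as in Definition~\ref{def:conditioning_concatenation_virtual_space}. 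Constant coefficients make each step explicitly solvable, so no existence theory for measurable coefficients is needed. The resulting $\mathbb P^h$ is killed when it leaves $D_n$; we take $n\to\infty$ diagonally at the end.

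\emph{Step 2 (tightness).} On $\{s\le\tau_n\}$ the characteristics of $\mathbb P^h$ are bounded by a constant $K_n$ independent of $h$, so the bounded-characteristics increment estimates of Proposition~\ref{prop:tightness_bounded_char} apply. To control exit from $D_n$ uniformly in $h$, we use the Lyapunov function $\phi$ from Assumption~\ref{assume:lyapunov}. Since $V(X_{t_k})\in A(X_{t_k})$, we have $L^{V(X_{t_k})}(J^2\phi(X_s))\le G(X_{t_k},J^2\phi(X_s))$, which differs from $G(X_s,J^2\phi(X_s))\le C\phi(X_s)$ by a term $\varepsilon_n(h)\to0$ on $D_n$, by uniform continuity of $G$ and $\phi$ on compacts. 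This yields the bound of Proposition~\ref{prop:lyap_tail_bounds}\ref{prop:lyap_tail_bounds_2} with an $o(1)$ correction. Aldous' criterion, applied as in Proposition~\ref{prop:compactness_criterion}, then gives tightness, and we pick a weak limit $\mathbb P$ along $h_j\downarrow0$.

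\emph{Step 3 (identifying the limit).} Under $\mathbb P^h$, for any $f$ the drift of $M^{f,n}$ on $[t_k,t_{k+1})$ equals
\[
L^{V(X_{t_k})}(J^2f(X_s))-G(X_s,J^2f(X_s))\le G(X_{t_k},J^2f(X_s))-G(X_s,J^2f(X_s)),
\]
and the right-hand side is $o(1)$ locally uniformly, because $\sup_{s\in[t_k,t_{k+1}]}|X_s-X_{t_k}|\to0$ in probability uniformly by the small-time exit bound (Proposition~\ref{prop:small_time_exit}). For $f=\varphi$ the same drift is two-sided small, since $L^{V(y)}(U(y))=G(y,U(y))$ and both sides are continuous in the jet argument with $V$ locally bounded. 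Hence $\mathbb E^{\mathbb P^h}[H(M^{f,n}_{s_2}-M^{f,n}_{s_1})]\le o(1)$ for bounded continuous $\mathcal F_{s_1}$-measurable $H$, with equality up to $o(1)$ when $f=\varphi$.

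\emph{Main obstacle.} The key difficulty is passing these inequalities to the weak limit. The functionals involve $\mathbb I_{\{\tau_\infty>\cdot\}}$ and the stopping at $\tau_n$, both of which are discontinuous in $J_1$. We plan three remedies. First, we replace $\tau_n$ by exit times of a.e.-chosen level sets, which are continuous off a null set. Second, we use that $\mathbb P^h$ and $\mathbb P$ charge no fixed killing time: the killing compensator is absolutely continuous with rate bounded by $K_n$, so $\mathbb P(\tau_{\mathrm{kill}}=s)=0$. Third, we integrate over $s_1,s_2$ so that only Lebesgue-a.e.\ times matter. Once the limit $\mathbb P$ is shown to solve the $(G,\varphi)$-martingale problem, Theorem~\ref{thm:coefficient_free_virtual_models} gives $\mathbb P\in\mathcal P_{0,x}(G;\varphi)$. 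We carry out this limit passage as ``Step~1'' because it is the argument reused in Proposition~\ref{prop:propertiesP_t,x_revised} and in the proof of Theorem~\ref{thm:effective_characterization}.
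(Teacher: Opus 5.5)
Your proposal is correct, but it produces the approximating laws by a different mechanism than the paper.

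\textbf{The paper's route.} The paper does not discretize time. It applies Cellina's approximation theorem to the convex-valued, upper hemicontinuous correspondence $y\mapsto\nabla G(y,a(y))$. This gives \emph{continuous} maps $b_\varepsilon$ satisfying, pointwise in $y$, $L^{b_\varepsilon}(y,W)\le G(y,W)+\delta_m(\varepsilon)\|W\|$ and $|L^{b_\varepsilon}(y,a(y))-G(y,a(y))|\le\delta_m(\varepsilon)$. Existence of $\mathbb P_\varepsilon\in\mathcal P_{t,x}(L^{b_\varepsilon})$ then comes from the Stroock--Varadhan existence theorem for continuous coefficients, with killing added. Tightness is proved only for the stopped laws $\mathbb P_\varepsilon\circ X_{\cdot\wedge\tau_m}^{-1}$, via bounded characteristics. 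A diagonal argument yields the limit, which is identified through Theorem~\ref{thm:coefficient_free_virtual_models}, as in your Step~3.

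\textbf{Your route.} You keep an exact Borel selector, so binding is exact at each grid point. The error moves into the within-step displacement $X_s-X_{t_k}$, which is small only in expectation.

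\textbf{What each buys.} Your route needs no Cellina theorem, so you never use convexity of $\nabla G$. It needs no existence theory, since each step is an explicit Gaussian. Because you use the Lyapunov function uniformly in $h$, you also get tightness on $\tilde\Omega$ itself, so you never have to glue stopped limits into a single law. The paper's route gives deterministic, pathwise defect bounds for the drifts of $M^{f,n}$ and $M^{\varphi,n}$, with time-homogeneous diffusion approximants. So no averaging over increments is needed before passing to the limit.

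\textbf{Minor points to tighten.}
\begin{enumerate}
\item In Step~2, the difference $G(X_{t_k},J^2\phi(X_s))-G(X_s,J^2\phi(X_s))$ is bounded on $D_n$ but not pathwise small; it is small in expectation, exactly as in your Step~3. This suffices, because for fixed $n$ the correction to the Lyapunov bound vanishes as $h\downarrow0$.
\item Your frozen laws are not in any $\mathcal P_y(L^\beta)$ with $\beta$ admissible. So Proposition~\ref{prop:small_time_exit} enters through its proof (bounded characteristics before $\tau_n$), not through its statement.
\item It is cleaner to let the scheme run in $D$ and explode continuously at $\partial D$ than to kill at $\partial D_n$. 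Killing there introduces a predictable jump to $\triangle$ that disappears only after the diagonal limit.
\end{enumerate}
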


\begin{proof}
If \(t\ge\tau_\infty(\omega)\), the claim is immediate from the degenerate
cemetery law. We therefore assume \(t<\tau_\infty(\omega)\). For notational
simplicity, we first consider the case where \(\omega\) is the constant path
with value \(x\in D\) up to time \(t\); the general case is obtained by
freezing the initial segment up to time \(t\).

Set \(a(y):=\bigl(\varphi(y),\nabla\varphi(y),\nabla^2\varphi(y)\bigr)\) and \(\Gamma(y):=\nabla G(y,a(y))\).
The correspondence \(\Gamma\) has nonempty compact convex values in
\((-\infty,0]\times\mathbb R^d\times\mathbb S^+(d)\) and is upper
hemicontinuous. Hence, by Cellina's approximation theorem
\cite{cellina1969approximation}, for each \(\varepsilon>0\) there exists a
continuous map \(b_\varepsilon:D\to(-\infty,0]\times\mathbb R^d\times\mathbb S^+(d)\) whose graph lies within distance \(\varepsilon\) of the graph of \(\Gamma\).

Fix \(m\in\mathbb N\). If \(B_\varepsilon(y)\subset D_m\), then the
graph-distance property, the support-function representation, and the
continuity of \(G\) imply that there exist moduli
\(\delta_m(\varepsilon)\downarrow0\) such that
\begin{equation}\label{eq:A0_embed_revised}
L^{b_\varepsilon}(y,W)
\le
G(y,W)+\delta_m(\varepsilon)\|W\|,
\qquad
\bigl|L^{b_\varepsilon}(y,a(y))-G(y,a(y))\bigr|
\le
\delta_m(\varepsilon),
\end{equation}
for all \(W\in\mathbb R\times\mathbb R^d\times\mathbb S(d)\) and all such
\(y\).
Since \(b_\varepsilon\) is continuous and locally bounded, the generalized
\(L^{b_\varepsilon}\)-martingale problem admits a solution
\(\mathbb P_\varepsilon\in\mathcal P_{t,x}(L^{b_\varepsilon})\): solve the
diffusion martingale problem with coefficients
\((B^{b_\varepsilon},\Sigma^{b_\varepsilon})\) by
\cite[Theorem~6.1.7]{stroock1997multidimensional}, and introduce killing
with rate \(-C^{b_\varepsilon}\).

Fix \(m\). On \(\{s\le\tau_m\}\), the coefficients \(b_\varepsilon\) are
uniformly bounded for all sufficiently small \(\varepsilon\). Hence
Proposition~\ref{prop:tightness_bounded_char} implies tightness of the
stopped laws \(\mathbb P_\varepsilon\circ X_{\cdot\wedge\tau_m}^{-1}\).
By a diagonal argument, there exist \(\varepsilon_k\downarrow0\) and a
probability measure \(\mathbb P\) such that, for every fixed \(m\),
\begin{equation}\label{eq:A0_embed_localconv_revised}
\mathbb P_{\varepsilon_k}\circ X_{\cdot\wedge\tau_m}^{-1}
\to
\mathbb P\circ X_{\cdot\wedge\tau_m}^{-1}
\qquad\text{weakly}.
\end{equation}

For \(s\ge t\), set \(\bar s_k^t:=(s\wedge\tau_k)\vee t\).
We first show that \(\mathbb P\in\mathcal P_{t,x}(G)\). Fix
\(f\in C_b^\infty(D)\) and \(n\ge1\), and let \(M^{f,n,b_{\varepsilon_k}}\) and \(M^{f,n}\) be the processes in \eqref{eq:defM_f,nlinearmtg} and \eqref{eq:defM_f,nonline}, respectively.
Since
\(\mathbb P_{\varepsilon_k}\in\mathcal P_{t,x}(L^{b_{\varepsilon_k}})\),
\(M^{f,n,b_{\varepsilon_k}}\) is a \(\mathbb P_{\varepsilon_k}\)-martingale, and
\eqref{eq:A0_embed_revised} gives
\[
M_s^{f,n}-M_r^{f,n}
\le
M_s^{f,n,b_{\varepsilon_k}}-M_r^{f,n,b_{\varepsilon_k}}
+
\delta_{k+1}(\varepsilon_n)(s-r),
\qquad s\ge r\ge t.
\]
Passing to the limit along \eqref{eq:A0_embed_localconv_revised}, using
boundedness and a.s.-continuity of the stopped functionals, shows that
\(M^{f,n}\) is a \(\mathbb P\)-supermartingale. Hence
\(\mathbb P\) solves the generalized \(G\)-supermartingale problem, and
therefore \(\mathbb P\in\mathcal P_{t,x}(G)\) by Theorem~\ref{thm:coefficient_free_virtual_models}.

It remains to prove the binding martingale condition for \(\varphi\). 
Since
\(M_s^{\varphi,n,b_{\varepsilon_k}}\) is a \(\mathbb P_{\varepsilon_k}\)-martingale, and
\eqref{eq:A0_embed_revised} yields
\[
\bigl|(M_s^{\varphi,n}-M_r^{\varphi,n})-(M_s^{\varphi,n,b_{\varepsilon_k}}-M_r^{\varphi,n,b_{\varepsilon_k}})\bigr|
\le
\delta_{k+1}(\varepsilon_n)(s-r),
\qquad s\ge r\ge t.
\]
Passing again to the limit along
\eqref{eq:A0_embed_localconv_revised}, we obtain that
\(M^{\varphi,n}\) is a \(\mathbb P\)-martingale for every \(n\ge1\).
Thus \(\mathbb P\in\mathcal P_{t,x}(G;\varphi)\).

For a general initial history \((t,\omega)\) with \(t<\tau_\infty(\omega)\),
we apply the preceding construction to the state \(\omega(t)\) and then
prepend the fixed path segment \(\omega|_{[0,t]}\). The resulting law belongs
to \(\mathcal P_{t,\omega}(G;\varphi)\).
\end{proof}

Now we introduce the time-dependent analogous of effective virtual model classes.

\begin{definition}\label{def:timeinhom_effective}
Let $u\in C_b^\infty([0,\infty)\times D)$. For $(t,\omega)\in[0,\infty)\times\tilde\Omega$, define $\mathcal P_{t,\omega}(G;u)$ to be the set of all probability measures $\mathbb P$ on $(\tilde\Omega,\tilde{\mathcal F})$ such that:
\begin{enumerate}
\item[(i)] $\mathbb P\in\mathcal P_{t,\omega}(G)$;
\item[(ii)] for every $n\ge1$, the process
\begin{align*}
M_s^{u,n}
:={}&
u\bigl(\bar s_n^t, X_{\bar s_n^t}\bigr)
\mathbb I_{\{\tau_\infty>(s\wedge\tau_n)\vee t\}}\\
&-\int_t^{\bar s_n^t}
\Bigl(
\partial_tu(r,X_r)
+
G\bigl(X_r,u(r,X_r),\nabla u(r,X_r),\nabla^2u(r,X_r)\bigr)
\Bigr)
\mathbb I_{\{\tau_\infty>r\}}\,dr,
\end{align*}
where \(\bar s_n^t:=(s\wedge\tau_n)\vee t\),
is a $\mathbb P$-martingale.
\end{enumerate}
\end{definition}

\begin{theorem}
\label{thm:static_to_time_dependent_effective}
Let \(G\) satisfy \ref{item:G1}--\ref{item:G3} and
Assumption~\ref{assume:lyapunov}. Let
\(\mathcal U=\{\mathcal U_{t,x}\}_{(t,x)\in[0,\infty)\times\hat D}\) be a
DUS such that \(\Phi(\mathcal U_{t,x})\subseteq\mathcal P_{t,x}(G)\) and \(\Phi(\mathcal U_{t,x})\cap\mathcal P_{t,x}(G;\varphi)\neq\varnothing\) for all \((t,x)\in[0,\infty)\times\hat D\) and \(\varphi\in C_b^\infty(D)\). Set
Then, for every \((t,x)\in[0,\infty)\times\hat D\) and every
\(u\in C_b^\infty([0,\infty)\times D)\), \(\Phi(\mathcal U_{t,x})\cap\mathcal P_{t,x}(G;u)\neq\varnothing\).
\end{theorem}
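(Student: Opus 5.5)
The plan is to build an element of \(\Phi(\mathcal U_{t,x})\cap\mathcal P_{t,x}(G;u)\) by freezing the time variable on a fine grid and pasting static effective models, in the same discretization--pasting--compactness spirit used in the proof of Theorem~\ref{thm:effective_characterization}. The case \(x=\triangle\) is trivial (the cemetery law), so fix \(x\in D\) and \(u\in C_b^\infty([0,\infty)\times D)\). For \(h>0\) and grid times \(t_k:=t+kh\), consider the static test functions \(\varphi_k:=u(t_k,\cdot)\in C_b^\infty(D)\). By hypothesis, for every \((s,y)\) the set \(\Phi(\mathcal U_{s,y})\cap\mathcal P_{s,y}(G;\varphi_k)\) is nonempty; this set is closed in the weakly compact \(\Phi(\mathcal U_{s,y})\) by an upper-hemicontinuity argument as in Proposition~\ref{prop:propertiesP_t,x_revised} (the binding martingale condition passes to weak limits). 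The set-valued map \((s,y)\mapsto\Phi(\mathcal U_{s,y})\cap\mathcal P_{s,y}(G;\varphi_k)\) then has closed graph, so a measurable selector \(y\mapsto\mathbb P^{k}_{y}\) on \(\{s=t_k\}\) exists by the Kuratowski--Ryll-Nardzewski theorem.

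Next I paste. Starting from \(\mathbb P^0_x\in\Phi(\mathcal U_{t,x})\), I concatenate at \(t_1\) with the kernel \(\tilde\omega\mapsto\mathbb P^1_{\tilde\omega(t_1)}\) (or \(\delta_\triangle\) after \(\tau_\infty\)), then at \(t_2\), and so on. Stability under concatenation \ref{item:pair_pasting}, transported through Proposition~\ref{prop:Phi_preserves_dynamic_operations}, gives a law \(\mathbb P^h\in\Phi(\mathcal U_{t,x})\). On each interval \([t_k,t_{k+1}]\) the law is \(\varphi_k\)-effective, so the process
\[
u(t_k,X_{s})\mathbb I_{\{\tau_\infty>s\}}-\int_{t_k}^{s}G\bigl(X_r,u(t_k,X_r),\nabla u(t_k,X_r),\nabla^2u(t_k,X_r)\bigr)\mathbb I_{\{\tau_\infty>r\}}\,dr
\]
is a martingale there, stopped at \(\tau_n\). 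Writing \(u(s,X_s)=u(t_k,X_s)+\int_{t_k}^s\partial_tu(r,X_s)\,dr\) and using the local Lipschitz continuity of \(G\) in the jet (Lemma~\ref{lem:local_jet_lipschitz} type bound, valid since \(G\) is continuous and sublinear) together with the smoothness of \(u\), I get that the conditional increments of \(M^{u,n}\) over \([t_k,t_{k+1}]\) under \(\mathbb P^h\) are \(O(h^2)\) uniformly before \(\tau_n\). Summing over the \(O(1/h)\) grid cells in a bounded horizon yields \(|\mathbb E^{\mathbb P^h}[H(M^{u,n}_{s_2}-M^{u,n}_{s_1})]|\le C_n\|H\|_\infty h\) for bounded \(\tilde{\mathcal F}_{s_1}\)-measurable \(H\), where \(s_1,s_2\) are grid points; non-grid times cost another \(O(h)\).

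Finally I let \(h\to0\). Since \(\Phi(\mathcal U_{t,x})\) is weakly compact by \ref{item:weakcptnessDUS}, there are \(h_j\downarrow0\) and a weak limit \(\mathbb P\in\Phi(\mathcal U_{t,x})\subseteq\mathcal P_{t,x}(G)\). The approximate martingale identities pass to the limit for bounded continuous \(H\), because the stopped functionals are continuous off the null set \(\{\tau_{\rm exp}\le s\}\cup\{\tau_{\rm kill}=s\}\) (using Proposition~\ref{prop:lyap_tail_bounds} and the continuity of the killing hazard, as in the proof of Theorem~\ref{thm:effective_characterization}). This gives that \(M^{u,n}\) is a \(\mathbb P\)-martingale, so \(\mathbb P\in\mathcal P_{t,x}(G;u)\). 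I expect the main obstacle to be this last limit passage. Both the discontinuity of the stopping at \(\tau_n\) under \(J_1\) and the measurability of the selector need care. For the former I would replace \(\tau_n\) by exit times of slightly perturbed domains \(D_n\) and choose a.e. level so that the exit time is \(\mathbb P\)-a.s. continuous, exactly the approximation used in Step~1 of Theorem~\ref{thm:nonempty_nonMarkov}.
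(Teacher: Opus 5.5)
Your proposal is correct and follows essentially the same route as the paper. Both arguments freeze $u$ on a time grid and select static $u(t_k,\cdot)$-effective models from $\Phi(\mathcal U_{t_k,y})$. Both then paste these models via concatenation stability, control the discrepancy between the gridwise and the $(G,u)$-martingale identities, and pass to a weak limit in the compact set $\Phi(\mathcal U_{t,x})\subseteq\mathcal P_{t,x}(G)$. Your version is more explicit than the paper's in three places, all of which the paper only sketches: the $O(h^2)$ per-cell error bound, the closed-graph justification for the measurable selector, and the handling of the $J_1$-discontinuity of $\tau_n$.
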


\begin{proof}
The case \(x=\triangle\) is immediate, so fix \(x\in D\). Let
\(u\in C_b^\infty([0,\infty)\times D)\). For \(k\ge1\) and \(j\ge0\), set \((\Delta_k,t_j^k):=(2^{-k},t+j\Delta_k)\),
and define the time-discretized approximation
\[
u^k(s,y):=u(\pi_k^t(s),y),
\qquad
\pi_k^t(s):=t_j^k
\quad\text{for }s\in[t_j^k,t_{j+1}^k).
\]
Thus \(u^k\) is frozen in time on each interval
\([t_j^k,t_{j+1}^k)\).

For each \(j\) and each state \(y\in D\), applying the assumption with the test function \(\varphi_j^k(\cdot):=u(t_j^k,\cdot)\), we obtain \(\Phi(\mathcal U_{t_j^k,y})
\cap
\mathcal P_{t_j^k,y}(G;\varphi_j^k)
\neq\varnothing\).
By the measurable selection theorem, we may choose a measurable continuation
kernel taking values in this intersection. Pasting these kernels along the
deterministic grid \(\{t_j^k\}_{j\ge0}\), and using the concatenation
stability of the DUS, yields a law \(\mathbb P^k\in\Phi(\mathcal U_{t,x})\).
Since \(\Phi(\mathcal U_{t,x})\subseteq\mathcal P_{t,x}(G)\), we also have \(\mathbb P^k\in\mathcal P_{t,x}(G)\).

By construction, on each interval \([t_j^k,t_{j+1}^k)\), the localized
process corresponding to the frozen spatial test function
\(\varphi_j^k\) is a \(\mathbb P^k\)-martingale. Equivalently, the
martingale identity holds for the time-discretized test function \(u^k\),
with no time-derivative term inside each grid interval.

We compare this gridwise martingale identity with the desired one for \(u\).
Fix \(n\ge1\) and a finite horizon \(T>t\). Since \(u\) is smooth and bounded
with bounded derivatives, and since the processes are localized before
\(\tau_n\), the difference between the gridwise identity for \(u^k\) and the
localized \((G,u)\)-martingale identity is bounded by an error
\(\varepsilon_k(n,T)\) satisfying \(\varepsilon_k(n,T)\to0\) as \(k\to\infty\).
Indeed, \(u^k\to u\), \(\nabla_xu^k\to\nabla_xu\), and
\(\nabla_x^2u^k\to\nabla_x^2u\) locally uniformly, while the telescoping
difference \(u(t_{j+1}^k,X_{t_{j+1}^k})-u(t_j^k,X_{t_{j+1}^k})\) is the Riemann-sum approximation of \(\int_{t_j^k}^{t_{j+1}^k}\partial_tu(r,X_r)\,dr\), up to an error controlled by the smoothness of \(u\) and the small-time increment estimates before \(\tau_n\). Hence, for the localized process
\(M^{u,n}\) defining the generalized \((G,u)\)-martingale problem,
\[
\left|
\mathbb E^{\mathbb P^k}
\left[
H\bigl(M^{u,n}_{s_2}-M^{u,n}_{s_1}\bigr)
\right]
\right|
\le
\varepsilon_k(n,T)
\]
for all \(t\le s_1\le s_2\le T\) and all bounded
\(\tilde{\mathcal F}_{s_1}\)-measurable test functions \(H\) with
\(\|H\|_\infty\le1\).

By the topological regularity of the DUS, \(\Phi(\mathcal U_{t,x})\) is weakly
compact. Passing to a subsequence if necessary, we may assume \(\mathbb P^k\to\mathbb P\) weakly for some \(\mathbb P\in\Phi(\mathcal U_{t,x})\).
Since each \(\mathbb P^k\in\mathcal P_{t,x}(G)\), Proposition~\ref{prop:propertiesP_t,x_revised} gives \(\mathbb P\in\mathcal P_{t,x}(G)\).
Moreover, passing to the limit in the preceding approximate martingale
identity gives \(\mathbb E^{\mathbb P}\left[H\bigl(M^{u,n}_{s_2}-M^{u,n}_{s_1}\bigr)\right]=0\) for all \(n\ge1\), \(t\le s_1\le s_2<\infty\), and bounded
\(\tilde{\mathcal F}_{s_1}\)-measurable \(H\). Thus \(M^{u,n}\) is a
\(\mathbb P\)-martingale for every \(n\ge1\). Therefore \(\mathbb P\in\mathcal P_{t,x}(G;u)\) and the proof is completed.
\end{proof}

Combining Theorems~\ref{thm:nonempty_nonMarkov} and \ref{thm:static_to_time_dependent_effective} we conclude the following nonemptiness of time-dependent analogous of effective virtual model classes.

\begin{corollary}
\label{cor:canonical_effective_nonempty}
Let \(G\) satisfy \ref{item:G1}--\ref{item:G3} and
Assumption~\ref{assume:lyapunov}. Then, for every
\((t,\omega)\in[0,\infty)\times\tilde\Omega\) with
\(t<\tau_\infty(\omega)\) and every
\(u\in C_b^\infty([0,\infty)\times D)\), \(\mathcal P_{t,\omega}(G;u)\neq\varnothing\).
\end{corollary}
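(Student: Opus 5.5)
The corollary follows by combining Theorem~\ref{thm:nonempty_nonMarkov} and Theorem~\ref{thm:static_to_time_dependent_effective}, with the DUS in the latter taken to be the canonical structure $\mathcal U(G)$ itself. First I would reduce to a constant initial path. By the convention in the proof of Theorem~\ref{thm:nonempty_nonMarkov}, a general history $(t,\omega)$ with $t<\tau_\infty(\omega)$ is handled by freezing the segment $\omega|_{[0,t]}$ and working from the state $x:=\omega(t)\in D$. Prepending that segment preserves both the generalized $G$-supermartingale property and the binding martingale property of $M^{u,n}$, since these only involve times $s\ge t$. It therefore suffices to show $\mathcal P_{t,x}(G;u)\neq\varnothing$ for $x\in D$.

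Next I would verify that $\mathcal U(G)$, with $\mathcal U_{t,x}(G)=\mathcal U_x(G)\circ\theta_t^{-1}$, satisfies the hypotheses of Theorem~\ref{thm:static_to_time_dependent_effective}. By Proposition~\ref{prop:PG_is_representing_VUS}, which needs only \ref{item:G1}--\ref{item:G3} and Assumption~\ref{assume:lyapunov}, $\mathcal U(G)$ is a time-homogeneous DUS. The inclusion $\Phi(\mathcal U_{t,x}(G))\subseteq\mathcal P_{t,x}(G)$ holds with equality by Theorem~\ref{thm:pair_virtual_identification}, transported to time $t$ through the time shift, together with Theorem~\ref{thm:coefficient_free_virtual_models}. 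The static effectiveness condition $\Phi(\mathcal U_{t,x}(G))\cap\mathcal P_{t,x}(G;\varphi)\neq\varnothing$ for every $\varphi\in C_b^\infty(D)$ is exactly Theorem~\ref{thm:nonempty_nonMarkov}, because $\mathcal P_{t,x}(G;\varphi)\subseteq\mathcal P_{t,x}(G)=\Phi(\mathcal U_{t,x}(G))$. The case $x=\triangle$ is trivial.

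Theorem~\ref{thm:static_to_time_dependent_effective} then gives an element of $\Phi(\mathcal U_{t,x}(G))\cap\mathcal P_{t,x}(G;u)$, and in particular $\mathcal P_{t,x}(G;u)\neq\varnothing$.

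The only delicate point is the identification of the time-shifted class $\mathcal U_x(G)\circ\theta_t^{-1}$ with $\mathcal P_{t,x}(G)$ as defined directly via Definition~\ref{def:mtgproblem_appendix}. I expect this to hold by time-shift invariance of the coefficient-free $G$-supermartingale problem, since $G$ does not depend on time, and I would invoke it in that form.
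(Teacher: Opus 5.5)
Your proposal is correct and is essentially the paper's own argument: apply Theorem~\ref{thm:static_to_time_dependent_effective} to the canonical structure $\mathcal U(G)$. It is a DUS by Proposition~\ref{prop:PG_is_representing_VUS}, with $\Phi(\mathcal U_{t,x}(G))=\mathcal P_{t,x}(G)$, and its static-effectiveness hypothesis is supplied by Theorem~\ref{thm:nonempty_nonMarkov}; you then pass to a general history by prepending $\omega|_{[0,t]}$. The paper states this combination in a single sentence, so your write-up just makes explicit the choice $\mathcal U=\mathcal U(G)$ and the reduction from $(t,\omega)$ to $(t,\omega(t))$ that the paper leaves implicit.
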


\section{Partial Observation and Discrete Recovery}

This section is devoted to prove the results of Section~\ref{sec:identification} in the main text.
Throughout this appendix, we work under Assumptions~\ref{assume:observable-payoff-set} and~\ref{ass:estimation_viscosity}.
We also assume that \(G_{\max}\) is finite and continuous, and use the notation introduced in the main text without further comment.

We also fix \(m\ge1\) and \(T>0\).
For each $(q,p,X)\in\mathbb R\times\mathbb R^d\times\mathbb S(d)$ and $(t,x),(s,y)\in[0,\infty)\times D$, we define
\[
P^{(q,p,X)}(s,y;t,x)
:=
q(s-t)+p\cdot(y-x)+\frac{1}{2}(y-x)^\top X (y-x).
\]
We denote $I\in\mathbb S(d)$ by the $d\times d$ identity matrix.

Let $v:[0,T]\times\overline D_m\to\mathbb R$ be continuous, and denote by
$\mathcal J^{2,-}v(t,x)$ its exact parabolic subjet at $(t,x)$.
Given a finite observation set $\mathcal I\subset(0,T]\times D_m$ and $\varepsilon>0$, we define the discrete second-order subjet by
\[
\mathcal J^{2,-}_{\mathcal I,\varepsilon}v(t,x)
:=
\Big\{
(q,p,X)\in\mathbb R\times\mathbb R^d\times\mathbb S(d)
:\ 
v(s,y)\ge
v(t,x)+P^{(q,p,X)}(s,y;t,x)
\]
\[
\text{for all }(s,y)\in \mathcal I\cap \mathcal C_\varepsilon^-(t,x)
\Big\}\,,
\qquad
(t,x)\in\mathcal I\,.
\]
For an additional tolerance level \(\eta\ge0\), define the tolerant
discrete second-order subjet by
\[
\mathcal J^{2,-,\eta}_{\mathcal I,\varepsilon}v(t,x)
:=
\Big\{
(a,p,X)\in\mathbb R\times\mathbb R^d\times\mathbb S(d)
:\ 
v(s,y)\ge
v(t,x)
+P^{(a,p,X)}(s,y;t,x)
-\eta
\]
\[
\hfill
\text{for all }(s,y)\in \mathcal I\cap \mathcal C_\varepsilon^-(t,x)
\Big\},
\qquad
(t,x)\in\mathcal I .
\]
We simply denote $\mathcal J_{\mathcal I,\varepsilon}^{2,-}v(t,x):=\mathcal J_{\mathcal I,\varepsilon}^{2,-,0}v(t,x)$.
It is clear that if $0\le\eta_1\le\eta_2$, then $\mathcal J_{\mathcal I,\varepsilon}^{2,-,\eta_1}v(t,x)\subseteq \mathcal J_{\mathcal I,\varepsilon}^{2,-,\eta_2}v(t,x)$.

Recall that \(\varepsilon_{m,n}=\|\mathcal I_{m,n}\|_{T,m}^{\beta_m}\), \(R_{m,n}=\|\mathcal I_{m,n}\|_{T,m}^{-\delta_m}\)
and \(\eta_{m,n}\) are given and satisfy
\begin{align}\label{eq:ratecondition_appendix}
    \eta_{m,n}\downarrow0,\qquad
    \frac{R_{m,n}\|\mathcal I_{m,n}\|_{T,m}^{\alpha_m}+\|\mathcal I_{m,n}\|_{T,m}}{\eta_{m,n}}\to0,
    \qquad
    \frac{\eta_{m,n}}{\varepsilon_{m,n}^2}\to0 \qquad \mbox{as}\;\;n\to\infty.
\end{align}

\subsection{Identification Bounds and the Maximal Generator}\label{subsec:proofID-NS}

First we prove Theorems~\ref{thm:ID-NS} and \ref{thm:ID-exist} in the main text.

\begin{proof}[Proof of Theorem~\ref{thm:ID-NS}]
(i) Under \eqref{eq:thm:ID-NS_consistency}, each observed surface $v^f$ is a viscosity solution of \eqref{eq:mainHJB}. Hence any \((q,p,X)\in \mathcal J^{2,+}v^f(t,x)\) implies \(q\le G(x,v^f(t,x),p,X)\) and any \((q,p,X)\in \mathcal J^{2,-}v^f(t,x)\) implies \(q\ge G(x,v^f(t,x),p,X)\).
Taking the supremum over $\mathcal D_{\mathcal K}^+(x,U)$ and the infimum over $\mathcal D_{\mathcal K}^-(x,U)$ (with the case $q\in\mathcal Z(U)$ handled by the conditions \ref{item:G2} and \ref{item:G3} on $G$) yields $\underline{G}\le G\le \overline{G}$.

(ii) Conversely, if $\underline G\le G\le \overline G$, then every observed surface $v^f$ satisfies the viscosity sub- and supersolution inequalities for \eqref{eq:mainHJB} by definition of $\underline G$ and $\overline G$. Thus $v^f$ is a viscosity solution with initial condition $f$.
By the comparison principle and Proposition~\ref{prop:PDErepn}, \eqref{eq:thm:ID-NS_consistency} holds.
\end{proof}

\begin{proof}[Proof of Theorem~\ref{thm:ID-exist}]
Fix \(x\in D\), and write \(E:=\mathbb R\times\mathbb R^d\times\mathbb S(d)\), \(A_{\max}:=A_{\max}(x)\), \(G_{\max}:=G_{\max}(x,\cdot)\) and \(\overline G:=\overline G(x,\cdot)\).
By definition of \(A_{\max}\) and the convention \(\inf\varnothing=+\infty\),
\[
A_{\max}
=
\{V\in E:\ L^V(U)\le \overline G(U)\ \text{for all }U\in E\}.
\]
Hence \(G_{\max}\le \overline G\).

Assume first that \(A_{\max}\neq\varnothing\). Since \(A_{\max}\) is an
intersection of closed half-spaces, it is closed and convex. Thus
\(G_{\max}\), as the support function of \(A_{\max}\), is lower
semicontinuous and sublinear. Let \(H:E\to(-\infty,+\infty]\) be any proper
lower semicontinuous sublinear function with \(H\le\overline G\). By the
polar representation of lower semicontinuous sublinear functions on
finite-dimensional spaces,
\[
H(U)=\sup_{V\in A_H}L^V(U),
\qquad
A_H:=\{V\in E:\ L^V(U)\le H(U)\ \text{for all }U\in E\}.
\]
Since \(H\le\overline G\), we have \(A_H\subseteq A_{\max}\). Therefore \(H\le G_{\max}\) and this proves the maximality of \(G_{\max}\).

It remains to identify its support set. Let
\[
\widehat A
:=
\{V\in E:\ L^V(U)\le G_{\max}(U)\ \text{for all }U\in E\}.
\]
The inclusion \(A_{\max}\subseteq\widehat A\) is immediate. Conversely, if
\(V_0\notin A_{\max}\), then, since \(A_{\max}\) is closed and convex, the
finite-dimensional separation theorem gives some \(U_0\in E\) such that
\[
L^{V_0}(U_0)
>
\sup_{V\in A_{\max}}L^V(U_0)
=
G_{\max}(U_0).
\]
Hence \(V_0\notin\widehat A\). Thus \(\widehat A\subseteq A_{\max}\), and
therefore \(A_{\max}=\widehat A\).

If \(A_{\max}=\varnothing\), then \(G_{\max}\equiv-\infty\). The support-set
identity is then immediate, and no proper lower semicontinuous sublinear
function dominated by \(\overline G\) can exist; otherwise its polar set
would be nonempty and contained in \(A_{\max}\). This completes the proof.
\end{proof}

\subsection{Consistency of Discrete Localized Subjets}
In this section, we prove the consistency properties of discrete subjet, namely, any subjet can be almostly approximated by some sequence of discrete subjets (Proposition~\ref{prop:H_centered_forward_consistency}) and every converging discrete subjet converges to exact subjet (Proposition~\ref{prop:H_reverse_tolerant_constraints}).

\begin{lemma}
\label{lem:H_fixed_point_discrete_subjet}
Let
\(v:[0,T]\times\overline D_m\to\mathbb R\) be \(\alpha_m\)-H\"older continuous.
Fix \((t,x)\in(0,T]\times D_m\), and suppose that for some \(r>0\) and
\((q,p,X)\in \mathbb R\times\mathbb R^d\times\mathbb S(d)\),
\begin{align}\label{eq:lem:H_fixed_point_discrete_subjet_1}
P^{(q,p,X)}(s,y;t,x)
\le
v(s,y)-v(t,x)
\qquad
\text{for all }(s,y)\in\mathcal C_r^-(t,x).
\end{align}
Then, for all sufficiently large \(n\), there exist $(t_n^*,x_n^*)\in\mathcal I_{m,n}$ and $(q_n,p_n,X_n)\in
J_{\mathcal I_{m,n},\varepsilon_{m,n}}^{2,-}v(t_n^*,x_n^*)$ such that \((t_n^*,x_n^*,q_n,p_n,X_n)\to(t,x,q,p,X)\).
Moreover, one may choose the points so that
\begin{align}\label{eq:x_n-xdistance}
|x_n^\ast-x|
\le
C\|\mathcal I_{m,n}\|_{T,m}^{\alpha_m/4}.
\end{align}
\end{lemma}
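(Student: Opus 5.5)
The plan is the standard ``strict penalization plus discrete minimization'' argument: perturb the quadratic test function \(P^{(q,p,X)}\) so that the touching at \((t,x)\) becomes strict with a quantitative margin, minimize the penalized difference over the grid points near \((t,x)\), and read off a discrete subjet at the grid minimizer by Taylor expansion.

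\emph{Penalization.} Write \(h_n:=\|\mathcal I_{m,n}\|_{T,m}\). For a fixed \(\kappa>0\), set
\[
\psi(s,y):=P^{(q,p,X)}(s,y;t,x)-\kappa|y-x|^4-\kappa(s-t)^2,
\qquad
w(s,y):=v(s,y)-v(t,x)-\psi(s,y).
\]
By \eqref{eq:lem:H_fixed_point_discrete_subjet_1}, on \(\mathcal C_r^-(t,x)\) we have \(w\ge\kappa|y-x|^4+\kappa(s-t)^2\), and \(w(t,x)=0\). Let \((t_n^*,x_n^*)\) minimize \(w\) over the finite set \(\mathcal I_{m,n}\cap\overline{\mathcal C_{r/2}^-(t,x)}\). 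Since \(\mathbb T_n\) and \(P_{m,n}\) are \(h_n\)-dense and \(\Gamma_{m,n}\) contains the vertices of simplices of diameter \(\le 2h_n\), there is a grid point \((\hat t,\hat x)\) with \(\hat t\le t\) and \(|(\hat t,\hat x)-(t,x)|\le Ch_n\). This needs some care at the backward constraint: choose the largest time \(\le t\), which lies within \(Ch_n\) because \(t>0\) and the temporal grid fills \([0,T]\). Hölder continuity of \(v\) and smoothness of \(\psi\) then give \(w(\hat t,\hat x)\le C h_n^{\alpha_m}\). Hence
\[
\kappa|x_n^*-x|^4+\kappa(t_n^*-t)^2\le w(t_n^*,x_n^*)\le Ch_n^{\alpha_m}.
\]
This yields \eqref{eq:x_n-xdistance} and \(|t_n^*-t|\le Ch_n^{\alpha_m/2}\). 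In particular the minimizer lies in the interior of \(\mathcal C_{r/2}^-(t,x)\) for large \(n\), and since \(\varepsilon_{m,n}\to0\), every grid point of \(\mathcal C^-_{\varepsilon_{m,n}}(t_n^*,x_n^*)\) lies in the cylinder over which we minimized. Therefore
\[
v(s,y)\ge v(t_n^*,x_n^*)+\psi(s,y)-\psi(t_n^*,x_n^*)
\qquad
\text{for all }(s,y)\in\mathcal I_{m,n}\cap\mathcal C^-_{\varepsilon_{m,n}}(t_n^*,x_n^*).
\]

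\emph{Reading off the jet.} Expand \(\psi\) around \((t_n^*,x_n^*)\). The time part \(q(s-t)-\kappa(s-t)^2\) is concave in \(s\), so it is bounded above by its tangent line at \(t_n^*\). Set
\[
q_n:=q-2\kappa(t_n^*-t).
\]
For the space part, \(p\cdot(y-x)+\tfrac12(y-x)^\top X(y-x)\) is an exact quadratic. The quartic \(-\kappa|y-x|^4\) has Taylor remainder bounded by \(C\kappa(|x_n^*-x|+|y-x_n^*|)|y-x_n^*|^2\), which is at most \(C\kappa(h_n^{\alpha_m/4}+\varepsilon_{m,n})|y-x_n^*|^2\) on the cylinder. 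Let \(p_n\) and \(X_n^0\) denote the gradient and Hessian of \(y\mapsto\psi(t_n^*,y)\) at \(x_n^*\); both converge to \(p\) and \(X\) because \(x_n^*\to x\). Then put
\[
X_n:=X_n^0-2C\kappa\bigl(h_n^{\alpha_m/4}+\varepsilon_{m,n}\bigr)I.
\]
With these choices, \(\psi(s,y)-\psi(t_n^*,x_n^*)\ge P^{(q_n,p_n,X_n)}(s,y;t_n^*,x_n^*)\) on the cylinder. Combined with the minimality inequality, this gives \((q_n,p_n,X_n)\in\mathcal J^{2,-}_{\mathcal I_{m,n},\varepsilon_{m,n}}v(t_n^*,x_n^*)\), and \((t_n^*,x_n^*,q_n,p_n,X_n)\to(t,x,q,p,X)\).

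\emph{Main obstacle.} The delicate step is the localization. The minimizer must satisfy the exact (zero-tolerance) discrete inequality on the whole backward cylinder of radius \(\varepsilon_{m,n}\) around \((t_n^*,x_n^*)\), which is not the cylinder around \((t,x)\). This is handled by minimizing over the larger fixed cylinder \(\mathcal C_{r/2}^-(t,x)\) and using \(|(t_n^*,x_n^*)-(t,x)|+\varepsilon_{m,n}<r/2\) for large \(n\). The second point requiring care is showing that a nearby admissible grid point with \(\hat t\le t\) exists; this follows from the three components of the mesh definition \(\|\mathcal I_{m,n}\|_{T,m}\).
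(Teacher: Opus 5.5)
\textbf{Same strategy as the paper, but the time component of the jet step fails as written.} Your approach matches the paper's: penalize by $|s-t|^2+|y-x|^4$, minimize over grid points in a fixed backward cylinder around $(t,x)$, and read off the jet by Taylor expansion at the grid minimizer. The following parts of your argument are fine:
\begin{itemize}
\item the existence of a nearby grid point with $\hat t\le t$;
\item the bound $|x_n^*-x|\le C\|\mathcal I_{m,n}\|_{T,m}^{\alpha_m/4}$;
\item the localization $\mathcal C^-_{\varepsilon_{m,n}}(t_n^*,x_n^*)\subset\mathcal C^-_{r/2}(t,x)$;
\item your treatment of the quartic spatial remainder.
\end{itemize}
The problem is the time direction. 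You need a \emph{lower} bound $\psi(s,y)-\psi(t_n^*,x_n^*)\ge P^{(q_n,p_n,X_n)}(s,y;t_n^*,x_n^*)$. Concavity of $g(s):=q(s-t)-\kappa(s-t)^2$ gives the opposite: $g$ lies \emph{below} its tangent line. In fact, exactly,
\[
g(s)-g(t_n^*)=q_n(s-t_n^*)-\kappa(s-t_n^*)^2,\qquad q_n=q-2\kappa(t_n^*-t).
\]
Take any grid point $(s,x_n^*)$ with $t_n^*-\varepsilon_{m,n}<s<t_n^*$. There the spatial parts of both sides vanish, so your claimed inequality fails by exactly $\kappa(s-t_n^*)^2>0$. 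Hence your chain of inequalities does not establish $(q_n,p_n,X_n)\in\mathcal J^{2,-}_{\mathcal I_{m,n},\varepsilon_{m,n}}v(t_n^*,x_n^*)$.

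\textbf{The fix.} This is the one place where the one-sided (backward) cylinder is genuinely used, and your write-up never invokes it. For $s\in(t_n^*-\varepsilon_{m,n},t_n^*]$ we have $s-t_n^*\le0$, so
\[
-\kappa(s-t_n^*)^2\ge\kappa\varepsilon_{m,n}(s-t_n^*).
\]
Therefore setting
\[
q_n:=q-2\kappa(t_n^*-t)+\kappa\varepsilon_{m,n}
\]
gives the required lower bound, and still $q_n\to q$. This is precisely the role of the shift $+\delta_n$, with $\delta_n=2C_\phi\varepsilon_{m,n}$, in the time component of the paper's jet, which the paper justifies using $s-t_n^*\le0$. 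With this correction your argument coincides with the paper's proof.
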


\begin{proof}
By decreasing \(r>0\) if necessary, we may assume that
\(\overline{\mathcal C_r^-(t,x)}\subset (0,T]\times D_m\).
Set \(h_n:=\|\mathcal I_{m,n}\|_{T,m}\) and
\[
\phi(s,y)
:=
v(t,x)+P^{(q,p,X)}(s,y;t,x)-|s-t|^2-|y-x|^4 .
\]
Then \eqref{eq:lem:H_fixed_point_discrete_subjet_1} implies \((v-\phi)(s,y)\ge |s-t|^2+|y-x|^4\) for all \((s,y)\in \mathcal C_r^-(t,x)\), and \((v-\phi)(t,x)=0\).
By the definition of the mesh size, for all large \(n\) there exists \((\tilde t_n,\tilde x_n)\in \mathcal I_{m,n}\cap \mathcal C_r^-(t,x)\)
such that \(0\le t-\tilde t_n\le 2h_n\) and \(|\tilde x_n-x|\le C h_n\).
Since \(v\) is \(\alpha_m\)-H\"older continuous and \(\phi\) is smooth on
\(\overline{\mathcal C_r^-(t,x)}\), it follows that \((v-\phi)(\tilde t_n,\tilde x_n)\le C h_n^{\alpha_m}\).
Choose \((t_n^*,x_n^*)\in \mathcal I_{m,n}\cap \mathcal C_r^-(t,x)\) such that
\[
(v-\phi)(t_n^*,x_n^*)
=
\min_{\mathcal I_{m,n}\cap \mathcal C_r^-(t,x)}(v-\phi).
\]
Then \(|t_n^*-t|^2+|x_n^*-x|^4
\le
(v-\phi)(t_n^*,x_n^*)
\le
C h_n^{\alpha_m}\).
Hence \((t_n^*,x_n^*)\to(t,x)\) and \(|x_n^*-x|\le C h_n^{\alpha_m/4}\),
which gives \eqref{eq:x_n-xdistance}.

Set \(c_n:=(v-\phi)(t_n^*,x_n^*)\) and \(\psi_n:=\phi+c_n\).
Then \(\psi_n(t_n^*,x_n^*)=v(t_n^*,x_n^*)\). Since
\((t_n^*,x_n^*)\to(t,x)\) and \(\varepsilon_{m,n}\to0\), we have \(\mathcal C_{\varepsilon_{m,n}}^-(t_n^*,x_n^*)\subset\mathcal C_r^-(t,x)\)
for all large \(n\). By the minimizing property of \((t_n^*,x_n^*)\), \(v(s,y)\ge \psi_n(s,y)\) for all \((s,y)\in
\mathcal I_{m,n}\cap\mathcal C_{\varepsilon_{m,n}}^-(t_n^*,x_n^*)\).
We now expand \(\psi_n\) around \((t_n^*,x_n^*)\). Since \(\phi\) is smooth on a
neighborhood of \(\overline{\mathcal C_r^-(t,x)}\), there exists \(C_\phi>0\) such that, for
\((s,y)\in \mathcal C_{\varepsilon_{m,n}}^-(t_n^*,x_n^*)\), \(\psi_n(s,y)\ge\psi_n(t_n^*,x_n^*)+P^{(q_n,p_n,X_n)}(s,y;t_n^*,x_n^*)\),
where \(\delta_n:=2C_\phi\varepsilon_{m,n}\) and
\[
(q_n,p_n,X_n):=(\partial_t\phi(t_n^*,x_n^*)+\delta_n,\nabla\phi(t_n^*,x_n^*),\nabla^2\phi(t_n^*,x_n^*)-2\delta_n I).
\]
Indeed, the Taylor remainder is bounded by \(C_\phi\bigl(|s-t_n^*|^2+|s-t_n^*|\,|y-x_n^*|+|y-x_n^*|^3\bigr)\),
and this is absorbed by the above choice of \(\delta_n\), using \(s-t_n^*\le0\) and \(|s-t_n^*|,\,|y-x_n^*|\le\varepsilon_{m,n}\).

Combining the preceding two inequalities gives
\[
v(s,y)
\ge
v(t_n^*,x_n^*)
+
q_n(s-t_n^*)
+
p_n\cdot(y-x_n^*)
+
\frac12 (y-x_n^*)^\top X_n (y-x_n^*)
\]
for every \((s,y)\in
\mathcal I_{m,n}\cap
\mathcal C_{\varepsilon_{m,n}}^-(t_n^*,x_n^*)\).
Therefore \((q_n,p_n,X_n)\in
J_{\mathcal I_{m,n},\varepsilon_{m,n}}^{2,-}v(t_n^*,x_n^*)\).
Finally, since \((t_n^*,x_n^*)\to(t,x)\), \(\delta_n\to0\), and \((\partial_t\phi,\nabla\phi,\nabla^2\phi)(t,x)=(q,p,X)\),
we obtain \((q_n,p_n,X_n)\to(q,p,X)\). Hence \((t_n^*,x_n^*,q_n,p_n,X_n)\to(t,x,q,p,X)\), as required.
\end{proof}

\begin{lemma}
\label{lem:exact-subjet-polynomial-touch}
Let $v:[0,T]\times \overline D_m\to \mathbb R$ be continuous.
Fix $(t,x)\in(0,T]\times D_m$ and suppose that \((q,p,X)\in \mathcal J^{2,-}v(t,x)\).
Then, for every $\varepsilon>0$, there exists $r_\varepsilon>0$ such that
\[
P^{(q+\varepsilon,p,X-2\varepsilon I)}(s,y;t,x)\le v(s,y)-v(t,x),
\qquad
\forall (s,y)\in \mathcal C^-_{r_\varepsilon}(t,x).
\]
In particular, $(q+\varepsilon,p,X-2\varepsilon I)\in\mathcal J^{2,-}v(t,x)$.
\end{lemma}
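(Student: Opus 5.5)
By the definition of $\mathcal J^{2,-}v(t,x)$, there is a test function $\varphi\in C_b^\infty((0,T)\times D)$ with $\partial_t\varphi(t,x)=q$, $\nabla\varphi(t,x)=p$ and $\nabla^2\varphi(t,x)=X$. Moreover $v-\varphi$ attains a local minimum equal to $0$ at $(t,x)$ in the backward parabolic topology. So there is $r_0>0$ with
\[
v(s,y)-v(t,x)\ge \varphi(s,y)-\varphi(t,x)\qquad\text{for all }(s,y)\in\mathcal C^-_{r_0}(t,x).
\]
The plan is to replace the increment $\varphi(s,y)-\varphi(t,x)$ by the quadratic polynomial $P^{(q,p,X)}(s,y;t,x)$, and to absorb the Taylor remainder into the $\varepsilon$-perturbations. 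This mirrors the Taylor step already used in the proof of Lemma~\ref{lem:H_fixed_point_discrete_subjet}.

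First, I Taylor-expand $\varphi$ at $(t,x)$ to second order in space and first order in time, keeping the mixed term. Since $\varphi$ has bounded derivatives, there is $C_\varphi$ such that for $(s,y)\in\mathcal C^-_{r_0}(t,x)$
\[
\varphi(s,y)-\varphi(t,x)\ge P^{(q,p,X)}(s,y;t,x)-C_\varphi\bigl(|s-t|^2+|s-t|\,|y-x|+|y-x|^3\bigr).
\]
Next, I compare with the perturbed polynomial. Writing $\tau:=t-s\ge0$ and $\rho:=|y-x|$, we get
\[
P^{(q,p,X)}-P^{(q+\varepsilon,p,X-2\varepsilon I)}=\varepsilon\tau+\varepsilon\rho^2 .
\]
Here I use $s-t=-\tau\le 0$, which is exactly where the backward cylinder matters. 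It therefore suffices to choose $r_\varepsilon\le r_0$ such that
\[
C_\varphi\bigl(\tau^2+\tau\rho+\rho^3\bigr)\le \varepsilon\tau+\varepsilon\rho^2\qquad\text{whenever }\tau<r_\varepsilon,\ \rho<r_\varepsilon .
\]

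The only delicate point is the mixed term $\tau\rho$, which has no matching $\varepsilon$-term. I handle it by bounding $\tau\rho\le r_\varepsilon\tau$, which holds since $\rho<r_\varepsilon$. The other terms are handled the same way: $\tau^2\le r_\varepsilon\tau$ and $\rho^3\le r_\varepsilon\rho^2$. The left side is then at most $2C_\varphi r_\varepsilon\tau+C_\varphi r_\varepsilon\rho^2$, so the choice $r_\varepsilon:=\min\{r_0,\varepsilon/(2C_\varphi)\}$ works.

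Combining the steps gives
\[
P^{(q+\varepsilon,p,X-2\varepsilon I)}(s,y;t,x)\le v(s,y)-v(t,x)\qquad\text{on }\mathcal C^-_{r_\varepsilon}(t,x).
\]
For the final claim, I take the test function $\psi(s,y):=v(t,x)+P^{(q+\varepsilon,p,X-2\varepsilon I)}(s,y;t,x)$, multiplied by a smooth cutoff near $(t,x)$ to make it bounded. Then $v-\psi\ge0$ on the cylinder, with equality at $(t,x)$, and $\psi$ has the required jet $(q+\varepsilon,p,X-2\varepsilon I)$ at $(t,x)$. Hence $(q+\varepsilon,p,X-2\varepsilon I)\in\mathcal J^{2,-}v(t,x)$.
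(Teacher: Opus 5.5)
Your proof is correct and follows essentially the same route as the paper. You take the touching test function, Taylor-expand it on a backward cylinder, absorb the remainder into $-\varepsilon\bigl(|s-t|+|y-x|^2\bigr)$, and use $s-t\le 0$ to turn this into the perturbed jet $(q+\varepsilon,p,X-2\varepsilon I)$. Your explicit choice $r_\varepsilon=\min\{r_0,\varepsilon/(2C_\varphi)\}$, including the treatment of the mixed term, and your cutoff argument for the final membership claim spell out steps that the paper leaves implicit.
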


\begin{proof}
Since $(q,p,X)\in \mathcal J^{2,-}v(t,x)$, there exists a test function
$\phi\in C_b^\infty((0,T]\times D_m)$ such that $v-\phi$ attains a local minimum $0$ at $(t,x)$ relative to the backward parabolic topology, and \(\bigl(\partial_t\phi,\nabla\phi,\nabla^2\phi\bigr)(t,x)=(q,p,X)\).
Hence, after shrinking a neighborhood if necessary, there exists $r_\varepsilon>0$ such that \(\overline{\mathcal C^-_{r_\varepsilon}(t,x)}\subset (0,T]\times D_m\) and
\[
v(s,y)-v(t,x)\ge \phi(s,y)-\phi(t,x),
\qquad
\forall (s,y)\in \mathcal C^-_{r_\varepsilon}(t,x).
\]

Since $\phi$ is smooth, Taylor's theorem implies that, after possibly shrinking $r_\varepsilon$ further,
\begin{align}\label{eq:lem:exact-subjet-polynomial-touch_1}
\phi(s,y)-\phi(t,x)
\ge
q(s-t)+p\cdot (y-x)+\frac12 (y-x)^\top X (y-x)
-\varepsilon\bigl(|s-t|+|y-x|^2\bigr)
\end{align}
for all $(s,y)\in \mathcal C^-_{r_\varepsilon}(t,x)$.
Now, on $\mathcal C^-_{r_\varepsilon}(t,x)$ we have $s-t\le0$, so \(\varepsilon|s-t|=-\varepsilon(s-t)\), 
therefore \eqref{eq:lem:exact-subjet-polynomial-touch_1} is equivalent to
\[
\phi(s,y)-\phi(t,x)
\ge
P^{(q+\varepsilon,p,X-2\varepsilon I)}(s,y;t,x)
\]
for all $(s,y)\in \mathcal C^-_{r_\varepsilon}(t,x)$.
Combining this with the touching property of $\phi$ yields
\[
P^{(q+\varepsilon,p,X-2\varepsilon I)}(s,y;t,x)\le v(s,y)-v(t,x),
\qquad
\forall (s,y)\in \mathcal C^-_{r_\varepsilon}(t,x),
\]
as claimed.
\end{proof}

\begin{proposition}
\label{prop:H_centered_forward_consistency}
Fix \(\ell>0\). Let $x_n\in\Gamma_{m,n}\cap\overline D_m$ with $x_n\to x\in\overline D_m$, and let \(U=(r,p,X)\in \mathbb R\times\mathbb R^d\times\mathbb S(d)\) and
\(a\in\mathcal D_{\mathcal K}^-(x,U)\). Then, for every \(\varepsilon>0\),
there exist \(U_n^\varepsilon\to U^\varepsilon\) and
\(a_n^\varepsilon\to a+\varepsilon\) such that \(a_n^\varepsilon\in\mathcal D_{\mathcal K,m,n}^{-,\ell}(x_n,U_n^\varepsilon)\), where \(U^\varepsilon:=(r,p,X-2\varepsilon I)\).
\end{proposition}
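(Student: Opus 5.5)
The plan is to unwind the definition of \(\mathcal D_{\mathcal K}^-(x,U)\) and treat separately the part coming from genuine subjets of observed surfaces and the structural part \(\mathcal Z(U)\). The main case is the subjet part. Here there exist \(f\in\mathcal K\) and \(t\in(0,T]\) with \(v^f(t,x)=r\) and \((a,p,X)\in\mathcal J^{2,-}v^f(t,x)\).

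First, I will reduce to a single Hölder-controlled surface. By Assumption~\ref{assume:observable-payoff-set}(ii) we have \(v^{cf}=c\,v^f\), so \(v^f=\|f\|_\infty v^{f/\|f\|_\infty}\). Assumption~\ref{ass:estimation_viscosity} then makes \(v^f\) \(\alpha_m\)-Hölder on \([0,T]\times\overline D_m\), with constant \(\|f\|_\infty C_m\). The case \(f=0\) is trivial. Since \(R_{m,n}\to\infty\), the constraint \(\|f\|_\infty\le R_{m,n}\) in the definition of \(\mathcal D_{\mathcal K,m,n}^{-,\ell}\) holds for all large \(n\), so \(f\) itself can be used at every large stage.

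Second, I will turn the abstract test function into a polynomial touching from below. Applying Lemma~\ref{lem:exact-subjet-polynomial-touch} to \(v^f\) at \((t,x)\) gives \(r_\varepsilon>0\) such that
\[
P^{(a+\varepsilon,p,X-2\varepsilon I)}(s,y;t,x)\le v^f(s,y)-v^f(t,x)
\]
on \(\mathcal C^-_{r_\varepsilon}(t,x)\). This is exactly hypothesis \eqref{eq:lem:H_fixed_point_discrete_subjet_1} of Lemma~\ref{lem:H_fixed_point_discrete_subjet}, with \((q,p,X)\) replaced by \((a+\varepsilon,p,X-2\varepsilon I)\). That lemma yields grid points \((t_n^*,x_n^*)\in\mathcal I_{m,n}\) converging to \((t,x)\), together with jets \((q_n,p_n,X_n)\) in the exact discrete subjet \(\mathcal J^{2,-}_{\mathcal I_{m,n},\varepsilon_{m,n}}v^f(t_n^*,x_n^*)\), such that \((q_n,p_n,X_n)\to(a+\varepsilon,p,X-2\varepsilon I)\). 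The exact discrete subjet is contained in the \(\eta_{m,n}\)-tolerant one.

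Third, I will make the choices
\[
a_n^\varepsilon:=q_n,\qquad U_n^\varepsilon:=\bigl(v^f(t_n^*,x_n^*),p_n,X_n\bigr),\qquad y:=x_n^*.
\]
With these choices the anchoring condition \(|v^f(t_n^*,y)-r_n|\le\eta_{m,n}\) holds with equality to zero. By \eqref{eq:x_n-xdistance} and \(x_n\to x\), we get \(|x_n^*-x_n|<\ell\) for large \(n\), so \(y\in\Gamma_{m,n}\cap B_\ell(x_n)\). Continuity of \(v^f\) gives \(v^f(t_n^*,x_n^*)\to v^f(t,x)=r\). Hence \(U_n^\varepsilon\to U^\varepsilon\), \(a_n^\varepsilon\to a+\varepsilon\), and \(a_n^\varepsilon\in\mathcal D^{-,\ell}_{\mathcal K,m,n}(x_n,U_n^\varepsilon)\). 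One technical point must be checked: Lemma~\ref{lem:H_fixed_point_discrete_subjet} requires \((t,x)\in(0,T]\times D_m\). If \(x\) lies on \(\partial D_m\), I would instead work in \(D_{m'}\) for larger \(m'\), or shrink the cylinder. Since \(x\) is produced by a subjet of \(v^f\) on \(D\), I expect this to be harmless, and in applications \(x\) is interior.

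The step I expect to be the main obstacle is the structural part, where \(a=0\in\mathcal Z(U)\) but \(a\) is not a subjet value. I would first try to use that \(U^\varepsilon\in\mathcal Z(U^\varepsilon)\) whenever \(U=(0,0,X)\) with \(X\le0\), since \(X-2\varepsilon I\le0\) as well. I would then take \(U_n^\varepsilon=U^\varepsilon\) and \(a_n^\varepsilon\) a suitable element of the discrete set. Note that the jet part of the discrete set is upward closed in the time-derivative variable, because \(s-t\le0\) on backward cylinders. So once any jet element \(\le\varepsilon\) is available, \(a+\varepsilon=\varepsilon\) itself can be reached. If no such element is available, and in the case \(U=(1,0,0)\), where \(U^\varepsilon\notin\mathcal Z\), the statement as written seems to require an additional observation. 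In that event I would restrict the claim to the subjet part of \(\mathcal D^-_{\mathcal K}\) and handle \(\mathcal Z\) directly in the downstream estimator argument, since \(\mathcal Z\) is also built into \(\mathcal D^{-,\ell}_{\mathcal K,m,n}\).
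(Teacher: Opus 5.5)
Your treatment of the payoff-witness case is the paper's argument. Lemma~\ref{lem:exact-subjet-polynomial-touch} gives the polynomial lower test with jet \((a+\varepsilon,p,X-2\varepsilon I)\), Lemma~\ref{lem:H_fixed_point_discrete_subjet} transfers it to the grid, and \(U_n^\varepsilon\) is anchored at the exact value \(v^f(t_n^*,x_n^*)\). Your side remarks fill details the paper leaves implicit. The H\"older bound with constant \(C_m\|f\|_\infty\), obtained by scaling, is what Lemma~\ref{lem:H_fixed_point_discrete_subjet} actually needs. A point \(x\in\partial D_m\) is harmless without changing \(m\), which would change the grid: the subjet is taken in \(D\supset\overline D_m\), and grid points of \(\mathcal I_{m,n}\) still accumulate at \((t,x)\).

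On \(\mathcal Z(U)\) your hesitation is justified. The paper disposes of this case in one line (``follows directly from the definition of \(\mathcal Z(U)\)''), and that is not right.

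\emph{The case \(U=(0,0,X)\), \(X\le0\).} One observation closes your argument. Since \(\mathcal K\) is closed and stable under positive scaling, \(f/k\to0\) gives \(0\in\mathcal K\), with \(v^0\equiv0\). Take \(f=0\), \(y=x_n\), any \(t\in\mathbb T_n\), and \(U_n^\varepsilon:=U^\varepsilon\). The discrete inequality becomes \(0\ge a(s-t)+\tfrac12(z-y)^\top(X-2\varepsilon I)(z-y)-\eta_{m,n}\), which holds for every \(a\ge0\) because \(s\le t\) and \(X-2\varepsilon I\le0\). So \(a_n^\varepsilon:=\varepsilon\) works.

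\emph{The case \(U=(1,0,0)\).} Here the statement as written is false. For large \(n\) one has \(\mathcal Z(U_n^\varepsilon)=\varnothing\), so a genuine payoff witness is needed, and the data need not contain one. Take \(d=1\), \(D=\mathbb R\), \(T>2\pi\), \(\mathcal T_tf(x)=f(x+t)\), and \(\mathcal K=\{c\sin:c\ge0\}\). Assumptions~\ref{assume:observable-payoff-set} and~\ref{ass:estimation_viscosity} hold, and \(G_{\max}(x,U)=p\) is finite and continuous. A witness would be some \(c_n\sin\) with \(c_n\le R_{m,n}\) and \(v_n:=c_n\sin w_n\to1\), where \(w_n:=y_n+t_n\).
\begin{itemize}
\item Test the discrete inequality at \((t_n,y_n+\sigma h)\), with \(h\in[\varepsilon_{m,n}/4,\varepsilon_{m,n}/2]\) and \(\sigma\in\{\pm1\}\) chosen so that \(\sigma(c_n\cos w_n-p_n)\le0\). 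Take \(x\) interior and \(\ell\) small, so grid points exist on both sides.
\item Use \(c_n\sin(w_n+\sigma h)-v_n=-v_n(1-\cos h)+\sigma c_n\cos w_n\sin h\).
\item The error terms are negligible: \(c_nh^3\le R_{m,n}\varepsilon_{m,n}^3=o(\varepsilon_{m,n}^2)\) because \(\delta_m<\beta_m\), and \(\eta_{m,n}=o(\varepsilon_{m,n}^2)\).
\item This yields \(v_n+X_n\le o(1)\), i.e.\ \(1-2\varepsilon\le0\) in the limit, which fails for \(\varepsilon<1/2\).
\end{itemize}

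So the correct repair is the one you suggest: prove the proposition for the payoff-witness part of \(\mathcal D^-_{\mathcal K}(x,U)\) and treat \(\mathcal Z(U)\) separately. In its only use, Lemma~\ref{lem:H_sequential_tolerant_support}(i), this costs nothing. For \(a\in\mathcal Z(U)\), the bound \(L^V(U)\le a+\lambda_{m,\ell}\) is automatic: the limit \(V\) lies in the closed set \(\mathbb B_{N_m}'\), so \(C\le0\) and \(\Sigma\ge0\). Alternatively, use \(U_n=U\) and \(a_n=0\in\mathcal Z(U)\subseteq\mathcal D^{-,\ell}_{\mathcal K,m,n}(x_n,U)\) directly.
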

\begin{proof}
If \(a\in\mathcal Z(U)\), the conclusion follows directly from the definition
of \(\mathcal Z(U)\). Otherwise, \(a\) is generated by a payoff witness: there
exist \(t\in(0,T]\) and \(f\in\mathcal K\) such that \(v^f(t,x)=r\) and \((a,p,X)\in\mathcal J^{2,-}v^f(t,x)\).
By Lemma~\ref{lem:exact-subjet-polynomial-touch}, for every $\varepsilon>0$, there exists $r_\varepsilon>0$ such that,
\begin{align}\label{eq:lem:H_forward_tolerant_constraints_1}
P^{(a+\varepsilon,p,X-2\varepsilon I)}(s,y;t,x)\le v(s,y)-v(t,x),
\qquad
\forall (s,y)\in \mathcal C^-_{r_\varepsilon}(t,x).\,
\end{align}
and in particular, $(q+\varepsilon,p,X-2\varepsilon I)\in\mathcal J^{2,-}v(t,x)$.

Apply Lemma~\ref{lem:H_fixed_point_discrete_subjet} to this strict lower test.
Then there exist \((t_n^\ast,y_n^\ast)\in\mathcal I_{m,n}\) and \((a_n^\varepsilon,p_n^\varepsilon,X_n^\varepsilon)\in\mathcal J_{\mathcal I_{m,n},\varepsilon_{m,n}}^{2,-}v^f(t_n^\ast,y_n^\ast)\) such that \((t_n^\ast,y_n^\ast,a_n^\varepsilon,p_n^\varepsilon,X_n^\varepsilon)\to(t,x,a+\varepsilon,p,X-2\varepsilon I)\).
Since \(f\) is fixed and \(R_{m,n}\to\infty\), we have
\(\|f\|_\infty\le R_{m,n}\) for all sufficiently large \(n\). Moreover, since
\(x_n\to x\), \(y_n^\ast\to x\), and \(\ell>0\) is fixed, \(y_n^\ast\in B_\ell(x_n)\) for all sufficiently large \(n\).

Define \(U_n^\varepsilon:=\bigl(v^f(t_n^\ast,y_n^\ast),p_n^\varepsilon,X_n^\varepsilon\bigr)\).
Then \(U_n^\varepsilon\to U^\varepsilon\), we obtain $a_n^\varepsilon\in\mathcal D_{\mathcal K,m,n}^-(x_n,U_n^\varepsilon)$. 
Since $\mathcal D_{\mathcal K,m,n}^-(x_n,U_n^\varepsilon)\subseteq\mathcal D_{\mathcal K,m,n}^{-,\ell}(x_n,U_n^\varepsilon)$, we conclude $a_n^\varepsilon\in\mathcal D_{\mathcal K,m,n}^{-,\ell}(x_n,U_n^\varepsilon)$ and complete the proof.
\end{proof}

\begin{lemma}\label{lemma:Discretejet-growingHolder}
Let 
\(v_n:[0,T]\times\overline D_m\to\mathbb R\) be \(\alpha_m\)-H\"older continuous with
H\"older constants \(H_n\) and let $(t_n,y_n)\in\mathcal I_{m,n}$, $(q_n,p_n,X_n)\in\mathcal J^{2,-,\eta_{m,n}}_{\mathcal I_{m,n},\varepsilon_{m,n}}v_n(t_n,y_n)$. Assume that
\begin{align}\label{eq:ratecondition_appendixlemma}
\lim_{n\to\infty}\frac{H_n\|\mathcal I_{m,n}\|_{T,m}^{\alpha_m}+\|\mathcal I_{m,n}\|_{T,m}}{\varepsilon_{m,n}^2}=0
\end{align}
and a sequence $\{(q_n,p_n,X_n)\}_{n\ge1}$ is bounded.
Then there exist points $(\widehat t_n,\widehat y_n)\in \overline{\mathcal C_{\varepsilon_{m,n}/2}(t_n,y_n)}$
and exact lower jets $(\widehat q_n,\widehat p_n,\widehat X_n)
\in\mathcal J^{2,-}v_n(\widehat t_n,\widehat y_n)$
such that \((\widehat q_n-q_n,\widehat p_n-p_n, \widehat X_n-X_n)\to0\).
Moreover, if \(H_n\varepsilon_{m,n}^{\alpha_m}\to0\), then \(v_n(\widehat t_n,\widehat y_n)-v_n(t_n,y_n)\to0\).
\end{lemma}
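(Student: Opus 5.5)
The proof is a discrete-to-continuous perturbation argument: subtract a penalized quadratic touching function from $v_n$, minimize over the continuous backward cylinder, and read off an exact subjet at the minimizer. Write $h_n:=\|\mathcal I_{m,n}\|_{T,m}$, $\varepsilon:=\varepsilon_{m,n}$, $\eta:=\eta_{m,n}$. Define the quadratic lower test
\[
\psi_n(s,y):=v_n(t_n,y_n)+P^{(q_n,p_n,X_n)}(s,y;t_n,y_n)-\eta-\lambda_n\bigl(|s-t_n|^2+|y-y_n|^4\bigr)^{1/2}\cdots
\]
In practice I would use the smooth penalty $\kappa_n(s,y):=\lambda_n\bigl(-(s-t_n)+|y-y_n|^2\bigr)$, with $\lambda_n\to0$ chosen below. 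Then consider $w_n:=v_n-\psi_n+\kappa_n$ on the closed backward cylinder $\overline{\mathcal C_{\varepsilon/2}^-(t_n,y_n)}$, and let $(\widehat t_n,\widehat y_n)$ be a minimizer of $w_n$ there. If the minimizer is interior (in the backward parabolic sense, i.e.\ not on the lateral or bottom boundary), then $\psi_n-\kappa_n$ plus a constant touches $v_n$ from below at $(\widehat t_n,\widehat y_n)$. This gives an exact lower jet
\[
(\widehat q_n,\widehat p_n,\widehat X_n)=\bigl(q_n+\lambda_n,\;p_n+X_n(\widehat y_n-y_n)-2\lambda_n(\widehat y_n-y_n),\;X_n-2\lambda_nI\bigr),
\]
and this differs from $(q_n,p_n,X_n)$ by $O(\lambda_n+|\widehat y_n-y_n|)$.

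The key steps, in order, are as follows.

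First, transfer the discrete inequality to the continuum. For any $(s,y)$ in the cylinder, pick a grid point $(s',y')\in\mathcal I_{m,n}\cap\mathcal C^-_{\varepsilon}(t_n,y_n)$ within distance $Ch_n$; this is possible by the mesh definition, since $\varepsilon/2+Ch_n<\varepsilon$. By Hölder continuity and the boundedness of the jets,
\[
v_n(s,y)-\psi_n(s,y)\ge -\bigl(H_nh_n^{\alpha_m}+Ch_n\bigr)=:-\rho_n .
\]

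Second, compare the center with the boundary. At the center, $w_n(t_n,y_n)=\eta$. On the lateral boundary $|y-y_n|=\varepsilon/2$ and on the bottom $s=t_n-\varepsilon/2$ we have $\kappa_n\ge\lambda_n\varepsilon^2/4$, so $w_n\ge-\rho_n+\lambda_n\varepsilon^2/4$. Now choose $\lambda_n\to0$ with $\lambda_n\varepsilon^2\gg\eta+\rho_n$. This is possible exactly because of \eqref{eq:ratecondition_appendixlemma} together with $\eta_{m,n}/\varepsilon_{m,n}^2\to0$ from \eqref{eq:ratecondition_appendix}; for instance take $\lambda_n:=\bigl((\eta+\rho_n)/\varepsilon^2\bigr)^{1/2}$. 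With this choice the minimizer avoids the parabolic boundary.

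Third, localize the minimizer. From $w_n(\widehat t_n,\widehat y_n)\le\eta$ we get $\lambda_n|\widehat y_n-y_n|^2\le\eta+\rho_n$, hence $|\widehat y_n-y_n|\le\bigl((\eta+\rho_n)/\lambda_n\bigr)^{1/2}=\bigl((\eta+\rho_n)\varepsilon^2\bigr)^{1/4}\cdot(\ldots)\to0$. Therefore the jet difference tends to $0$.

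Fourth, for the last claim: $|\widehat t_n-t_n|,|\widehat y_n-y_n|\le\varepsilon$, so $|v_n(\widehat t_n,\widehat y_n)-v_n(t_n,y_n)|\le H_n\varepsilon^{\alpha_m}\to0$.

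The main obstacle is the second step, namely ensuring that the penalty dominates the discretization and tolerance errors while still vanishing. This is where the rate condition enters. One also has to check that the top face $s=t_n$ is not a boundary in the backward parabolic topology, so that a minimizer there still yields a subjet in $\mathcal J^{2,-}$, which uses only backward cylinders. Finally, one must confirm that $\mathcal J^{2,-}$ with $C_b^\infty$ test functions accepts the quadratic $\psi_n-\kappa_n$, which follows after a standard cutoff.
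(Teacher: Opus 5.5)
Your proposal is correct and is essentially the paper's proof: your penalty $\kappa_n$ with $\lambda_n=((\eta_{m,n}+\rho_n)/\varepsilon_{m,n}^2)^{1/2}$ is the paper's perturbation $\widehat P_n=P^{(q_n+\nu_n,p_n,X_n-2\nu_n I)}(\cdot;t_n,y_n)$ with $\nu_n=\delta_n^{1/2}/\varepsilon_{m,n}$ and $\delta_n=\rho_n+\eta_{m,n}$, and the interpolation step, the parabolic-boundary comparison and the jet read-off are the same as in the paper. Your third step is harmless but not needed, because $|\widehat y_n-y_n|\le\varepsilon_{m,n}/2\to0$ already gives the jet convergence, which is how the paper argues.
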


\begin{proof}
By definition,
\[
v_n(\,\cdot\,)-v_n(t_n,y_n)\ge P^{(q_n,p_n,X_n)}(\,\cdot\,;t_n,y_n)-\eta_{m,n}
\quad\text{on }\mathcal I_{m,n}\cap\mathcal C_{\varepsilon_{m,n}}^-(t_n,y_n).
\]
Using the mesh size \(\|\mathcal I_n\|_{T,m}\), the \(\alpha\)-H\"older bound \(H_n\), and the boundedness of
\((q_n,p_n,X_n)\), the standard interpolation argument gives
\[
v_n(\,\cdot\,)-v_n(t_n,y_n)\ge P^{(q_n,p_n,X_n)}(\,\cdot\,;t_n,y_n)-\delta_n
\quad\text{on}\;\;
\overline{\mathcal C_{\varepsilon_{m,n}/2}(t_n,y_n)}\,,
\]
where \(\delta_n=C(H_n\|\mathcal I_{m,n}\|_{T,m}^{\alpha_m}+\|\mathcal I_{m,n}\|_{T,m})+\eta_{m,n}\) for sufficiently large $C>0$.
From \eqref{eq:ratecondition_appendix} and \eqref{eq:ratecondition_appendixlemma}, we have \(\delta_n=o(\varepsilon_{m,n}^2)\). Define \(\nu_n:=\delta_n^{\frac12}/\varepsilon_{m,n}\).
Set \(\widehat P_n(s,y):=P^{(q_n+\nu_n,p_n,X_n-2\nu_nI)}(s,y;t_n,y_n)\).
Then \(v_n-\widehat P_n\) attains a minimum at an interior point
\((\widehat t_n,\widehat y_n)\in \overline{\mathcal C_{\varepsilon_{m,n}/2}(t_n,y_n)}\) for sufficiently large $n$, since $v_n-\widehat P_n\ge -\delta_n+\varepsilon_n^2\nu_n/4>0$ on the parabolic boundary since \(\delta_n/(\nu_n\varepsilon_{m,n}^2)\to0\). 
It is straightforward to check \(\widehat P_n=P^{(\widehat q_n,\widehat p_n,\widehat X_n)}(\,\cdot\,;\widehat t_n,\widehat y_n)\), where \(\widehat q_n=q_n+\nu_n\), \(\widehat p_n=p_n+X_n(\widehat y_n-y_n)-2\nu_n(\widehat y_n-y_n)\) and \(\widehat X_n=X_n-2\nu_n I\).
Hence we obtain \((\widehat q_n,\widehat p_n,\widehat X_n)
\in J^{2,-}v_n(\widehat t_n,\widehat y_n)\).
Since \(|\widehat y_n-y_n|\le \varepsilon_{m,n}/2\), \(\nu_n\to0\), and
\((X_n)_{n\ge1}\) is bounded, the jets converge as claimed. Finally,
\[
|v_n(\widehat t_n,\widehat y_n)-v_n(t_n,y_n)|
\le
H_n\varepsilon_{m,n}^{\alpha_m},
\]
which tends to zero under the additional condition
\(H_n\varepsilon_{m,n}^{\alpha_m}\to0\).
\end{proof}

\begin{proposition}
\label{prop:H_reverse_tolerant_constraints}
Let $x_n\in\Gamma_{m,n}$ and \((x_n,a_n,U_n)\to(x,a,U)\) where \(U_n=(r_n,p_n,X_n)\) and \(U=(r,p,X)\).
Suppose \(a_n\in\mathcal D_{\mathcal K,m,n}^{-,\ell}(x_n,U_n)\).
Let \(y_n\in\Gamma_{m,n}\cap B_\ell(x_n)\) be witness locations associated
with \(a_n\). If, along a subsequence, $y_n\to y$, then \(a\ge G_{\max}(y,U)\).
Moreover, every such limit point \(y\) satisfies \(|y-x|\le\ell\).
\end{proposition}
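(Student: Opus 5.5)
The plan is to take the discrete witness behind each \(a_n\), convert it into an exact parabolic subjet of the same observed surface by Lemma~\ref{lemma:Discretejet-growingHolder}, and then pass to the limit using \(G_{\max}\le\overline G\) and the continuity of \(G_{\max}\). Throughout, \(h_n:=\|\mathcal I_{m,n}\|_{T,m}\). The last assertion is immediate: \(y_n\in B_\ell(x_n)\) and \(x_n\to x\), so any limit point satisfies \(|y-x|\le\ell\).

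\emph{Case 1: \(a_n\in\mathcal Z(U_n)\) along a subsequence.} Then \(a_n=0\), and either \(U_n=(1,0,0)\) or \(U_n=(0,0,X_n)\) with \(X_n\le0\). Both conditions are closed, so the limit \(U\) satisfies the same one and \(0\in\mathcal Z(U)\). Hence \(\overline G(y,U)\le 0=a\). By Theorem~\ref{thm:ID-exist}, \(G_{\max}\le\overline G\) pointwise, so \(a\ge G_{\max}(y,U)\), and this holds for every \(y\).

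\emph{Case 2: genuine payoff witnesses.} Here there are \(t_n\in\mathbb T_n\) and \(f_n\in\mathcal K\) with \(\|f_n\|_\infty\le R_{m,n}\) and \(|v^{f_n}(t_n,y_n)-r_n|\le\eta_{m,n}\), such that \((a_n,p_n,X_n)\in\mathcal J^{2,-,\eta_{m,n}}_{\mathcal I_{m,n},\varepsilon_{m,n}}v^{f_n}(t_n,y_n)\). By Assumption~\ref{assume:observable-payoff-set}(ii), \(v^{f_n}=\|f_n\|_\infty v^{f_n/\|f_n\|_\infty}\). Assumption~\ref{ass:estimation_viscosity} therefore makes \(v_n:=v^{f_n}\) \(\alpha_m\)-H\"older with constant \(H_n\le C_mR_{m,n}\).

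I check the hypotheses of Lemma~\ref{lemma:Discretejet-growingHolder}. The jets \((a_n,p_n,X_n)\) are bounded because they converge. For the rate condition, write
\[
\frac{H_nh_n^{\alpha_m}+h_n}{\varepsilon_{m,n}^2}\le C\,\frac{R_{m,n}h_n^{\alpha_m}+h_n}{\eta_{m,n}}\cdot\frac{\eta_{m,n}}{\varepsilon_{m,n}^2},
\]
and both factors tend to \(0\) by \eqref{eq:eta_rate}. For the extra condition, \(H_n\varepsilon_{m,n}^{\alpha_m}\le C h_n^{\alpha_m\beta_m-\delta_m}\to0\), since \(\delta_m<\alpha_m\beta_m\).

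The lemma then yields points \((\widehat t_n,\widehat y_n)\) within \(\varepsilon_{m,n}/2\) of \((t_n,y_n)\) and exact jets \((\widehat q_n,\widehat p_n,\widehat X_n)\in\mathcal J^{2,-}v_n(\widehat t_n,\widehat y_n)\). These satisfy \((\widehat q_n,\widehat p_n,\widehat X_n)-(a_n,p_n,X_n)\to0\) and \(v_n(\widehat t_n,\widehat y_n)-v_n(t_n,y_n)\to0\).

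Set \(\widehat U_n:=(v_n(\widehat t_n,\widehat y_n),\widehat p_n,\widehat X_n)\). By the definition of \(\mathcal D^-_{\mathcal K}\), \(\widehat q_n\in\mathcal D^-_{\mathcal K}(\widehat y_n,\widehat U_n)\). Therefore
\[
\widehat q_n\ge\overline G(\widehat y_n,\widehat U_n)\ge G_{\max}(\widehat y_n,\widehat U_n).
\]
Along the subsequence \(\widehat y_n\to y\), \(\widehat q_n\to a\), and \(\widehat U_n\to U\), using \(|r_n-v_n(t_n,y_n)|\le\eta_{m,n}\to0\). Continuity of \(G_{\max}\) then gives \(a\ge G_{\max}(y,U)\).

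The step I expect to be the main obstacle is making sure \(\widehat t_n\) lies in \((0,T]\), so that the exact subjet belongs to the definition of \(\mathcal D^-_{\mathcal K}\). The minimum produced in Lemma~\ref{lemma:Discretejet-growingHolder} lies in a backward cylinder, so \(\widehat t_n\le t_n\le T\). Positivity is the real issue, since it could fail if \(t_n\) is within \(\varepsilon_{m,n}/2\) of \(0\). My first attempt is to restrict the minimization in the lemma to the intersection of the cylinder with \(\{s\ge t_n-\varepsilon_{m,n}/2\}\) and \(\{s>0\}\). I would then observe that a minimum on the lower face still gives a one-sided parabolic subjet, which is what the backward-topology definition uses. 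If that fails, the fallback is to exploit that \(\mathbb T_n\subset(0,T]\) and to bound \(t_n\) away from \(0\) along the subsequence.
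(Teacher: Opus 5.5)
Your main line is the paper's proof. You split off the $\mathcal Z$ case, take the payoff witness, bound the H\"older constant of $v^{f_n}$ by $C_mR_{m,n}$, apply Lemma~\ref{lemma:Discretejet-growingHolder}, and conclude from $\widehat q_n\ge\overline G(\widehat y_n,\widehat U_n)\ge G_{\max}(\widehat y_n,\widehat U_n)$ and continuity of $G_{\max}$. Your check of the rate conditions is correct and more explicit than the paper's: you factor through $\eta_{m,n}$ and use $H_n\varepsilon_{m,n}^{\alpha_m}\le C_m\|\mathcal I_{m,n}\|_{T,m}^{\alpha_m\beta_m-\delta_m}$.

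The step you flagged, however, is a genuine gap. The paper's proof passes over it silently, and neither of your repairs closes it. Lemma~\ref{lemma:Discretejet-growingHolder} tacitly needs the whole closed backward half-cylinder of radius $\varepsilon_{m,n}/2$ at $(t_n,y_n)$ to lie in the region covered by data. The reason is that its minimizer is kept off the parabolic boundary only by the margin $\nu_n\varepsilon_{m,n}^2/4-\delta_n>0$ there (with $\nu_n,\delta_n$ as in that lemma).

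Your first repair fails for two reasons. A minimum on the bottom face $\{s=\widehat t_n\}$ controls $v_n-\widehat P_n$ only at later times $s\ge\widehat t_n$. But $\mathcal J^{2,-}$ is defined through minima over $(\widehat t_n-r,\widehat t_n]\times B_r(\widehat y_n)$, i.e.\ over earlier times. Moreover, when the cylinder is cut at $s=0$, the margin on that face is only $\nu_nt_n-\delta_n$, which need not be positive. Your fallback fails because nothing keeps $t_n$ away from $0$: any element of $\mathbb T_n$ can serve as witness time, and $\min\mathbb T_n\le\|\mathcal I_{m,n}\|_{T,m}\to0$.

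In fact no argument can close the gap for the statement as written. If $t_n=\min\mathbb T_n$, then $\mathcal I_{m,n}\cap\mathcal C^-_{\varepsilon_{m,n}}(t_n,y_n)\subset\{t_n\}\times\Gamma_{m,n}$, and on that set $P^{(a,p,X)}(t_n,z;t_n,y_n)$ does not involve $a$. Take any $f\in\mathcal K$ with $v^f$ smooth up to $t=0$ near $y$, and choose $r_n=v^f(t_n,y_n)$, $p_n=\nabla v^f(t_n,y_n)$, $X_n=\nabla^2v^f(t_n,y_n)-I$. This satisfies the spatial constraint for large $n$. So every real number lies in $\mathcal D^{-,\ell}_{\mathcal K,m,n}(y_n,U_n)$, and any constant sequence $a_n\equiv a<G_{\max}(y,U)$, with $U:=\lim_nU_n$, violates the conclusion.

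The same failure occurs in space when $y_n$ is a vertex of $P_{m,n}$ with outward unit normal $e$. The one-sided grid constraint admits $a_n=\partial_tv^f(t_n,y_n)+\sigma$ together with $X_n=\nabla^2v^f(t_n,y_n)-\sigma I+K\,ee^\top$ for any fixed $K$. The only cost is the vanishing gradient shift $\tfrac K2\varepsilon_{m,n}e$. For a nondegenerate $G$ with $K$ large, the limit violates $a\ge G_{\max}(y,U)$; an example is the heat semigroup, where $G_{\max}\ge G=\tfrac12\operatorname{tr}X$.

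What is missing is a hypothesis that keeps each witness cylinder inside the data region. For instance, the definition of $\mathcal D^{-,\ell}_{\mathcal K,m,n}$ could admit only witnesses with $t\ge\varepsilon_{m,n}$ and $B_{\varepsilon_{m,n}}(y)\subset P_{m,n}$. Under such a restriction $\widehat t_n\ge\varepsilon_{m,n}/2>0$, and the rest of your argument goes through.
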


\begin{proof}
If \(a_n\in\mathcal Z(U_n)\) along a subsequence, the conclusion follows from
the structural conditions encoded by \(\mathcal Z\) and the continuity of
\(G_{\max}\). Hence we may assume that each \(a_n\) is generated by a payoff witness: there exist
\[
    y_n\in\Gamma_{m,n}\cap B_\ell(x_n),
    \qquad
    t_n\in\mathbb T_n,
    \qquad
    f_n\in\mathcal K,
    \qquad
    \|f_n\|_\infty\le R_{m,n},
\]
such that $(a_n,p_n,X_n)\in\mathcal J^{2,-,\eta_{m,n}}_{\mathcal I_{m,n},\varepsilon_{m,n}}v^{f_n}(t_n,y_n)$.

By Assumption~\ref{ass:estimation_viscosity},
\(v^{f_n}\) has H\"older constant bounded by
\(H_n:=C_mR_{m,n}\). 
Note that the rate conditions~\eqref{eq:ratecondition_appendixlemma} and $H_n\varepsilon_{m,n}^{\alpha_m}\to0$ is satisfied by \eqref{eq:ratecondition_appendix}, thus we can apply Lemma~\ref{lemma:Discretejet-growingHolder}: there exist points $(\widehat t_n,\widehat y_n)\in \overline{\mathcal C_{\varepsilon_{m,n}/2}(t_n,y_n)}$ and exact lower jets $(\hat a_n,\hat p_n,\hat X_n)\in\mathcal J^{2,-}v^{f_n}(\hat t_n,\hat y_n)$
such that
\[
    \hat a_n-a_n\to0,
    \quad
    \hat p_n-p_n\to0,
    \quad
    \hat X_n-X_n\to0,
    \quad
    v^{f_n}(\hat t_n,\hat y_n)-v^{f_n}(t_n,y_n)\to0.
\]
Together with \(|v^{f_n}(t_n,y_n)-r_n|\le\eta_{m,n}\to0\),
this implies \(\widehat U_n:=\bigl(v^{f_n}(\hat t_n,\hat y_n),\hat p_n,\hat X_n\bigr)\to U\).
Since \(\hat a_n\in\mathcal D_{\mathcal K}^-(\hat y_n,\widehat U_n)\), the definition of the $G_{\max}$ gives $\hat a_n\ge G_{\max}(\hat y_n,\widehat U_n)$.
Letting \(n\to\infty\) and using the continuity of \(G_{\max}\), we obtain \(a\ge G_{\max}(y,U)\).
The final assertion is obvious since $|x_n-y_n|\le \ell$ for all $n\ge1$.
\end{proof}

\subsection{Convergence of the Recovery Procedure}
\label{subsec:proof_recovery_procedure}

This subsection proves Theorems~\ref{thm:consistency_local} and
\ref{thm:consistency_valuation} in the main text. The proof proceeds in three steps.
First, we prove the uniform convergence of the discrete support-set
estimators. Second, we record a stability result for robust valuation
rules under locally uniform convergence of generating functions. Finally,
we combine these two ingredients to prove the convergence of the full
recovery procedure.

For \(\lambda>0\), define
\[
A^\lambda(x)
:=
\left\{
V\in\mathbb B_{N_m}' :
L^V(U)\le G_{\max}(x,U)+\lambda
\quad\text{for all }U\in\mathbb B_1
\right\}.
\]

\begin{lemma}
\label{lem:H_sequential_tolerant_support}
Fix \(\ell>0\). Let \(x_n\in\Gamma_{m,n}\) satisfy
\(x_n\to x\in\overline D_m\). Then the following statements hold.

\begin{enumerate}[label=(\roman*), ref=(\roman*)]
\item\label{lem:H_sequential_tolerant_support_1}
If \(V_n\in A_{\max,m,n}^{\ell}(x_n)\) and \(V_n\to V\), then
\(V\in A^{\lambda_{m,\ell}}(x)\).

\item\label{lem:H_sequential_tolerant_support_2}
If \(\lambda_{m,\ell}>\omega_m(\ell)\), then
\[
A_{\max}(x)
\subseteq
\liminf_{n\to\infty}
A_{\max,m,n}^{\ell}(x_n).
\]
\end{enumerate}
Consequently,
\[
\limsup_{n\to\infty}
d_H\!\left(
A_{\max,m,n}^{\ell}(x_n),
A_{\max}(x)
\right)
\le
d_H\!\left(A^{\lambda_{m,\ell}}(x),A_{\max}(x)\right).
\]
\end{lemma}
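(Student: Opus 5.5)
Part (i) is obtained by passing to the limit in the defining inequalities of \(A_{\max,m,n}^{\ell}(x_n)\), and part (ii) by producing, for each target \(V\in A_{\max}(x)\), data points that keep \(V\) admissible. The Hausdorff bound then follows by combining the two through compactness of \(\mathbb B_{N_m}'\).

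\emph{Part (i).} Let \(V_n\in A_{\max,m,n}^{\ell}(x_n)\) with \(V_n\to V\). Since \(\mathbb B_{N_m}'\) is closed, \(V\in\mathbb B_{N_m}'\). Fix \(U\in\mathbb B_1\); we must show \(L^V(U)\le G_{\max}(x,U)+\lambda_{m,\ell}\). If \(G_{\max}(x,U)=+\infty\) there is nothing to prove, and by the representation \(G_{\max}(x,\cdot)=\sup_{A_{\max}(x)}L^\cdot\) together with Theorem~\ref{thm:ID-exist}, it suffices to test against \(a\in\mathcal D_{\mathcal K}^-(x,U)\) and to prove \(L^V(U)\le a+\lambda_{m,\ell}\).

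For such an \(a\) and \(\varepsilon>0\), Proposition~\ref{prop:H_centered_forward_consistency} yields \(U_n^\varepsilon\to U^\varepsilon=(r,p,X-2\varepsilon I)\) and \(a_n^\varepsilon\to a+\varepsilon\) with \(a_n^\varepsilon\in\mathcal D^{-,\ell}_{\mathcal K,m,n}(x_n,U_n^\varepsilon)\). The issue is that \(U_n^\varepsilon\) may lie slightly outside \(\mathbb B_1\). To handle this, apply the defining inequality of the estimator to a rescaled jet. By positive homogeneity (Assumption~\ref{assume:observable-payoff-set}(ii)), \(c\,\mathcal D^{-,\ell}_{\mathcal K,m,n}(x,U)\) matches \(\mathcal D^{-,\ell}_{\mathcal K,m,n}(x,cU)\) up to the bound \(\|f\|_\infty\le R_{m,n}\) and the tolerance \(\eta_{m,n}\). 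Both are negligible for fixed \(c\) near \(1\) because \(R_{m,n}\to\infty\) and \(\eta_{m,n}\to0\). After this rescaling we get \(L^{V_n}(U_n^\varepsilon)\le a_n^\varepsilon+N_m\eta_{m,n}+\lambda_{m,\ell}+o(1)\). Letting \(n\to\infty\) and then \(\varepsilon\downarrow0\), using \(|L^V(U^\varepsilon)-L^V(U)|\le C\varepsilon\) since \(V\) is bounded, gives \(L^V(U)\le a+\lambda_{m,\ell}\).

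\emph{Part (ii).} Fix \(V\in A_{\max}(x)\subseteq\mathbb B_{N_m}'\). I claim that \(V\in A^{\ell}_{\max,m,n}(x_n)\) itself for all large \(n\), so the constant sequence \(V_n=V\) works. Argue by contradiction. Suppose that along a subsequence there are \(U_n\in\mathbb B_1\) and \(a_n\in\mathcal D^{-,\ell}_{\mathcal K,m,n}(x_n,U_n)\) with \(L^V(U_n)>a_n+N_m\eta_{m,n}+\lambda_{m,\ell}\). Since \(a_n< L^V(U_n)\le N_m\), the \(a_n\) are bounded above.

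Assume first that they are also bounded below. Pass to limits \(U_n\to U\), \(a_n\to a\), and witness locations \(y_n\to y\). Proposition~\ref{prop:H_reverse_tolerant_constraints} gives \(a\ge G_{\max}(y,U)\) with \(|y-x|\le\ell\). Hence
\[
L^V(U)\ge a+\lambda_{m,\ell}\ge G_{\max}(y,U)+\lambda_{m,\ell}\ge G_{\max}(x,U)-\omega_m(\ell)+\lambda_{m,\ell}>G_{\max}(x,U),
\]
which contradicts \(V\in A_{\max}(x)\).

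If instead \(a_n\to-\infty\), a rescaling of the discrete jet gives the same contradiction. The second-order inequality with a very negative time coefficient \(a_n\) contradicts the Hölder bound of Assumption~\ref{ass:estimation_viscosity} on backward cylinders of size \(\varepsilon_{m,n}\), by the rate condition \eqref{eq:eta_rate}. The mesh, Hölder and tolerance errors are \(o(\varepsilon_{m,n}^2)\), so the polynomial lower bound cannot drop faster than that over the cylinder.

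\emph{Conclusion and main obstacle.} Both sets lie in the compact \(\mathbb B_{N_m}'\). Part (i) says \(\limsup_n A^\ell_{\max,m,n}(x_n)\subseteq A^{\lambda_{m,\ell}}(x)\), and part (ii) says \(A_{\max}(x)\subseteq\liminf_n A^\ell_{\max,m,n}(x_n)\). A standard compactness argument converts these Painlevé--Kuratowski inclusions into the stated \(\limsup\) Hausdorff bound, using \(A_{\max}(x)\subseteq A^{\lambda}(x)\). I expect the main obstacle to be the boundary bookkeeping in part (i): the jets \(U_n^\varepsilon\) leave \(\mathbb B_1\), and homogeneity must be used without violating the cap \(R_{m,n}\) or the tolerance.
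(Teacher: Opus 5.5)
Your plan follows the paper's proof. Part (i) feeds the forward approximations of Proposition~\ref{prop:H_centered_forward_consistency} into the constraints defining $A^{\ell}_{\max,m,n}(x_n)$ and passes to the limit. Part (ii) argues by contradiction through Proposition~\ref{prop:H_reverse_tolerant_constraints} and the modulus $\omega_m(\ell)$. The Hausdorff bound comes from the two Painlev\'e--Kuratowski inclusions in the compact set $\mathbb B_{N_m}'$. Your handling of jets $U_n^\varepsilon$ that leave $\mathbb B_1$ differs from the paper, which first takes $U$ in the interior of $\mathbb B_1$ and then extends to the boundary by scaling the limit inequality. Your version is sound, and for $c\le 1$ the rescaling is even exact, since $\|cf\|_\infty\le R_{m,n}$ and every tolerance becomes $c\,\eta_{m,n}\le\eta_{m,n}$.

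\textbf{The gap: the case $a_n\to-\infty$ in (ii).} Your argument for this case does not work.
\begin{itemize}
\item On a backward cylinder $s-t_n\le 0$, so a very negative time coefficient makes $P^{(a_n,p_n,X_n)}(s,z;t_n,y_n)$ \emph{rise} by $|a_n|(t_n-s)$ as one goes back in time. The discrete inequality therefore asks $v^{f_n}$ to increase backward in time at least that fast.
\item Assumption~\ref{ass:estimation_viscosity}, with H\"older constant $C_mR_{m,n}$ for $\|f_n\|_\infty\le R_{m,n}$, only caps this increase by about $C_mR_{m,n}\varepsilon_{m,n}^{\alpha_m}$ across the cylinder. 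That gives at best $|a_n|\lesssim C_mR_{m,n}\varepsilon_{m,n}^{\alpha_m-1}+\eta_{m,n}/\varepsilon_{m,n}$, a bound that tends to infinity.
\item The $o(\varepsilon_{m,n}^2)$ quantities in \eqref{eq:eta_rate} are mesh-scale interpolation errors, not the oscillation of $v^{f_n}$ over the cylinder. So no contradiction arises.
\end{itemize}
The fix is monotonicity in $a$. Since $s-t\le0$ on $\mathcal C^-_{\varepsilon}$, if $a$ is payoff-witnessed in $\mathcal D^{-,\ell}_{\mathcal K,m,n}(x_n,U_n)$, then every $a'\ge a$ is witnessed by the same data; elements of $\mathcal Z(U_n)$ equal $0$ and need no change. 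Replace $a_n$ by $\max\{a_n,-N_m-\lambda_{m,\ell}-1\}$. This preserves membership, and because $|L^V(U_n)|\le N_m$ it preserves the strict violation once $N_m\eta_{m,n}<1$. The new sequence is bounded, so your first case applies. The paper applies Proposition~\ref{prop:H_reverse_tolerant_constraints} without discussing unboundedness, so this remark is needed there too.

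\textbf{A step in (i) that does not follow as you claim.} Testing only against $a\in\mathcal D^-_{\mathcal K}(x,U)$ yields $L^V\le\overline G(x,\cdot)+\lambda_{m,\ell}$ on $\mathbb B_1$. Theorem~\ref{thm:ID-exist} converts domination by $\overline G$ into domination by $G_{\max}$ only with zero slack on all of $\mathbb R\times\mathbb R^d\times\mathbb S(d)$. With a fixed slack on the unit ball, the two conditions can differ when $\overline G(x,\cdot)$ is not sublinear. For example, if $\overline G(x,\cdot)$ is finite only on a few positively spanning rays, the first set is a polytope strictly larger than $A^{\lambda_{m,\ell}}(x)$, which is the $\lambda_{m,\ell}$-neighbourhood of $A_{\max}(x)$ in $\mathbb B'_{N_m}$. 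The paper's proof takes the same step (``using the defining characterization of $G_{\max}$''), so this is not a deviation from it. But it does not follow from Theorem~\ref{thm:ID-exist} as you assert.
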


\begin{proof}
We first prove \ref{lem:H_sequential_tolerant_support_1}.
Fix \(U\in\mathbb B_1\). First assume that \(U\) lies in the interior of
\(\mathbb B_1\), and let \(a\in\mathcal D_{\mathcal K}^-(x,U)\).
By Proposition~\ref{prop:H_centered_forward_consistency}, for every
\(\varepsilon>0\) there exist
\[
U_n^\varepsilon\to U^\varepsilon:=(r,p,X-2\varepsilon I),
\qquad
a_n^\varepsilon\to a+\varepsilon
\]
such that \(a_n^\varepsilon\in\mathcal D_{\mathcal K,m,n}^{-,\ell}(x_n,U_n^\varepsilon)\)
For all sufficiently large \(n\), and then for sufficiently small
\(\varepsilon>0\), we have \(U_n^\varepsilon\in\mathbb B_1\).
Since \(V_n\in A_{\max,m,n}^{\ell}(x_n)\), it follows that \(L^{V_n}(U_n^\varepsilon)\le a_n^\varepsilon+N_m\eta_{m,n}+\lambda_{m,\ell}\).
Letting \(n\to\infty\) and \(\varepsilon\downarrow0\), we obtain \(L^V(U)\le a+\lambda_{m,\ell}\).
Taking the infimum over \(a\in\mathcal D_{\mathcal K}^-(x,U)\), and then
using the defining characterization of \(G_{\max}\), yields \(L^V(U)\le G_{\max}(x,U)+\lambda_{m,\ell}\).
By a scaling argument, the same inequality holds for every
\(U\in\mathbb B_1\). Hence \(V\in A^{\lambda_{m,\ell}}(x)\).

We next prove \ref{lem:H_sequential_tolerant_support_2}. Fix
\(V\in A_{\max}(x)\). Suppose, toward a contradiction, that
\(V\notin \liminf_n A_{\max,m,n}^{\ell}(x_n)\). Then there exist a
subsequence, still denoted by \(n\), \(U_n\in\mathbb B_1\), and \(a_n\in\mathcal D_{\mathcal K,m,n}^{-,\ell}(x_n,U_n)\) such that \(L^V(U_n)>a_n+N_m\eta_{m,n}+\lambda_{m,\ell}\).
Taking the limit superior gives 
\begin{align}\label{eq:lem:H_sequential_tolerant_support_1}
L^V(U)\ge\limsup_{n\to\infty}a_n+\lambda_{m,\ell}.
\end{align}

On the other hand, passing to a further subsequence, we may assume that
\(U_n\to U\in\mathbb B_1\). Let \(y_n\) be the witness location associated
with \(a_n\). Since \(y_n\in B_\ell(x_n)\) and \(\overline D_m\) is compact, we may assume, after passing to a subsequence, that \(y_n\to y\) for some
\(y\in\overline{B_\ell(x)}\). 
By Proposition~\ref{prop:H_reverse_tolerant_constraints},
\[
    \liminf_{n\to\infty} a_n
    \ge
    G_{\max}(y,U).
\]
Thus, by the definition of \(\omega_m(\ell)\),
\[
\liminf_{n\to\infty} a_n+\omega_m(\ell)\ge G_{\max}(y,U)+\omega_m(\ell)\ge G_{\max}(x,U)\ge L^V(U),
\]
which is contradiction to \(\lambda_{m,\ell}>\omega_m(\ell)\) and \eqref{eq:lem:H_sequential_tolerant_support_1}.
This proves \ref{lem:H_sequential_tolerant_support_2}.

The final Hausdorff estimate follows from the two Painlev\'e--Kuratowski
inclusions just proved and compactness, since all sets are compact subsets
of the common compact set \(\mathbb B_{N_m}'\); on bounded families of
closed sets, Painlev\'e--Kuratowski convergence is equivalent to convergence
in the Hausdorff distance
\cite[Example~4.13 and Exercise~4.40(a)]{rockafellar1998variational}.
\end{proof}

\begin{proposition}
\label{prop:H_uniform_tolerant_support_convergence}
For any fixed \(m\ge1\),
\begin{align}\label{eq:prop:H_uniform_tolerant_support_convergence_0}
\lim_{\ell\downarrow0}
\limsup_{n\to\infty}
\sup_{x\in\overline D_m}
d_H\!\left(
A_{\max,m,n}^{\ell}(x),
A_{\max}(x)
\right)
=0.
\end{align}
\end{proposition}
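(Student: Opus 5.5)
We combine the pointwise Hausdorff estimate of Lemma~\ref{lem:H_sequential_tolerant_support} with a uniformity argument, and then let \(\ell\downarrow0\). There are three steps.

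\emph{Step 1: from pointwise to uniform on the grid.} For fixed \(\ell\) with \(\lambda_{m,\ell}>\omega_m(\ell)\), we want
\[
\limsup_{n\to\infty}\sup_{x\in\Gamma_{m,n}} d_H\bigl(A^\ell_{\max,m,n}(x),A_{\max}(x)\bigr)\le \sup_{x\in\overline D_m} d_H\bigl(A^{\lambda_{m,\ell}}(x),A_{\max}(x)\bigr)+o(1).
\]
We argue by contradiction. Choose \(x_n\in\Gamma_{m,n}\) nearly attaining the supremum, and use compactness of \(\overline D_m\) to extract \(x_{n_k}\to x\). Lemma~\ref{lem:H_sequential_tolerant_support} is stated for full sequences, but its proof (Propositions~\ref{prop:H_centered_forward_consistency} and~\ref{prop:H_reverse_tolerant_constraints}) only uses convergence along the chosen indices, so it applies along subsequences. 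This yields \(\limsup_k d_H(A^\ell_{\max,m,n_k}(x_{n_k}),A_{\max}(x))\le d_H(A^{\lambda_{m,\ell}}(x),A_{\max}(x))\).

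To replace \(A_{\max}(x)\) by \(A_{\max}(x_{n_k})\), we need \(d_H(A_{\max}(x_{n_k}),A_{\max}(x))\to0\). This holds because \(G_{\max}\) is continuous and uniformly continuous on \(\overline D_m\times\mathbb B_1\), and the Hausdorff distance between compact convex sets equals the sup-norm distance of their support functions over the unit sphere. By the same identity, \(d_H(A^{\lambda}(x),A_{\max}(x))\) is controlled by how far the support function of \(A^\lambda(x)\) exceeds \(G_{\max}(x,\cdot)\).

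\emph{Step 2: extension from the grid to \(\overline D_m\).} For \(x\in\overline D_m\), let \(y\) be its projection onto \(P_{m,n}\), lying in a simplex with vertices \(x_i\) and weights \(\mu_i\). The Hausdorff distance is convex under Minkowski combinations of convex sets, so
\[
d_H\Bigl(\sum_i\mu_iA^\ell_{\max,m,n}(x_i),\sum_i\mu_iA_{\max}(x_i)\Bigr)\le\max_i d_H\bigl(A^\ell_{\max,m,n}(x_i),A_{\max}(x_i)\bigr).
\]
Moreover, \(d_H\bigl(\sum_i\mu_iA_{\max}(x_i),A_{\max}(x)\bigr)\le\max_i d_H\bigl(A_{\max}(x_i),A_{\max}(x)\bigr)\), which tends to zero uniformly. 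This uses \(|x_i-x|\le 2\|\mathcal I_{m,n}\|_{T,m}+d_H(P_{m,n},\overline D_m)\to0\) together with the uniform continuity from Step 1.

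\emph{Step 3: \(\ell\downarrow0\) (the main obstacle).} We must show \(\sup_{x\in\overline D_m} d_H(A^{\lambda}(x),A_{\max}(x))\to0\) as \(\lambda\downarrow0\). Since \(A^\lambda(x)\) decreases as \(\lambda\) decreases, it suffices to show \(\bigcap_{\lambda>0}A^\lambda(x)=A_{\max}(x)\). Indeed, if \(L^V\le G_{\max}(x,\cdot)+\lambda\) on \(\mathbb B_1\) for every \(\lambda>0\), then \(L^V\le G_{\max}(x,\cdot)\) on \(\mathbb B_1\), and positive homogeneity of both sides extends this to all \(U\), so \(V\in A_{\max}(x)\) by Theorem~\ref{thm:ID-exist}.

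Uniformity in \(x\) is the delicate part. I would get it from a Dini-type compactness argument: \(d_H(A^\lambda(x),A_{\max}(x))\) is upper semicontinuous in \(x\), because \(x\mapsto A^\lambda(x)\) has closed graph (by continuity of \(G_{\max}\)) and \(A_{\max}(\cdot)\) is Hausdorff-continuous. The family is monotone in \(\lambda\) and decreases pointwise to \(0\) on the compact set \(\overline D_m\), so Dini's theorem for upper semicontinuous functions gives uniform convergence. Finally, \(\lambda_{m,\ell}\downarrow0\) together with \(\omega_m(\ell)=o(\lambda_{m,\ell})\) ensures that \(\lambda_{m,\ell}>\omega_m(\ell)\) for small \(\ell\), so Steps 1 and 2 apply, and the double limit yields \eqref{eq:prop:H_uniform_tolerant_support_convergence_0}.
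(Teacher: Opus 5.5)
Your proposal is correct and follows the paper's proof essentially step for step. You bound the grid-point error by contradiction using Lemma~\ref{lem:H_sequential_tolerant_support} and the Hausdorff continuity of $A_{\max}$, extend to $\overline D_m$ by barycentric interpolation and Minkowski combinations (your Step 2 is a more explicit version of the paper's one-line extension), and then let $\lambda_{m,\ell}\downarrow0$.

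The only overkill is Step 3, which is not an obstacle. Since $A_{\max}(x)\subseteq A^\lambda(x)$, and by the very definition of $A^\lambda(x)$ its support function is at most $G_{\max}(x,\cdot)+\lambda$ on $\mathbb B_1$, the support-function identity you already invoke gives $\sup_{x\in\overline D_m}d_H\bigl(A^\lambda(x),A_{\max}(x)\bigr)\le\lambda$ directly. That is why the paper treats this convergence as clear, and no Dini argument is needed.
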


\begin{proof}
For \(\lambda>0\), define by \(\chi_m(\lambda):=\sup_{x\in\overline D_m} d_H\!\left(
A^\lambda(x),A_{\max}(x)
\right)\).
Clearly we have \(\chi_m(\lambda)\to0\) as $\lambda\downarrow0$.

We first prove 
\begin{align}\label{eq:prop:H_uniform_tolerant_support_convergence_1}
\limsup_{n\to\infty}
\sup_{x\in\Gamma_{m,n}}
d_H\!\left(
A_{\max,m,n}^{\ell}(x),
A_{\max}(x)
\right)
\le
\chi_m(\lambda_{m,\ell}).
\end{align}
Suppose, toward a contradiction,
that there exist \(\varepsilon>0\), a subsequence, still denoted by \(n\),
and grid points \(x_n\in\Gamma_{m,n}\) such that
\[
d_H\!\left(
A_{\max,m,n}^{\ell}(x_n),
A_{\max}(x_n)
\right)
>
\chi_m(\lambda_{m,\ell})+\varepsilon.
\]
Since \(\overline D_m\) is compact, we may assume that \(x_n\to x\in\overline D_m\).
By Lemma~\ref{lem:H_sequential_tolerant_support},
\[
\limsup_{n\to\infty}
d_H\!\left(
A_{\max,m,n}^{\ell}(x_n),
A_{\max}(x)
\right)
\le
d_H\!\left(
A^{\lambda_{m,\ell}}(x),
A_{\max}(x)
\right)
\le
\chi_m(\lambda_{m,\ell}).
\]
The continuity of \(G_{\max}\) on
\(\overline D_m\times\mathbb B_1\) implies the Hausdorff continuity of
\(x\mapsto A_{\max}(x)\). Hence \(d_H\!\left(A_{\max}(x_n),A_{\max}(x)\right)\to0\) and this contradicts the preceding strict lower bound. 

The extension from \(\Gamma_{m,n}\) to \(\overline D_m\) is obtained by barycentric interpolation and Minkowski convex combinations.
Indeed, since the grid-point errors are controlled by the preceding estimate and \(x\mapsto A_{\max}(x)\) is uniformly Hausdorff continuous on \(\overline D_m\), we obtain
\[
\limsup_{n\to\infty}
\sup_{x\in\overline D_m}
d_H\!\left(
A_{\max,m,n}^{\ell}(x),
A_{\max}(x)
\right)
\le
\chi_m(\lambda_{m,\ell}).
\]
Finally, since \(\lambda_{m,\ell}\downarrow0\) and
\(\chi_m(\lambda)\downarrow0\) as \(\lambda\downarrow0\), we obtain \eqref{eq:prop:H_uniform_tolerant_support_convergence_0}.
\end{proof}

\begin{proof}[Proof of Theorem~\ref{thm:consistency_local}]
Fix \(m\ge1\). The support-set convergence assertion follows from
Proposition~\ref{prop:H_uniform_tolerant_support_convergence}.
It remains to prove the convergence of the generating functions. For compact
convex subsets of the finite-dimensional space
\(\mathbb R\times\mathbb R^d\times\mathbb S(d)\), the Hausdorff distance is
equivalent to the uniform distance of support functions on the unit ball.
Thus,
\[
\sup_{U\in\mathbb B_1}
\left|
G_{\max,m,n}^{\ell}(x,U)-G_{\max}(x,U)
\right|
=
d_H\!\left(
A_{\max,m,n}^{\ell}(x),A_{\max}(x)
\right).
\]
Taking the supremum over \(x\in\overline D_m\), and using
Proposition~\ref{prop:H_uniform_tolerant_support_convergence}, yields
\[
\lim_{\ell\downarrow0}
\limsup_{n\to\infty}
\sup_{x\in\overline D_m,\,U\in\mathbb B_1}
\left|
G_{\max,m,n}^{\ell}(x,U)-G_{\max}(x,U)
\right|
=0.
\]
This proves the theorem.
\end{proof}

We now prove the convergence of robust valuation rules.
From now on, we always assume that $G:D\times \mathbb R\times\mathbb R^d\times\mathbb S(d)\to\mathbb R$ satisfies Assumptions~\ref{assume:observable-payoff-set} and~\ref{ass:estimation_viscosity}.
For a function
\(H:D\times\mathbb R\times\mathbb R^d\times\mathbb S(d)\to\mathbb R\)
satisfying \ref{item:G1}--\ref{item:G3}, recall that the associated robust
valuation is
\[
\mathcal T_t^H f(x)
:=
\sup_{(A,\mathbb Q)\in\mathcal U_x(H)}
\mathbb E^{\mathbb Q}
\left[
e^{-A_t}f(X_t)\mathbb I_{\{\tau_\infty>t\}}
\right]
=
\sup_{\mathbb P\in\mathcal P_x(H)}
\mathbb E^{\mathbb P}
\left[
f(X_t)\mathbb I_{\{\tau_\infty>t\}}
\right],
\]
where \(\mathcal P_x(H):=\Phi(\mathcal U_x(H))\).

\begin{proposition}
\label{prop:stability_locally_uniform_generators}
Let
\(G_j,G:D\times \mathbb R\times\mathbb R^d\times\mathbb S(d)\to\mathbb R\)
be finite continuous functions satisfying
\ref{item:G1}--\ref{item:G3}. Assume that \(G_j\to G\) locally uniformly on
\(D\times \mathbb R\times\mathbb R^d\times\mathbb S(d)\). Assume also that
there exist a positive function \(\varphi\in C^2(D)\), with
\(\varphi(x)\to\infty\) as \(x\to\partial D\), and constants \(C_j\) with
\(\sup_j C_j<\infty\), such that
\[
        G_j(x,\varphi(x),\nabla\varphi(x),\nabla^2\varphi(x))
        \le C_j\varphi(x),
        \qquad x\in D,\ j\ge1.
\]
Then, for every \(t\ge0\), \(x\in D\), and
\(f\in C_b(D)\), \(\mathcal T_t^{G_j}f(x)\to \mathcal T_t^Gf(x)\).
\end{proposition}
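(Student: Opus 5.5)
We would prove the convergence by the half-relaxed limits method from viscosity theory, and close the argument with the comparison principle for \(G\). Fix \(T>0\) and \(f\in C_b(D)\). By Lemma~\ref{lem:canonical_value_continuity} applied to each \(G_j\), the function \(v_j(s,x):=\mathcal T^{G_j}_{T-s}f(x)\) is the unique bounded viscosity solution of the terminal-value form of \eqref{eq:mainHJB} with generator \(G_j\). This step needs the comparison principle for each \(G_j\). If that is not available, we argue instead directly through the dynamic programming principle and the super/submartingale characterizations in Lemma~\ref{lem:virtual_characterization}, exactly as in that lemma's proof. This gives that \(v_j^*\) is a subsolution and \((v_j)_*\) is a supersolution for \(G_j\) without using uniqueness. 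All \(v_j\) are bounded by \(\|f\|_\infty\) by \ref{S4}. We then define the relaxed limits
\[
\overline v(s,x):=\limsup_{j\to\infty,\,(s',x')\to(s,x)}v_j^*(s',x'),\qquad
\underline v(s,x):=\liminf_{j\to\infty,\,(s',x')\to(s,x)}(v_j)_*(s',x').
\]

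The first step is the stability lemma of Barles--Perthame: since \(G_j\to G\) locally uniformly on \(D\times\mathbb R\times\mathbb R^d\times\mathbb S(d)\), \(\overline v\) is a bounded viscosity subsolution and \(\underline v\) a bounded viscosity supersolution of the equation with generator \(G\). This is the standard perturbed-test-function argument (touching points of \(v_j^*-\varphi\) converge to touching points of \(\overline v-\varphi\), and the jets converge), and it uses only local uniform convergence.

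The second step, which we expect to be the main obstacle, is the terminal condition \(\overline v(T,\cdot)\le f\le\underline v(T,\cdot)\), which must hold uniformly in \(j\). Here we use the uniform Lyapunov bound. By Proposition~\ref{prop:lyap_tail_bounds} with \(\sup_jC_j=:C<\infty\), we have \(\mathbb P(\tau_n\le h<\tau_{\rm kill})\le e^{Ch}\varphi(x)/\inf_{\partial D_n}\varphi\) uniformly over \(\mathbb P\in\mathcal P_x(G_j)\) and over all \(j\). Moreover, local uniform convergence gives support sets \(A_j(y)\) bounded uniformly in \(j\) on compact sets \(\overline D_n\); this follows from the Lipschitz bound of Lemma~\ref{lem:local_jet_lipschitz} type applied to \(\sup_j|G_j|\) on the unit ball. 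So the estimates of Proposition~\ref{prop:small_time_exit} and the killing rates are uniform in \(j\) locally. For smooth \(f\), we combine the martingale/supermartingale identities with these uniform bounds to get \(|\mathcal T^{G_j}_hf(x)-f(x)|\le \omega(h)\) locally uniformly in \(x\) and uniformly in \(j\). For general \(f\in C_b(D)\), we approximate by smooth functions locally uniformly and control the remaining part through the Lyapunov tail estimate. This yields continuity of \(\overline v,\underline v\) at \(s=T\) with value \(f\).

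Finally, Assumption~\ref{assume:paraboliccomparison} for \(G\) gives \(\overline v\le\underline v\). Since \(\underline v\le\overline v\) trivially, both equal the unique bounded viscosity solution, which is \(\mathcal T^G_{T-s}f\) by Lemma~\ref{lem:canonical_value_continuity}. Equality of the relaxed limits implies \(v_j\to v\) locally uniformly, and in particular \(\mathcal T^{G_j}_tf(x)\to\mathcal T^G_tf(x)\) pointwise. Since \(T\) was arbitrary, this proves the statement.

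An alternative probabilistic route is available for the inequality \(\liminf_j\mathcal T^{G_j}_tf(x)\ge\mathcal T^G_tf(x)\). There one would use uniform tightness of \(\bigcup_j\mathcal P_x(G_j)\) from Proposition~\ref{prop:compactness_criterion}, together with an upper-hemicontinuity argument as in Proposition~\ref{prop:propertiesP_t,x_revised}: the supermartingale inequalities for \(G_j\) pass to the limit and give \(\limsup_j\) of the supremum bounded by \(\mathcal T^G\). We would keep the PDE route as the primary argument, since the reverse (lower) bound requires approximating effective models, and the viscosity argument handles that direction automatically.
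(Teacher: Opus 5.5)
Your proposal is correct and follows essentially the same route as the paper's proof. Both arguments take half-relaxed limits of the $v_j$, apply the standard viscosity stability theorem under locally uniform convergence $G_j\to G$, and use a uniform-in-$j$ small-time estimate (uniformly bounded support sets together with exit and killing bounds) to fix the initial condition. Both then invoke comparison for $G$ to identify the common limit with $\mathcal T^G_t f$. Working with the envelopes $v_j^*$ and $(v_j)_*$, so that comparison is not needed for each $G_j$, is a slightly more careful version of a step the paper's proof passes over implicitly.
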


\begin{proof}
Fix \(f\in C_b(D)\), and write \(v_j(t,x):=\mathcal T_t^{G_j}f(x)\) and \(v(t,x):=\mathcal T_t^Gf(x)\).
By the contraction property of the robust representation, we have \(|v_j(t,x)|, |v(t,x)|\le \|f\|_\infty\).

We first record a uniform small-time continuity estimate at the initial time.
Let \(K\subset D\) be compact. Choose \(r>0\) such that \(K^r:=\{y\in D:\operatorname{dist}(y,K)\le r\}\subset D\).
Since \(G_j\to G\) locally uniformly, the support sets associated with
\(G_j\) are uniformly bounded on \(K^r\), for all sufficiently large \(j\).
Consequently, before exiting \(K^r\), the drift, diffusion, and killing-rate
components of all models in \(\mathcal P_y(G_j)\), \(y\in K\), are bounded
uniformly in \(j\).

Let \(\sigma_y^r:=\inf\{s\ge0: |X_s-y|\ge r\}\). By Propositions~\ref{prop:small_time_exit} and \ref{prop:small_time_kill},
\[
        \lim_{h\downarrow0}
        \sup_{j\ge j_0}\sup_{y\in K}\sup_{\mathbb P\in\mathcal P_y(G_j)}
        \mathbb P\bigl(\sigma_y^r\le h<\tau_{\rm kill}\bigr)=\lim_{h\downarrow0}
        \sup_{j\ge j_0}\sup_{y\in K}\sup_{\mathbb P\in\mathcal P_y(G_j)}
        \mathbb P\bigl(\tau_{\rm kill}\le h\wedge\sigma_y^r\bigr)=0
\]
Since \(f\) is uniformly continuous on \(K^r\), it follows that
\begin{align}\label{eq:prop:stability_locally_uniform_generators_1}
        \lim_{h\downarrow0}
        \limsup_{j\to\infty}
        \sup_{0\le s\le h,\ y\in K}
        |v_j(s,y)-f(y)|
        =0 .
\end{align}

Define the half-relaxed limits
\[
        \overline v(t,x)
        :=
        \limsup_{\substack{j\to\infty\\ (s,y)\to(t,x)}}
        v_j(s,y),
        \qquad
        \underline v(t,x)
        :=
        \liminf_{\substack{j\to\infty\\ (s,y)\to(t,x)}}
        v_j(s,y).
\]
By
\eqref{eq:prop:stability_locally_uniform_generators_1}, \(\overline v(0,x)\le f(x)\le \underline v(0,x)\) for all \(x\in D\).
Moreover, by the standard stability theorem for viscosity solutions under half-relaxed limits, applied to the locally uniform convergence \(G_j\to G\), the function
\(\overline v\) is a viscosity subsolution and \(\underline v\) is a viscosity
supersolution of the limit equation \eqref{eq:mainHJB}; see
\cite[Section~6]{crandall1992user} and, for the parabolic formulation,
\cite[Section~8]{crandall1992user}. Hence, by
Assumption~\ref{assume:paraboliccomparison} for \(G\), we obtain \(\overline v\le \underline v\) on \([0,\infty)\times D\).
Since the reverse inequality follows from the definitions, we have
\(\overline v=\underline v=:w\). Therefore \(w\) is the unique bounded
viscosity solution of the Cauchy problem generated by \(G\). By
Proposition~\ref{prop:PDErepn} and
Theorem~\ref{thm:abstractconstructionSG} applied to \(G\), this solution is
\(v(t,x)=\mathcal T_t^Gf(x)\). Consequently, \(v_j(t,x)\to v(t,x)\) for every \(t\ge0\) and \(x\in D\).
\end{proof}

Finally we prove the results in the main text.

\begin{proof}[Proof of Theorem~\ref{thm:consistency_valuation}]
Fix \(m\ge1\), \(t\ge0\), \(x\in D_m\), and \(f\in C_b(D)\).
We write \(G:=G_{\max}\), \(A:=A_{\max}\), \(G_n^\ell:=G_{\max,m,n}^{\ell}\) and \(A_n^\ell:=A_{\max,m,n}^{\ell}\) for notational simplicity.
It is enough to prove the assertion along an arbitrary sequence
\(\ell_j\downarrow0\), \(n_j\to\infty\):
\begin{align}\label{eq:thm:consistency_valuation_1}
\limsup_{j\to\infty}
\left|
\mathcal T_t^{\max,m,n_j,\ell_j}f(x)
-
\mathcal T_t^{\max}f(x)
\right|
\le
\frac{e^{Ct}\varphi(x)}
{\inf_{y\in\partial D_m}\varphi(y)}
\|f\|_\infty.
\end{align}
Let \(G_j:=G_{n_j}^{\ell_j}\) and \(A_j:=A_{n_j}^{\ell_j}\).

We extend \(A_j\) from \(\overline D_m\) to \(D\). Let
\(\Pi_m:D\to\overline D_m\) be the nearest point projection, and let
\(\chi_j:D\to[0,1]\) satisfy \(\chi_j=1\) on \(\overline D_m\), and \(\chi_j=0\) outside a \(\rho_j\)-neighborhood of \(\overline D_m\), where \(\rho_j\downarrow0\). 
Define \(\widehat A_j(y):=\chi_j(y)A_j(\Pi_m y)+(1-\chi_j(y))A(y)\) and \(\widehat G_j(y,U):=\sup_{V\in\widehat A_j(y)}L^V(U)\).
Then \(\widehat G_j=G_j\) on \(\overline D_m\),
\(\widehat G_j\) satisfies \ref{item:G1}--\ref{item:G3}, and \(\widehat G_j\to G\) locally uniformly on
\(D\times\mathbb R\times\mathbb R^d\times\mathbb S(d)\). Moreover, by
Assumption~\ref{assume:lyapunov} for \(G\) and local uniform convergence,
\[
\widehat G_j(y,\varphi(y),\nabla\varphi(y),\nabla^2\varphi(y))
\le
(C+\delta_j)\varphi(y),
\qquad
\delta_j\downarrow0.
\]

Let \(\widehat{\mathcal T}_t^j\) be the robust valuation associated with
\(\mathcal U(\widehat G_j)\). By
Proposition~\ref{prop:stability_locally_uniform_generators}, we have \(\widehat{\mathcal T}_t^j f(x)\to \mathcal T_t^{\max}f(x)\).
Since \(\widehat G_j=G_j\) on \(\overline D_m\), the corresponding models
coincide up to \(\tau_m\). Hence, writing \(\mathcal P_x(\widehat G_j):=\Phi(\mathcal U_x(\widehat G_j))\), we have
\[
\mathcal T_t^{\max,m,n_j,\ell_j}f(x)
=
\sup_{\mathbb P\in\mathcal P_x(\widehat G_j)}
\mathbb E^{\mathbb P}
\left[
f(X_t)\mathbb I_{\{\tau_m>t\}}
\right].
\]
Thus by Proposition~\ref{prop:lyap_tail_bounds}~\ref{prop:lyap_tail_bounds_2}, we have
\begin{align}
\left|
\mathcal T_t^{\max,m,n_j,\ell_j}f(x)
-
\widehat{\mathcal T}_t^j f(x)
\right|
&\le
\|f\|_\infty
\sup_{\mathbb P\in\mathcal P_x(\widehat G_j)}
\mathbb P(\tau_m\le t<\tau_{\rm kill})\le \frac{e^{(C+\delta_j)t}\varphi(x)}
{\inf_{y\in\partial D_m}\varphi(y)}\|f\|_\infty.
\end{align}
Letting \(j\to\infty\) and using \(\widehat{\mathcal T}_t^j f(x)\to \mathcal T_t^{\max}f(x)\), we obtain \eqref{eq:thm:consistency_valuation_1}.
Finally, since \(\inf_{y\in\partial D_m}\varphi(y)\to\infty\) as
\(m\to\infty\), the full convergence follows.
\end{proof}

\bibliographystyle{apalike}
\bibliography{references}
\end{document}